\PassOptionsToPackage{dvipsnames,table}{xcolor}
\PassOptionsToPackage{hypertexnames=false}{hyperref}
\documentclass[acmsmall,screen,natbib]{acmart}
\bibliographystyle{ACM-Reference-Format}

\AtBeginDocument{%
  }

\setcopyright{cc}
\setcctype{by-sa}
\acmDOI{10.1145/3729251}
\acmYear{2025}
\acmJournal{PACMPL}
\acmVolume{9}
\acmNumber{PLDI}
\acmArticle{152}
\acmMonth{6}
\received{2024-11-14}
\received[accepted]{2025-03-06}

\title{Random Variate Generation with Formal Guarantees}

\author{Feras A.~Saad}
\orcid{0000-0002-0505-795X}
\affiliation{%
  \institution{Carnegie Mellon University}
  \department{Computer Science Department}
  \city{Pittsburgh}
  \state{PA}
  \country{USA}}
\email{fsaad@cmu.edu}

\author{Wonyeol Lee}
\orcid{0000-0003-0301-0872}
\affiliation{%
  \institution{POSTECH}
  \department{Department of Computer Science and Engineering}
  \city{Pohang}
  \state{Gyeongbuk}
  \country{Republic of Korea}}
\email{wonyeol.lee@postech.ac.kr}

\begin{CCSXML}
<ccs2012>

<concept>
  <concept_id>10002950.10003648.10003670.10003687</concept_id>
  <concept_desc>Mathematics of computing~Random number generation</concept_desc>
  <concept_significance>500</concept_significance>
</concept>

<concept>
  <concept_id>10002950.10003648</concept_id>
  <concept_desc>Mathematics of computing~Probability and statistics</concept_desc>
  <concept_significance>500</concept_significance>
</concept>

<concept>
  <concept_id>10002950.10003712</concept_id>
  <concept_desc>Mathematics of computing~Information theory</concept_desc>
  <concept_significance>500</concept_significance>
</concept>

<concept>
  <concept_id>10002950.10003705</concept_id>
  <concept_desc>Mathematics of computing~Mathematical software</concept_desc>
  <concept_significance>500</concept_significance>
</concept>

<concept>
  <concept_id>10002950.10003714.10003715.10003750</concept_id>
  <concept_desc>Mathematics of computing~Discretization</concept_desc>
  <concept_significance>500</concept_significance>
</concept>

<concept>
  <concept_id>10003752.10003753.10003757</concept_id>
  <concept_desc>Theory of computation~Probabilistic computation</concept_desc>
  <concept_significance>500</concept_significance>
</concept>

</ccs2012>
\end{CCSXML}

\ccsdesc[300]{Mathematics of computing~Random number generation}
\ccsdesc[300]{Mathematics of computing~Probability and statistics}
\ccsdesc[300]{Mathematics of computing~Information theory}
\ccsdesc[300]{Mathematics of computing~Mathematical software}
\ccsdesc[300]{Mathematics of computing~Discretization}
\ccsdesc[300]{Theory of computation~Probabilistic computation}

\keywords{probabilistic programming, algorithm design and analysis, entropy}


\usepackage{adjustbox}
\usepackage{afterpage}
\usepackage{amsmath}
\usepackage{amsthm}
\usepackage{mathtools}
\usepackage{thmtools}
\usepackage{stmaryrd}
\usepackage{bm}
\usepackage{booktabs}
\usepackage{comment}
\usepackage{graphicx}
\usepackage{multicol}
\usepackage{multirow}
\usepackage{nicematrix}
\usepackage{soul}
\usepackage{siunitx}
\usepackage[mathscr]{eucal}
\usepackage{accents} 
\usepackage{wrapfig}

\numberwithin{equation}{section}

\usepackage{listings}
\definecolor{codegray}{rgb}{0,0.6,0}
\definecolor{codegreen}{rgb}{0,0.6,0}
\lstdefinestyle{CC}{
  language=C,
  basicstyle=\ttfamily\scriptsize,
  numbers=left,
  numbersep=2.25pt,
  xleftmargin=8pt,
  framesep=1pt,
  columns=fixed,
  mathescape=true,
  escapechar={@},
  commentstyle=\color{codegray},
}

\usepackage{tikz}
\usepackage{tikz-qtree}
\usetikzlibrary{arrows}
\usetikzlibrary{arrows.meta}
\usetikzlibrary{backgrounds}
\usetikzlibrary{calc}
\usetikzlibrary{decorations.pathreplacing}
\usetikzlibrary{intersections}
\usetikzlibrary{fit}
\usetikzlibrary{patterns}
\usetikzlibrary{positioning}
\usetikzlibrary{shapes}
\usetikzlibrary{shadows}
\usetikzlibrary{shapes.geometric}
\usetikzlibrary{tikzmark}

\usepackage[inline]{enumitem}

\usepackage{newfloat}
\usepackage[subrefformat=parens]{subcaption}
\DeclareFloatingEnvironment[name=Listing]{listing}
\captionsetup[listing]{name=Lst, labelsep=period,position=above,skip=0pt}

\usepackage{doi}
\usepackage{url}
\urlstyle{sf}

\setcitestyle{square,citesep={,},notesep={;\,}}

\usepackage{algorithm}
\usepackage[indLines=true,noEnd=true,commentColor=gray,endLComment={}]{algpseudocodex}
\algrenewcommand\algorithmicrequire{\textbf{Input:}}
\algrenewcommand\algorithmicensure{\textbf{Output:}}
\algrenewcommand\algorithmicthen{}

\newlist{contributions}{enumerate}{1}
\setlist[contributions,1]{wide=0pt, label={\bfseries (C\arabic*)}, ref=(C\arabic*)}

\usepackage[capitalize]{cleveref}
\newcommand{\crefrangeconjunction}{--}
\makeatletter
\DeclareRobustCommand{\labelcrefrange}[2]{\@crefrangenostar{labelcref}{#1}{#2}}
\makeatother
\AtEndPreamble{%
  \crefname{equation}{}{}
  \Crefname{equation}{Eq.}{Eqs.}
  \crefname{line}{line}{lines}%
  \Crefname{line}{Line}{Lines}%
  \crefname{definition}{Def.}{Defs.}%
  \Crefname{definition}{Def.}{Defs.}%
  \crefname{proposition}{Prop.}{Props.}%
  \Crefname{proposition}{Prop.}{Props.}%
  \crefname{corollary}{Cor.}{Cors.}%
  \Crefname{corollary}{Cor.}{Cors.}%
  \crefformat{section}{#2\S#1#3}
  \Crefformat{section}{#2\S#1#3}
  \crefmultiformat{section}{#2\S#1#3}{ and~#2\S#1#3}{, #2\S#1#3}{, and~#2\S#1#3}
  \Crefmultiformat{section}{#2\S#1#3}{ and~#2\S#1#3}{, #2\S#1#3}{, and~#2\S#1#3}
  \crefrangeformat{section}{#3\S#1#4\crefrangeconjunction{}#5\S#2#6}
  \crefname{contributionsi}{Contribution}{Contributions}
}

\declaretheorem[style=acmplain,numberwithin=section,qed=\textup{\guillemotleft}]{theorem}

\declaretheorem[style=acmplain,sibling=theorem,qed=\textup{\guillemotleft}]{proposition}
\declaretheorem[style=acmplain,sibling=theorem,qed=\textup{\guillemotleft}]{corollary}

\declaretheorem[style=acmdefinition,sibling=theorem,qed=\textup{\guillemotleft}]{definition}
\declaretheorem[style=acmdefinition,sibling=theorem,qed=\textup{\guillemotleft}]{remark}
\declaretheorem[style=acmdefinition,sibling=theorem,qed=\textup{\guillemotleft}]{example}

\numberwithin{theorem}{section}
\numberwithin{lemma}{section}
\numberwithin{proposition}{section}
\numberwithin{corollary}{section}
\numberwithin{remark}{section}
\numberwithin{definition}{section}
\numberwithin{example}{section}

\newcommand{\newbar}{%
  \text{\smash[b]{\scalebox{1.4}[0.85]{\trimbox{0pt .55ex}{\textup{\textminus}}}}}
}
\newcommand{\mybar}[1]{%
  \mathchoice
      {\accentset{\displaystyle\newbar}{#1}}
      {\accentset{\textstyle\newbar}{#1}}
      {\accentset{\scriptstyle\newbar}{#1}}
      {\accentset{\scriptscriptstyle\newbar}{#1}}
}

\newcommand{\noqed}{\let\qed\relax}

\DeclarePairedDelimiter{\abs}{\lvert}{\rvert}
\DeclarePairedDelimiter{\set}{\lbrace}{\rbrace}
\DeclarePairedDelimiter\ceil{\lceil}{\rceil}
\DeclarePairedDelimiter\floor{\lfloor}{\rfloor}

\newcommand{\defas}{\coloneqq}
\newcommand{\asdef}{\eqqcolon}
\newcommand{\diff}{\mathrm{d}}%

\newcommand{\bfmt}{\mathbb{B}} 

\newcommand{\posit}{\mathbb{P}}
\newcommand{\positn}{\posit_n}

\newcommand{\borel}[1]{\mathcal{B}\left(#1\right)}
\newcommand{\dom}{\mathrm{dom}}
\newcommand{\float}{\mathbb{F}}
\newcommand{\floatEm}{\float^E_m}
\newcommand{\real}{\mathbb{R}}
\newcommand{\nat}{\mathbb{N}}

\newcommand{\bool}{\set{0,1}}

\newcommand{\realext}{\mybar{\real}}

\newcommand{\round}{\mathrm{rnd}}
\newcommand{\roundfl}[1]{\round_{#1}}

\newcommand{\rtd}{{\downarrow}}

\newcommand{\SampleNaive}{\textsc{\textsc{GenCBS}}}

\newcommand{\SampleOpt}{\textsc{\textsc{GenOpt}}}
\newcommand{\SampleNaiveImpl}{\textsc{CBS}}
\newcommand{\SampleOptImpl}{\textsc{Opt}}

\newcommand{\GetBit}{\textsc{ExtractBit}}
\newcommand{\Flip}{\textsc{RandBit}}
\newcommand{\Quantile}{\textsc{Quantile}}
\newcommand{\QuantileExt}{\textsc{Quantile}} 

\newcommand{\ExactSubtract}{\textsc{ExtractBitPreproc}}
\newcommand{\ExactSubtractOne}{\textsc{ExtractBitPreproc1}}
\newcommand{\ExactSubtractTwo}{\textsc{ExtractBitPreproc2}}

\newcommand{\ExactRatio}{\textsc{ExactRatio}}

\newcommand{\fL}{f_0}
\newcommand{\fR}{f_1}
\newcommand{\fM}{f_2} 
\newcommand{\dL}{d_0}
\newcommand{\dR}{d_1}
\newcommand{\dM}{d_2} 

\newcommand{\WCDF}{CDF}
\newcommand{\WSF}{SF}
\newcommand{\ECDF}{DDF}

\makeatletter
\renewcommand*\env@matrix[1][*\c@MaxMatrixCols c]{%
  \hskip -\arraycolsep
  \let\@ifnextchar\new@ifnextchar
  \array{#1}}
\makeatother

\allowdisplaybreaks

\newcommand\Tstrut{\rule{0pt}{2.25ex}}
\newcommand\Bstrut{\rule[-1.1ex]{0pt}{0pt}}

\makeatletter
\patchcmd{\NAT@test}{\else \NAT@nm }{\else \NAT@nmfmt{\NAT@nm}}{}{}
\makeatother

\makeatletter
\renewcommand\paragraph{\@startsection{paragraph}{4}{\parindent}%
  {-.25\baselineskip \@plus -2\p@ \@minus -.2\p@}%
  {-3.5\p@}%
  {\ACM@NRadjust{\@parfont\@adddotafter}}}
\renewcommand\noindentparagraph{\@startsection{paragraph}{4}{\z@}%
  {-.25\baselineskip \@plus -2\p@ \@minus -.2\p@}%
  {-3.5\p@}%
  {\ACM@NRadjust{\@parfont}}}
\makeatother

\makeatletter
\newtheoremstyle{acmplain}%
  {.25\baselineskip\@plus.1\baselineskip\@minus.1\baselineskip}
  {.25\baselineskip\@plus.1\baselineskip\@minus.1\baselineskip}
  {\@acmplainbodyfont}
  {\@acmplainindent}
  {\@acmplainheadfont}
  {.}
  {.5em}
  {\thmname{#1}\thmnumber{ #2}\thmnote{ {\@acmplainnotefont(#3)}}}

\newtheoremstyle{acmdefinition}%
  {.1\baselineskip\@plus.1\baselineskip\@minus.1\baselineskip}
  {.1\baselineskip\@plus.1\baselineskip\@minus.1\baselineskip}
  {\@acmdefinitionbodyfont}
  {\@acmdefinitionindent}
  {\@acmdefinitionheadfont}
  {.}
  {.5em}
  {\thmname{#1}\thmnumber{ #2}\thmnote{ {\@acmdefinitionnotefont(#3)}}}

\makeatother

\begin{document}

\begin{abstract}
Generating random variates is a fundamental operation in diverse areas of
computer science and is supported in almost all modern programming languages.
Traditional software libraries for random variate generation are grounded
in the idealized ``Real-RAM'' model of computation, where algorithms are
assumed to be able to access uniformly distributed real numbers from the unit interval
and compute with infinite-precision real arithmetic.
These assumptions are unrealistic, as any software implementation of
a Real-RAM algorithm on a physical computer can instead access a stream of
individual random bits and computes with finite-precision arithmetic.
As a result, existing libraries have few theoretical guarantees in practice.
For example, the actual distribution of a random variate generator
is generally unknown, intractable to quantify, and arbitrarily
different from the desired distribution; causing runtime errors,
unexpected behavior, and inconsistent APIs.

This article introduces a new approach to principled and practical random
variate generation with formal guarantees.
The key idea is to first specify the desired probability distribution in
terms of a finite-precision numerical program that defines its cumulative
distribution function (CDF), and then generate exact random variates
according to this CDF.
We present a universal and fully automated method to synthesize exact
random variate generators given any numerical CDF implemented in any binary
number format, such as floating-point, fixed-point, and posits.
The method is guaranteed to operate with the same precision used to specify
the CDF, does not overflow, avoids expensive arbitrary-precision
arithmetic, and exposes a consistent API.
The method rests on a novel space-time optimal implementation for the class
of generators that attain the
information-theoretically optimal \citeauthor{knuth1976} entropy rate,
consuming the least possible number of input random bits per output
variate.
We develop a random variate generation library using our method in C
and evaluate it on a diverse set of ``continuous'' and ``discrete''
distributions, showing competitive runtime with the
state-of-the-art GNU Scientific Library while delivering higher
accuracy, entropy efficiency, and automation.
\end{abstract}

\maketitle


\section{Introduction}
\label{sec:intro}
\begin{figure}[t]
\centering
\begin{tikzpicture}
\node[name=RV, minimum width=4.5cm, draw]{$X:[0,1] \to \real$};
\node[name=ES, draw, minimum width=4.75cm, left=5.1cm of RV.center, anchor=center, label={[name=ESL,font=\bfseries]above: Entropy Source}]{$\mathrm{Uniform}([0,1])$ real number};
\node[name=OT, right=.45cm of RV.east, minimum width=1.25cm, anchor=west, text width=1.5cm]{$x \in \real$};
\node[name=LB, left=3.25 of ES.center, anchor=center]{Infinite};
\draw[-latex] (ES) -- (RV);
\draw[-latex] (RV) -- (OT);

\node[anchor=center,at=(LB.center |- ESL.center),font=\bfseries]{Precision};
\node[anchor=center,at=(RV.center |- ESL.center),font=\bfseries]{Random Variate Generator};
\node[anchor=west,at=(OT.west |- ESL.center),font=\bfseries]{Output};

\node[name=RV, minimum width=4.5cm, draw, below=.25 of RV]{$X:\bool^* \rightharpoonup \set{0,1}^n$};
\node[name=ES, draw, minimum width=4.75cm,left=5.1cm of RV.center, anchor=center]{$\mathrm{Bernoulli}(1/2)$ i.i.d.~bit stream};
\node[name=OT, right=.45cm of RV.east, minimum width=1.5cm, anchor=west, text width=1.75cm]{$x \in \set{0,1}^n$};
\node[name=LB, left=3.25 of ES.center,anchor=center]{Finite};
\draw[-latex] (ES) -- (RV);
\draw[-latex] (RV) -- (OT);
\end{tikzpicture}
\captionsetup{skip=4pt,belowskip=-12pt}
\caption{Random variate generation with infinite-precision (Real-RAM) and
finite-precision computation.}
\label{fig:real-world}
\end{figure}
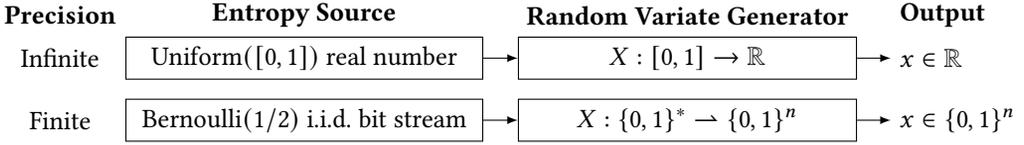

The purpose of a random variate generation algorithm is to produce random
numbers that adhere to a specified probability distribution.
These distributions can be discrete, such as the Poisson distribution over
the natural numbers, or continuous, like the Gaussian distribution over the
real numbers.
A fundamental result of probability theory establishes that, in purely
mathematical terms, any random variate generator corresponds to a
function that takes as input a real number $u \in [0,1]$ drawn uniformly
at random from the unit interval and returns as output a real number
$x \in \real$ (\cref{fig:real-world}, top row).
Software libraries for random variate generation use this insight to
develop numerical algorithms, whose correctness properties rest on a fictitious
``Real-RAM'' computer that can perform infinitely precise computation over
real numbers in constant time~\citep[\S1.1 Assumptions I---III]{devroye1986}.

\paragraph{Key Challenges}
The Real-RAM model was introduced in \citeauthor{shamos1978}'s
\citeyear{shamos1978} dissertation~\citep{shamos1978} on computational
geometry and adopted in \citeauthor{devroye1986}'s \citeyear{devroye1986}
random variate generation book~\citep{devroye1986} for its conceptual simplicity.
However, this model fails to characterize the error and complexity
properties of software that executes on finite-precision hardware,
as cautioned by \citeauthor{neumann1951} in \citeyear{neumann1951}:
\begin{quote}
\small
\itshape
Anyone who considers arithmetic methods of producing random digits is, of
course, in a state of sin\dots
If one considers arithmetic methods in detail, it is quickly found
that the critical thing about them is the very obscure, very imperfectly
understood behavior of round-off errors in mathematics\dots
One might as well admit that one can prove nothing, because the
amount of theoretical information about the statistical properties of the
round-off mechanism is nil.\hfill\normalfont ---\Citet{neumann1951}
\end{quote}
A modern illustration of these problems can be found in
differentially private algorithms~\citep{dwork2006} that use
random numbers to obfuscate sensitive datasets, such as
the 2020 US Census~\citep{Abowd2022}.
\Citet{Mironov2012} demonstrates that floating-point effects in the
Laplace random variate generator from existing software libraries can
entirely destroy the real-world privacy guarantees of algorithms that are
otherwise differentially private under the Real-RAM assumption.
Similar issues arise in areas such as lattice-based
cryptography~\citep{schryver2012,roy2013,dwarakanath2014,follath2014,du2015,Karmakar2018},
where developers of secure protocols need more-realistic computational models than
Real RAM that enable them to rigorously characterize
\begin{enumerate}[label=(\alph*),wide=0pt,noitemsep,nosep]
\item the approximation error on a finite-precision computer, to
establish security guarantees;

\item the required register size needed for a given accuracy level, to
design efficient hardware;

\item the entropy consumption per output variate, to predict runtime and
avoid side channel attacks.
\end{enumerate}
\vspace{\topsep}

\paragraph{This Work}

This article introduces a theoretically principled and practical
approach to random variate generation grounded in a finite-precision
model of computation (\cref{fig:real-world}, bottom row).
On the input side, a random variate generator has lazy access to a stream
of independent unbiased random bits.
It returns as output a finite $n$-bit string that represents a number in
some binary encoding, computed using finite-precision arithmetic and
finite memory.
An immediate consequence of this model is that the generator can produce at
most $2^n$ distinct outputs, each with a rational probability.

Our approach to random variate generation begins with a formal
specification of the desired probability distribution of the
random variates.
This specification takes the form of a numerical program
that implements a cumulative distribution function (CDF) $F$.
For any finite-precision number $x$ that can be represented on the
computer, $F(x)$ computes the probability that the random variate is less
than or equal to $x$.
Given such a CDF, our technique automatically synthesizes a random
variate generator that is guaranteed to exactly match $F$.
This generator is also guaranteed to be \textit{entropy-optimal}---it draws
the information-theoretically least number of random bits on the input side
to produce an output, achieving the optimal entropy rate
from \citet{knuth1976}.

\Cref{fig:overview} compares our approach to existing random
variate generation software libraries.
First consider the idealized Real-RAM model (\cref{fig:overview-real}).
Every probability measure has multiple equivalent representations in terms of a
unique cumulative distribution function (CDF), survival function (SF),
quantile function (QF), or infinite collection of identically distributed
random variables (i.e., measurable functions from $[0,1]$ into $\mathbb{R}$).
\Cref{fig:overview-standard} shows the approach
in standard software libraries, which
provide numerical programs for the CDF, SF, QF, and
random variate generator(s)
that approximate the corresponding Real-RAM functions.
The resulting diagram, however, does not ``commute''
(red cross marks in \cref{fig:overview-standard})---due to numerical errors, each
of these numerical programs actually defines a \textit{different} probability
measure, creating an inconsistent API for what was originally a
single coherent probability measure in the infinite-precision Real RAM model.
Further, the actual output distributions of the implemented random
variate generators are generally intractable to compute, making it
difficult to formally characterize their properties such as approximation
error or expected runtime.
\Cref{fig:overview-new} shows the proposed approach,
where the desired probability measure
is first specified using a numerical program that computes its CDF.
This CDF is then used to synthesize an exact random variate generator and
related functions.
The resulting API for the implemented probability measure
is mathematically coherent, perfectly mirroring the Real-RAM equivalences.

\begin{figure}[t]
\begin{tikzpicture}
\tikzstyle{boxed}=[fill=white,minimum height=1cm, minimum width=1cm,draw=none]
\tikzstyle{stackbox}=[boxed, draw=black, double copy shadow={shadow xshift=-.5ex, shadow yshift=.5ex}]
\tikzstyle{arrExact}=[latex-latex, color=blue]
\tikzstyle{exactSize}=[font=\footnotesize,inner sep=0pt]
\tikzset{arrowshadow/.style = {-latex,double copy shadow={shadow xshift=-.5ex, shadow yshift=.5ex}}}

\node[name=X, stackbox]{$X_{\mu}$};
\node[name=FX,boxed,below=.75 of X]{$F_\mu$};
\node[name=SX,boxed,below=.75 of FX]{$S_\mu$};
\node[name=QX,boxed,below=.75 of SX]{$Q_\mu$};
\node[name=MU,boxed,above=.15 of X]{$\mu$};

\node[name=X-Label, left=4.4 of X,anchor=west,font=\footnotesize]{\begin{tabular}{@{}l}Random Variable Representations\\(i.e., Random Variate Generators)\end{tabular}};
\node[name=FX-Label, at = (X-Label.west |- FX),anchor=west,font=\footnotesize]{\begin{tabular}{@{}l}Cumulative Distribution\\Function Representation\end{tabular}};
\node[name=SX-Label, at = (X-Label.west |- SX),anchor=west,font=\footnotesize]{\begin{tabular}{@{}l}Survival Function\\Representation\end{tabular}};
\node[name=QX-Label, at = (X-Label.west |- QX),anchor=west,font=\footnotesize]{\begin{tabular}{@{}l}Quantile Function\\Representation\end{tabular}};
\node[name=MU-Label,at = (X-Label.west |- MU),anchor=west,font=\footnotesize]{Probability Measure};

\node[name=Xb, right=2.6 of X, stackbox,label={[inner sep=0pt,anchor=west]above right:${\mathrlap{\textcolor{red}{\times}}\rbrace}\;\mbox{\scriptsize{inequivalent}}$}]{$\widehat{X_{\mu}}$};
\node[name=FXb, at=(FX.center -| Xb), anchor=center, boxed, thick,fill=blue!10!white]{$\widehat{F_\mu}$};
\node[name=SXb, at=(SX.center -| Xb), anchor=center, boxed]{$\widehat{S_\mu}$};
\node[name=QXb, at=(QX.center -| Xb), anchor=center, boxed]{$\widehat{Q_\mu}$};
\node[name=MUc, at=(MU.center -| Xb), anchor=center, boxed,fill=none,font=\footnotesize]{(no coherent measure)};

\node[name=Xc, right=2.6 of Xb,boxed,stackbox,draw=black,label={[inner sep=0pt,anchor=west]above right:${\rbrace}\mbox{\scriptsize{equivalent}}$}]{$X_{\widehat{F_\mu}}$};
\node[name=FXc, at=(FX.center -| Xc), anchor=center, boxed, fill=blue!10!white]{$\widehat{F_{\mu}}$};
\node[name=SXc, at=(SX.center -| Xc), anchor=center, boxed]{$S_{\widehat{F_{\mu}}}$};
\node[name=QXc, at=(QX.center -| Xc), anchor=center, boxed]{$Q_{\widehat{F_{\mu}}}$};
\node[name=MUc, at=(MU.center -| Xc), anchor=center, boxed,fill=none]{$\mu_{\widehat{F_{\mu}}}$};

\draw[arrowshadow] (X) --node[pos=0.5,label={[inner ysep=0pt,font=\scriptsize,align=center]below:{Numerical Program\\Implementation}}]{} (Xb);
\node[name=X, stackbox,label={[inner sep=0pt,anchor=west]above right:${\rbrace}\mbox{\scriptsize{equivalent}}$}]{$X_{\mu}$};

\foreach \x in {FX, SX, QX} {
  \draw[-latex] (\x) --node[pos=0.5,label={[inner ysep=0pt,font=\scriptsize,align=center]below:{Numerical Program\\Implementation}}]{} (\x b);
}

\draw[latex-latex] (X) --node[pos=0.5,label={[exactSize]left:equivalent}]{} (FX);
\draw[latex-latex] (FX) --node[pos=0.5,label={[exactSize]left:equivalent}]{} (SX);
\draw[latex-latex] (SX) --node[pos=0.5,label={[exactSize]left:equivalent}]{} (QX);

\draw[latex-latex] (Xb) --node[pos=0.5,label={[exactSize,inner xsep=0pt]right:inequivalent}]{\textcolor{red}{$\times$}} (FXb);
\draw[latex-latex] (FXb) --node[pos=0.5,label={[exactSize,inner xsep=0pt]right:inequivalent}]{\textcolor{red}{$\times$}} (SXb);
\draw[latex-latex] (SXb) --node[pos=0.5,label={[exactSize,inner xsep=0pt]right:inequivalent}]{\textcolor{red}{$\times$}} (QXb);

\draw[latex-latex] (Xc) --node[pos=0.5,label={[exactSize]left:equivalent}]{} (FXc);
\draw[latex-latex] (FXc) --node[pos=0.5,label={[exactSize]left:equivalent}]{} (SXc);
\draw[latex-latex] (SXc) --node[pos=0.5,label={[exactSize]left:equivalent}]{} (QXc);

\draw[densely dashed] (FXb) -- (FXc);
\node[coordinate,name=Xd, right=.3 of Xc ]{};
\node[coordinate,name=SXd,right=.3 of SXc]{};
\node[coordinate,name=QXd,right=.3 of QXc]{};
\draw[gray,latex-] (SXc) -- (SXd);
\draw[gray,latex-] (QXc) -- (QXd);
\draw[gray,latex-,transform canvas={yshift=-0.25cm}] (Xc)  -- (Xd);
\draw[gray]        ([yshift=-0.25cm]Xd) --node[pos=0.5,exactSize,xshift=.4cm,rotate=-90,anchor=center]{\begin{tabular}[t]{@{}c@{}}automatically\\[-2pt] synthesized\end{tabular}} (QXd);
\draw[gray,latex-] (Xc)                 --node[pos=0.5,label={[exactSize]right:\begin{tabular}[t]{@{\;}c@{}}entropy\\[-3pt]optimal\end{tabular}}]{}            (Xd);

\node[name=real-ram, below=0 of QX, xshift=-.25cm, text width=4cm] {
  \subcaption{\begin{tabular}[t]{@{}c@{}}Real RAM \\(Consistent)\end{tabular}}
  \label{fig:overview-real}
  };

\node[name=standard, at=(real-ram.center -| QXb), text width=4cm] {
  \subcaption{\begin{tabular}[t]{@{}c@{}}Standard Approach\\(Inconsistent)\end{tabular}}
  \label{fig:overview-standard}
  };

\node[name=proposed, at=(real-ram.center -| QXc), text width=4cm] {
  \subcaption{\begin{tabular}[t]{@{}c@{}}Proposed Approach\\(Consistent)\end{tabular}}
  \label{fig:overview-new}
  };

\end{tikzpicture}
\captionsetup{skip=2pt, belowskip=-10pt}
\caption{Overview of the standard and proposed approaches to random variate generation software libraries.}
\label{fig:overview}
\end{figure}
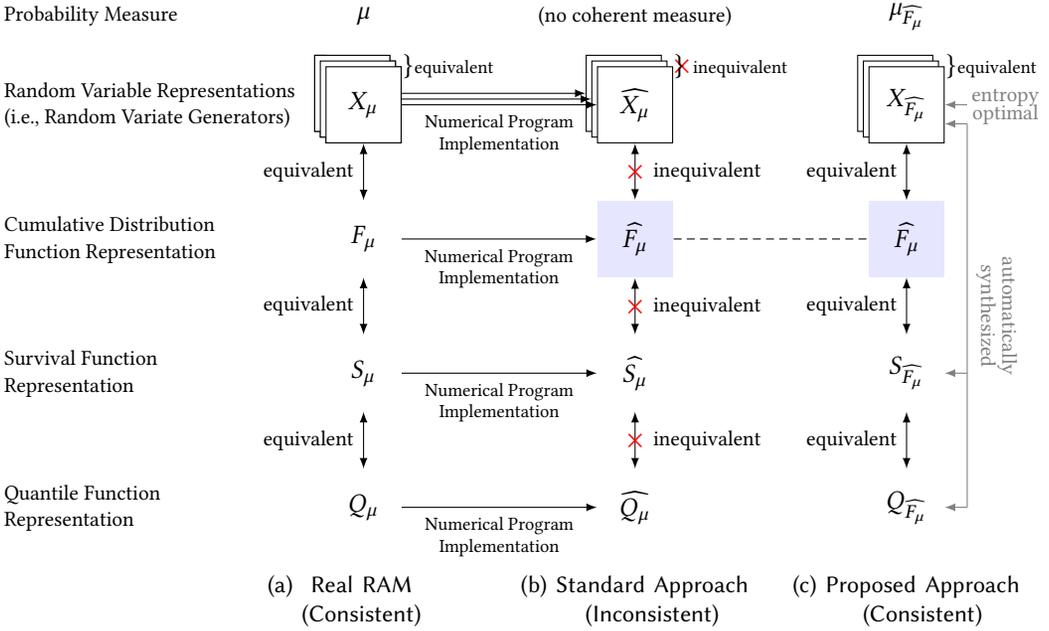

\paragraph{Contributions}
This article makes the following contributions.

\begin{contributions}

\item \label{contribution:formulation}
{\bf Formulation of exact random variate generation using finite precision.}
We rigorously formulate the problem of generating {exact} random variates
given a finite-precision implementation of a CDF.
This approach is fundamentally different from existing libraries (\cref{fig:overview}).
It guarantees a {coherent API} for the cumulative distribution,
survival, and quantile functions of the implemented generator that
all agree with one another.
It also enables the {fully automatic construction} of a
random variate generator given a {formal specification} of the desired
probability distribution, and allows for {strong theoretical guarantees} on  
exactness, entropy optimality, and practical efficiency.

\item \label{contribution:binary}
{\bf Exact and optimal random variate generators for binary-coded distributions (\cref{sec:binary}).}
We present a sound and entropy-optimal algorithm
(\cref{theorem:sampler-opt}) for generating random variates given any
binary-coded probability distribution, which is a universal mathematical
representation for probability measures over $\real$.
This algorithm improves upon a method described in \citet{knuth1976}:
it obtains optimal space-time complexity by exploiting
properties of binary expansions of real numbers that govern the
structure of entropy-optimal generators
(\cref{proposition:bit-patterns-simple,proposition:bit-patterns}).

\item \label{contribution:floating}
{\bf Exact and optimal random variate generators in finite precision (\cref{sec:floating}).}
We specialize the universal algorithm for binary-coded probability
distributions to soundly generate exact random variates given a finite-precision
implementation of a CDF over any {binary number format}
(e.g., floating-point, fixed-point, posits; \cref{theorem:sampler-impl-correct-opt}).
This algorithm is {information-theoretically} optimal
and {highly efficient} in software:
it is guaranteed to require the same precision
as the given CDF (\cref{prop:exactsubtract1}) and
uses fast integer arithmetic instead of
expensive arbitrary-precision arithmetic.

\item \label{contribution:survival}
{\bf Extended-accuracy random variate generators that combine both CDF and SF (\cref{sec:survival}).}
We present an extension of the method from \cref{sec:floating}
that achieves {higher accuracy}, especially in the tails of a probability
distribution, using a principled combination of a finite-precision CDF and
survival function (SF) implementation (\cref{theorem:ecda-default}).
This algorithm enjoys the same theoretical guarantees as before, while
being able to represent twice as many outcomes as compared to only a CDF or
SF.

\item \label{contribution:evaluation}
{\bf Implementation and empirical evaluations (\cref{sec:evaluation}).}
We develop and evaluate a random variate generation library using our methods
in C.
The results show that, as compared to the state-of-the-art
GNU Scientific Library~\citep{galassi2009}, our
generators are \begin{enumerate*}[label=(\roman*)]
\item more entropy-efficient, consuming 2.6x--142x fewer random bits per
output variate;

\item more representative of the ideal distribution range, covering up to
$10^{35}$x wider intervals; and

\item more automated and amenable to program analysis, having known
output distributions.
\end{enumerate*}
The results also show that the extended-accuracy methods in \cref{sec:survival} incur
negligible overhead in entropy and runtime over the original versions in \cref{sec:floating}.

\end{contributions}

\section{Overview}
\label{sec:overview}

\subsection{Mathematical Representations of Probability Distributions}
\label{sec:overview-mathematical}

Let $\lambda$ denote the Lebesgue measure and $\mathcal{B}(\real)$
the Borel sigma-algebra.
Every probability measure $\mu$ over $\real$
(e.g., Gaussian, Gamma, Poisson, etc.)
has multiple equivalent mathematical representations:
\begin{align}
&\mbox{PM}  && \mbox{Probability Measure}              && \mu: \borel{\real} \to [0,1] \label{eq:dist-pm} \\[-2.5pt]
&\mbox{RV}  && \mbox{Random Variable}                  && X: ([0,1], \borel{[0,1]}) \to (\real, \borel{\real}) \label{eq:dist-rv}\\[-2.5pt]
&\mbox{CDF} && \mbox{Cumulative Distribution Function} && F(x) \defas \mu((-\infty, x]) \asdef \Pr(X \le x), x \in \real \label{eq:dist-cdf}\\[-2.5pt]
&\mbox{SF}  && \mbox{Survival Function}                && S(x) \defas \Pr(X > x) = 1 - F(x), x \in \real \label{eq:dist-sf}\\[-2.5pt]
&\mbox{QF}  && \mbox{Quantile Function}                && Q(u) \defas \inf\set{x \in \real \mid u \le F(x)}, u \in [0,1]. \label{eq:dist-qf}
\end{align}
\Cref{fig:overview-real} shows the correspondences among
\crefrange{eq:dist-pm}{eq:dist-qf} under the Real-RAM model of computation.
There is a bijection between the spaces of PM, CDF, SF, QF;
and a surjection between these spaces and the space RV.
We denote these correspondences as
$\mu \sim \set{F_\mu, S_\mu, Q_\mu, X_\mu}$,
$F \sim \set{\mu_F, S_F, Q_F, X_F}$,
$S \sim \set{\mu_S, F_S, Q_S, X_S}$,
$Q \sim \set{\mu_Q, F_Q, S_Q, X_Q}$,
$X \sim \set{\mu_X, F_X, S_X, Q_X}$.
The RV is the only non-unique representation---infinitely many random variables
(understood as measurable mappings from the underlying probability space $[0,1]$
into $\real$) can have the same distribution, i.e.,
$X \ne X'$ but $\mu_{X}(A) \defas \lambda(X^{-1}(A)) = \lambda(X'^{-1}(A)) \asdef \mu_{X'}(A)$
for all $A \in \borel{\real}$.
Random variables $X \ne X'$ that disagree on positive measure sets
suggest different random variate generation strategies for $\mu$.
\begin{example}
\label{example:normal-rvs}
The following random variables all
have a standard Gaussian distribution over $\real$:
\bgroup
\setlength{\abovedisplayskip}{0pt}
\setlength{\belowdisplayskip}{2pt}
\begin{align*}
\textstyle
X_1(\omega) = \inf\set*{x \in \real \mid \omega \le \int_{-\infty}^{x}\frac{e^{-\frac{t^2}{2}}}{\sqrt{2\pi}}\diff{t}},
&&
X_2(\omega) = \frac{\sqrt{-2\ln{u_1(\omega)}}}{1/\cos(2\pi u_2(\omega))},
&&
X_3(\omega) = \lim_{n \to \infty} \textstyle \frac{\sum_{i=1}^n u_i(\omega) -n/2}{\sqrt{n/12}},
\end{align*}
\egroup
where $\{u_i(\omega)\}_{i=1}^\infty$ denote countably many i.i.d.~uniform
numbers on $[0,1]$ ``split'' from $\omega$ (\cref{remark:random-variable-domain}).
These functions describe different implementations of Gaussian
random variate generators: the inverse transform method $X_1$,
Box-Muller transform $X_2$, and central limit approximation $X_3$.
\Cref{lst:gsl-samplers}
(\crefrange{lst:gsl-samplers-r1}{lst:gsl-samplers-r4})
lists four Gaussian random variate generators from the GNU
Scientific Library.
\end{example}

\subsection{Traditional Software Implementations of Probability Distributions}
\label{sec:overview-software}

This section illustrates problems with the approach in traditional
random variate generation libraries (e.g., \citep{matlab2024,numpy2020,scipy2020,pytorch2019,galassi2009})
that provide interfaces for~\crefrange{eq:dist-pm}{eq:dist-qf} as
shown schematically in \cref{fig:overview-standard}.
We present a small%
\footnote{\label{footnote:gsl-gamma}%
  Library implementations of random variate generators and associated
  functions can span 100s of lines of code, e.g., \citep{gslGammaCDF,gslGammaRV}.
  }
but realistic example for the $\mathrm{Exponential}(s)$
distribution (\cref{lst:exponential}) to highlight some issues
with finite-precision numerical implementations of idealized Real-RAM algorithms.%
\footnote{\label{foonote:numpy}%
A cursory inspection of three widely used Python
libraries (NumPy~\citep{numpy2020}, SciPy~\citep{scipy2020},
PyTorch~\citep{pytorch2019}) surfaced ${\sim}90$ user-reported issues in the
random variate generation algorithms, almost all
related to numerical error, shown in \cref{appx:survey}.}

\paragraph{Problems with Uniform Source}
\Crefrange{lst:exponential-u0}{lst:exponential-u1} of \cref{lst:exponential}
show common~\citep{press1992} numerical approximations of a uniform variable in
$[0,1]$, which serves as the primitive source of randomness for random
variate generation (cf.~\cref{fig:real-world,eq:dist-rv}).
The \texttt{uniform} function calls $\texttt{rand}$, which returns an
integer between 0 and $\texttt{RAND\_MAX}$, and divides the result by
$\texttt{RAND\_MAX}+1$ to give a float in $[0,1)$; \texttt{uniform\_pos}
returns a float in $(0,1)$ by rejection sampling.
Issues with generators of this sort are well-documented in the literature:
\citet{Goualard2020,Goualard2022} and \citet{lemire2019}
explore these drawbacks in detail.
For example, while including or omitting endpoints $\set{0,1}$ is of no
consequence in Real RAM, these values have nonzero probability when
using finite precision, causing many downstream errors depending
on the implementation.%
\footnote{\label{footnote:pytorch-exponential}%
  This specific issue arises in the PyTorch exponential generator, see
  \url{https://github.com/pytorch/pytorch/issues/22557}}
Moreover, typical implementations of \texttt{uniform} return a paltry $10^{-7}$\%--7\% of
all representable floating-point numbers in $[0,1]$ (\cref{prop:std-uniform-density}),
which further limits the accuracy of random variate generator
algorithms that invoke these functions.%
  \footnote{\label{footnote:pytorch-uniform}%
    An extensive discussion of this challenge among developers is
    found in \url{https://github.com/pytorch/pytorch/issues/16706}}

\begin{listing}[t]
\caption{Implementing an exponential probability distribution in C.}
\label{lst:exponential}
\begin{lstlisting}[style=CC,frame=tB]
// Typical random variate generators for the "uniform" distribution over unit interval.
@\label{lst:exponential-u0}@double uniform    (){return ((double)rand()) / (RAND_MAX + 1);}           // $\color{codegray}\smash{U\sim\mathrm{Uniform}([0,1))}$
@\label{lst:exponential-u1}@double uniform_pos(){double u; do {u=uniform();} while (u==0); return u;} // $\color{codegray}\smash{U\sim\mathrm{Uniform}((0,1))}$

// Cumulative distribution (CDF), survival (SF), & quantile (QF) function implementations.
@\label{lst:exponential-cf}@double exp_cdf (double x, double s) {return x <= 0 ? 0 : -expm1(-x/s);} // $\color{codegray}\smash{F(x; s)=1-e^{-x/s}}$
@\label{lst:exponential-sf}@double exp_sf  (double x, double s) {return x <= 0 ? 1 : exp(-x/s);}    // $\color{codegray}\smash{S(x; s)=e^{-x/s}}$
@\label{lst:exponential-qf}@double exp_qf  (double u, double s) {return -s * log1p(-u);}            // $\color{codegray}\smash{Q(u; s)=-s\ln(1-u)}$

// TRADITIONAL APPROACH: Implement ad-hoc random variate generators.
@\label{lst:exponential-g0}@double exp_generate     (double s) {return exp_qf(uniform(), s);}       // $\color{codegray} Q(U; s)$
@\label{lst:exponential-g1}@double exp_generate_alt (double s) {return -s * log(uniform_pos());}    // $\color{codegray} Q(1-U; s)$

// THIS WORK: Exact random variate generators given a formal CDF and/or SF specification.
@\label{lst:exponential-synth1}@GENERATE_FROM_CDF (exp_cdf, 1.0);
@\label{lst:exponential-synth2}@GENERATE_FROM_SF  (exp_sf , 1.0);
@\label{lst:exponential-synth3}@GENERATE_FROM_DDF (exp_cdf, exp_sf, 1.0);
\end{lstlisting}
\end{listing}

\paragraph{Inconsistencies in the CDF, SF, QF}
\Crefrange{lst:exponential-cf}{lst:exponential-qf} of \cref{lst:exponential}
show numerically stable implementations of \crefrange{eq:dist-cdf}{eq:dist-qf}.
Let us explore some inconsistencies among these representations.
First consider \texttt{exp\_cdf} and \texttt{exp\_sf}.
These functions formally define
distributions over floats (\cref{remark:CDA-sample-intractable}),
but they have entirely different properties.
When $\texttt{s}=1$,
the former's range (i.e., support) contains floats within $[7.01\times{10}^{-46}, 17.33]$
and the latter $[2.98 \times{10}^{-8}, 103.97]$.
Now consider $\texttt{exp\_cdf}$ and $\texttt{exp\_qf}$.
If \texttt{u} is the float immediately before 1,
then the return value of \texttt{exp\_qf(u,1)} is
$16.635$ whereas the exact $u$-quantile of \texttt{exp\_cdf($\cdot$,1)} is $16.230$,
yielding a large absolute error exceeding $0.405$.
Relative errors are also large: if
\texttt{u} is the float immediately after 0,
then the return value of \texttt{exp\_qf(u,1)} is \textit{twice} larger than
the exact $u$-quantile of \texttt{exp\_cdf($\cdot$,1)}.
Many similar issues can be surfaced.

\paragraph{Inconsistencies in the Generators}
\Crefrange{lst:exponential-g0}{lst:exponential-g1} of
\cref{lst:exponential} show two exponential random variate
generators, $Q(U)$ and $Q(1-U)$, based on the inverse-transform method
(\cref{remark:billingsley}).
However, \texttt{exp\_generate} is not consistent with the distribution
specified by \texttt{exp\_cdf} because it calls into $\texttt{exp\_qf}$,
which is itself not consistent with \texttt{exp\_cdf}.
Supposing that $\texttt{RAND\_MAX} = 2^{32}-1$,
the output of \texttt{exp\_generate} lies in the range $[0, 22.18]$ when $\texttt{s}=1$,
which is again different from the ranges of both
\texttt{exp\_cdf} and \texttt{exp\_sf}.
Further, \texttt{exp\_generate} is also not consistent with the
distribution specified by \texttt{exp\_qf} even though it directly invokes it,
because $\texttt{uniform}$ does not cover all floats in $[0,1]$.
The \texttt{exp\_generate\_alt} function has similar inconsistencies,
with the additional caveat that
implementing $Q(1-U)$ requires using $\texttt{uniform\_pos}$
instead of $\texttt{uniform}$,
because $\texttt{log}(0.)$ may
return $\texttt{-inf}$, $\texttt{nan}$, or even a domain error,
depending on the language (cf.~\Cref{footnote:pytorch-exponential}).

\subsection{This Work: Exact Random Variate Generators from Formal Specifications}
\label{sec:overview-this-work}

\Crefrange{lst:exponential-synth1}{lst:exponential-synth3}
of \cref{lst:exponential} show the approach to random variate
generation introduced in this work, which is based on coherent
implementations of \crefrange{eq:dist-pm}{eq:dist-qf}.
The \texttt{GENERATE\_FROM\_CDF} expression takes the name of any numerical
CDF implementation (i.e., \texttt{exp\_cdf}) and values of
distributional parameters (i.e., \texttt{s} is \texttt{1.0}), and returns a random 
floating-point number \texttt{x} \textit{precisely} with cumulative probability
\texttt{cdf(x)}.
An alternate generator is \texttt{GENERATE\_FROM\_SF}, which returns
\texttt{x} with cumulative probability $1-_{\real}\texttt{exp\_sf(x)}$, where the
subtraction is exact (i.e., no floating-point rounding error).
These generators reflect the fact that \texttt{exp\_cdf} and
\texttt{exp\_sf} define different distributions.
We also have an extended-accuracy generator, \texttt{GENERATE\_FROM\_DDF},
which \textit{combines} the \texttt{exp\_cdf} and \texttt{exp\_sf}
specifications into a single coherent generator.
Recall that, for $\texttt{s}=1$,
the former has range $[7.01\times{10}^{-46}, 17.33]$ and the
latter $[2.98 \times{10}^{-8}, 103.97]$,
so neither is globally more precise than the other.
The output of this generator agrees with \texttt{exp\_cdf} below its median,
where the CDF is more precise, and agrees with \texttt{exp\_sf} above its
median, where the SF is more precise.
Its largest value (103.97) is 4.7x higher than that of
\texttt{exp\_generate} (22.18), giving higher coverage of $(0, \infty)$.

\begin{listing}[t]
\caption{Using high-quality cumulative distribution and survival function implementations from
the GNU Scientific Library (GSL) to automatically generate ``Gaussian'' variates (only
function signatures are shown).}
\label{lst:gsl-samplers}
\begin{lstlisting}[style=CC,frame=tB]
// GSL: Existing random variate generators for Gaussian distribution (renamed for clarity).
@\label{lst:gsl-samplers-r1}@double gsl_ran_gaussian_inverse_cdf  (const gsl_rng *r, double sigma);
@\label{lst:gsl-samplers-r2}@double gsl_ran_gaussian_box_muller   (const gsl_rng *r, double sigma);
@\label{lst:gsl-samplers-r3}@double gsl_ran_gaussian_ziggurat     (const gsl_rng *r, double sigma);
@\label{lst:gsl-samplers-r4}@double gsl_ran_gaussian_ratio_method (const gsl_rng *r, double sigma);

// GSL: Numerical implementations of various distribution functions.
@\label{lst:gsl-samplers-cf}@double gsl_cdf_gaussian_P    (double x, double sigma); // Cumulative Dist. Function (CDF)
@\label{lst:gsl-samplers-sf}@double gsl_cdf_gaussian_Q    (double x, double sigma); // Survival Function (SF)
@\label{lst:gsl-samplers-qf}@double gsl_cdf_gaussian_P_inv(double u, double sigma); // Quantile Function (QF)

// THIS WORK: Exact random variate generators given a numerical CDF and/or SF specification.
@\label{lst:gsl-samplers-s0}@GENERATE_FROM_CDF (gsl_cdf_gaussian_P, 5.0);
@\label{lst:gsl-samplers-s1}@GENERATE_FROM_SF  (gsl_cdf_gaussian_Q, 5.0);
@\label{lst:gsl-samplers-s2}@GENERATE_FROM_DDF (gsl_cdf_gaussian_P, gsl_cdf_gaussian_Q, 5.0);
\end{lstlisting}
\vspace{-.5cm}
\end{listing}

These generators do not use ad-hoc floating-point approximations of random uniform
reals (e.g., \texttt{uniform} or \texttt{uniform\_pos} in
\cref{lst:exponential-u0,lst:exponential-u1} of \cref{lst:exponential})
as the primitive unit of entropy.
Instead, they operate directly over individual random bits from streams
such as \texttt{rand()} (which provides pseudorandom bits) or
\texttt{/dev/urandom} (which provides cryptographically secure random
bits).
For example, using our framework, implementing a \texttt{uniform} that
guarantees 100\% coverage of all floats in $(0,1]$
or $[0, 1)$ (cf.~\cref{footnote:pytorch-uniform};
\cref{table:float-coverage} in \cref{appx:overview})
is just a matter of specifying the CDF:\\
\begin{minipage}{.5\linewidth}
\begin{lstlisting}[style=CC,basicstyle=\ttfamily\scriptsize,numbers=none,frame=none]
// Uniform over all doubles in (0,1]
double cdf_uniform_round_up(double x) {
  if (isnan(x)) { return 1 }
  return min(max(x, 0), 1);
  }
GENERATE_FROM_CDF(cdf_uniform_round_up);
\end{lstlisting}
\end{minipage}\hfill
\begin{minipage}{.5\linewidth}
\begin{lstlisting}[style=CC,basicstyle=\ttfamily\scriptsize,numbers=none,frame=none]
// Uniform over all doubles in [0,1)
double cdf_uniform_round_dn(double x) {
  double z = nextafter(x, INFINITY);
  return cdf_uniform_round_up(z);
  }
GENERATE_FROM_CDF(cdf_uniform_round_dn);
\end{lstlisting}
\end{minipage}\\
These functions specify the exact distributions of a random variate obtained
by correctly rounding an infinitely precise real $U\sim\mathrm{Uniform}([0,1])$
to the next and previous IEEE-754 double-precision float, respectively.
While highly specialized random variate generation algorithms for these challenging types
of uniform distributions over floats exist in the
literature~\citep{grabowski2015,walker1974,campbell2014,Downey2007,Goualard2022},
using our approach, exact generators are automatically
obtained by specifying a desired CDF, which is often
more straightforward than implementing the generator and extends
directly to nonuniform distributions.

The generators described in this work are also \textit{entropy-optimal}
(\cref{theorem:sampler-opt}), i.e., they consume the
information-theoretically minimal number of random bits from the entropy
source to generate an output.
They incur no additional floating-point error with respect to the CDF or SF
implementation, and avoid arbitrary-precision arithmetic
(\cref{prop:exactsubtract1}).
It is also straightforward to compute exact quantiles (\cref{alg:quantile})
of these generators without using approximate implementations
(e.g., \texttt{exp\_qf} on \cref{lst:exponential-qf} of \cref{lst:exponential})
that have no theoretical relationship to the implemented generator.

\Cref{lst:gsl-samplers} shows how our random variate generation algorithms
(\crefrange{lst:gsl-samplers-s0}{lst:gsl-samplers-s2})
interoperate with existing software such as the GNU Scientific
Library (GSL) by reusing high-quality CDF and/or SF implementations
(\crefrange{lst:gsl-samplers-cf}{lst:gsl-samplers-sf}).
The built-in GSL Gaussian generators
(\crefrange{lst:gsl-samplers-r1}{lst:gsl-samplers-r4}) often have complex
implementations spanning hundreds of lines of code, and each specify
different output distributions which are all intractable to estimate.
Indeed, any GSL random variate generator that makes just two
(or more) calls to \texttt{uniform} is already intractable to analyze.
In contrast, our generators
(\crefrange{lst:gsl-samplers-s0}{lst:gsl-samplers-s2}) have known output
distributions that match the specified CDF or SF.
They are also automatically synthesized from these specifications,
requiring no additional implementation code.
The view taken by our approach is to first develop high-quality CDF and/or
SF implementations that provide transparent formal specifications of the
desired distribution the random variates should follow.
This specification can be carefully debugged for numerical problems or
other errors (e.g., those in \cref{appx:survey}), after which we automatically
synthesize exact generators that match the specification.

\section{Preliminaries}
\label{sec:prelim}

\begin{definition}
\label{def:random-variable}
A \textit{random variable} $X: [0,1] \to \real$ is a Borel measurable map
from the unit interval into the reals.
We may equivalently write $X: \set{0,1}^\nat \to \real$,
where $X(u_1u_2\dots) \defas X\left(\sum_{i=1}^\infty u_i2^{-i}\right)$.
The \textit{distribution} of a random variable $X$ is the probability measure
$\mu_X$ over $(\real, \mathcal{B}(\real))$ such that $\mu_X(A) \defas \lambda(X^{-1}(A)) \asdef \Pr(X \in A)$,
where $A \in \mathcal{B}(\real)$ is an event and $\lambda$ the Lebesgue measure.
\end{definition}

\begin{remark}
\label{remark:random-variable-domain}
Defining the domain of a random variable to be $[0,1]$ (instead of say
$\cup_{n=0}^{\infty}[0,1]^n$) is made without loss of generality.
A single uniformly distributed real $\omega \defas
(0.\omega_1\omega_2\ldots)_2 \in [0,1]$ can be ``split'' into a
countably infinite number of i.i.d.~uniform numbers
$(u_1(\omega), u_2(\omega), \ldots)$ where
$u_n(\omega) \defas (0. \omega_{p_n}\omega_{p^2_n}\omega_{p^3_n}\dots)_2$ for
$n \ge 1$, with $p_1 < p_2 < p_3 < \ldots$ an enumeration of the primes.
\end{remark}

\begin{definition}
\label{def:cdf}
A \textit{cumulative distribution function} (CDF)
$F: \real \to [0,1]$ is a right-continuous monotone mapping such that
$\lim_{x\to-\infty} F(x) = 0$
and
$\lim_{x\to\infty} F(x) = 1$.
A \textit{survival function} (SF) $S: \real \to [0,1]$ is a mapping such that
$S(x) = 1 - F(x)$ for some CDF $F$.
A \textit{quantile function} (QF) $Q: [0,1] \to \real$ is a mapping such
that $Q(u) = \inf\set{x \in \real \mid u \le F(x)}$ for some CDF $F$.
\end{definition}

\begin{proposition}[{\citet[Theorems 12.4, 14.1]{Billingsley1986}}]
\label{remark:billingsley}
Let $([0,1], \mathcal{B}([0,1]), \lambda)$ be the standard probability space.
Each random variable $X: [0,1] \to \real$ has a unique CDF $F_X \defas \Pr(X \le x) = \lambda(X^{-1}(-\infty,x])$.
Each CDF $F: \real \to [0,1]$ is in 1-1 correspondence with a unique probability measure $\mu_F$ on
$(\real, \mathcal{B}(\real))$.
There exist infinitely many random variables
$X_F$ on $([0,1], \mathcal{B}([0,1]), \lambda)$
that are identically distributed as $\mu_F$, e.g.,
$X_F(\omega) = Q_F(U(\omega))$ where $U \sim \mathrm{Uniform}([0,1])$.
\end{proposition}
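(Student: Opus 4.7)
The proposition bundles three classical measure-theoretic facts, which I would prove in the order stated.

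First, for the uniqueness of $F_X$, I would set $F_X(x) \defas \lambda(X^{-1}((-\infty,x]))$ and verify the three defining properties of a CDF from \cref{def:cdf}. Borel-measurability of $X$ guarantees $X^{-1}((-\infty,x]) \in \borel{[0,1]}$, so $F_X$ is well-defined. Monotonicity is immediate from monotonicity of $\lambda$ applied to the nested family $\set{(-\infty,x]}_x$; right-continuity and the limits at $\pm\infty$ follow from continuity of measure from above/below applied to the decreasing sequence $(-\infty,x_n]$ with $x_n \downarrow x$ and to $(-\infty, \pm n]$ as $n \to \infty$. Uniqueness is trivial since $F_X$ is determined pointwise from $X$.

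Second, for the bijection between CDFs and Borel probability measures, one direction is easy: given $\mu$, define $F(x) \defas \mu((-\infty,x])$ and check the CDF axioms by the same arguments as above. The harder direction is constructing $\mu_F$ from $F$. I would define a pre-measure on the semi-ring of half-open intervals by $\mu_0((a,b]) \defas F(b) - F(a)$, verify finite additivity (routine), and then countable additivity — the key step relies crucially on the right-continuity of $F$, used to show that if $(a_n, b_n] \downarrow \emptyset$ then $F(b_n) - F(a_n) \to 0$. Extending via Carath\'eodory's theorem yields $\mu_F$ on $\borel{\real}$; the Dynkin $\pi$-$\lambda$ theorem gives uniqueness, since the half-open intervals form a $\pi$-system generating $\borel{\real}$.

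Third, for the existence of infinitely many random variables distributed as $\mu_F$, the canonical witness is $X_F(\omega) \defas Q_F(U(\omega))$. The core identity is the Galois-type equivalence $Q_F(u) \le x \iff u \le F(x)$, whose nontrivial direction again uses right-continuity of $F$ (to conclude the infimum defining $Q_F$ is attained). Measurability of $X_F$ follows because $Q_F$ is monotone, hence Borel. The equivalence then yields $\Pr(X_F \le x) = \lambda(\set{\omega : U(\omega) \le F(x)}) = F(x)$, so $X_F$ has distribution $\mu_F$ by the uniqueness established in the second part. To produce infinitely many \emph{distinct} such variables, I would precompose $U$ with any of the uncountably many measure-preserving bijections of $[0,1]$, e.g.~the bit-permutation construction of \cref{remark:random-variable-domain} with different enumerations of the primes, or the involution $\omega \mapsto 1 - \omega$; each yields a different function with the same pushforward.

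The main obstacle is the countable additivity of the pre-measure in the second step, which is the one place where right-continuity of $F$ does essential work and where the full strength of Carath\'eodory's extension theorem is invoked; everything else reduces to standard monotonicity and continuity-of-measure arguments.
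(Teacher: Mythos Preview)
The paper does not supply its own proof of this proposition: it is stated with an attribution to \citet[Theorems~12.4, 14.1]{Billingsley1986} and left unproved, since it is a classical background fact rather than a contribution of the paper. Your sketch is the standard textbook route (and essentially what Billingsley does): verify the CDF axioms from continuity of measure, build $\mu_F$ from $F$ via a pre-measure on half-open intervals and Carath\'eodory extension, and realize $X_F$ through the quantile transform using the Galois-type equivalence $Q_F(u)\le x \iff u\le F(x)$. There is nothing to compare against in the paper beyond the citation, and your argument is correct as written.
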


For a probability measure $\mu$ over $\real$, the CDF representation
$F_\mu$ is ideal for \textit{specification}: it
concisely describes $\mu$ in terms of a univariate real function.
On the other hand, a random variable representation $X_\mu$ is ideal for
\textit{generation}: it describes a procedure for producing
$\mu$-distributed random variates.
We next define random-variate generators, which are computable,
finite-resource analogues of random variables that do not require infinite
memory or computation over the reals.

\begin{definition}
\label{definition:random-variate-generator}
A \textit{random variate generator} $X: \set{0,1}^* \rightharpoonup \set{0,1}^n$ is a partial map
from the set of all finite-length binary strings to $n$-bit binary strings such
that the following two conditions hold:
\begin{align}
\mbox{\underline{Prefix-Free}:}\ u \in \dom(X) \implies u\set{0,1}^+ \notin \mathrm{dom}(X);
&&
\mbox{\underline{Exhaustive}:}\
\textstyle\sum_{u \in \dom(X)}2^{-\abs{u}} = 1.
\rlap{\;\qedsymbol}
\label{eq:isothujone}
\end{align}\noqed
\end{definition}

\begin{proposition}[name=,restate=LiftRVG]
\label{proposition:lift-random-sampler}
A random variate generator $X$ describes a random variable
$X_\gamma: [0,1] \to \mathbb{R}$ over at most $2^n$ values,
where $\Pr(X_\gamma=x) = \sum_{u \in \dom(X)} 2^{-\abs{u}}\mathbf{1}[\gamma(X(u)) = x]$
for any $\gamma: \set{0,1}^n \to \real$.
\end{proposition}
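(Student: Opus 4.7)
The plan is to construct $X_\gamma$ explicitly on the standard probability space $([0,1], \mathcal{B}([0,1]), \lambda)$ using the identification $[0,1] \cong \set{0,1}^\nat$ from \cref{def:random-variable}, and then read off the claimed distribution from the partition structure of $\dom(X)$.

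First I would fix notation for cylinder sets: for each finite string $u \in \set{0,1}^*$, let $C_u \defas \set{u \cdot v \mid v \in \set{0,1}^\nat} \subseteq \set{0,1}^\nat$ be the set of infinite bit sequences extending $u$. A standard fact is that $\lambda(C_u) = 2^{-\abs{u}}$ under the identification $\omega \mapsto \sum_{i\ge 1} \omega_i 2^{-i}$, and the cylinders generate $\mathcal{B}([0,1])$. The \emph{prefix-free} clause of \cref{eq:isothujone} immediately gives that the family $\set{C_u}_{u \in \dom(X)}$ is pairwise disjoint, since any two comparable extensions $u, u' \in \dom(X)$ with $u$ a strict prefix of $u'$ would violate prefix-freeness. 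The \emph{exhaustive} clause then gives $\lambda\bigl(\bigcup_{u \in \dom(X)} C_u\bigr) = \sum_{u \in \dom(X)} 2^{-\abs{u}} = 1$, so the cylinders partition $[0,1]$ up to a $\lambda$-null set $N$.

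Second I would define $X_\gamma$ pointwise. For $\omega \notin N$, there is a unique $u_\omega \in \dom(X)$ with $\omega \in C_{u_\omega}$, and I set $X_\gamma(\omega) \defas \gamma(X(u_\omega))$; on the null set $N$, set $X_\gamma(\omega) \defas 0$ (or any fixed value). Since $X(u) \in \set{0,1}^n$, the image of $X$ has at most $2^n$ elements, so $X_\gamma$ takes at most $2^n$ distinct values in $\real$, as claimed. For Borel measurability, for any $x \in \real$ the preimage
\begin{equation*}
X_\gamma^{-1}(\set{x}) = \bigcup_{\substack{u \in \dom(X) \\ \gamma(X(u)) = x}} C_u \;\; (\bmod\, N),
\end{equation*}
is a countable union of cylinders (modulo a null set), hence Borel; since $X_\gamma$ takes only finitely many values, measurability of all preimages $X_\gamma^{-1}(A)$ for $A \in \mathcal{B}(\real)$ follows.

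Third I would compute the distribution. By countable additivity of $\lambda$ and disjointness of the $C_u$,
\begin{equation*}
\Pr(X_\gamma = x) = \lambda\bigl(X_\gamma^{-1}(\set{x})\bigr)
= \sum_{u \in \dom(X)} \lambda(C_u)\,\mathbf{1}[\gamma(X(u)) = x]
= \sum_{u \in \dom(X)} 2^{-\abs{u}}\,\mathbf{1}[\gamma(X(u)) = x],
\end{equation*}
which is exactly the stated formula. The only genuinely delicate point is the non-uniqueness of binary expansions at dyadic rationals, which forces us to work modulo a countable (hence $\lambda$-null) set $N$; everything else is bookkeeping with cylinder measures. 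I expect no substantive obstacle beyond being careful that the prefix-free/exhaustive conditions are doing exactly the work needed to promote $\dom(X)$ into a measurable partition of $[0,1]$.
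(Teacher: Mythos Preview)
Your proposal is correct and follows essentially the same approach as the paper: the paper constructs half-open intervals $I_u = [(0.u)_2, (0.u\bar{1})_2)$ in $[0,1]$, uses prefix-freeness for disjointness and exhaustiveness for measure one, defines $X_\gamma$ piecewise on the intervals (arbitrarily on the null complement), and computes the distribution by countable additivity. Your cylinder-set formulation is the same argument phrased in $\set{0,1}^\nat$ rather than $[0,1]$, and you are slightly more explicit about measurability and the dyadic-rational ambiguity, but the substance is identical.
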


\bgroup
\setlength{\textfloatsep}{0pt}

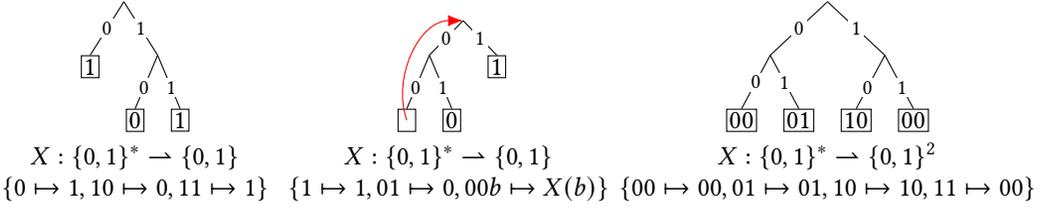
\begin{figure}[t]
\tikzset{level distance=20pt, sibling distance=10pt}
\tikzset{every tree node/.style={anchor=north}}
\tikzstyle{leaf}=[inner sep=1pt,draw]
\tikzstyle{branch}=[font=\footnotesize,fill=white,inner sep=1pt]
\begin{minipage}[b]{.275\linewidth}
\centering
\begin{tikzpicture}
\Tree[
  \edge node[branch,pos=.5]{0}; \node[leaf]{1};
  \edge node[branch,pos=.5]{1} ;
    [
    \edge node[branch,pos=.6]{0}; \node[leaf]{0};
    \edge node[branch,pos=.6]{1}; \node[leaf]{1};
    ] ]
\end{tikzpicture}

$X: \set{0,1}^* \rightharpoonup \set{0,1}$\\
$\set{0 \mapsto 1, 10 \mapsto 0, 11 \mapsto 1}$
\end{minipage}\hfill
\begin{minipage}[b]{.325\linewidth}
\centering
\begin{tikzpicture}
\Tree[.\node[name=root]{};
  \edge node[pos=.5,branch]{0};
    [
     \edge node[pos=.6,branch]{0}; \node[leaf,draw,name=back]{\phantom{0}};
     \edge node[pos=.6,branch]{1}; \node[leaf,draw]{0};
     ]
  \edge node[pos=.5,branch]{1}; \node[leaf,draw]{1};
  ]
\draw[red,-{Latex[length=2mm]},out=110,in=180] (back.center) to (root.south); 
\end{tikzpicture}

$X: \set{0,1}^* \rightharpoonup \set{0,1}$\\
$\set{1 \mapsto 1, 01 \mapsto 0, 00b \mapsto X(b)}$
\end{minipage}\hfill
\begin{minipage}[b]{.4\linewidth}
\centering
\begin{tikzpicture}
\Tree[
  \edge node[pos=.5,branch]{0};
    [
      \edge node[pos=.5,branch]{0}; \node[leaf,draw]{00};
      \edge node[pos=.5,branch]{1}; \node[leaf,draw]{01};
    ]
  \edge node[pos=.5,branch]{1};
    [
    \edge node[pos=.6,branch]{0}; \node[leaf,draw]{10};
    \edge node[pos=.6,branch]{1}; \node[leaf,draw]{00};
    ] ]
\end{tikzpicture}

$X: \set{0,1}^* \rightharpoonup \set{0,1}^2$\\
$\set{00 \mapsto 00, 01 \mapsto 01, 10 \mapsto 10, 11 \mapsto 00}$
\end{minipage}
\captionsetup{skip=5pt}
\caption{Three random variate generators represented as discrete distribution generating (DDG) trees.
The 0/1 labels along the edges are omitted from the tree diagrams going forward.}
\label{fig:ddg-tree-examples}
\end{figure}

\egroup

\Cref{definition:random-variate-generator} corresponds to the
``discrete distribution generating'' (DDG) trees from \citet[Section 2]{knuth1976}.
Any random variate generator can be drawn as a tree where leaves have labels in $\set{0,1}^n$
and branches correspond to the decision on the next $0$ or $1$ input bit (\cref{fig:ddg-tree-examples}).

\begin{definition}
The \textit{entropy cost} of a random variate generator $X$ is a
random variable $C$ over $\mathbb{N}$ measuring the number of bits consumed, with
$\Pr(C = c) = \textstyle\sum_{u \in \dom(X)} \mathbf{1}[\abs{u}=c]2^{-c}$
for $c \ge 0$.
\end{definition}

\begin{definition}
A \textit{concise binary expansion} of a real number $x$ is a binary representation
that does not end in an infinite string of 1s.
Binary expansions hereon are always concise.
\end{definition}

\begin{theorem}[\Citet{knuth1976}]
\label{theorem:knuth-yao}
Let $p \defas \set{\ell_1 \mapsto p_1, \dots, \ell_m \mapsto p_m}$ be a discrete probability
distribution over $m \ge 1$ outcomes $\ell_1,\dots,\ell_m$.
Write the binary expansions as $p_i = (p_{i0}.p_{i1}p_{i2}\dots)_2$ for
$i=1,\dots,m$.
A random variate generator $X$ for $p$ has minimal expected
entropy cost $\mathbb{E}[C]$ (i.e., it is ``entropy-optimal'') if and
only if its DDG tree contains exactly $p_{ij}$
leaf nodes labeled $\ell_i$ at depth $j\ge0$.
Further, $H(p) \le \mathbb{E}[C] < H(p)+2$, where
$H(p)\defas \sum_{i=1}^{m}- p_i \log_2 (p_i)$ is the Shannon entropy of $p$.
\end{theorem}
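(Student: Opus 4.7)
The plan is to characterize an arbitrary DDG tree through the nonnegative integers $n_{ij}$ counting leaves labeled $\ell_i$ at depth $j \ge 0$, so that $p_i = \sum_{j \ge 0} n_{ij} 2^{-j}$ and $\mathbb{E}[C] = \sum_{i,j} j\, n_{ij}\, 2^{-j}$. Because both the probability constraints and the objective decompose over labels $i$ without any coupling, minimizing $\mathbb{E}[C]$ reduces to $m$ independent single-label problems, provided the resulting depth profile $a_j = \sum_i n_{ij}^{\ast}$ is realizable as a prefix-free exhaustive binary tree.

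For each single-label problem, I would argue by a merge operation: whenever $n_{ij} \ge 2$, replacing $n_{ij} \leftarrow n_{ij} - 2$ and $n_{i,j-1} \leftarrow n_{i,j-1} + 1$ preserves the sum $\sum_j n_{ij} 2^{-j} = p_i$ and strictly decreases the cost by $2^{-(j-1)}$. Iterating drives every $n_{ij}$ into $\{0,1\}$, and uniqueness of concise binary expansions then forces $n_{ij} = p_{ij}$; this pins down the optimizer and proves the structural ``only if'' direction. For the ``if'' direction I would invoke Kraft's equality, which holds because $\sum_j a_j^{\ast} 2^{-j} = \sum_i \sum_j p_{ij} 2^{-j} = \sum_i p_i = 1$, so a tree with $a_j^{\ast}$ leaves at depth $j$ exists, and its leaves can be labeled arbitrarily to place exactly $p_{ij}$ copies of $\ell_i$ at depth $j$; every such labeling attains the minimum cost $\sum_{i,j} j\, p_{ij} 2^{-j}$.

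For the entropy sandwich, the lower bound $H(p) \le \mathbb{E}[C]$ follows from the observation that the random input-bit path $u$ has distribution $\Pr(u) = 2^{-|u|}$ on $\dom(X)$ by the prefix-free and exhaustive conditions of \cref{eq:isothujone}, so $H(\mathrm{path}) = \sum_{u \in \dom(X)} 2^{-|u|} |u| = \mathbb{E}[C]$; since the output is a deterministic function of the path, data processing gives $H(p) \le \mathbb{E}[C]$. For the upper bound $\mathbb{E}[C] < H(p) + 2$, I would write $\mathbb{E}[C] = \sum_i \nu_i$ with $\nu_i \defas \sum_j j\, p_{ij} 2^{-j}$ and bound $\nu_i$ against $-p_i \log_2 p_i$ by isolating the leading nonzero bit of $p_i$ (which contributes $\approx -p_i \log_2 p_i$) from the geometric tail of subsequent bits (which contributes at most $2 p_i$ in aggregate), then summing over $i$.

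The hard part is the tight constant $2$ in the upper bound: naive Shannon-style estimates against a generic prefix code give a looser constant, and the sharper bound requires a careful split between the first $1$-bit of $p_i$ (giving essentially the Shannon term) and the exponentially decaying remainder, together with the fact that $p_i \in (2^{-k_i},\, 2^{-k_i+1}]$ when $k_i$ is the index of that first $1$-bit.
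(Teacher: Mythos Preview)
The paper does not give its own proof of this statement: it is cited verbatim as a result of \citet{knuth1976} and used as a black box throughout, so there is no in-paper argument to compare against. Your sketch is essentially the classical Knuth--Yao proof.

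That said, two places in your outline would need tightening. First, the merge argument stops once every $n_{ij}\in\{0,1\}$, and you then invoke ``uniqueness of concise binary expansions'' to conclude $n_{ij}=p_{ij}$. But for dyadic $p_i$ the non-concise expansion also has all digits in $\{0,1\}$ (e.g., $1/2=(0.0\overline{1})_2$), so uniqueness fails there; you still need a short direct comparison to rule out the infinite-tail representation. Second, the upper bound as you describe it only gives $H(p)+3$: with $k_i$ the position of the leading $1$-bit you get $k_i p_i - (-p_i\log_2 p_i)\in[0,p_i)$ and tail $<2\cdot 2^{-k_i}\le 2p_i$, which sum to $<3p_i$. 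Obtaining the sharp $+2$ requires coupling these two error terms---equivalently, proving $g(\alpha)/(1+\alpha)+\log_2(1+\alpha)<2$ for all $\alpha\in[0,1)$, where $p_i=2^{-k_i}(1+\alpha)$ and $g(\alpha)=\sum_{m\ge 1} m\,\alpha_m 2^{-m}$---which is exactly the delicate step in the original Knuth--Yao argument that you flag as ``the hard part.''
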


\begin{example}
  In \cref{fig:ddg-tree-examples}, the first two DDG trees show entropy-optimal generators
  for $p = \set{0 \mapsto (0.01)_2, 1 \mapsto (0.11)_2}$
  and $p = \set{0 \mapsto (0.\overline{01})_2, 1 \mapsto (0.\overline{10})_2}$.
  The third DDG tree is an entropy-suboptimal generator for
  $p = \set{00 \mapsto (0.10)_2, 01 \mapsto (0.01)_2, 10 \mapsto (0.01)_2}$.
\end{example}

\section{Exact Random Variate Generators for Binary-Coded Probability Distributions}
\label{sec:binary}

Since the size of a DDG tree is lower bounded by the number $m$ of discrete
outcomes, explicitly constructing entropy-optimal DDG trees using
\cref{theorem:knuth-yao} is computationally intractable whenever $m$ is
combinatorially large.
For example, a discrete distribution with full support
over all IEEE-754 double-precision floats has $m = 2^{64}$ outcomes.
To address this challenge, we introduce
\textit{binary-coded probability distributions}, which are a universal
mathematical representation for lazily describing any computable
probability measure over the reals.
This powerful abstraction lets us develop entropy-optimal random variate
generation algorithms that avoid the combinatorial explosion in the DDG tree size.
The ``idealized'' algorithms in this section will be specialized in
\cref{sec:floating,sec:survival} to obtain efficient software
implementations of random variate generators on finite-precision computers.


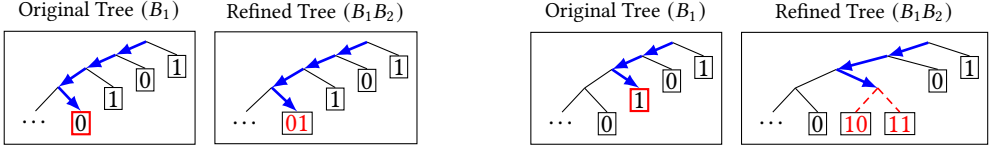
\begin{figure}[t]
\centering
\definecolor{mygreen}{RGB}{0,160,0}
\tikzset{level 1/.style={level distance=5pt}}
\tikzset{level 2/.style={level distance=5pt}}
\tikzset{level 3/.style={level distance=7pt}} 
\tikzset{level 4/.style={level distance=9pt}} 
\tikzset{sibling distance=5pt}
\tikzset{every tree node/.style={anchor=north}}

\tikzstyle{branch}=[shape=coordinate]
\tikzstyle{excE}=[color=blue, line width=1pt,-latex]    
\tikzstyle{refE}=[color=red,  line width=0.6pt,densely dashed] 
\tikzstyle{curN}=[color=black,draw=black,inner sep=1pt] 
\tikzstyle{excN}=[color=black,draw=red,  inner sep=1pt, thick] 
\tikzstyle{refN}=[color=red,  draw=black,inner sep=1pt] 

\begin{subfigure}[b]{.45\linewidth}
\centering
\begin{tikzpicture}
\node[name=x,draw,label={[font=\footnotesize]above:Original Tree $(B_1)$}]{
  \begin{tikzpicture}
  \Tree
  [.\node[branch,name=root]{};
    \edge[excE];
    [
      \edge[excE];
      [ \edge[excE];[ [.\node[name=R]{\dots};] \edge[excE]; \node[excN]{0}; ] \node[curN]{1}; ] \node[curN]{0}; ]
    \node[curN]{1};
  ]
  \end{tikzpicture}
};

\node[name=y,draw, right=0.25 of x, label={[font=\footnotesize]above: Refined Tree $(B_1B_2)$}]{
  \begin{tikzpicture}
  \Tree
  [.\node[branch,name=root]{};
    \edge[excE];
    [
      \edge[excE];
      [ \edge[excE];[ [.\node[name=R]{\dots};] \edge[excE];\node[refN]{01}; ] \node[curN]{1}; ] \node[curN]{0}; ]
    \node[curN]{1};
  ]
  \end{tikzpicture}
};

\end{tikzpicture}
\caption{Refining a node into a leaf. After refinement, 01 is returned using zero additional flips.}
\label{fig:ddg-tree-refine-leaf}
\end{subfigure}\hfill
\begin{subfigure}[b]{.5\linewidth}
\centering
\begin{tikzpicture}
\node[name=x,draw,label={[font=\footnotesize]above:Original Tree $(B_1)$}]{
  \begin{tikzpicture}
  \Tree
  [.\node[branch,name=root]{};
    \edge[excE];
    [
      \edge[excE];
      [ [ [.\node[name=R]{\dots};] \node[curN]{0}; ] \edge[excE];\node[excN]{1}; ]
      \node[curN]{0}; ]
    \node[curN]{1};
  ]
  \end{tikzpicture}
};

\node[name=y, right=.25 of x, draw, label={[font=\footnotesize]above: Refined Tree $(B_1B_2)$}]{
  \begin{tikzpicture}
  \Tree
  [.\node[branch,name=root]{};
    \edge[excE];
    [
      \edge[excE];
      [
        [ [.\node[name=R]{\dots};] \node[curN]{0}; ]
        \edge[excE];
        [.\node[branch]{}; 
          \edge[refE]; \node[refN]{10};
          \edge[refE]; \node[refN]{11}; ] ]
      \node[curN]{0};
      ]
    \node[curN]{1};
  ]
  \end{tikzpicture}
};

\end{tikzpicture}
\caption{Refining a node into a subtree.
After refinement, 10 or 11 is returned using one additional flip.}
\label{fig:ddg-tree-refine-subtree}
\end{subfigure}
\captionsetup{skip=4pt}
\caption{Dynamically refining the leaves of an optimal DDG tree for lazy random variate generation.}
\label{fig:ddg-tree-refine}
\vspace{-5pt}
\end{figure}

\begin{definition}
\label{definition:bcpd}
\label{eq:hemophobia}
A \textit{binary-coded probability distribution}
is a map $p: \bool^* \to [0,1]$ such that
$p(\varepsilon)=1$
and
$p(b_1\ldots{b_j}) = p(b_1\ldots{b_j}0) + p(b_1\ldots{b_j}1)$
for all $j \ge 1$ and $b_1,\dots,b_j \in \bool$.
\end{definition}

A binary-coded probability distribution $p$
defines a family of discrete probability distributions
\begin{equation}
p_n \,{\defas}\, \set{b \,{\mapsto}\, p(b) \,{\mid}\, b \,{\in}\, \bool^n}
\end{equation}
over $\bool^n$ ($n \ge 0$).
For example,
$p(00) = 0.5; p(01) = 0.2;
p(10) = 0.1; p(11) = 0.2$ defines the distributions
$p_1\defas\set{0\mapsto 0.7, 1 \mapsto 0.3}$,
$p_2\defas\set{00\mapsto 0.5, 01\mapsto0.2, 10 \mapsto 0.1, 11\mapsto0.2}$.

\setlength{\intextsep}{0pt}%
\setlength{\columnsep}{8pt}%
\begin{wrapfigure}{r}{0.35\textwidth}

\begin{adjustbox}{width=\linewidth}
\hfill\begin{tikzpicture}[yscale=3, xscale=4,thick]

\tikzset{lbl/.style={inner sep=0.25pt,pos=0.5,above,font=\footnotesize}}

\draw (0,0) -- (1,0) -- (1,1) -- (0,1) -- cycle;

\draw (-0.025, 0) -- (0.025,0) node[lbl,pos=0,left]{$0$};
\draw (-0.025, 1) -- (0.025,1) node[lbl,pos=0,left]{$1$};
\draw[name path=cdf] (0,0) .. controls (.292,.809) and (.892,.234) .. (1,1) node[pos=0.9,left,font=\footnotesize]{CDF $F$};

\foreach \i\istr in {0/{4}, 1/{5}, 2/{6}, 3/{7}, 4/{8}, 5/{9}, 6/{10}, 7/{11}, 8/{12}} {
  \draw[] (\i/8, -0.025) -- (\i/8, +0.025) node[font=\normalsize,anchor=north,pos=0,inner sep=1pt]{$\frac{\istr}{4}$};
  \draw[name path=path-\i,draw=none] (\i/8, 0) -- (\i/8, 1);
  \path[name intersections={of=cdf and path-\i,by=i\i}] node[name=n\i, at=(i\i),circle,fill=black, outer sep=0pt, inner sep=1.25pt]{};
}
\begin{scope}[on background layer]
\foreach \i in {1,...,7}{
  \draw[color=red,thick] (\i/8,0) -- (n\i) -- (n\i -| {(0,0)});
}
\end{scope}

\draw[decorate,decoration={brace,amplitude=5pt,raise=.5ex}]
  ([yshift=-.5pt]0,1) -- ([yshift=+.5pt]n7 -| {(0,0)})
  node[pos=0.5,right,font=\footnotesize,xshift=6pt]{$p(111)$};

\def\ys{0.28}
\def\tick{0.015 cm}

\draw[] (0,-\ys) -- (1,-\ys) node[lbl]{$\epsilon$};
\foreach \i in {0,1} {
  \draw[] ([yshift=-\tick]\i,-\ys) -- ([yshift=\tick]\i,-\ys);
}

\draw[yshift=-0.1cm] (0,-\ys) -- (0.500,-\ys) node[lbl]{$0$};
\draw[yshift=-0.1cm] (0.500,-\ys) -- (1,-\ys) node[lbl]{$1$};
\foreach \i in {0,0.500,0.500,1} {
  \draw[yshift=-0.1cm] ([yshift=-\tick]\i,-\ys) -- ([yshift=\tick]\i,-\ys);
}

\draw[yshift=-0.2cm] (0,-\ys) -- (0.250,-\ys) node[lbl]{$00$};
\draw[yshift=-0.2cm] (0.250,-\ys) -- (0.500,-\ys) node[lbl]{$01$};
\draw[yshift=-0.2cm] (0.500,-\ys) -- (0.750,-\ys) node[lbl]{$10$};
\draw[yshift=-0.2cm] (0.750,-\ys) -- (1,-\ys) node[lbl]{$11$};
\foreach \i in {0,0.250,0.500,0.500,0.750,1.} {
  \draw[yshift=-0.2cm] ([yshift=-\tick]\i,-\ys) -- ([yshift=\tick]\i,-\ys);
}

\draw[yshift=-0.3cm] (0,-\ys) -- (0.125,-\ys) node[lbl]{$000$} node[below]{$\dots$};
\draw[yshift=-0.3cm] (0.125,-\ys) -- (0.250,-\ys) node[lbl]{$001$} node[below]{$\dots$};
\draw[yshift=-0.3cm] (0.250,-\ys) -- (0.375,-\ys) node[lbl]{$010$} node[below]{$\dots$};
\draw[yshift=-0.3cm] (0.375,-\ys) -- (0.500,-\ys) node[lbl]{$011$} node[below]{$\dots$};
\draw[yshift=-0.3cm] (0.500,-\ys) -- (0.625,-\ys) node[lbl]{$100$} node[below]{$\dots$};
\draw[yshift=-0.3cm] (0.625,-\ys) -- (0.750,-\ys) node[lbl]{$101$} node[below]{$\dots$};
\draw[yshift=-0.3cm] (0.750,-\ys) -- (0.875,-\ys) node[lbl]{$110$} node[below]{$\dots$};
\draw[yshift=-0.3cm] (0.875,-\ys) -- (1,-\ys) node[lbl]{$111$};
\foreach \i in {0, 1, 2, 3, 4, 5, 6, 7, 8} {
  \draw[yshift=-0.3cm] ([yshift=-\tick]\i/8,-\ys) -- ([yshift=\tick]\i/8,-\ys);
}
\end{tikzpicture}
\end{adjustbox}
\vspace{-.8cm}
\end{wrapfigure}

Binary-coded probability distributions also correspond to
probability measures over $\set{0,1}^\nat \equiv \real$
\citep[Lemma~7.1.2]{bogachev2007}.
The plot to the right shows this idea for a CDF $F$ over $[1,3]$.
Each binary string $b$ defines an interval $[x_1(b), \allowbreak x_2(b)] \subset [1,3]$,
where $p(b) = F(x_2(b)) - F(x_1(b))$ is its probability under $F$, e.g., $p(01) = F(8/4)-F(6/4)$.
The same idea holds for unbounded domains,
by using binary-coded partitions of $\real$~\citep[\S3]{knuth1976}.

\subsection{Entropy-Suboptimal Generation}
\label{sec:binary-sampling-suboptimal}

Before considering optimal algorithms, an
entropy-suboptimal baseline for generating a random string $(B_1, B_2, \dots)\,{\sim}\,p$
is \textit{conditional bit sampling}~\citep[\S{II.B}]{Sobolewski1972}.
This method generates one bit at a time by using the chain rule of probability,
i.e.,
$B_1\,{\sim}\,\mathrm{Bernoulli}(p(1))$,
$B_2{\mid}{B_1}\,{\sim}\,\mathrm{Bernoulli}(p(B_11)/p(B_1))$, etc.
However, as generating each $B_n$ requires roughly two random bits in
expectation (\Cref{appx:naive-baseline-binary}), conditional bit sampling is
entropy-inefficient, consuming roughly $2n$ more bits than an entropy-optimal
sampler for generating a length-$n$ bit string in the worst case
(\cref{proposition:opt-naive-bound}). 
A second challenge is the high computational cost of computing
the conditional probabilities during generation.
The conditional bit sampling baseline is presented and analyzed in
\cref{appx:naive-baseline}; and evaluated in \cref{sec:evaluation}.


\begin{figure}[p]

\newcommand{\hhl}[2]{{\sethlcolor{#1}\hl{#2}}}
\newcommand{\hlp}[1]{\hhl{pink}{#1}}
\newcommand{\hlg}[1]{\hhl{Goldenrod}{#1}}
\newcommand{\hlb}[1]{\hhl{cyan!50!white}{#1}}
\newcommand{\gr}[1]{{\textcolor{gray}{#1}}}
\newcommand{\tm}[1]{\tikz[overlay, remember picture, baseline=(#1.south)] \node[name=#1,rectangle,draw] {};}

\setlength{\tabcolsep}{4pt}
\setlength{\intextsep}{0pt}

\begin{minipage}[t]{.46\linewidth}
\begin{algorithm}[H]
\captionsetup{hypcap=false}
\caption{Optimal Generation}
\label{alg:sampler-opt}
\algrenewcommand\algorithmicindent{1.0em}%
\begin{algorithmic}[1]
\Require{%
  Binary-coded probability distribution $p: \set{0,1}^* \to [0,1]$,
  cf.~\cref{definition:bcpd} \\
  \color{gray}{String $b \in \bool^*$ generated so far}\\
  \color{gray}{\#Flips $\ell \ge 0$ consumed so far}
  }
\Ensure{Random bitstream $b$ drawn from $p$,
  i.e., $b_1\dots{b_n}\sim{p}_n$ $(n \ge 0)$
}
\Function{\SampleOpt}{$p$, $b=\varepsilon$, $\ell=0$}
  \If{$[p({b0})]_{\ell} = 1 \wedge [p({b1})]_\ell = 0$}
    \Comment{Leaf}
    \label{algline:UnivSampler-If1-Start}
    \State \Return $\SampleOpt(p, b0, \ell)$
    \Comment{0}%
    \label{algline:UnivSampler-Print0-NoIter}
  \EndIf
  \If{$[p(b0)]_{\ell} = 0 \wedge [p({b1})]_\ell = 1$}
    \Comment{Leaf}
    \State \Return $\SampleOpt(p, b1, \ell)$
    \Comment{1}%
    \label{algline:UnivSampler-Print1-NoIter}
  \EndIf
  \While{$\textbf{true}$} \label{algline:UnivSampler-Loop}
    \Comment{Refine Subtree}
    \State $x \gets \Flip()$; $\ell \gets \ell+1$
    \If{$x = 0 \wedge [p(b0)]_{\ell} = 1$}
      \Comment{Leaf}
      \State \Return $\SampleOpt(p, b0, \ell)$
      \Comment{0}%
      \label{algline:UnivSampler-Print0}
    \EndIf
    \If{$x = 1 \wedge [p(b1)]_{\ell} = 1$}
      \Comment{Leaf}
      \State \Return $\SampleOpt(p, b1, \ell)$
      \Comment{1}%
      \label{algline:UnivSampler-Print1}
    \EndIf
  \EndWhile
\EndFunction
\end{algorithmic}
\end{algorithm}
\end{minipage}\hfill
\addtocounter{figure}{1}
\addtocounter{subfigure}{-2} 
\begin{subfigure}[t]{.51\linewidth}
\centering
\captionsetup{aboveskip=0pt,belowskip=0pt}
\caption{A binary-coded probability distribution $p$ unrolled to four bits,
giving a discrete distribution over $\set{0,1}^4$.}
\label{fig:optimal-sampler-dist}
\begin{adjustbox}{max width=\linewidth}
\begin{tikzpicture}
\node[]{$
\begin{NiceMatrix}[l]
 0000 \mapsto \frac{6}{137}
&0001 \mapsto \frac{12}{137}
&0010 \mapsto \frac{13}{137}
&0011 \mapsto \frac{9}{137}
\\[2pt]
 0100 \mapsto \frac{10}{137}
&0101 \mapsto \frac{12}{137}
&0110 \mapsto \frac{6}{137}
&0111 \mapsto \frac{1}{137}
\\[2pt]
 1000 \mapsto \frac{1}{137}
&1001 \mapsto \frac{2}{137}
&1010 \mapsto \frac{13}{137}
&1011 \mapsto \frac{8}{137}
\\[2pt]
 1100 \mapsto \frac{14}{137}
&1101 \mapsto \frac{13}{137}
&1110 \mapsto \frac{7}{137}
&1111 \mapsto \frac{10}{137}
\CodeAfter
  \SubMatrix\lbrace{1-1}{4-4}\rbrace
\end{NiceMatrix}$};
\end{tikzpicture}
\end{adjustbox}
\centering
\captionsetup{aboveskip=4pt,belowskip=2pt}
\caption{A trace of \cref{alg:sampler-opt} on the distribution $p$ from \subref{fig:optimal-sampler-dist}.}
\label{fig:optimal-sampler-trace}
\begin{adjustbox}{valign=t,max width=\linewidth}
\begin{tabular}{|c||p{1.25cm}@{=\,}p{.5cm}@{\,=\,}l@{\,}lllll|c@{}}
\cline{5-9}
\multicolumn{1}{c}{Recur.} & \multicolumn{3}{c|}{~} & \multicolumn{5}{c|}{\Flip\,$x$} & ~
\\
\multicolumn{1}{c}{Level}  & \multicolumn{3}{c|}{Probabilities}            & 1      & 0       & \hlb{1}    & 1       & \hlb{0}    & Output $b$ \\ \hline\hline
\multirow{2}{*}{0} & $p(0)$      & $\frac{69}{137}$ & \bf{0.} & \bf{1} & \bf{0} & \bf{0}     & \gr{0} & \gr{0}     & ~ \Tstrut \\
~                  & $p(1)$      & $\frac{68}{137}$ & \bf{0.} & \bf{0} & \bf{1} & \hlg{\bf1} & \gr{1} & \gr{1}     & \hlp{1} \Tstrut\Bstrut \\ \cline{1-9}
\multirow{2}{*}{1} & $p({10})$   & $\frac{24}{137}$ & \gr{0.} & \gr{0} & \gr{0} & \hlg{\bf1} & \gr{0} & \gr{1}     & \hlp{0} \Tstrut \\
~                  & $p({11})$   & $\frac{44}{137}$ & \gr{0.} & \gr{0} & \gr{1} & \bf{0}     & \gr{1} & \gr{0}     & ~ \Tstrut\Bstrut \\ \cline{1-9}
\multirow{2}{*}{2} & $p({100})$  & $\frac{3}{137}$  & \gr{0.} & \gr{0} & \gr{0} & \bf{0}     & \gr{0} & \gr{0}     & ~ \Tstrut \\
~                  & $p({101})$  & $\frac{21}{137}$ & \gr{0.} & \gr{0} & \gr{0} & \hlg{\bf1} & \gr{0} & \gr{0}     & \hlp{1} \Tstrut\Bstrut \\\cline{1-9}
\multirow{2}{*}{3} & $p({1010})$ & $\frac{13}{137}$ & \gr{0.} & \gr{0} & \gr{0} & \bf{0}     & \bf{1} & \hlg{\bf1} & \hlp{0} \Tstrut \\
~                  & $p({1011})$ & $\frac{8}{137}$  & \gr{0.} & \gr{0} & \gr{0} & \bf{0}     & \bf{0} & \bf{1}     & ~ \Tstrut\Bstrut \\\hline
\end{tabular}
\end{adjustbox}
\end{subfigure}

\vspace{4pt}

\begin{subfigure}{\linewidth}
\centering
\definecolor{mygreen}{RGB}{0,160,0}
\tikzset{level distance=8pt, sibling distance=5pt}
\tikzset{every tree node/.style={anchor=north}}

\tikzstyle{title}=[color=black,font=\normalsize,label distance=-5pt] 
\tikzstyle{branch}=[shape=coordinate]
\tikzstyle{curE}=[color=black] 
\tikzstyle{prvE}=[color=gray ] 
\tikzstyle{excE}=[color=blue, line width=1pt,-latex]    
\tikzstyle{refE}=[color=red,  line width=0.6pt,densely dashed] 
\tikzstyle{curN}=[color=black,draw=black,inner sep=1pt] 
\tikzstyle{prvN}=[color=gray, draw=gray, inner sep=1pt] 
\tikzstyle{excN}=[color=black,draw=red,  inner sep=1pt, thick] 
\tikzstyle{refN}=[color=red,  draw=black,inner sep=1pt] 
\tikzstyle{curNnoline}=[color=black]
\tikzstyle{prvNnoline}=[color=gray ]

\begin{adjustbox}{max width=.8\linewidth}
\begin{tikzpicture}

\node[name=t0,label={[title]above:Original Tree ($B_1$)}]{
  \begin{tikzpicture}[remember picture]
  \Tree
  [.\node[branch,name=r0]{};
    \edge[curE]; \node[curN]{0};
    \edge[excE]; [
      \edge[excE]; [
        \edge[curE]; [
          \edge[curE]; [
            \edge[curE]; \node[curNnoline]{\dots};
            \edge[curE]; \node[curN]{1};
          ]
          \edge[curE]; \node[curN]{1};
        ]
        \edge[excE]; \node[excN]{1};
      ]
      \edge[curE]; \node[curN]{1};
    ]
  ]
  \end{tikzpicture}
};

\node[name=t0b,right=2.5 of t0,label={[title]above:Refined Tree ($B_1 B_2$)}]{ 
  \begin{tikzpicture}[remember picture]
  \Tree
  [.\node[branch,name=r0b]{};
    \edge[curE]; \node[curN]{0};
    \edge[curE]; [
      \edge[curE]; [
        \edge[curE]; [
          \edge[curE]; [
            \edge[curE]; \node[curNnoline]{\dots};
            \edge[curE]; \node[curN]{1};
          ]
          \edge[curE]; \node[curN]{1};
        ]
        \edge[curE]; \node[refN]{10};
      ]
      \edge[curE]; \node[curN]{1};
    ]
  ]
  \end{tikzpicture}
};

\node[name=t1,below=0.675 of t0,label={[title]above:Original Tree $(B_1 B_2)$}]{
  \begin{tikzpicture}[remember picture]
  \Tree
  [.\node[branch,name=r0]{};
    \edge[prvE]; \node[prvN]{0};
    \edge[prvE]; [
      \edge[prvE]; [
        \edge[prvE]; [
          \edge[prvE]; [
            \edge[prvE]; \node[prvNnoline]{\dots};
            \edge[prvE]; \node[prvN]{1};
          ]
          \edge[prvE]; \node[prvN]{1};
        ]
        \edge[prvE]; \node[excN]{10};
      ]
      \edge[prvE]; \node[prvN]{1};
    ]
  ]
  \end{tikzpicture}
};

\node[name=t1b,at=(t1 -| t0b),label={[title]above:Refined Tree $(B_1 B_2 B_3)$}]{ 
  \begin{tikzpicture}[remember picture]
  \Tree
  [.\node[branch,name=r0]{};
    \edge[prvE]; \node[prvN]{0};
    \edge[prvE]; [
      \edge[prvE]; [
        \edge[prvE]; [
          \edge[prvE]; [
            \edge[prvE]; \node[prvNnoline]{\dots};
            \edge[prvE]; \node[prvN]{1};
          ]
          \edge[prvE]; \node[prvN]{1};
        ]
        \edge[prvE]; \node[refN]{101};
      ]
      \edge[prvE]; \node[prvN]{1};
    ]
  ]
  \end{tikzpicture}
};

\node[name=t2,below=0.675 of t1,label={[title]above:Original Tree $(B_1 B_2 B_3)$}]{
  \begin{tikzpicture}[remember picture]
  \Tree
  [.\node[branch,name=r0]{};
    \edge[prvE]; \node[prvN]{0};
    \edge[prvE]; [
      \edge[prvE]; [
        \edge[prvE]; [
          \edge[prvE]; [
            \edge[prvE]; \node[prvNnoline]{\dots};
            \edge[prvE]; \node[prvN]{1};
          ]
          \edge[prvE]; \node[prvN]{1};
        ]
        \edge[prvE]; \node[excN]{101};
      ]
      \edge[prvE]; \node[prvN]{1};
    ]
  ]
  \end{tikzpicture}
};

\node[name=t2b,at=(t2 -| t0b),label={[title]above:Refined Tree $(B_1 B_2 B_3 B_4)$}]{ 
  \begin{tikzpicture}[remember picture]
  \Tree
  [.\node[branch,name=r0]{};
    \edge[prvE]; \node[prvN]{0};
    \edge[prvE]; [
      \edge[prvE]; [
        \edge[prvE]; [
          \edge[prvE]; [
            \edge[prvE]; \node[prvNnoline]{\dots};
            \edge[prvE]; \node[prvN]{1};
          ]
          \edge[prvE]; \node[prvN]{1};
        ]
        \edge[prvE]; [
          \edge[refE]; \node[refN]{1010};
          \edge[refE]; [
            \edge[refE]; \node[refN]{1010};
            \edge[refE]; \node[refN]{1011};
          ]
        ]
      ]
      \edge[prvE]; \node[prvN]{1};
    ]
  ]
  \end{tikzpicture}
};

\node[name=t3,below=0.675 of t2,label={[title]above:Original Tree $(B_1 B_2 B_3 B_4)$}]{
  \begin{tikzpicture}[remember picture]
  \Tree
  [.\node[branch,name=r0]{};
    \edge[prvE]; \node[prvN]{0};
    \edge[prvE]; [
      \edge[prvE]; [
        \edge[prvE]; [
          \edge[prvE]; [
            \edge[prvE]; \node[prvNnoline]{\dots};
            \edge[prvE]; \node[prvN]{1};
          ]
          \edge[prvE]; \node[prvN]{1};
        ]
        \edge[prvE]; [
          \edge[curE]; \node[curN]{1010};
          \edge[excE]; [
            \edge[excE]; \node[excN]{1010};
            \edge[curE]; \node[curN]{1011};
          ]
        ]
      ]
      \edge[prvE]; \node[prvN]{1};
    ]
  ]
  \end{tikzpicture}
};

\node[name=t3b,at=(t3 -| t0b)]{$\mathbf{\dots}$};

\node[name=t0l,left=1.5 of t0,title]{\begin{tabular}{@{}l@{}}Recursion \\ Level 0:\end{tabular}};
\node[name=t1l,at=(t0l |- t1), title]{\begin{tabular}{@{}l@{}}Recursion \\ Level 1:\end{tabular}};
\node[name=t2l,at=(t0l |- t2), title]{\begin{tabular}{@{}l@{}}Recursion \\ Level 2:\end{tabular}};
\node[name=t3l,at=(t0l |- t3), title]{\begin{tabular}{@{}l@{}}Recursion \\ Level 3:\end{tabular}};

\draw[-latex,transform canvas={yshift=0pt},shorten <=55pt,shorten >=75pt] (t0.center) --node[midway,below=2pt,title,font=\itshape]{\hspace{-.75cm}Refine} (t0b.center);
\draw[-latex,transform canvas={yshift=0pt},shorten <=55pt,shorten >=75pt] (t1.center) --node[midway,below=2pt,title,font=\itshape]{\hspace{-.75cm}Refine} (t1b.center);
\draw[-latex,transform canvas={yshift=0pt},shorten <=55pt,shorten >=75pt] (t2.center) --node[midway,below=2pt,title,font=\itshape]{\hspace{-.75cm}Refine} (t2b.center);
\end{tikzpicture}
\end{adjustbox}

\captionsetup{skip=4pt}
\caption{DDG trees that \cref{alg:sampler-opt} explores for the example
trace in \subref{fig:optimal-sampler-trace}.}
\label{fig:optimal-sampler-ddg}
\end{subfigure}

\addtocounter{figure}{-1}
\captionsetup{aboveskip=4pt,belowskip=0pt}
\caption{Entropy-optimal generation for a binary-coded probability
distribution $p: \set{0,1}^* \to [0,1]$.
In \cref{alg:sampler-opt}, the parameter $b$ (defaulted to
the empty string, $\varepsilon$) denotes a string that stores the bits
generated so far, and $\ell$ counts the number of calls to $\Flip$.
The notation $[z]_i$ denotes the $i$th bit in $z = (z_0.z_1z_2z_3\ldots)_2 \in [0,1]$.
In the sample trace \subref{fig:optimal-sampler-trace}, a blue
bit denotes a $\Flip$ that creates a new recursive call; a yellow bit
indicates the selected leaf in the DDG tree; and a pink bit is
the label of that leaf, which is appended to $b$.
Gray bits are not visited in this execution, as the
algorithm lazily explores a \textit{single path} through the DDG tree.}
\label{fig:optimal-sampler}

\end{figure}

\subsection{Entropy-Optimal Generation}
\label{sec:binary-sampling-optimal}

Our approach to efficiently sampling from a binary-coded probability distribution
$p$ is based on the idea of \textit{lazily refining} a DDG tree.
The key idea is as follows: we first generate $B_1$ using an
entropy-optimal DDG tree $T_1$ for $\set{0\mapsto p(0), 1\mapsto p(1)}$,
which corresponds to arriving at a leaf node $x$ at $T_1$.
Rather than generated by $B_2{\mid}B_1$ as in the chain rule
(\cref{sec:binary-sampling-suboptimal}), $B_2$ is instead determined by
expanding a subtree under the leaf node $x$ of $T_1$.
This subtree is a fragment of an entropy-optimal DDG tree $T_2$ for
$\set{00\mapsto p(00),01\mapsto p(01), 10\mapsto p(10),11\mapsto p(11)}$.
Repeating this process allows us to efficiently explore a single,
linear-memory path without building an exponentially large DDG tree $T_n$
over $\set{0,1}^n$.
\Cref{fig:ddg-tree-refine} shows two examples of refinement,
where diagrams labeled ``Original Tree''
show an entropy-optimal DDG tree $T_1$ for $B_1 \sim \mathrm{Bernoulli}(2/3)$.
Blue arrowed edges show a random execution path.
When halting at a leaf (red) in the Original Tree (i.e., $B_1=b_1$ is determined),
the leaf is \textit{refined} into a new subtree whose leaves are outcomes of $B_1B_2$ with $B_1 = b_1$.
The refined subtree could be a leaf
(\cref{fig:ddg-tree-refine-leaf}) or branch
(\cref{fig:ddg-tree-refine-subtree}) node.
While exactly one leaf in $T_1$ is refined in a given execution, refining
\textit{every} leaf of $T_1$ produces an entropy-optimal tree $T_2$ for
$B_1B_2$.

\Cref{alg:sampler-opt} presents an entropy-optimal generator for a
binary-coded probability distribution $p$ that uses refinement
to efficiently traverse an infinite-size DDG tree.
The algorithm itself recurses infinitely, generating a fresh 
bit of the random stream $b \sim p$ at each step.
At the $n$th recursive call, the leaf node selected in the optimal tree for
$p_{n-1}$ over $\set{0,1}^{n-1}$ is refined into an optimal subtree for $p_{n}$.
\Crefrange{algline:UnivSampler-If1-Start}{algline:UnivSampler-Print1-NoIter}
occur when the refined subtree is a leaf (cf.~\cref{fig:ddg-tree-refine-leaf}).
\Crefrange{algline:UnivSampler-Loop}{algline:UnivSampler-Print1} occur when
the refined subtree is not a leaf (cf.~\cref{fig:ddg-tree-refine-subtree}).
The trees are such that a leaf labeled 0 (resp.~1) is always a left
(resp.~right) child, which is visited when $x=0$ (resp.\ $x=1$).
The algorithm is readily
implementable using lazy computation with guarded recursive calls
(\cref{lst:sampler-opt-haskell} in \cref{appx:binary}).

\Cref{fig:optimal-sampler-trace} shows an example trace of
\cref{alg:sampler-opt} on input $p$ from \cref{fig:optimal-sampler-dist}.
\Cref{fig:optimal-sampler-ddg} shows how this example trace
corresponds to a lazy exploration of an infinite-size optimal DDG tree.
Each row shows the original tree to be explored at the start of each
recursive call and the refined tree to be explored at the next
recursion.
For the original trees (\cref{fig:optimal-sampler-ddg}, left column), black
edges and nodes show the subtree to be explored at the current recursion
level.
Inactive paths that were considered previously are shown in gray.
Blue edges and red-boxed nodes denote the edges and leaves
explored and chosen by \cref{alg:sampler-opt}, using the
outputs of $\Flip$ from \cref{fig:optimal-sampler-trace}.
For refined trees (\cref{fig:optimal-sampler-ddg}, right column), red
edges and red-labeled nodes denote the outcome of refining the red-boxed
node in the original tree on the left, which will be explored at the next
recursion.
\Cref{alg:sampler-opt} only ever constructs the blue edges and red-boxed
nodes in \cref{fig:optimal-sampler-ddg}: the rest are shown for
illustration.

The following results formally justify the correctness and entropy-optimality
of \cref{alg:sampler-opt}.

\begin{theorem}[name=,restate=BitPatternsSimple]
\label{proposition:bit-patterns-simple}
Let $z,x,y \in [0,1]$ satisfy $z = x + y$.
Suppose $\ell \ge 0$ is any index such that $z_\ell = 1$ and
$z_j = 0$ for all $j > \ell$, where
$z = (z_0.z_1z_2\ldots)_2$,
$x = (x_0.x_1x_2\ldots)_2$ and
$y = (y_0.y_1y_2\ldots)_2$
are concise binary expansions.
The binary expansions of $x$ and $y$ match exactly one of the following patterns:
\begin{align}
\footnotesize
\begin{NiceArray}{c@{\;\;}ccclr}
~
&~
&~
&\begin{bmatrix}[c|c] 0 & 1 \\ 1 & 0 \end{bmatrix}
&\begin{bmatrix}[c] 0 \\ 0 \end{bmatrix}^\infty
&\mbox{\normalfont (Pattern \ref*{eq:bit-patterns-simple}.1)}
\\[7pt]
\begin{array}{c@{}} ~ \\ + \end{array}
&
\begin{bmatrix}[ccccc]
x_{\ell} & \dots & x_{\ell'} & \ldots & ~ \\
y_{\ell} & \dots & y_{\ell'} & \ldots & ~
\end{bmatrix}
&=\phantom{x}
&\begin{bmatrix}[c|c] 0 & 1 \\ 0 & 1 \end{bmatrix}
&\begin{bmatrix}[c|c] 0 & 1 \\ 1 & 0 \end{bmatrix}^{*}
 \begin{bmatrix} 1 \\ 1 \end{bmatrix}
 \begin{bmatrix} 0 \\ 0 \end{bmatrix}^\infty
&\mbox{\normalfont (Pattern \ref*{eq:bit-patterns-simple}.2)}
\\[7pt]
~
&~
&~
&\begin{bmatrix}[c|c] 0 & 1 \\ 0 & 1 \end{bmatrix}
&\begin{bmatrix}[c|c] 0 & 1 \\ 1 & 0 \end{bmatrix}^{*}
&\mbox{\normalfont (Pattern \ref*{eq:bit-patterns-simple}.3)}
\\[10pt]
\cline{1-2} \cline{4-5}
\\[-10pt]
\begin{array}{c@{}} = \end{array}
&
\;\;\;
\begin{matrix}[ccccc]
z_{\ell} & \dots  & z_{\ell'} & \dots & ~
\end{matrix}
&=\phantom{x}
&1
&0\quad\dots\quad0\quad\dots
& 
\CodeAfter
  \SubMatrix\{{1-4}{3-5}.
\end{NiceArray}
\label{eq:bit-patterns-simple}
\end{align}
Here, each pattern is written in the style of regular expressions:
$[R ~|~ R']$ denotes either $R$ or $R'$,
$[R]^*$ denotes zero or more occurrences of $R$, and
$[R]^\infty$ denotes the infinite occurrences of $R$.
\end{theorem}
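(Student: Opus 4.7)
The plan is to reduce the claim to a finite case analysis of the carry bits in the long-addition $z = x + y$. For each position $j \ge 0$, let $c_j \in \set{0,1}$ denote the carry generated by column $j$ and propagated into column $j-1$; then $x_j + y_j + c_{j+1} = z_j + 2 c_j$. I would first verify that this recurrence is well defined ``at infinity'' using the conciseness of $x$ and $y$: since neither expansion ends in an infinite string of $1$s, the carries $c_{j+1}$ become $0$ for all sufficiently large $j$, so there is no ``ghost carry from infinity'' that could ruin the bit-by-bit identity.

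Next, I would analyze position $\ell$, where $z_\ell = 1$. Enumerating $(x_\ell, y_\ell, c_{\ell+1})$ with $x_\ell + y_\ell + c_{\ell+1} \in \set{1,3}$ splits into exactly two families: $(x_\ell, y_\ell) \in \set*{(0,1),(1,0)}$ with $c_{\ell+1} = 0$, or $(x_\ell, y_\ell) \in \set*{(0,0),(1,1)}$ with $c_{\ell+1} = 1$. These are precisely the leading columns of Pattern~1 and of Patterns~2/3. For each $j > \ell$, the constraint $z_j = 0$ permits only $(0,0)$ with $c_{j+1} = 0$, $(0,1)$ or $(1,0)$ with $c_{j+1} = 1$, or $(1,1)$ with $c_{j+1} = 0$. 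Viewed as a tiny automaton indexed by the incoming carry, a zero carry forces $(0,0)$ at every subsequent position, while a unit carry is preserved by $(0,1)/(1,0)$ columns and can only be broken by a single $(1,1)$, after which all subsequent columns must be $(0,0)$.

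Putting these together: when $c_{\ell+1} = 0$ the tail is forced into Pattern~1; when $c_{\ell+1} = 1$ the tail is Pattern~3 if the carry is never broken, and Pattern~2 if it is broken at some finite position $\ell' > \ell$. Consistency with conciseness is immediate, since Patterns~1 and~2 give finite tails in $x$ and $y$, and in Pattern~3 every column has exactly one $0$ and one $1$, so neither expansion ends in $1^\infty$. The main obstacle will be the infinity-of-carries technicality---rigorously defining the carry recurrence in the limit and extracting ``$c_{j+1} \to 0$'' from conciseness---after which everything reduces to the small finite-state analysis above, whose infinite accepting traces are exactly the three regular expressions in the statement.
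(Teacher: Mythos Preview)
Your carry-automaton approach is essentially the paper's own argument: both split on $(x_\ell,y_\ell)$, both use the recurrence $x_j+y_j+c_{j+1}=z_j+2c_j$, and your ``zero incoming carry forces $(0,0)$ forever'' is exactly what the paper isolates as a separate lemma (proved by a direct tail-sum bound rather than by carry recursion).

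There is, however, one genuine slip in your plan. You propose to extract ``$c_{j+1}\to 0$ for all sufficiently large $j$'' from conciseness, but this is false: Pattern~3 itself is a counterexample. Take $x=1/3=(0.\overline{01})_2$, $y=2/3=(0.\overline{10})_2$, $z=1$; neither expansion ends in $1^\infty$, yet $c_j=1$ for \emph{every} $j\ge 1$. So conciseness does not kill the ``ghost carry from infinity,'' and you cannot anchor the recurrence that way. The fix is to define the carry directly via tails, e.g.\ $c_j\defas \mathbf{1}\bigl[\sum_{i\ge j}(x_i+y_i)2^{-i}\ge 2^{1-j}\bigr]$, check that this lies in $\{0,1\}$ (here conciseness is used, to rule out the boundary value $2\cdot 2^{1-j}$), and verify the column identity; after that your automaton argument goes through unchanged. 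The paper avoids this subtlety in the crucial ``once-zero-always-zero'' step by arguing with tail sums and strict inequalities instead of invoking a carry defined at infinity.
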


\begin{corollary}[name=,restate=DdgStructureOne]
\label{lemma:ddg-structure-1}
When refining the deepest node in an entropy-optimal DDG tree with label $b$ at a
level $\ell$, there are three mutually exclusive and
collectively exhaustive possibilities (where $x,y\in\set{0,1}$):
\begin{center}
\scriptsize
\hfill\begin{tikzpicture}[remember picture]
\normalfont
\tikzstyle{branch}=[shape=coordinate]
\tikzset{sibling distance=5pt}
\tikzset{level 1/.style={level distance=10pt}}
\tikzset{level 2/.style={level distance=7pt}}
\tikzset{level 3/.style={level distance=7pt}}
\tikzset{level 4/.style={level distance=11pt}}
\tikzset{every tree node/.style={anchor=north}}

\tikzstyle{branch}=[shape=coordinate]
\tikzstyle{excE}=[color=blue, line width=1pt,-latex]    
\tikzstyle{refE}=[color=red,  line width=0.6pt,densely dashed] 
\tikzstyle{curN}=[color=black,draw=black,inner sep=1pt] 
\tikzstyle{excN}=[color=black,draw=red,  inner sep=1pt, thick] 
\tikzstyle{refN}=[color=red,  draw=black,inner sep=1pt] 

\captionsetup[subfigure]{font=footnotesize,justification=centering}
\node[name=m1]{
\begin{tikzpicture}[remember picture]
\Tree
  [.\node[branch,name=root1,label={[font=\footnotesize]above:\begin{tabular}{@{}c@{}}Original Tree\\\phantom{(Leaf)}\end{tabular}}]{};
    $\dots$
    \node[name=leaf1,excN]{$b$};
  ]
\draw[-latex,transform canvas={xshift=5pt}] (root1.east -| leaf1.east) --node[pos=0.5,anchor=west]{$\ell$} (leaf1.south east);
\end{tikzpicture}
};

\node[name=m2,right=4 of m1.north,anchor=north]{ 
\begin{tikzpicture}[remember picture]
\Tree
  [.\node[branch,name=root2,label={[font=\footnotesize]above:\begin{tabular}{@{}c@{}}Refined Tree\\ (Leaf)\end{tabular}}]{};
    $\dots$
      \node[name=leaf2,refN]{$bx$};
  ]
\draw[-latex,transform canvas={xshift=5pt}] (root2.east -| leaf2.east) --node[pos=0.5,anchor=west]{$\ell$} (leaf2.south -| leaf2.east);
\end{tikzpicture}
};

\node[name=m3,right=3 of m2.north,anchor=north]{
\begin{tikzpicture}[remember picture]
\Tree
  [.\node[branch,name=root3,label={[font=\footnotesize]above:\begin{tabular}{@{}c@{}}Refined Tree\\(Finite Subtree)\end{tabular}}]{};
    $\dots$
    [.\node[branch,name=leaf3,yshift=-0pt]{}; 
      \edge[refE]; [
        \edge[refE]; [
          .{$\dots$}
          \edge[refE]; \node[refN]{$b1$};
          \edge[refE]; \node[refN,name=wonyeol1]{$b0$}; ]
        \edge[refE]; \node[refN]{$by$};
        ]
      \edge[refE]; \node[refN]{$bx$};
      ]
    ]
\draw[-latex,transform canvas={xshift=5pt}] (root3.east -| leaf3.east) --node[pos=0.5,anchor=west]{$\ell$} (leaf3.south -| leaf3.east);
\end{tikzpicture}
};

\node[name=m4,right=3 of m3.north,anchor=north]{
\begin{tikzpicture}[remember picture]
\Tree
  [.\node[branch,name=root4,label={[font=\footnotesize]above:\begin{tabular}{@{}c@{}}Refined Tree\\(Infinite Subtree)\end{tabular}}]{};
    $\dots$
    [.\node[branch,name=leaf4,yshift=-0pt]{};
      \edge[refE]; [
        \edge[refE]; {$\dots$}
        \edge[refE]; \node[refN]{$by$};
      ]
      \edge[refE]; \node[refN,name=wonyeol2]{$bx$};
    ]
  ]
\draw[-latex,transform canvas={xshift=5pt}] (root4.east -| leaf4.east) --node[pos=0.5,anchor=west]{$\ell$} (leaf4.south -| leaf4.east);
\end{tikzpicture}
};

\draw[->,transform canvas={yshift=-16pt},shorten <=43pt,shorten >=40pt] (root1) --node[midway,below=2pt,font=\footnotesize]{\hspace{1.5pt}Refine} (root2);

\node[name=m4cap,below=1.6 of root4]{ \footnotesize (By Pattern \ref{eq:bit-patterns-simple}.3) };
\node[name=m3cap,at=(m4cap -| root3)]{ \footnotesize (By Pattern \ref{eq:bit-patterns-simple}.2) };
\node[name=m2cap,at=(m3cap -| root2)]{ \footnotesize (By Pattern \ref{eq:bit-patterns-simple}.1) };

\end{tikzpicture}\qedhere
\end{center}
\end{corollary}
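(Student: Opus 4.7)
The plan is to apply \Cref{proposition:bit-patterns-simple} with $z \defas p(b)$, $x \defas p(b0)$, and $y \defas p(b1)$. The required identity $z = x + y$ is immediate from \Cref{definition:bcpd}. The hypothesis that the node being refined is the deepest $b$-labeled leaf at depth $\ell$ in the entropy-optimal DDG tree for $p_n$ translates, via the Knuth--Yao leaf-count characterization (\Cref{theorem:knuth-yao}), into $[p(b)]_\ell = 1$ and $[p(b)]_j = 0$ for all $j > \ell$. This matches the hypothesis required by \Cref{proposition:bit-patterns-simple}, so the joint bit sequences of $p(b0)$ and $p(b1)$ from position $\ell$ onward must fit exactly one of Patterns~\ref*{eq:bit-patterns-simple}.1--\ref*{eq:bit-patterns-simple}.3, yielding three mutually exclusive and collectively exhaustive cases.

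Next I would read off the claimed refinement subtree structure from each pattern, using the bridging principle that a leaf labeled $bx$ at depth $d$ within the refinement subtree realizes a $1$-bit of $p(bx)$ at absolute position $\ell + d$ not already accounted for by a shallower $b$-leaf. In Pattern~\ref*{eq:bit-patterns-simple}.1 there is exactly one $1$-bit at position $\ell$ (in either $x$ or $y$) and no $1$-bits beyond, so the refinement collapses to a single leaf $bx$ at depth $\ell$. In Pattern~\ref*{eq:bit-patterns-simple}.2 there are finitely many alternating $\binom{0}{1}/\binom{1}{0}$ columns terminating in a shared $\binom{1}{1}$ column, so the refinement is a finite binary subtree whose deepest level carries two sibling leaves labeled $b0$ and $b1$, with intermediate leaves placed on the alternating sides prescribed by the columns. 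In Pattern~\ref*{eq:bit-patterns-simple}.3 the alternating pattern extends indefinitely, producing the infinite ``spine'' subtree shown in the statement, with one new leaf emerging at each successive level.

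The main obstacle is verifying global consistency: the constructed subtree must consume exactly $2^{-\ell}$ of probability mass (matching the refined leaf) and contribute precisely $[p(bx)]_{\ell+d}$ leaves labeled $bx$ at each absolute depth $\ell + d$, as required by \Cref{theorem:knuth-yao} for the tree to remain entropy-optimal for $p_{n+1}$. The mass identity follows from $[p(b)]_\ell = 1$, $[p(b)]_j = 0$ for $j > \ell$, together with $p(b0) + p(b1) = p(b)$ restricted to positions $\geq \ell$, with the carry behaviour precisely captured by the pattern arithmetic of \Cref{proposition:bit-patterns-simple}. Consistency with the refinements of shallower $b$-leaves (each of which is itself a deepest $b$-leaf at its own level at the moment it is considered) then follows by induction on the refinement steps, completing the case analysis.
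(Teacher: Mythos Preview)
Your proposal is correct and follows essentially the same approach the paper takes: the corollary is presented as an immediate consequence of \Cref{proposition:bit-patterns-simple}, with each of the three refinement shapes explicitly tagged ``(By Pattern~\ref*{eq:bit-patterns-simple}.$k$)'' in the statement itself, and the paper offers no separate proof beyond that correspondence. Your write-up is in fact more detailed than the paper's treatment---you spell out the Knuth--Yao bridge from ``deepest $b$-leaf at level $\ell$'' to the bit hypothesis $[p(b)]_\ell = 1$, $[p(b)]_j = 0$ for $j > \ell$, and you address the mass-accounting and global consistency that the paper leaves implicit.
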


\begin{theorem}[name=,restate=BitPatterns]
\label{proposition:bit-patterns}
Let $z, x, y \in [0,1]$ satisfy $z = x + y$.
Suppose $0 \leq \ell < \ell'$ are two indexes such that $z_\ell = 1, z_{\ell'}=1$
and $z_j = 0$ for $\ell+1 \le j \le \ell'-1$,
where $z_j$, $x_j$ and $y_j$ are defined as in \cref{proposition:bit-patterns-simple}.
The binary expansions of $x$ and $y$ between locations
$\ell$ and $\ell'$ match exactly one of three possible patterns:
\begin{align}
\footnotesize
\begin{NiceArray}{@{}c@{\;\;}cccccr@{}}
~
&~
&~
&\begin{bmatrix}[c|c] 0 & 1 \\ 1 & 0 \end{bmatrix}
&\begin{bmatrix}[c] 0 \\ 0 \end{bmatrix}^{\smash{\ell'-\ell - 1}}
&\begin{bmatrix}[c|c|c] 0 & 0 & 1 \\ 0 & 1 & 0\end{bmatrix}\!\!
&\mbox{\normalfont (Pattern \ref*{eq:bit-patterns}.1)}
\\
\begin{array}{c@{}} ~ \\ + \end{array}
&
\begin{bmatrix}[ccc]
x_{\ell} & \dots & x_{\ell'} \\
y_{\ell} & \dots & y_{\ell'}
\end{bmatrix}
&=\phantom{x}
&\begin{bmatrix}[c|c] 0 & 1 \\ 0 & 1\end{bmatrix}
&\begin{bmatrix}[c|c] 0 & 1 \\ 1 & 0\end{bmatrix}^{\ell'-\ell - 1}
&\begin{bmatrix}[c] 1 \\ 1 \end{bmatrix}\!\!
&\mbox{\normalfont (Pattern \ref*{eq:bit-patterns}.2)}
\\
~
&~
&~
&\begin{bmatrix}[c|c] 0 & 1 \\ 0 & 1 \end{bmatrix}
&\begin{bmatrix}[c|c] 0 & 1 \\ 1 & 0 \end{bmatrix}^{k_1}
 \begin{bmatrix}[c] 1 \\ 1 \end{bmatrix}
 \begin{bmatrix}[c] 0 \\ 0 \end{bmatrix}^{k_2}
&\begin{bmatrix}[c|c|c] 0 & 0 & 1 \\ 0 & 1 & 0 \end{bmatrix}\!\!
&\begin{aligned}
  \mbox{\normalfont where } k_1 + k_2 = \ell' - \ell - 2 & \\
  \mbox{\normalfont (Pattern \ref*{eq:bit-patterns}.3)}  &
 \end{aligned}
\\[10pt]
\cline{1-2} \cline{4-6}
\\[-10pt]
\begin{array}{c@{}} = \end{array}
&
\;\;\;
\begin{matrix}[ccc]
z_{\ell} & \dots  & z_{\ell'}
\end{matrix}
&=\phantom{x}
&1
&0\quad\dots\quad0
&1
&\qedhere
\CodeAfter
  \SubMatrix\{{1-4}{3-6}.
\end{NiceArray}
\label{eq:bit-patterns}
\end{align}
\end{theorem}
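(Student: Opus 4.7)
The plan is to reduce the statement to a case analysis of the carry chain in the grade-school binary addition $x+y=z$. I would introduce $c_i \in \{0,1\}$ for the carry out of position $i$ (equivalently, the carry into the more significant position $i-1$), so that the standard invariant $x_i+y_i+c_{i+1} = z_i + 2c_i$ holds at every $i \ge 1$; conciseness of the expansions of $x$, $y$, and $z$ ensures each $c_i$ is well-defined. The key observation is that, once the boundary carry $c_{\ell'+1}$ is fixed, the constraints $z_{\ell} = z_{\ell'} = 1$ and $z_j = 0$ for $\ell < j < \ell'$ force the column pairs $(x_j,y_j)$ at positions $\ell,\ldots,\ell'$ into a small number of configurations, each of which can be read off by propagating the invariant from $\ell'$ up to $\ell$.

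The proof then case-splits on $c_{\ell'+1} \in \{0,1\}$. If $c_{\ell'+1} = 0$, the invariant at $\ell'$ forces $(x_{\ell'},y_{\ell'}) \in \{(0,1),(1,0)\}$ and $c_{\ell'} = 0$; at each interior position with carry-in $0$ and forced output $0$, the column must be $(0,0)$ (carry stays $0$) or $(1,1)$ (carry flips to $1$), and once the carry becomes $1$ every subsequent interior column is forced to $(0,1)$ or $(1,0)$ (output $0$, carry preserved). This yields two sub-branches: either no $(1,1)$ ever occurs, so $c_{\ell+1}=0$ and the $\ell$-column is $(0,1)/(1,0)$, which matches Pattern~\ref*{eq:bit-patterns}.1; or a unique interior $(1,1)$ ignites the carry, with $k_2$ zero columns below it, $k_1$ propagating $(0,1)/(1,0)$ columns above it (so $k_1+k_2 = \ell'-\ell-2$), and an $\ell$-column of $(0,0)/(1,1)$, which matches Pattern~\ref*{eq:bit-patterns}.3. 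If $c_{\ell'+1} = 1$, the invariant instead forces $(x_{\ell'},y_{\ell'}) \in \{(0,0),(1,1)\}$: the $(0,0)$ sub-case sets $c_{\ell'}=0$ and reproduces the two sub-branches above but with $\ell'$-column $(0,0)$; the $(1,1)$ sub-case sets $c_{\ell'} = 1$, forcing every interior column to $(0,1)/(1,0)$ and the carry to reach $\ell$ as $1$, so the $\ell$-column is $(0,0)/(1,1)$ and the configuration matches Pattern~\ref*{eq:bit-patterns}.2 exactly.

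The main obstacle is the bookkeeping that checks these five sub-configurations collectively exhaust all possibilities and package into the three advertised regular expressions, in particular that the three allowed $\ell'$-columns $(0,0), (0,1), (1,0)$ in Patterns~\ref*{eq:bit-patterns}.1 and \ref*{eq:bit-patterns}.3 and the column $(1,1)$ in Pattern~\ref*{eq:bit-patterns}.2 correspond precisely to the two values of $c_{\ell'+1}$. A secondary subtlety is uniqueness of the $(1,1)$-ignition column in Pattern~\ref*{eq:bit-patterns}.3: a second interior $(1,1)$ would sit in a region where the carry is already $1$, producing $z_j = 1$ and violating the hypothesis. Overall the argument is a two-endpoint generalization of Theorem~\ref{proposition:bit-patterns-simple}, which is the degenerate case where $\ell'$ is replaced by $\infty$, so no new machinery beyond tracking the carry state at both endpoints simultaneously is needed.
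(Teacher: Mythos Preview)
Your proposal is correct and complete: the carry-chain case analysis you outline accounts for all configurations and packages them into the three patterns as claimed.

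The route differs from the paper's in its decomposition. The paper cases first on the column $x_{\ell'}y_{\ell'}$: when it equals $11$ (Pattern~\ref*{eq:bit-patterns}.2) it runs the same inductive carry-propagation you describe; but when $x_{\ell'}y_{\ell'}\neq 11$ it observes that the carry out of position $\ell'$ vanishes, truncates $x,y,z$ at position $\ell'-1$, and invokes Theorem~\ref{proposition:bit-patterns-simple} on the truncated triple to obtain Patterns~\ref*{eq:bit-patterns}.1 and~\ref*{eq:bit-patterns}.3 directly. Your version instead cases on the incoming carry $c_{\ell'+1}$ and runs the propagation argument uniformly from $\ell'$ up to $\ell$, never appealing to Theorem~\ref{proposition:bit-patterns-simple}. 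The paper's reduction is slightly more modular and shorter (it reuses the earlier lemma rather than re-deriving the ``ignition'' structure), while your self-contained argument makes the role of the boundary carry at $\ell'$ more explicit and avoids the truncation step. Both lead to the same five sub-configurations you enumerate.
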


\begin{corollary}[name=,restate=DdgStructureTwo]
\label{lemma:ddg-structure-2}
Consider the process of refining a node labeled $b$ at level $\ell$ in an entropy-optimal DDG tree,
such that there also exists a node labeled $b$ at some level $\ell' > \ell$.
There exists an entropy-optimal refinement scheme such that
\begin{center}
\begin{minipage}[m]{.2\linewidth}
\vspace{-6pt}
\begin{tikzpicture}[baseline=(root.base)]\normalfont
\tikzset{edge from parent/.style={draw,thick}}
\tikzstyle{branch}=[shape=coordinate]
\tikzstyle{refined}=[densely dashed]
\Tree
  [.\node[branch,name=root,label={[font=\footnotesize]above:Optimal DDG Tree}]{};
    \node[name=leaf1,draw,inner sep=1.5pt]{$b$};
    [.\node[name=dots]{\dots};
      [.{\dots}
        \node[name=leaf2,draw,inner sep=1.5pt]{$b$};
        $\dots$
        ]
      \node[name=leaf,draw,inner sep=1.5pt,thick,draw=red]{$b$};
    ]
  ]
\draw[-latex,transform canvas={xshift=5pt}] (root.east -| leaf.east) --node[pos=0.45,anchor=east]{$\ell$} (leaf.south east);
\draw[-latex,transform canvas={xshift=-7pt}] (root.east -| leaf1.west) --node[pos=0.4,anchor=west,inner xsep=1.5pt]{$\bar{\ell}$} (leaf1.south west);
\draw[-latex,transform canvas={xshift=-25pt}] (root.east -| leaf2.west) --node[pos=0.9,anchor=west]{$\ell'$} (leaf2.south west);

\node[name=isosceles,
  shape=isosceles triangle,
  isosceles triangle stretches,
  minimum height=0.6cm,
  minimum width=0.5cm,
  shape border rotate=90,
  draw, refined,
  anchor=north,
  at=(leaf.south)]{};
\node[name=isoscelesdotted,
  shape=isosceles triangle,
  isosceles triangle stretches,
  minimum height=1.02cm,
  minimum width=0.45cm,
  shape border rotate=90,
  draw, refined,
  anchor=north,
  at=(leaf1.south)]{};
\draw[-latex,transform canvas={xshift=10pt}] (root.east -| isosceles.south east) --node[pos=0.88,anchor=west]{$\ell''$} (isosceles.south east);
\end{tikzpicture}
\end{minipage}\hfill
\begin{minipage}[m]{.8\linewidth}
\vspace{5pt}
\begin{itemize}
\item
  If the binary expansions satisfy $[p(b0)]_\ell [p(b1)]_\ell \in \set{01,10}$,
  then the node $b$ at level $\ell$ is refined into a leaf node.
\item
  If $[p(b0)]_\ell [p(b1)]_\ell \in \set{00,11}$,
  then the node $b$ at level $\ell$ is refined into a subtree
  that terminates with a pair of nodes labeled $(b0, b1)$ at some
  level $\ell'' \in [\ell+1, \ell']$.
  All levels of this subtree above $\ell''$ have precisely one node.
\item If $[p(b0)]_\ell [p(b1)]_\ell \in \set{11}$, then the corresponding nodes
  in the DDG tree labeled $b0$ and $b1$ correspond to leaves of the
  subtree obtained by refining a previous node labeled
  $b$ at some previous level $\bar{\ell} < \ell$.
  Therefore, these bits at location $\ell$ can be ignored when refining the
  node $b$ at level $\ell$.
  \qedhere
\end{itemize}
\vspace{5pt}
\end{minipage}
\end{center}
\end{corollary}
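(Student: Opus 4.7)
The plan is to apply Proposition \ref{proposition:bit-patterns} to the identity $p(b) = p(b0) + p(b1)$ from Definition \ref{definition:bcpd}, setting $z := p(b)$, $x := p(b0)$, and $y := p(b1)$. By Theorem \ref{theorem:knuth-yao}, the existence of nodes labeled $b$ at levels $\ell$ and $\ell'$ (and at no strictly intermediate level) in an entropy-optimal DDG tree is equivalent to $z_\ell = z_{\ell'} = 1$ together with $z_j = 0$ for $\ell < j < \ell'$. The three-way pattern dichotomy in \eqref{eq:bit-patterns} then drives the case analysis, one pattern per bullet.

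For the \emph{leaf} bullet, Pattern \ref{eq:bit-patterns}.1 is characterized precisely by the column $[p(b0)]_\ell [p(b1)]_\ell \in \set{01,10}$. Letting $x \in \set{0,1}$ denote the unique bit with $[p(bx)]_\ell = 1$, Theorem \ref{theorem:knuth-yao} requires a leaf labeled $bx$ at depth $\ell$ in the optimal tree for $p_n$, and refining the current node directly into that single leaf supplies it. For the \emph{finite-subtree} bullet, Patterns \ref{eq:bit-patterns}.2 and \ref{eq:bit-patterns}.3 together cover the condition $[p(b0)]_\ell [p(b1)]_\ell \in \set{00,11}$. Each pattern exhibits a unique ``split'' column equal to $11$ at some position $\ell'' \in [\ell+1, \ell']$, with all strictly intermediate columns lying in $\set{01,10}$. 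I would construct the refinement as a descending path of single-branching nodes at depths $\ell+1, \dots, \ell''-1$, each contributing one leaf labeled $b0$ or $b1$ according to which of $[p(b0)]_j, [p(b1)]_j$ equals $1$, terminating at depth $\ell''$ with a branching node whose two children are the required pair of leaves $b0$ and $b1$. Optimality follows because the per-depth leaf counts produced match the bit pattern and hence the Knuth--Yao characterization.

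The main obstacle will be the third bullet, the $\set{11}$ subcase, which is in apparent tension with Theorem \ref{theorem:knuth-yao}: if $[p(b0)]_\ell = [p(b1)]_\ell = 1$, the optimal tree for $p_n$ demands two leaves $b0$ and $b1$ at depth exactly $\ell$, yet no subtree rooted at depth $\ell$ can supply both at that very depth. The resolution is to apply Proposition \ref{proposition:bit-patterns} a second time to the preceding interval of $p(b)$ ending at position $\ell$: among the three patterns, only Pattern \ref{eq:bit-patterns}.2 terminates with a $11$ column, and its refinement of the earlier node $b$ at some $\bar\ell < \ell$ deposits exactly the required pair of leaves $b0, b1$ at depth $\ell$. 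Consequently, those leaves are already accounted for before the current refinement begins, so the refinement of the node at $\ell$ may safely discount the bits $[p(b0)]_\ell, [p(b1)]_\ell$ and build its subtree from level $\ell+1$ onward using the recipe of the second bullet. Executing this cross-level accounting carefully---so that every required leaf is produced by exactly one refinement, with no double-counting or omission---is the crux of the argument.
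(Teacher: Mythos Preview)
Your proposal is correct and matches the paper's approach: the paper presents this result as an immediate corollary of \cref{proposition:bit-patterns} applied to $z=p(b)$, $x=p(b0)$, $y=p(b1)$ via the Knuth--Yao characterization (\cref{theorem:knuth-yao}), with no separate proof given. Your unpacking of the three bullets---including the observation that only Pattern~\ref{eq:bit-patterns}.2 terminates in a $11$ column, which resolves the apparent tension in the third bullet by attributing those two leaves to the refinement of an earlier $b$-node at level $\bar\ell$---is exactly the argument the paper has in mind.
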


\begin{theorem}[name=,restate=SamplerOpt]
\label{theorem:sampler-opt}
Let $p: \set{0,1}^* \to [0,1]$ be a binary-coded probability distribution.
For each $n \in \mathbb{N}$, \cref{alg:sampler-opt} generates a string
$B_1\ldots{B_n} \sim p_n$ (stored as a prefix of $b$) and is entropy-optimal for $p_n$.
\end{theorem}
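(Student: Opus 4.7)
The plan is to prove correctness and entropy-optimality simultaneously, by strong induction on the recursion depth $n$ (equivalently, on the length of the generated prefix). The invariant I would maintain is: at the beginning of the $n$-th call, (i)~the stored prefix $b = B_1\cdots B_{n-1}$ is distributed as $p_{n-1}$, (ii)~the total number of $\Flip$s consumed so far equals $\ell$, and (iii)~the sequence of refinements performed so far implicitly realizes a DDG tree that satisfies the Knuth-Yao leaf-count condition for $p_{n-1}$, namely exactly $[p(c)]_j$ leaves labeled $c$ at depth $j$ for every $c \in \{0,1\}^{n-1}$ and $j\ge 0$. The base case $n=0$ is immediate, with $b=\varepsilon$ and the trivial one-leaf tree.

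For the inductive step I would interpret the body of the $n$-th call as refining the unique deepest leaf labeled $b$ (at level $\ell$) in the entropy-optimal tree for $p_{n-1}$. The identity $p(b) = p(b0) + p(b1)$ lets me apply \cref{proposition:bit-patterns-simple} with $z=p(b)$, $x=p(b0)$, $y=p(b1)$: the pair $([p(b0)]_\ell,[p(b1)]_\ell)$ must match one of the three patterns, which by \cref{lemma:ddg-structure-1} determines whether the refined subtree is a single leaf or a subtree of positive depth. The first two conditional branches of \cref{alg:sampler-opt} recognize Pattern~\ref*{eq:bit-patterns-simple}.1 (a leaf) and return without flipping; the while loop handles Patterns~\ref*{eq:bit-patterns-simple}.2 and~\ref*{eq:bit-patterns-simple}.3, flipping one bit per iteration to descend one level. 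The loop's exit conditions ($x=0 \wedge [p(b0)]_\ell = 1$ or $x=1 \wedge [p(b1)]_\ell = 1$) correspond exactly to encountering a leaf of the refined subtree, as characterized by \cref{proposition:bit-patterns,lemma:ddg-structure-2}.

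For correctness, I would verify $\Pr(B_n = x \mid B_1\cdots B_{n-1} = b) = p(bx)/p(b)$ by summing contributions over levels: a leaf labeled $bx$ reached at level $\ell' \ge \ell$ of the refined subtree is visited with probability $2^{-(\ell'-\ell)}$ conditional on starting at level $\ell$, and the patterns of \cref{proposition:bit-patterns} ensure that the count of such leaves at level $\ell'$ equals $[p(bx)]_{\ell'}$ (after discounting bits that \cref{lemma:ddg-structure-2} attributes to previously refined subtrees). Telescoping yields the desired conditional probability, and combining with the inductive hypothesis gives $b \sim p_n$; almost-sure termination of the loop follows because the refined subtree is finite whenever $p(b) > 0$. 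Entropy-optimality then follows from the invariant: after refinement, the implicit DDG tree still satisfies the Knuth-Yao condition, now for $p_n$, so by \cref{theorem:knuth-yao} no generator for $p_n$ consumes fewer bits in expectation.

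The main obstacle I expect is the careful bookkeeping for Pattern~\ref*{eq:bit-patterns}.3, where the refined subtree spans several bit positions of $p(b0)$ and $p(b1)$ and includes ``shared'' bits where $[p(b0)]_j[p(b1)]_j = 11$. By \cref{lemma:ddg-structure-2}, these correspond to leaves of a subtree refined at some earlier level $\bar\ell < \ell$ and must be ignored at the current step to avoid double-counting. Making this correspondence precise---and checking that the algorithm's level-by-level exit conditions exactly skip those bits while accepting the unique leaves per level---is where the detailed combinatorial argument lives, and where \cref{proposition:bit-patterns,proposition:bit-patterns-simple} are indispensable.
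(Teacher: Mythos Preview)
Your approach is sound and follows the same inductive strategy as the paper, but the paper packages the induction more compactly. Instead of maintaining your three-part invariant, the paper's proof establishes directly that $\Pr(B_n = b,\, C_n = \ell) = 2^{-\ell}\,\mathbf{1}\bigl[[p(b)]_\ell = 1\bigr]$, where $C_n$ is the number of flips consumed to produce the first $n$ bits. This single joint-distribution formula encodes both correctness (marginalizing over $\ell$ gives $\Pr(B_n=b) = \sum_\ell 2^{-\ell}[p(b)]_\ell = p(b)$) and entropy-optimality (computing $\mathbb{E}[C_n] = \sum_b \sum_\ell \ell \cdot 2^{-\ell}[p(b)]_\ell$ recovers precisely the Knuth--Yao cost). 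Your invariant~(iii) is equivalent to this formula---saying the implicit DDG tree has exactly $[p(c)]_j$ leaves labeled $c$ at depth $j$ is the same as saying $\Pr(B=c,\, C=j) = 2^{-j}[p(c)]_j$---but reasoning through the tree structure and conditional probabilities separately is more laborious than tracking the joint law once. The inductive step in the paper still relies on \cref{proposition:bit-patterns-simple,proposition:bit-patterns} exactly as you outline, so your identification of where the combinatorial work lives is accurate.

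One small correction: you write that almost-sure termination of the while loop follows because the refined subtree is finite whenever $p(b) > 0$, but Pattern~\ref*{eq:bit-patterns-simple}.3 of \cref{proposition:bit-patterns-simple} yields an infinite subtree (take for instance $p(b0)=1/3$, $p(b1)=2/3$). Termination holds instead because the spine structure from \cref{lemma:ddg-structure-1} guarantees exactly one leaf at each level of the refined subtree, so the loop exits with probability $1/2$ at every iteration regardless of whether the subtree is finite.
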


\Cref{theorem:sampler-opt} states the
entropy-optimality---and, a fortiori, the soundness---of \cref{alg:sampler-opt}.
This result rests on two number theoretic properties
(\cref{proposition:bit-patterns-simple,proposition:bit-patterns})
for binary expansions of real numbers.
\Cref{lemma:ddg-structure-1,lemma:ddg-structure-2} demonstrate
what these theorems imply about the structure of refined entropy-optimal
DDG trees explored by \cref{alg:sampler-opt},
which are used to prove \cref{theorem:sampler-opt}.
\Cref{proposition:bit-patterns-simple,lemma:ddg-structure-1} justify the
correctness of \cref{alg:sampler-opt} when the current node is the deepest
leaf in the tree with label $b$.
Pattern~\ref{eq:bit-patterns-simple}.1 corresponds to the early exit in
\crefrange{algline:UnivSampler-If1-Start}{algline:UnivSampler-Print1-NoIter},
which do not require any new flips.
Patterns \ref{eq:bit-patterns-simple}.2 and \ref{eq:bit-patterns-simple}.3
correspond to the while-loop, showing there will always be precisely one
leaf node at each iteration.
Therefore, each iteration exits with probability 1/2, and the while-loop
terminates almost surely.
\Cref{proposition:bit-patterns,lemma:ddg-structure-2} justify the correctness
when the current node is not the deepest leaf labeled $b$.
Pattern~\ref{eq:bit-patterns}.1 corresponds to the early exit in
\crefrange{algline:UnivSampler-If1-Start}{algline:UnivSampler-Print1-NoIter}.
Patterns~\ref{eq:bit-patterns}.2 and \ref{eq:bit-patterns}.3
correspond to the while-loop, whose number of iterations is bounded by the
number of levels in the tree until reaching the next leaf labeled $b$.

\begin{remark}
\label{remark:refine-improve-knuth-yao}
\Cref{alg:sampler-opt} can be viewed as an optimized version
of the original \citeauthor{knuth1976} method
that achieves optimal space-time complexity for
refining entropy-optimal DDG trees.
\begin{itemize}
\item \citet[page 384]{knuth1976} describe a nondeterministic procedure
for refining entropy-optimal DDG trees.
The \citeauthor{knuth1976} method explicitly constructs a full DDG tree
at each refinement step.
When refining a leaf node $x$ with label $b$ at level $\ell$, the method
\begin{enumerate}[wide, leftmargin=*, label=(\roman*),nosep]
\item performs a preprocessing step that refines \textit{all} the leaf
  nodes labeled $b$ above $\ell$; then
\item expands \textit{all} the possible execution paths starting
  from the new subtree rooted at $x$.
\end{enumerate}
Each such refinement step takes $O(\ell')$ time and creates $O(\ell')$ leaf
nodes, where $\ell'$ is the height of a resulting DDG tree.
Therefore, the \citeauthor{knuth1976} method requires $O(k^2)$ time and
$O(k^2)$ space to construct a refined entropy-optimal DDG tree
of height $k$.

\item \cref{alg:sampler-opt} is a more efficient method that does
not explicitly construct full DDG trees. It
\begin{enumerate}[wide, leftmargin=*, label=(\roman*),nosep]
\item
  performs \textit{no} preprocessing
    (i.e, avoids refining leaf nodes higher up in the tree); and
\item
  lazily explores only a \textit{single} path down the subtree rooted at
  $x$.
\end{enumerate}
\Cref{alg:sampler-opt} requires $O(k)$ time and $O(1)$ space
to explore a single path in a refined entropy-optimal DDG tree of height $k$;
achieving optimal space-time complexity.
\Cref{lemma:ddg-structure-1,lemma:ddg-structure-2} enable this optimality,
by identifying a class of refined entropy-optimal DDG trees that have
certain ``nice'' properties which
\cref{alg:sampler-opt} exploits for efficient and lazy exploration.
\end{itemize}

The \citeauthor{knuth1976} algorithm is nondeterministic: it
can construct \textit{any} refined entropy-optimal DDG tree
\citep[page 385]{knuth1976}.
In contrast, \cref{alg:sampler-opt} is deterministic:
it can explore only \textit{some} of these trees, because not every refined
entropy-optimal DDG tree satisfies \cref{lemma:ddg-structure-2}
(cf.~\cref{fig:ky-refine}).
\end{remark}


\begin{figure}[!t]
\begin{center}
\begin{tikzpicture}
\normalfont
\tikzstyle{branch}=[shape=coordinate]
\tikzset{level distance=10pt, sibling distance=2pt}
\tikzset{every tree node/.style={anchor=north}}

\tikzstyle{branch}=[shape=coordinate]
\tikzstyle{excE}=[color=blue, line width=1pt,-latex]    
\tikzstyle{refE}=[color=red,  line width=0.6pt,densely dashed] 
\tikzstyle{curN}=[color=black,draw=black,inner sep=1pt] 
\tikzstyle{excN}=[color=black,draw=red,  inner sep=1pt, thick] 
\tikzstyle{refN}=[color=red,  draw=black,inner sep=1pt] 

\captionsetup[subfigure]{font=footnotesize,justification=centering}
\node[name=m1]{
\begin{tikzpicture}
\Tree
  [.\node[branch,name=root,label={[font=\footnotesize]above:\begin{tabular}{@{}c@{}}Original Tree\\\phantom{(Leaf)}\end{tabular}}]{};
    [
      [
        \node[curN]{$1$};
        \node[excN]{$0$}; 
        ]
      \node[excN]{$0$}; 
      ]
    \node[excN]{$0$}; 
  ]
\end{tikzpicture}
};

\node[name=m2,right=3 of m1.north,anchor=north]{
\begin{tikzpicture}
\Tree
  [.\node[branch,name=root,label={[font=\footnotesize]above:\begin{tabular}{@{}c@{}}Refined Tree\\{(Possibility 1)}\end{tabular}}]{};
    [
      [
        \node[curN]{$1$};
        \edge[]; \node[refN]{$00$}; 
        ]
      \edge[]; \node[refN]{$00$}; 
      ]
    \edge[]; [ 
      \edge[refE]; [
        \edge[refE]; [
          \edge[refE];\node[refN]{$00$};
          \edge[refE];\node[refN]{$01$};
        ]
        \edge[refE]; \node[refN]{$01$};
      ]
      \edge[refE]; \node[refN]{$01$};
    ]
  ]
\end{tikzpicture}
};

\node[name=m3,right=4 of m2.north,anchor=north]{
\begin{tikzpicture}
\Tree
  [.\node[branch,name=root,label={[font=\footnotesize]above:\begin{tabular}{@{}c@{}}Refined Tree\\{(Possibility 2)}\end{tabular}}]{};
    [
      [
        \node[curN]{$1$};
        \edge[]; \node[refN]{$00$}; 
        ]
      \edge[]; [ 
        \edge[refE]; [
          \edge[refE]; \node[refN]{$00$};
          \edge[refE]; \node[refN]{$01$};
        ]
        \edge[refE]; \node[refN]{$01$};
      ]
    ]
    \edge[]; [ 
      \edge[refE]; \node[refN]{$00$}; 
      \edge[refE]; \node[refN]{$01$}; 
    ]
  ]
\end{tikzpicture}
};

\node[name=m4,right=4 of m3.north,anchor=north]{
\begin{tikzpicture}
\Tree
  [.\node[branch,name=root,label={[font=\footnotesize]above:\begin{tabular}{@{}c@{}}Refined Tree\\(Possibility 3)\end{tabular}}]{};
    [
      [
        \node[curN]{$1$};
        \edge[]; [ 
          \edge[refE]; \node[refN]{$00$};
          \edge[refE]; \node[refN]{$01$};
        ]
        ]
      \edge[]; [ 
        \edge[refE]; \node[refN]{$00$};
        \edge[refE]; \node[refN]{$01$};
      ]
      ]
    \edge[]; [ 
      \edge[refE]; \node[refN]{$00$};
      \edge[refE]; \node[refN]{$01$};
    ]
  ]
\end{tikzpicture}
};

\end{tikzpicture}\qedhere
\end{center}
\captionsetup{skip=0pt,belowskip=0pt}
\caption{
  Three possible DDG tree refinements for outcome $0$ in a binary-coded probability
  distribution with
  $p(0) = (0.111)_2, p(1) = (0.001)_2$, and $p(00) = p(01) = (0.0111)_2$.
  The nondeterministic method of \citet{knuth1976} can explore any of these
  possibilities, with $O(k^2)$ space and time complexity for building a
  depth-$k$ refined tree.
  The deterministic method in \cref{alg:sampler-opt}, based on
  \cref{lemma:ddg-structure-1,lemma:ddg-structure-2},
  deterministically refines a path in a tree
  whose structure follows that of Possibility 3, with $O(k)$
  time and $O(1)$ space complexity.}
\label{fig:ky-refine}
\end{figure}

\section{Exact Random Variate Generators for Numerical Cumulative Distribution Functions}
\label{sec:floating}

This section describes an implementation of \cref{alg:sampler-opt}
for exact random variate generation using finite-precision computation.
The main algorithm, presented in \cref{sec:floating-sample}, rests on two
connections.
The first (in \cref{sec:floating-binary}) is between
binary strings and a novel unifying abstraction
for finite-precision binary number formats
(\cref{def:binary-number-format}).
The second (in \cref{sec:floating-word-ram}) is between
binary-coded probability distributions (\cref{definition:bcpd}) and
numerical implementations of cumulative distribution functions (\cref{def:cda}).

\subsection{Finite-Precision Binary Number Formats}
\label{sec:floating-binary}

\begin{definition}
\label{def:extended-reals}
The set of \textit{extended reals} is defined by $\realext \defas \real \cup \set{-\infty, +\infty, \bot}$,
where $\bot$ denotes some ``special'' value (e.g., NaN in floating-point formats).
A strict linear order $<_{\realext}$ over $\realext$ is given by
$x <_{\realext} x'$ and $-\infty <_{\realext} x <_{\realext} +\infty <_{\realext} \bot$
for all $x, x' \in \real$ with $x <_\real x'$; i.e., $\bot$ is the largest value.
A weak linear order $\leq_{\realext}$ over $\realext$ is defined as usual:
$x \leq_{\realext} x'$ if and only if $x=x'$ or $x <_{\realext} x'$.
\end{definition}

\begin{definition}
\label{def:binary-number-format}
A \textit{binary number format} $\bfmt \defas (n, \gamma_\bfmt, \phi_\bfmt)$ is a 3-tuple:
\begin{itemize}[noitemsep]
  \item $n \ge 1 $ is an integer indicating the number of binary digits in each bit string $b \in \bool^n$;
  \item $\gamma_\bfmt: \bool^n \to \realext_\bfmt$ is a mapping from $n$-bit strings
    onto a subset $\realext_\bfmt \subset \realext$ of computable reals;
  \item $\phi_\bfmt: \bool^n \to \bool^n$ is a bijection such that 
  $b <_{\rm dict} b'$ implies
  $\gamma_\bfmt(\phi_\bfmt(b)) \le_{\realext} \gamma_\bfmt(\phi_\bfmt(b'))$,
  where $<_{\rm dict}$ denotes the dictionary (i.e., lexicographic) ordering on $\bool^n$.
  \hfill \qedhere
\end{itemize}
\end{definition}
With slight abuse of notation, the set $\realext_\bfmt$ of reals
in \cref{def:binary-number-format}
is sometimes also denoted by $\bfmt$.

\begin{remark}
\label{remark:binary-number-format-ordering}
For a binary number format $\bfmt$, the bijection $\phi_\bfmt$ defines an ordering
$<_\bfmt$ on $\bool^n$.
In particular, equipping the domain of
$\phi_\bfmt$ with the dictionary ordering 
$<_{\rm dict}$ gives a strict linear order $<_\bfmt$ over $\bool^n$
and a weak linear order over ${\realext}_\bfmt \subset \realext$:
\begin{align}
\phi_\bfmt(0^n)
&
<_\bfmt \phi_\bfmt(0^{n-1}1)
<_\bfmt \dots
<_\bfmt \phi_\bfmt(1^{n-1}0)
<_\bfmt \phi_\bfmt(1^n),
\label{eq:snobbing-1}
\\
\gamma_\bfmt(\phi_\bfmt(0^n))
&
\leq_{\realext} \gamma_\bfmt(\phi_\bfmt(0^{n-1}1))
\leq_{\realext} \dots
\leq_{\realext} \gamma_\bfmt(\phi_\bfmt(1^{n-1}0))
\leq_{\realext} \gamma_\bfmt(\phi_\bfmt(1^n)).
\label{eq:snobbing-2}
\end{align}
The predecessor and successor of a non-extremal value
$b \in \bool^n$ under ordering \cref{eq:snobbing-1} are denoted by
$\mathrm{pred}_\bfmt(b)$ and $\mathrm{succ}_\bfmt(b)$.
Similarly, $\mathrm{pred}_\bfmt(x)$ and $\mathrm{succ}_\bfmt(x)$ for
$x \in {\realext}_\bfmt$ are used for~\cref{eq:snobbing-2}.
\end{remark}

\begin{example}[Integer Formats:
Unsigned $\mathbb{U}_n$,
Sign-Magnitude $\mathbb{M}_n$, and
Two's-Complement $\mathbb{T}_n$]
\label{example:format-unsigned-int}
\label{example:format-sign-magnitude}
\label{example:format-twos-complement}
\mbox{}\\
\begin{adjustbox}{max width=.95\linewidth}
\centering
\newcommand\TT{\rule{0pt}{2.5ex}}       
\newcommand\BB{\rule[-1.2ex]{0pt}{0pt}} 
\begin{tabular}[b]{@{}|lll|@{}}
\hline
$\mathbb{U}_n \defas (n, \gamma_{\mathbb{U}_n}, \phi_{\mathbb{U}_n})$
&
$\gamma_{\mathbb{U}_n}(b_{n-1}\ldots{b_0}) \defas \sum_{j=0}^{n-1}2^jb_j$
&
$\phi_{\mathbb{U}_n}(b_{n-1}\ldots{b_0}) \defas b_{n-1}\ldots{b_0}$
\\\hline
$\mathbb{M}_n \defas (n+1, \gamma_{\mathbb{M}_n}, \phi_{\mathbb{M}_n})$
&
$\gamma_{\mathbb{M}_n}(sb_{n-1}\ldots{b_0}) \defas (-1)^{s}\times\sum_{j=0}^{n-1}2^jb_j$
&
$\begin{aligned}[t]
  \phi_{\mathbb{M}_n}(0b_1\ldots{b_{n}}) \defas 1\bar{b}_1\ldots\bar{b}_{n}\\[-2pt]
  \phi_{\mathbb{M}_n}(1b_1\ldots{b_{n}}) \defas 0{b}_1\ldots{b}_{n}
\end{aligned}$
\TT \\\hline
$\mathbb{T}_n \defas (n+1, \gamma_{\mathbb{T}_n}, \phi_{\mathbb{T}_n})$
&
$\gamma_{\mathbb{T}_n}(sb_{n-1}\ldots{b_0}) \defas -s2^{n} + \sum_{j=0}^{n-1}2^jb_j$
&
$\begin{aligned}[t]
  \phi_{\mathbb{T}_n}(0b_1\ldots{b_{n}}) \defas 1{b}_1\ldots{b}_{n}\\[-2pt]
  \phi_{\mathbb{T}_n}(1b_1\ldots{b_{n}}) \defas 0{b}_1\ldots{b}_{n}
\end{aligned}$
\\\hline
\end{tabular}
\end{adjustbox}
\end{example}

\begin{example}[Fixed-Point Formats]
\label{example:format-fixed-point}
The unsigned fixed-point format $\mathbb{U}^m_n \defas (n, \gamma_{\mathbb{U}^m_n}, \phi_{\mathbb{U}^m_n})$
(parameterized by offset $m \in \mathbb{Z}$)
has $\gamma_{\mathbb{U}^m_n} (b_{n-1}\ldots{b_0}) \defas 2^{-m}\gamma_{\mathbb{U}_n}(b_{n-1}\ldots{b_0})$
and $\phi_{\mathbb{U}^m_n} \defas \mathrm{id}$.
The sign-magnitude $\mathbb{M}^m_n$
and two's-complement $\mathbb{T}^m_n$ fixed-point formats are defined analogously.
\end{example}

\begin{example}[Floating-Point Formats~\citep{ieee754-2019}]
\label{example:format-floating-point}
The IEEE-754 floating-point format
$\floatEm \defas (1+E+m, \gamma_{\floatEm}, \allowbreak \phi_{\floatEm})$ is comprised
of $n$-bit strings, where $E \geq 1$ is the number of exponent bits, $m \geq 1$
the number of mantissa bits, and the leading bit is a sign bit.
Letting $b_E \defas 2^{E-1}-1$ be the ``exponent~bias'', 
\begin{align}
\gamma_{\floatEm}(s0^Ef_1\ldots{f_m}) &\defas (-1)^s(0.f_1\ldots{f_m})_2 \times 2^{1-b_E},
&
\gamma_{\floatEm}(s1^Ef_1\ldots{f_m}) &\defas \bot,
\\
\gamma_{\floatEm}(se_{E}\ldots{e_1}f_1\ldots{f_m}) &\defas (-1)^s (1.f_1\ldots{f_m})_2 \times 2^{({e_{E}\ldots{e_1}})_2-b_E},
&
\gamma_{\floatEm}(s1^E0^m) &\defas (-1)^s  \infty.
\end{align}
There are two bit-string representations for $0 \in \realext$, $00^{E+m}$ and $10^{E+m}$,
which are referred to as positive zero and negative zero, respectively.
The bijection $\phi_{\floatEm}$ is similar to that of the sign-magnitude
format $\mathbb{M}_n$, with an offset to ensure that all strings
mapping to $\bot$ are maximal\footnote{%
The IEEE-754 floating-point standard treats NaN ($\bot$) bit patterns as unordered,
whereas \cref{eq:kinetical} treats them all identically and as a maximal
element to obtain a well-defined \WCDF. Alternative orderings of these
bit patterns are possible.}:
\begin{align}
\phi_{\floatEm}(b_0b_1\ldots{b_{n-1}}) &\defas \begin{cases}
  \phi_{\mathbb{M}_{E+m}}\big((b_0b_1\ldots b_{n-1})_2 + (2^m-1)\big)
      & \mbox{if } b_0\ldots{b_{n-1}} \le_{\rm dict} 11^E0^m \\
  b_0b_1\ldots b_{n-1} & \mbox{otherwise}.\hfill\llap{\qedsymbol}
  \end{cases}\label{eq:kinetical}
\end{align}\noqed
\end{example}

\begin{example}
The following diagram is an example format $\bfmt$,
corresponding to $\mathbb{F}^{1}_{1}$ in \cref{example:format-floating-point}.

\noindent\begin{minipage}[b]{.1\linewidth}
\hfill
\end{minipage}%
\begin{minipage}[t]{.8\linewidth}
\centering

\begin{tikzpicture}[scale=.8]
\footnotesize
\def\ydist{1.1}
\def\tick{.075}

\draw[thick,-] (1,0) -- node[name=A,pos=1.05,anchor=west]{$(\set{0,1}^3, <_{\rm dict})$} (8,0);
\foreach \i/\j in {1/000, 2/001, 3/010, 4/011, 5/100, 6/101, 7/110, 8/111} {
  \draw[thick]
    (\i, -\tick)
    --
    node[label={[name=level1-bottom-\i,inner sep=0pt]below:\j}]{}
    (\i, \tick);
}

\draw[thick,-] (1,-\ydist) -- (8,-\ydist);
\foreach \i/\j in {1/110, 2/101, 3/100, 4/000, 5/001, 6/010, 7/011, 8/111} {
  \draw[thick]
    ($(\i, -\tick) + (0, -\ydist)$)
    --
    node[label={[name=level2-bottom-\i,inner sep=0pt]below:\j}]{}
    node[pos=1.25,name=level2-top-\i]{}
    ($(\i, \tick) + (0, -\ydist)$);
}

\foreach \i in {1, ..., 8} {
  \draw[-latex] (level1-bottom-\i.south) -- (level2-top-\i.center);
}

\node[name=B,at={(A |- (0,-\ydist))},anchor=center]{$(\set{0,1}^3, <_{\rm \bfmt})$};
\node[name=C,at={(B |- (0,-2*\ydist))},anchor=center]{$(\realext, <_{\rm \realext})$};

\draw[-latex] (A) --node[pos=0.5,right]{$\phi_{\bfmt}$} (B);
\draw[-latex] (B) --node[pos=0.5,right]{$\gamma_{\bfmt}$} (C);

\draw[thick,-,draw] (2,-2*\ydist) -- (6,-2*\ydist);

\foreach \i/\j in {2/$-\infty$, 3/-1, 4/0, 5/1, 6/$+\infty$} {
  \draw[thick]
    ($(\i, -\tick) + (0, -2*\ydist)$)
    --
    node[label={[inner sep=0pt]below:\j}]{}
    node[pos=1.25,name=level3-top-\i]{}
    ($(\i, \tick) + (0, -2*\ydist)$);
}

\draw[draw=none]
  ($(7, -\tick) + (0, -2*\ydist)$)
  --
  node[fill=black,circle,inner sep=1pt,label={below:$\bot$}]{}
  node[pos=1.25,name=level3-top-7]{}
  ($(7, \tick) + (0, -2*\ydist)$);

\foreach \i/\j in {1/2, 2/3, 3/4, 4/4, 5/5, 6/6, 7/7, 8/7} {
  \draw[-latex] (level2-bottom-\i.south) -- (level3-top-\j.center);
}

\end{tikzpicture}
\end{minipage}
\begin{minipage}[t]{.1\linewidth}
\hfill\qedsymbol
\end{minipage}
\noqed
\end{example}

\begin{proposition}[name=,restate=FloatingPointMonotonic]
\label{remark:floating-point-monotonic}
The ordering $<_{\floatEm}$ induced by $\phi_{\floatEm}$~\cref{eq:kinetical}
guarantees that $\gamma_{\floatEm}: {(\bool^n, <_{\floatEm})} \to ({\realext}, <_{\realext})$
is monotonic. 
That is, for any distinct $b, b' \in \bool^{1+E+m}$ such that
$\gamma_{\floatEm}(\set{b,b'}) \not\in \set{\set{0}, \set{\bot}}$,
the following are equivalent:
$\phi_{\floatEm}^{-1}(b) <_{\rm dict} \phi_{\floatEm}^{-1}(b')
\iff
b <_{\floatEm} b'
\iff
\gamma_{\floatEm}(b) <_{\realext} \gamma_{\floatEm}(b').$
\end{proposition}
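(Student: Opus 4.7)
The plan is to prove the chain of equivalences in two stages. The equivalence of the first two statements, $\phi_{\floatEm}^{-1}(b) <_{\rm dict} \phi_{\floatEm}^{-1}(b') \iff b <_{\floatEm} b'$, is immediate from the definition of $<_\bfmt$ in Remark~\ref{remark:binary-number-format-ordering}, which defines $<_\bfmt$ precisely as the pullback of $<_{\rm dict}$ along $\phi_\bfmt^{-1}$. So the substantive work is to establish the equivalence $b <_{\floatEm} b' \iff \gamma_{\floatEm}(b) <_{\realext} \gamma_{\floatEm}(b')$ under the hypothesis on $\gamma_{\floatEm}(\{b,b'\})$.

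First I would show the forward direction. By Definition~\ref{def:binary-number-format}, $\phi_{\floatEm}$ is required to satisfy weak monotonicity, i.e., $\phi_{\floatEm}^{-1}(b) <_{\rm dict} \phi_{\floatEm}^{-1}(b')$ implies $\gamma_{\floatEm}(b) \leq_{\realext} \gamma_{\floatEm}(b')$. Hence $b <_{\floatEm} b'$ already gives $\gamma_{\floatEm}(b) \leq_{\realext} \gamma_{\floatEm}(b')$. To upgrade to strict inequality, I would catalogue the fibers of $\gamma_{\floatEm}$ of size greater than one. Inspecting the definition in Example~\ref{example:format-floating-point}, the only repeated real values are
\begin{itemize}[nosep]
  \item the signed zeros: $\gamma_{\floatEm}(00^{E+m}) = \gamma_{\floatEm}(10^{E+m}) = 0$;
  \item the NaN encodings: $\gamma_{\floatEm}(s1^E f_1 \dots f_m) = \bot$ whenever $(f_1\dots f_m) \neq 0^m$.
\end{itemize}
Thus $\gamma_{\floatEm}(b) = \gamma_{\floatEm}(b')$ with $b \neq b'$ forces $\gamma_{\floatEm}(\{b,b'\}) \in \{\{0\},\{\bot\}\}$. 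Under the stated hypothesis this is excluded, so $\gamma_{\floatEm}(b) \neq \gamma_{\floatEm}(b')$, and weak monotonicity sharpens to $\gamma_{\floatEm}(b) <_{\realext} \gamma_{\floatEm}(b')$.

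For the reverse direction I would argue by trichotomy on $<_{\floatEm}$. Suppose $\gamma_{\floatEm}(b) <_{\realext} \gamma_{\floatEm}(b')$. Since $<_\realext$ is strict, $b \neq b'$. If $b' <_{\floatEm} b$, then applying the forward direction (already proved, and legitimately applicable because the hypothesis on $\gamma_{\floatEm}(\{b,b'\})$ is symmetric in $b,b'$) yields $\gamma_{\floatEm}(b') <_{\realext} \gamma_{\floatEm}(b)$, contradicting $<_\realext$'s antisymmetry. Hence $b <_{\floatEm} b'$.

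The main obstacle is the bookkeeping in the forward step: verifying that the only collisions of $\gamma_{\floatEm}$ are the $\pm 0$ pair and the NaN fibers. This requires a short case analysis on the three clauses defining $\gamma_{\floatEm}$ (subnormals with exponent field $0^E$, normals with exponent field outside $\set{0^E, 1^E}$, and the reserved field $1^E$), noting that within each of the two non-$\bot$ clauses distinct bit patterns yield distinct reals except for the sign-bit ambiguity at mantissa/exponent $0$, and that the ``offset by $2^m-1$'' in~\cref{eq:kinetical} is precisely engineered so that all $\bot$-encodings become maximal under $<_{\floatEm}$—which we do not need for the proposition itself but confirms consistency with the hypothesis.
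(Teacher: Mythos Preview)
Your argument has a circularity gap in the forward direction. You invoke weak monotonicity ``by Definition~\ref{def:binary-number-format},'' but the content of Proposition~\ref{remark:floating-point-monotonic} is precisely to verify that the concrete $\phi_{\floatEm}$ written down in~\cref{eq:kinetical} makes $\gamma_{\floatEm}$ monotone---i.e., that the triple in Example~\ref{example:format-floating-point} actually \emph{is} a binary number format in the sense of Definition~\ref{def:binary-number-format}. Example~\ref{example:format-floating-point} only defines the triple; it does not prove the ordering property. So citing Definition~\ref{def:binary-number-format} here assumes what is to be shown.

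Your fiber analysis (only $\pm 0$ and the NaN block collide under $\gamma_{\floatEm}$) and the trichotomy argument for the converse are both correct, but they only serve to upgrade weak monotonicity to strict and to get the reverse implication; they do not establish weak monotonicity itself. The paper's proof avoids the circularity by tabulating the composite $b \mapsto \phi_{\floatEm}(b) \mapsto \gamma_{\floatEm}(\phi_{\floatEm}(b))$ over the four contiguous blocks of the dictionary-ordered domain (negative floats, nonnegative floats through $+\infty$, positive NaNs via the offset, and the remaining NaNs handled by the second clause of~\cref{eq:kinetical}) and reading off the strict order-preservation directly. To repair your argument you would need to supply that direct check first; once done, your decomposition into ``weak monotonicity $+$ collision analysis $+$ trichotomy'' is a clean way to organize the same content.
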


\begin{example}[Posit Format~\citep{Posit2022}]
\label{example:format-posit}
The posit format
$\positn \defas (n, \gamma_{\positn}, \phi_{\positn})$ is comprised
of $n$-bit strings ($n \,{\ge}\, 3$).
The first bit $s$ is the sign field.
The next $k+1$ bits form a variable-length regime field ($1 \le k \le n-2$)
where $b_1=\cdots=b_{k}$, $b_{k+1} = \bar{b}_1$.
The next two bits $e_1e_0$ form an exponent field.
The last bits $f_{1}\ldots f_{m}$ form a fraction field.
The real mapping has
$\gamma_{\positn}(00^{n-1}) \defas 0$,
$\gamma_{\positn}(10^{n-1}) \defas -\infty$,~and
\begin{align}
\gamma_{\positn}(sb_1\ldots b_k b_{k+1}e_1e_0f_1\ldots{f_m}) &\defas
  \begin{aligned}[t]
    &\big((1{-}3s) {+} (0.f_1\ldots{f_m})_2\big) 2^{(1-2s)
      \cdot (4(-k(1-b_1) + (k-1)b_1) + (e_1e_0)_2 + s)} \!.\! \\
  \end{aligned}
\end{align}
If the $n$-bit field is not wide enough to represent some exponent
or fraction bits, these bits are zero.
The mapping $\phi_{\positn}\defas\phi_{\mathbb{T}_n}$ is identical to that of the two's-complement
format from \cref{example:format-twos-complement}.
\end{example}

\subsection{Finite-Precision Cumulative Distribution Functions}
\label{sec:floating-word-ram}

\begin{definition}
\label{def:cda}
Let $\bfmt\,{=}\,(n,\gamma_\bfmt,\phi_\bfmt)$ be any binary number format
and $E, m$ the parameters of a floating-point format $\floatEm$.
A \textit{finite-precision cumulative distribution function}
$F: \bool^n \to {\floatEm} \cap [0,1]$
over $\bfmt$ is a nondecreasing mapping with
$F(\phi_\bfmt(1^n)) = 1$
and $b <_\bfmt b' \implies F(b) \le F(b')$.
A \textit{finite-precision survival function} $S: \bool^n \to \floatEm \cap [0,1]$ over
$\bfmt$ is a nonincreasing mapping with $S(\phi_\bfmt(1^n)) = 0$ and $b <_\bfmt b' \implies S(b') \le S(b)$.
\end{definition}

\Crefrange{lst:exponential-cf}{lst:exponential-sf} of \cref{lst:exponential} show
examples of a finite-precision \WCDF{} and \WSF{}, respectively.
While \cref{def:cda} assumes that $F$ returns IEEE-754 floats,
formats such as fixed-points
(\cref{example:format-fixed-point}) and posits
(\cref{example:format-posit}) are also possible.
The next remarks state properties of a finite-precision CDF.

\begin{remark}
\label{remark:CDA-sample-intractable}
Every finite-precision \WCDF{} $F$ defines a discrete distribution
$P_F: \bool^n \to [0,1]$, where
$P_F(b) \defas F(b) - F(\mathrm{pred}_{\bfmt}(b))$,
with the convention that $F(\mathrm{pred}_{\bfmt}(\phi_\bfmt(0^n))) \defas 0$.
Recall that directly constructing an entropy-optimal DDG tree for $P_F$ is
infeasible if it has $\Theta(2^n)$ leaves.
\end{remark}

\begin{remark}
\label{proposition:lift-cda}
Every finite-precision \WCDF{} $F : \set{0,1}^n \to \floatEm \cap [0,1]$ over a binary number format $\bfmt$
can be lifted to a CDF $\hat{F}: \realext \to [0,1]$ over $\realext$,
where $\hat{F}(x) \defas F(\roundfl{\bfmt,\rtd}(x))$ and
$\roundfl{\bfmt,\rtd}(x) \defas \max_{<_{\bfmt}} \set{b \in \set{0,1}^n \mid \gamma_{\bfmt}(b) \le_{\realext} x}$.
\cref{remark:binary-number-format-ordering} confirms
$\hat{F}$ is monotonic and right-continuous.
\end{remark}

The next result connects binary-coded probability
distributions (\cref{definition:bcpd}) with finite-precision \WCDF{s} (\cref{def:cda}),
which enables the finite-precision specialization of \cref{alg:sampler-opt}
in \cref{sec:floating-sample}.

\begin{proposition}[name=,restate=CdaPcbd]
\label{prop:cda-pcbd}
Let $F: \set{0,1}^n \to \floatEm \cap[0,1]$
be a \WCDF{} over the unsigned integer format $\mathbb{U}_n$
and $P_F$ the corresponding discrete distribution
from~\cref{remark:CDA-sample-intractable}.
The function $p_F : \set{0,1}^* \to [0,1]$ defined below is a binary-coded
probability distribution that satisfies $p_F(b) = P_F(b)$ for all $b \in \bool^n$:
\begin{align}
p_F(b) &\defas F(b1^{n-\abs{b}}) -_\real F((b0^{n-\abs{b}})^-)
&& (b \in \bool^{\le n}),
\label{eq:trefle-1}
\\
p_F(bb') &\defas p_F(b)\mathbf{1}[b'=0\ldots{0}]
&& (b \in \bool^n; b'\in\bool^+),
\label{eq:trefle-2}
\end{align}
where $x^- \defas \mathrm{pred}_{\rm dict}(x)$ for any $x \in \bool^n \setminus \set{0^n}$,
with the convention that $F((0^n)^-) \defas 0$.
\end{proposition}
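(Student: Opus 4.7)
The plan is to verify the three defining conditions of a binary-coded probability distribution (\cref{definition:bcpd})---namely $p_F(\varepsilon)=1$, the splitting law $p_F(b)=p_F(b0)+p_F(b1)$ for every $b\in\bool^*$, and range in $[0,1]$---and then check agreement with $P_F$ on length-$n$ strings. Since the bijection $\phi_{\mathbb{U}_n}$ is the identity, the ordering $<_{\mathbb{U}_n}$ coincides with the dictionary ordering on $\bool^n$, so the monotonicity of $F$ from \cref{def:cda} means that $c <_{\rm dict} c' \Rightarrow F(c)\le F(c')$; this will be the only property of $F$ that we use besides $F(1^n)=1$.

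\textbf{Initial value and range.} First I would compute $p_F(\varepsilon)$ from \cref{eq:trefle-1} with $\abs{b}=0$, which gives $F(1^n)-F((0^n)^-)=1-0=1$ under the stated convention. Membership in $[0,1]$ follows immediately: for $b\in\bool^{\le n}$ we have $b0^{n-\abs{b}}\le_{\rm dict}b1^{n-\abs{b}}$, so $F((b0^{n-\abs{b}})^-)\le F(b1^{n-\abs{b}})$, and both values lie in $[0,1]$; for $b\in\bool^{>n}$ the defining formula yields either $p_F(b'),$ for $b'\in\bool^n$ the length-$n$ prefix of $b$, or $0$.

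\textbf{Splitting law.} The heart of the argument is \cref{eq:trefle-1,eq:trefle-2}, and I would treat three cases according to $\abs{b}$. (i) If $\abs{b}=j<n$, writing $k\defas n-j-1$, note the lexicographic identities $b11^k = b1^{k+1}$, $(b10^k)^- = b01^k$, and $(b0^{k+1})^- = (b0\cdot0^k)^-$ unchanged, so
\begin{align*}
p_F(b0)+p_F(b1) &= \bigl[F(b01^k)-F((b0^{k+1})^-)\bigr]+\bigl[F(b1^{k+1})-F(b01^k)\bigr]\\
&= F(b1^{k+1})-F((b0^{k+1})^-) = p_F(b).
\end{align*}
(ii) If $\abs{b}=n$, the first clause gives $p_F(b)=F(b)-F(b^-)$, while the second clause gives $p_F(b0)=p_F(b)$ (since the appended bit is $0$) and $p_F(b1)=0$, so the identity holds. (iii) If $\abs{b}>n$, write $b=b'b''$ with $\abs{b'}=n$; then $b''0$ is all-zero iff $b''$ is, whereas $b''1$ never is, so $\mathbf{1}[b''=0\cdots0]=\mathbf{1}[b''0=0\cdots0]+\mathbf{1}[b''1=0\cdots0]$, and multiplying by $p_F(b')$ gives the splitting law. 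The main---though elementary---obstacle is bookkeeping the lexicographic predecessor operation in case (i); the telescoping works precisely because $(b10^k)^-$ equals $b01^k$, a fact that relies on $\phi_{\mathbb{U}_n}$ being the identity.

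\textbf{Agreement with $P_F$.} Finally, for $b\in\bool^n$ the first clause of \cref{eq:trefle-1} reduces to $p_F(b)=F(b)-F(b^-)$, which is exactly $P_F(b)$ from \cref{remark:CDA-sample-intractable} (using $b^-=\mathrm{pred}_{\mathbb{U}_n}(b)$ and the convention $F((0^n)^-)\defas 0=F(\mathrm{pred}_{\mathbb{U}_n}(\phi_{\mathbb{U}_n}(0^n)))$). Together with the previous two steps, this completes the verification that $p_F$ is a BCPD extending $P_F$.
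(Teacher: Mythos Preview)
Your proposal is correct and follows essentially the same approach as the paper's proof: both verify $p_F(\varepsilon)=1$, establish the splitting law via the telescoping identity $(b10^k)^-=b01^k$ for $\abs{b}<n$ and via the indicator-function structure of \cref{eq:trefle-2} for $\abs{b}\ge n$, and then read off $p_F(b)=P_F(b)$ directly for $\abs{b}=n$. Your version is slightly more careful (you separate the case $\abs{b}=n$ from $\abs{b}>n$ and explicitly check the range $[0,1]$), but the substance is identical.
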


\begin{proposition}[name=,restate=CdaPcbdUint]
\label{prop:cda-pcbd-unsigned-int}
Let $F: \set{0,1}^n \to \floatEm \cap[0,1]$
be a \WCDF{} over a number format $\bfmt = (n,\gamma_\bfmt, \phi_\bfmt)$.
Then $\widetilde{F} \defas \allowbreak F \circ \phi_\bfmt$ is a \WCDF{} over the
unsigned integer format
$\mathbb{U}_n = (n, (\cdot)_2, \mathrm{id})$ from \cref{example:format-unsigned-int}.
\end{proposition}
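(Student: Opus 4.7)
The plan is to unfold the definition of a finite-precision CDF (\cref{def:cda}) for the two formats $\bfmt$ and $\mathbb{U}_n$ and then verify the two required properties of $\widetilde{F}$ by chasing $\phi_\bfmt$ through them. Specifically, for $\widetilde{F}$ to be a CDF over $\mathbb{U}_n = (n, (\cdot)_2, \mathrm{id})$, I must establish (i) the normalization $\widetilde{F}(\phi_{\mathbb{U}_n}(1^n)) = 1$, and (ii) the monotonicity $b <_{\mathbb{U}_n} b' \implies \widetilde{F}(b) \le \widetilde{F}(b')$. Since $\phi_{\mathbb{U}_n} = \mathrm{id}$, the ordering $<_{\mathbb{U}_n}$ on $\bool^n$ coincides with the dictionary ordering $<_{\rm dict}$ by \cref{remark:binary-number-format-ordering}, so (ii) becomes $b <_{\rm dict} b' \implies \widetilde{F}(b) \le \widetilde{F}(b')$.

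For (i), I observe that $\widetilde{F}(\phi_{\mathbb{U}_n}(1^n)) = \widetilde{F}(1^n) = F(\phi_\bfmt(1^n)) = 1$, where the last equality is exactly the normalization clause of $F$ being a CDF over $\bfmt$. For (ii), the key observation is that $\phi_\bfmt$, by the defining property in \cref{def:binary-number-format}, is order-preserving from $(\bool^n, <_{\rm dict})$ into $(\realext_\bfmt, \le_{\realext})$ via $\gamma_\bfmt$; equivalently (by \cref{remark:binary-number-format-ordering}) it is the bijection whose image induces the ordering $<_\bfmt$ from $<_{\rm dict}$. Hence if $b <_{\rm dict} b'$ then $\phi_\bfmt(b) <_\bfmt \phi_\bfmt(b')$, and so by the monotonicity of $F$ as a CDF over $\bfmt$ we conclude $F(\phi_\bfmt(b)) \le F(\phi_\bfmt(b'))$, i.e., $\widetilde{F}(b) \le \widetilde{F}(b')$.

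This proof is essentially a definitional bookkeeping exercise; there is no real technical obstacle. The only point that requires even mild care is making explicit the correspondence in \cref{remark:binary-number-format-ordering} between $<_\bfmt$ on the codomain side of $\phi_\bfmt$ and $<_{\rm dict}$ on the domain side, which is precisely what lets composition with $\phi_\bfmt$ convert a $<_\bfmt$-monotone function into a $<_{\rm dict}$-monotone one. The result is then immediate and provides the reduction used by \cref{prop:cda-pcbd}, which is stated only for $\mathbb{U}_n$, to apply to CDFs over arbitrary binary number formats.
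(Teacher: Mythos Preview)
Your proof is correct and follows essentially the same approach as the paper: both use $\phi_{\mathbb{U}_n} = \mathrm{id}$ to reduce the normalization clause to $F(\phi_\bfmt(1^n)) = 1$, and both invoke the ordering correspondence of \cref{remark:binary-number-format-ordering} (specifically \cref{eq:snobbing-1}) to transfer $<_{\rm dict}$-monotonicity through $\phi_\bfmt$ to $<_\bfmt$-monotonicity, whereupon the monotonicity of $F$ applies.
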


\begin{corollary}
\label{corollary:sample-F-transformation}
Let $F:\bool^n\to\floatEm \cap [0,1]$ be a \WCDF{} over a number format
$\bfmt = (n, \gamma_\bfmt, \phi_\bfmt)$.
Then a random variate $X \sim F$ can be
generated by first drawing $Z \sim F \circ \phi_\bfmt$ and setting $X \gets
\phi_\bfmt(Z)$.
\end{corollary}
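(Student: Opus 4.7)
The plan is to reduce the claim to \cref{prop:cda-pcbd-unsigned-int} (which already guarantees that $\widetilde F \defas F \circ \phi_\bfmt$ is a \WCDF{} over $\mathbb{U}_n$) and then transport probabilities through the bijection $\phi_\bfmt$. Concretely, I would let $P_F$ and $P_{\widetilde F}$ be the discrete probability distributions on $\bool^n$ associated with $F$ and $\widetilde F$ as in \cref{remark:CDA-sample-intractable}, and check that $P_F(b) = P_{\widetilde F}(\phi_\bfmt^{-1}(b))$ for every $b \in \bool^n$. Once this pushforward identity is in hand, it follows immediately that if $Z \sim \widetilde F$ then $X \defas \phi_\bfmt(Z)$ satisfies $\Pr(X = b) = \Pr(Z = \phi_\bfmt^{-1}(b)) = P_{\widetilde F}(\phi_\bfmt^{-1}(b)) = P_F(b)$, so $X \sim F$ as required.

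For the pushforward identity itself, the key observation is that $\phi_\bfmt$ is, by \cref{def:binary-number-format} and \cref{remark:binary-number-format-ordering}, an order isomorphism from $(\bool^n, <_{\rm dict})$ to $(\bool^n, <_\bfmt)$. Consequently, predecessors commute with $\phi_\bfmt$: for every non-minimal $z \in \bool^n$,
\begin{equation*}
  \mathrm{pred}_{\bfmt}(\phi_\bfmt(z)) \,=\, \phi_\bfmt(\mathrm{pred}_{\rm dict}(z)) \,=\, \phi_\bfmt(\mathrm{pred}_{\mathbb{U}_n}(z)),
\end{equation*}
using the fact that $\mathrm{pred}_{\mathbb{U}_n} = \mathrm{pred}_{\rm dict}$ because $\phi_{\mathbb{U}_n} = \mathrm{id}$. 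Substituting $b = \phi_\bfmt(z)$ into the definition of $P_F$ then gives
\begin{equation*}
  P_F(\phi_\bfmt(z)) \,=\, F(\phi_\bfmt(z)) -_\real F(\phi_\bfmt(\mathrm{pred}_{\mathbb{U}_n}(z))) \,=\, \widetilde F(z) -_\real \widetilde F(\mathrm{pred}_{\mathbb{U}_n}(z)) \,=\, P_{\widetilde F}(z),
\end{equation*}
with the usual convention that both ``$\mathrm{pred}$'' terms vanish at the minimal element.

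Finally, \cref{prop:cda-pcbd-unsigned-int} ensures that $\widetilde F$ is a \WCDF{} over $\mathbb{U}_n$, so \cref{prop:cda-pcbd} produces a binary-coded probability distribution $p_{\widetilde F}$ whose marginal on $\bool^n$ equals $P_{\widetilde F}$, and \cref{alg:sampler-opt} (justified by \cref{theorem:sampler-opt}) can draw $Z \sim p_{\widetilde F}$ exactly. Composing with $\phi_\bfmt$ yields the stated procedure. The only potentially delicate point is the order-theoretic bookkeeping around the extremal elements and the correspondence $\mathrm{pred}_\bfmt \circ \phi_\bfmt = \phi_\bfmt \circ \mathrm{pred}_{\rm dict}$; everything else is a direct invocation of previously established results, so no serious obstacle is anticipated.
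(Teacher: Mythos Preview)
Your proposal is correct and matches the paper's intended reasoning: the paper states this result as an immediate corollary of \cref{prop:cda-pcbd-unsigned-int} with no explicit proof, and your argument spells out precisely the order-isomorphism bookkeeping (that $\phi_\bfmt$ carries $<_{\rm dict}$ to $<_\bfmt$, hence $\mathrm{pred}_\bfmt \circ \phi_\bfmt = \phi_\bfmt \circ \mathrm{pred}_{\rm dict}$) that justifies the pushforward identity $P_F \circ \phi_\bfmt = P_{\widetilde F}$. The additional remarks about invoking \cref{prop:cda-pcbd} and \cref{alg:sampler-opt} to actually draw $Z$ go slightly beyond what the corollary itself asserts, but they are accurate and anticipate how the result is used in \cref{alg:sampler-opt-impl}.
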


\subsection{Finite-Precision Random Variate Generation Algorithms}
\label{sec:floating-sample}


\begin{listing}[t]
\centering
\setlength{\intextsep}{0pt}
\setlength{\textfloatsep}{0pt}
\algrenewcommand\algorithmicindent{1.0em}%
\begin{minipage}[t]{.48\linewidth}
\begin{algorithm}[H]
\captionsetup{hypcap=false}
\caption{Entropy-Optimal Generation}
\label{alg:sampler-opt-impl}
\begin{algorithmic}[1]
\Require{%
  CDF $F: \bool^n \to \floatEm \cap [0,1]$ \\
  over number format $\bfmt = (n, \gamma_\bfmt, \phi_\bfmt)$ \\
  \color{gray}{String $b \in \bool^{\le n}$; \#Flips $\ell \ge 0$;} \\
  \color{gray}{Floats $\fL, \fR \in \floatEm \cap [0,1]$}
  }
\Ensure{Exact random variate $X \sim F$}
\Function{\SampleOptImpl}{$F$, $b{=}\varepsilon$, $\ell{=}0$, $\fL{=}0$, $\fR{=}1$} \label{algline:sampler-opt-impl}
  \If{$\abs{b} = n$} \Comment{Base Case}         \label{algline:sampler-opt-impl-equiv-begin}
    \State \Return $\phi_\bfmt(b)$
    \Comment{String in Format $\bfmt$}
    \label{algline:sampler-opt-impl-return}
  \EndIf
  \Statex
  \State $b' \gets b01^{n - \abs{b} - 1}$
  \State $\fM \gets F(\phi_\bfmt(b'))$
  \label{algline:sampler-opt-compute-mid}
  \label{algline:sampler-opt-impl-F}
  \If{$\fM = \fR$} \Comment{Leaf} \label{algline:sampler-opt-impl-leaf-opt-start}
    \State \Return $\SampleOptImpl(F, b0, \ell, \fL, \fM)$
      \Comment{0}
  \EndIf
  \If{$\fM = \fL$} \Comment{Leaf}
    \State \Return $\SampleOptImpl(F, b1, \ell, \fM, \fR)$
      \Comment{1}
  \EndIf                              \label{algline:sampler-opt-impl-leaf-opt-end}
                                      \label{algline:sampler-opt-impl-equiv-end}
  \Statex
  \State $\beta_0 \gets \hyperref[alg:exactsubtract1-main]{\ExactSubtractOne}(\fM, \fL)$
    \label{algline:sampler-opt-impl-ExactSubtractA}
  \State $\beta_1 \gets \hyperref[alg:exactsubtract1-main]{\ExactSubtractOne}(\fR, \fM)$
    \label{algline:sampler-opt-impl-ExactSubtractB}
  \If{$\ell > 0$} \label{algline:sampler-opt-impl-leaf-start}
    \State $a_0 \gets \hyperref[alg:getbit-main]{\GetBit}(\beta_0, {\ell})$
      \Comment{$[\fM-_{\real} \fL]_{\ell}$}
    \State $a_1 \gets \hyperref[alg:getbit-main]{\GetBit}(\beta_1, {\ell})$
      \Comment{$[\fR-_{\real} \fM]_{\ell}$}
    \If{$a_0 = 1 \wedge a_1 = 0$} \Comment{Leaf} \label{algline:sampler-opt-impl-Pattern1}
        \State \Return \SampleOptImpl($F$, $b0$, $\ell$, $\fL$, $\fM$)
          \Comment{0}
    \EndIf
    \If{$a_0 = 0 \wedge a_1 = 1$} \Comment{Leaf} \label{algline:sampler-opt-impl-Pattern2}
      \State \Return \SampleOptImpl($F$, $b1$, $\ell$, $\fM$, $\fR$)
        \Comment{1}
        \label{algline:sampler-opt-impl-leaf-end}
    \EndIf
  \EndIf
  \Statex
  \While{$\textbf{true}$} \Comment{Refine Subtree} \label{algline:sampler-opt-impl-subtree-start}
    \State $x \gets \Flip()$; $\ell \gets \ell + 1$
    \State $a_0 \gets \hyperref[alg:getbit-main]{\GetBit}(\beta_0, {\ell})$
      \Comment{$[\fM-_{\real} \fL]_{\ell}$}
    \State $a_1 \gets \hyperref[alg:getbit-main]{\GetBit}(\beta_1, {\ell})$
      \Comment{$[\fR-_{\real} \fM]_{\ell}$}
    \If{$x = 0 \wedge a_0 = 1$} \Comment{Leaf}
      \State \Return \SampleOptImpl($F$, $b0$, $\ell$, $\fL$, $\fM$)
        \Comment{0}
    \EndIf
    \If{$x = 1 \wedge a_1 = 1$} \Comment{Leaf}
      \State \Return \SampleOptImpl($F$, $b1$, $\ell$, $\fM$, $\fR$)%
        \Comment{1}
    \EndIf
  \EndWhile \label{algline:sampler-opt-impl-subtree-end}
\EndFunction
\end{algorithmic}
\end{algorithm}
\end{minipage}\hfill
\begin{minipage}[t]{.495\linewidth}
\begin{algorithm}[H]
\renewcommand{\hl}[1]{#1}
\captionsetup{hypcap=false}
\caption{Preprocessing for \hyperref[alg:getbit-main]{\GetBit}}
\label{alg:exactsubtract1-main}
\begin{algorithmic}[1]
\Require{\hl{$x, x' \in \floatEm \cap [0,1]$, $0 < x -_\real x' < 1$}}
\Ensure{$(n_1, n_2, n_\mathrm{hi}, n_\mathrm{lo}, b_1, b_2, g_\mathrm{hi}, g_\mathrm{lo})$}
\Function{\ExactSubtractOne{}}{$x$, $x'$}
  \State $(s\,e_E \ldots e_1\, f_1\dots f_m)_{\floatEm} \gets x$%
    \label{algline:exactsubtract1-main-ef-st}
  \State $(s'\,e'_E \ldots e'_1\, f'_1\dots f'_m)_{\floatEm} \gets x'$
  \State $e \gets (e_E \ldots e_1)_2$;
         $e' \gets (e'_E \ldots e'_1)_2$
  \State $\hat{e} \gets e - (2^{E-1}-1) + \mathbf{1}[e=0]$
  \State $\hat{e}' \gets e' - (2^{E-1}-1) + \mathbf{1}[e'=0]$
  \State $f \gets (1\,f_1 \ldots f_m)_2 - (\mathbf{1}[e = 0] \ll m)$
  \State $f' \gets (1\,f'_1 \ldots f'_m)_2 - (\mathbf{1}[e' = 0] \ll m)$%
    \label{algline:exactsubtract1-main-ef-ed}
  \State $f'_\mathrm{hi} \gets f' \gg \min\set{\hat{e} - \hat{e}', E+m}$%
    \label{algline:exactsubtract1-main-fpdecomp-st}
  \State $f'_\mathrm{lo} \gets f'\, \&\, ((1 \ll \min\set{\hat{e}-\hat{e}', m+1}) - 1)$%
    \label{algline:exactsubtract1-main-fpdecomp-ed}
  \State \hl{$n_1 \gets -\hat{e}-1 + \mathbf{1}[x=1]$}%
    \label{algline:exactsubtract1-main-final-st}
  \State $n_2 \gets \max\set{(\hat{e}-\hat{e}') - (m+1), 0}$
  \State \hl{$n_\mathrm{hi} \gets m+1 - \mathbf{1}[x=1]$}
  \State $n_\mathrm{lo} \gets \min\set{\hat{e}-\hat{e}', m+1}$
  \State \hl{$b_1 \gets 0$};
         \hl{$b_2 \gets \mathbf{1}[f'_\mathrm{lo} > 0]$}
  \State \hl{$g_\mathrm{hi} \gets f - f'_\mathrm{hi} - b_2$};
         $g_\mathrm{lo} \gets (b_2 \ll n_\mathrm{lo}) - f'_\mathrm{lo}$%
         \label{algline:exactsubtract1-main-final-ed}
  \State \Return $(n_1, n_2, n_\mathrm{hi}, n_\mathrm{lo}, b_1, b_2, g_\mathrm{hi}, g_\mathrm{lo})$
         \label{algline:exactsubtract1-main-return}
\EndFunction
\end{algorithmic}
\end{algorithm}
\vspace{-.05cm}
\begin{algorithm}[H]
\captionsetup{hypcap=false}
\renewcommand{\hl}[1]{#1}
\caption{Extract Binary Digit}
\label{alg:getbit-main}
\begin{algorithmic}[1]
\Require{%
  $\beta {\defas} (n_1, n_2, n_\mathrm{hi}, n_\mathrm{lo}, b_1, b_2, g_\mathrm{hi}, g_\mathrm{lo}),\ell{\ge}1$;
  where
  $\begin{array}[t]{@{}l}
  n_1, n_2, n_\mathrm{hi}, n_\mathrm{lo} \ge 0,\;
  \text{\hl{$b_1, b_2 \in \set{0,1}$}},
  \\
  0 \leq g_\mathrm{hi} < 2^{n_\mathrm{hi}},\; 0 \leq g_\mathrm{lo} < 2^{n_\mathrm{lo}}
  \end{array}$
  \\
  are from
  $\hyperref[alg:exactsubtract1-main]{\ExactSubtractOne}(x,x')$
  }
\Ensure{$\ell$-th bit of $x -_\real x'$ in binary expansion}
\Function{\GetBit}{$\beta,\ell$}
  \If{$\ell \leq n_1$}
    \Return \hl{$b_1$}
  \EndIf
  \If{$\ell \leq n_1 + n_\mathrm{hi}$}
    \Return $g_{\mathrm{hi}, \,\ell - n_1}$
  \EndIf
  \If{$\ell \leq n_1 + n_\mathrm{hi} + n_2$}
    \Return \hl{$b_2$}
  \EndIf
  \If{$\ell \leq n_1 + n_\mathrm{hi} + n_2 + n_\mathrm{lo}$}
    \State \Return $g_{\mathrm{lo}, \,\ell - (n_1 + n_\mathrm{hi} + n_2)}$
  \EndIf
  \State \Return 0
\EndFunction
\end{algorithmic}
\end{algorithm}
\end{minipage}
\vspace{-.4cm}
\end{listing}

\Cref{sec:floating-binary,sec:floating-word-ram} provide all the necessary
ingredients for soundly implementing \cref{alg:sampler-opt}
using finite-precision computation, as shown in \cref{alg:sampler-opt-impl}.
The only required parameter is the \WCDF{} $F: \set{0,1}^n \to \floatEm \cap [0,1]$
over a binary number format $\bfmt$.
The remaining parameters have default values and are used
only by recursive calls, where:
  $b$ stores the bit string generated so far;
  $\ell$ stores the number of calls to $\Flip$, i.e., the current
  level in the underlying DDG tree;
  $\fL, \fR$ store the subtrahend and minuend in the probability
  $p_F(b) = F (\phi_\bfmt(b1^{n-\abs{b}})) -_{\real} F (\phi_\bfmt((b0^{n-\abs{b}})^{-}))$~\cref{eq:trefle-1}.

We now describe \cref{alg:sampler-opt-impl} in detail.
\Crefrange{algline:sampler-opt-impl-equiv-begin}{algline:sampler-opt-impl-return}
show the base case, where the $n$-bit string $b$ (in the unsigned integer
format) has been generated from $F \circ \phi_{\bfmt}$, and then projected to the target
format $\bfmt$ using the map $\phi_{\bfmt}$, based on \cref{prop:cda-pcbd-unsigned-int,corollary:sample-F-transformation}.
\Cref{algline:sampler-opt-compute-mid} 
computes the cumulative probability $f_2$ of the ``midpoint'' string $b'$, which
splits the interval defined by the the current string $b$ in half (based on \cref{eq:trefle-1}),
i.e., $f_2 -_{\real} f_0$ (resp.~$f_1-_{\real}f_2$) is the probability that the next bit is 0 (resp.~1).
\Crefrange{algline:sampler-opt-impl-leaf-opt-start}{algline:sampler-opt-impl-leaf-opt-end}
are optimizations for when one of these probabilities is zero, so the
refined subtree must be a leaf.
\Crefrange{algline:sampler-opt-impl-leaf-start}{algline:sampler-opt-impl-leaf-end}
occur when the refined subtree is a leaf (mirroring
\crefrange{algline:UnivSampler-If1-Start}{algline:UnivSampler-Print1-NoIter}
of \cref{alg:sampler-opt}).
\Crefrange{algline:sampler-opt-impl-subtree-start}{algline:sampler-opt-impl-subtree-end}
occur when the refined subtree is a not leaf (mirroring
\crefrange{algline:UnivSampler-Loop}{algline:UnivSampler-Print1}
of \cref{alg:sampler-opt}).

In \cref{algline:sampler-opt-impl-ExactSubtractA,algline:sampler-opt-impl-ExactSubtractB},
a main implementation challenge is extracting the binary expansions
of $\fM -_\real \fL$ and $\fR -_\real \fM$ in such a way that
avoids expensive arbitrary-precision arithmetic on the one
hand and rounding errors from a direct floating-point subtraction on the other hand.
Whereas the 2Sum/Fast2Sum~\citep{moller1965} methods can be used to compute
the exact round-off error from a floating-point subtraction, they are not
applicable here, as the goal is to extract the individual bits of the
difference.
\Cref{alg:exactsubtract1-main,alg:getbit-main} provide an efficient
solution using fast integer arithmetic.
For $x, x ' \in \floatEm \cap [0,1]$,
\cref{alg:exactsubtract1-main} computes a compact representation of $x -_\real x'$
as described in \cref{eq:exactsubtract1-main} of \cref{prop:exactsubtract1}.
\Crefrange{algline:exactsubtract1-main-ef-st}{algline:exactsubtract1-main-ef-ed}
extract the exponent and significand of $x$ and $x'$ (i.e., $\hat{e}$, $f$, $\hat{e}'$, $f'$),
and \crefrange{algline:exactsubtract1-main-fpdecomp-st}{algline:exactsubtract1-main-fpdecomp-ed}
decompose $f'$ into two parts to align the binary expansions of $x$ and $x'$.
\Crefrange{algline:exactsubtract1-main-final-st}{algline:exactsubtract1-main-return}
then compute a tuple of integers and booleans that encode $x -_\real x'$.
For this tuple and $\ell \geq 1$, \cref{alg:getbit-main}
outputs the $\ell$-th bit of $x -_\real x'$
based on the encoding scheme shown in \cref{eq:exactsubtract1-main}.
\Cref{prop:exactsubtract1} establishes the correctness of \Cref{alg:exactsubtract1-main,alg:getbit-main}, and the guarantee
that all intermediate values fit in a single machine word.

\bgroup
\begin{theorem}[name=,restate=ExactSubtractOneProp]
  \label{prop:exactsubtract1}
  Suppose $x, x' \in \floatEm$ satisfy $0 < x -_\real x' < 1$, and consider any integer $\ell \geq 1$.
  Let $\beta = (n_1, n_2, n_\mathrm{hi}, n_\mathrm{lo}, b_1, b_2, g_\mathrm{hi}, g_\mathrm{lo})$
  be the output of $\hyperref[alg:exactsubtract1-main]{\ExactSubtractOne}(x, x')$ (\cref{alg:exactsubtract1-main}),
  and let $b'$ be the output of $\hyperref[alg:getbit-main]{\GetBit}(\beta, \ell)$ (\Cref{alg:getbit-main}).
  Then,
  \begin{align}    \label{eq:exactsubtract1-main}
    x -_\real x'
    &= \Big(0.\,
    \underset{n_1 \text{ bits}}{\boxed{b_1 \ldots b_1       \vphantom{b_1 g_\mathrm{hi}}       }}
    \underset{n_\mathrm{hi} \text{ bits}}{\boxed{\,\;\; g_\mathrm{hi} \vphantom{b_1 g_\mathrm{hi}} \,\;\;}}
    \underset{n_2 \text{ bits}}{\boxed{b_2 \ldots b_2       \vphantom{b_1 g_\mathrm{hi}}       }}
    \underset{n_\mathrm{lo} \text{ bits}}{\boxed{\,\;\; g_\mathrm{lo} \vphantom{b_1 g_\mathrm{hi}} \,\;\;}}
    \,
    \Big)_2
  \end{align}
  and $b'$ is the $\ell$-th digit of $x-_\real x'$ in binary expansion.
  Also, all intermediate values appearing in both algorithms
  are representable as $(1+E+m)$-bit signed integers.
\end{theorem}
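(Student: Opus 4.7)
The plan is to establish the decomposition identity~\eqref{eq:exactsubtract1-main} directly from the binary expansions of $x$ and $x'$, and then read off the correctness of \hyperref[alg:getbit-main]{\GetBit} and the overflow bound as immediate consequences. I proceed in three stages.

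\textbf{Stage 1: exact binary expansions.}
After \crefrange{algline:exactsubtract1-main-ef-st}{algline:exactsubtract1-main-ef-ed}, both the normal case ($e \ge 1$) and the subnormal case ($e = 0$) yield the uniform identity $x = f \cdot 2^{\hat{e}-m}$ with $f$ an integer in $[0, 2^{m+1})$, and analogously for $x'$. A case split on $x = 1$ versus $x < 1$ then shows that $x$ admits a concise binary expansion consisting of $n_1$ leading zeros after the binary point, followed by the $(m+1)$-bit representation of $f$, followed by an infinite tail of zeros, where $n_1 = -\hat{e} - 1 + \mathbf{1}[x{=}1] \ge 0$.

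\textbf{Stage 2: subtraction with borrow propagation.}
A preliminary observation shows $d \defas \hat{e} - \hat{e}' \ge 0$: otherwise, the range bounds on normal and subnormal floats combined with $x \le 1$ would force $x \le x'$, contradicting $x > x'$. The split $f'_{\rm hi} = \floor*{f'/2^{\min(d, E+m)}}$ and $f'_{\rm lo} = f' \bmod 2^{\min(d, m+1)}$ places $f'_{\rm hi}$ in bit-alignment with $f$ at scale $2^{\hat{e}-m}$, while $f'_{\rm lo}$ occupies the lowest $n_{\rm lo}$ bit positions below an intermediate gap of width $n_2 = \max(d-(m+1), 0)$. When $f'_{\rm lo} = 0$ (equivalently $b_2 = 0$), the high-field subtraction $f - f'_{\rm hi}$ already gives the correct aligned difference; otherwise a single borrow of magnitude $b_2 = 1$ propagates from the low field through every bit of the $n_2$-wide zero gap, turning each into a copy of $1$, and finally into the high field, yielding $g_{\rm hi} = f - f'_{\rm hi} - b_2$ and $g_{\rm lo} = 2^{n_{\rm lo}} - f'_{\rm lo}$ (the $n_{\rm lo}$-bit two's complement of $f'_{\rm lo}$). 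Concatenating the four subfields with $b_1 = 0$ yields precisely~\eqref{eq:exactsubtract1-main}.

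\textbf{Stage 3: bit extraction and overflow.}
Given~\eqref{eq:exactsubtract1-main}, \cref{alg:getbit-main} reduces to a direct case analysis on which subfield contains position $\ell$, returning $0$ past the end of the finite pattern. The overflow bound follows by inspection of each line of \cref{alg:exactsubtract1-main}: $\hat{e}, \hat{e}' \in [1{-}2^{E-1}, 2^{E-1}{-}1]$, the integers $f, f', f'_{\rm hi}, f'_{\rm lo}, g_{\rm hi}, g_{\rm lo}$ are nonnegative and strictly below $2^{m+1}$, and $n_1, n_2, n_{\rm hi}, n_{\rm lo}$ are nonnegative integers bounded by $2^{E-1} + m + 1$, all of which fit comfortably in $1{+}E{+}m$ signed bits.

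The main obstacle is Stage 2: the single decomposition must hold uniformly across (i) normal vs.~subnormal inputs (handled by the $\mathbf{1}[e{=}0]$ corrections to $\hat{e}$ and $f$), (ii) the boundary case $x = 1$ (captured by $\mathbf{1}[x{=}1]$ in $n_1$), and (iii) the two regimes $d \le m{+}1$ vs.~$d > m{+}1$ (unified by the $\min$ operators in the splitting of $f'$). The most delicate point is verifying that a single borrow bit $b_2$ faithfully represents the entire borrow propagation through the $n_2$-bit gap, and that $(b_2 \ll n_{\rm lo}) - f'_{\rm lo}$ correctly produces the low-field bits in both the $b_2 = 0$ and $b_2 = 1$ cases.
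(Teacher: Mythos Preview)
Your proposal is correct and follows essentially the same approach as the paper: both establish the uniform representation $x = f \cdot 2^{\hat{e}-m}$, align $x'$ against $x$ via the split $f' = f'_{\rm hi} \cdot 2^{n_{\rm lo}} + f'_{\rm lo}$, and verify that a single borrow bit $b_2$ captures the propagation through the gap, with the $x=1$ boundary and the overflow bounds handled separately. The only presentational difference is that the paper makes the case split $\hat{e}-\hat{e}' \le m{+}1$ versus $\hat{e}-\hat{e}' > m{+}1$ explicit with aligned binary-expansion diagrams, whereas you absorb it into the $\min$ operators and a unified borrow-propagation narrative; when you write out the details, you will still need to check that $g_{\rm hi} \ge 0$ (no further borrow past the high field, which follows from $f_0 = 1$ whenever $d \ge 1$) and that $g_{\rm hi} < 2^{n_{\rm hi}}$ in the $x=1$ case.
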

\egroup

The next result establishes the entropy-optimality (and, in turn, soundness)
of \cref{alg:sampler-opt-impl}, combining
\cref{theorem:sampler-opt} (correctness of the infinite version),
\cref{prop:exactsubtract1} (correctness of bit extraction),
\cref{prop:cda-pcbd} (correspondence of CDF and binary-coded probability distribution),
and \cref{prop:cda-pcbd-unsigned-int,corollary:sample-F-transformation}
(generation over $\mathbb{U}_n$ followed by transformation through $\phi_{\bfmt}$).

\begin{theorem}
\label{theorem:sampler-impl-correct-opt}
If $F: \set{0,1}^n \to \floatEm \cap[0,1]$
is a \WCDF{} over a binary number format $\bfmt = (n, \gamma_\bfmt, \phi_\bfmt)$,
then $\hyperref[algline:sampler-opt-impl]{\SampleOptImpl}(F)$ (\cref{alg:sampler-opt-impl})
is an entropy-optimal random variate generator
that returns a string $x \in \set{0,1}^n$ in the format $\bfmt$ with cumulative probability
${F}(x)$.
\end{theorem}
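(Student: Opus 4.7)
The plan is to reduce the theorem to an application of \cref{theorem:sampler-opt} by arguing that \cref{alg:sampler-opt-impl} is a faithful finite-precision simulation of \cref{alg:sampler-opt}. First I would factor out the format bijection by setting $\widetilde{F} \defas F \circ \phi_\bfmt$; by \cref{prop:cda-pcbd-unsigned-int} this is a CDF over the unsigned integer format $\mathbb{U}_n$, and by \cref{corollary:sample-F-transformation} it suffices to show that the recursion generates each $b \in \bool^n$ with probability $P_{\widetilde{F}}(b) = \widetilde{F}(b) -_\real \widetilde{F}(b^-)$, after which \cref{algline:sampler-opt-impl-return} returns $\phi_\bfmt(b) \sim F$. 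Next, \cref{prop:cda-pcbd} lifts $\widetilde{F}$ to a binary-coded probability distribution $p_{\widetilde{F}} : \bool^* \to [0,1]$ with $p_{\widetilde{F}}(b) = P_{\widetilde{F}}(b)$ on $\bool^n$, so the goal becomes showing that $\SampleOptImpl(\widetilde{F})$ realizes an execution of $\SampleOpt(p_{\widetilde{F}})$.

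I would prove this simulation by induction on the recursion depth $|b|$, maintaining the invariant that in a call $\SampleOptImpl(\widetilde{F}, b, \ell, \fL, \fR)$ one has $\fL = \widetilde{F}((b0^{n-|b|})^-)$ and $\fR = \widetilde{F}(b1^{n-|b|})$; the top-level defaults $\fL=0, \fR=1$ match the convention $\widetilde{F}((0^n)^-) \defas 0$ from \cref{prop:cda-pcbd}. Under this invariant, the midpoint $\fM$ computed on \cref{algline:sampler-opt-compute-mid} is $\widetilde{F}(b01^{n-|b|-1})$, so \cref{eq:trefle-1} yields $p_{\widetilde{F}}(b0) = \fM -_\real \fL$ and $p_{\widetilde{F}}(b1) = \fR -_\real \fM$. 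Then \cref{prop:exactsubtract1} guarantees that the encodings $\beta_0, \beta_1$ produced on \cref{algline:sampler-opt-impl-ExactSubtractA,algline:sampler-opt-impl-ExactSubtractB} satisfy $\GetBit(\beta_0, \ell) = [p_{\widetilde{F}}(b0)]_\ell$ and $\GetBit(\beta_1, \ell) = [p_{\widetilde{F}}(b1)]_\ell$, exactly the comparisons performed in the corresponding lines of \cref{alg:sampler-opt}. The branches \crefrange{algline:sampler-opt-impl-leaf-start}{algline:sampler-opt-impl-leaf-end} and \crefrange{algline:sampler-opt-impl-subtree-start}{algline:sampler-opt-impl-subtree-end} thereby mirror \crefrange{algline:UnivSampler-If1-Start}{algline:UnivSampler-Print1} line by line, and the invariant is preserved by the recursive updates $(b0, \ell, \fL, \fM)$ and $(b1, \ell, \fM, \fR)$. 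The early-exit optimization on \crefrange{algline:sampler-opt-impl-leaf-opt-start}{algline:sampler-opt-impl-leaf-opt-end} covers the degenerate case $p_{\widetilde{F}}(b0) = 0$ or $p_{\widetilde{F}}(b1) = 0$, where \cref{alg:sampler-opt} would descend deterministically without flipping.

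The base case $|b| = n$ on \crefrange{algline:sampler-opt-impl-equiv-begin}{algline:sampler-opt-impl-return} is justified by \cref{eq:trefle-2}: for such $b$, $p_{\widetilde{F}}$ is supported on the all-zeros extension, so the continuation of \cref{alg:sampler-opt} would always select the $0$-branch without consuming additional flips; short-circuiting this tail and returning $\phi_\bfmt(b)$ directly therefore preserves both the output distribution and the entropy count. Combining these pieces with \cref{theorem:sampler-opt} applied at refinement level $n$ yields both soundness ($\phi_\bfmt(b)$ is distributed according to $F$) and entropy-optimality for $(p_{\widetilde{F}})_n = P_{\widetilde{F}}$. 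The main obstacle is the careful bookkeeping required to verify the $(\fL, \fR)$ invariant and to match the control flow of \cref{alg:sampler-opt-impl} against \cref{alg:sampler-opt} line by line, since the bit-extraction machinery is black-boxed by \cref{prop:exactsubtract1} and the format change by \cref{corollary:sample-F-transformation}.
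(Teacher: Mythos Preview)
Your proposal is correct and follows exactly the approach the paper takes: the paper's own treatment of this theorem is merely the one-sentence remark preceding the statement, which says the result follows by ``combining \cref{theorem:sampler-opt} (correctness of the infinite version), \cref{prop:exactsubtract1} (correctness of bit extraction), \cref{prop:cda-pcbd} (correspondence of CDF and binary-coded probability distribution), and \cref{prop:cda-pcbd-unsigned-int,corollary:sample-F-transformation} (generation over $\mathbb{U}_n$ followed by transformation through $\phi_{\bfmt}$).'' You have spelled out the invariant $(\fL,\fR) = (\widetilde{F}((b0^{n-|b|})^-),\, \widetilde{F}(b1^{n-|b|}))$ and the line-by-line simulation argument that the paper leaves implicit, so your write-up is in fact more detailed than the paper's.
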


The worst-case entropy cost of \cref{alg:sampler-opt-impl} is
$2^{E-1} + m - 2$ bits, which is the index location of a nonzero bit in
the binary expansion of the smallest nonzero probability in $\floatEm$.
This observation implies that \cref{alg:sampler-opt-impl} halts,
because all the probabilities are dyadic rationals.
The next result establishes a more useful upper bound for
\cref{alg:sampler-opt-impl} in terms of its expected entropy cost.

\begin{theorem}[name=,restate=CostSamplerOptImpl]
\label{corollary:cost-sampler-opt-impl}
The expected entropy cost of \cref{alg:sampler-opt-impl} is at most
$m + 2 - 2^{-2^{E-1} + 3}$ bits.
\end{theorem}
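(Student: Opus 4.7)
The plan is to combine the Knuth–Yao expected-cost identity with a per-outcome bound that exploits the sparse binary structure of differences of two $\floatEm$ values. By \cref{theorem:sampler-impl-correct-opt}, \cref{alg:sampler-opt-impl} is entropy-optimal for the induced distribution $p_F$ on $\bool^n$, so \cref{theorem:knuth-yao} gives
\begin{equation*}
\mathbb{E}[C] \;=\; \sum_{x \in \bool^n} T(p_F(x)),\qquad T(p) \defas \sum_{j \geq 1} j\, p^{(j)}\, 2^{-j},
\end{equation*}
where $p = \sum_j p^{(j)} 2^{-j}$ denotes the concise binary expansion of $p$.

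For the per-outcome bound, \cref{prop:exactsubtract1} exhibits $p_F(x) = F(\phi_\bfmt(x)) -_\real F(\mathrm{pred}_\bfmt(\phi_\bfmt(x)))$ as a length-bounded bit pattern $(0.\,b_1^{n_1}\, g_\mathrm{hi}\, b_2^{n_2}\, g_\mathrm{lo})_2$ with each of $g_\mathrm{hi}, g_\mathrm{lo}$ containing at most $m+1$ bits and leading-bit position $k_x^* = n_1+1$. Unrolling $T(p_F(x))$ over these blocks and using the arithmetic identity $\sum_{t=1}^{m} t\, 2^{-t} = 2 - (m+2)\,2^{-m}$ yields
\begin{equation*}
T(p_F(x)) \;\leq\; k_x^*\, p_F(x) \;+\; 2^{-k_x^*}\bigl(2 - (m+2)\, 2^{-m}\bigr).
\end{equation*}

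Summing over all $x$, the telescoping $\sum_x p_F(x) = F(\phi_\bfmt(1^n)) = 1$ controls the aggregate contribution of the second term, while the first term must be bounded using that $F$ maps into $\floatEm$ (so every $k_x^*$ lies in $[1, 2^{E-1}+m-2]$ and the number of outcomes at each exponent scale is limited by the mantissa width). The additive correction $-2^{-2^{E-1}+3}$ arises as a boundary deficit: the smallest positive value in $\floatEm$ is $2^{2-2^{E-1}-m}$, and any CDF attempting to saturate $\sum_x T(p_F(x)) = m+2$ must place $2^{m+1}$ copies of the minimum-depth contribution, yielding an unrecoverable gap of $2^{m+1} \cdot 2^{2-2^{E-1}-m} = 2^{3-2^{E-1}}$.

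The hard part is the aggregation step: summing the per-outcome inequality crudely gives $\mathbb{E}[C] \leq \mathbb{E}[k_X^*] + 2$, and $\mathbb{E}[k_X^*]$ can in principle approach the Shannon entropy of $p_F$, which is not a priori at most $m$. Obtaining the tight $m+2$ bound therefore cannot proceed from per-outcome estimates alone; it requires the structural fact that $F$'s codomain is $\floatEm$ rather than arbitrary dyadic rationals. A clean way to capture this is to rewrite $\mathbb{E}[C] = \sum_j j\, N_j\, 2^{-j}$ with $N_j \defas \sum_x p_F(x)^{(j)}$ counting set-bit multiplicities at depth $j$, and then to bound this weighted sum by combining the identity $\sum_j N_j 2^{-j} = 1$ (immediate from $\sum_x p_F(x) = 1$) with float-representability constraints on $F$, which cap how many outcomes can contribute a set bit at any given depth.
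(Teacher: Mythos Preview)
The proposal has a genuine gap: you correctly identify the hard step (aggregation) but do not resolve it. Your per-outcome bound $T(p_F(x)) \le k_x^*\,p_F(x) + 2^{-k_x^*}(2-(m+2)2^{-m})$---even granting the calculation, which does not obviously handle the $g_{\mathrm{lo}}$ block when $n_2>0$---yields only $\mathbb{E}[C]\le\mathbb{E}[k_X^*]+2$, which you acknowledge is insufficient. The $N_j$ reformulation is a reasonable next move, but you stop at ``float-representability constraints on $F$ \ldots\ cap how many outcomes can contribute a set bit at any given depth'' without stating the caps or using them. In fact no simple per-depth cap $N_j \le N_j^*$ holds: one can exhibit $F$ where $N_j$ at some depth exceeds the count for the extremal distribution, compensated by smaller counts at deeper levels. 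So the sketch, as written, does not close.

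The paper's argument bypasses per-outcome and per-depth estimates entirely via a monotonicity/refinement argument. The key observation is that the Knuth--Yao cost functional is \emph{subadditive under merging}: if an atom of probability $p$ is split as $p=q+r$ by inserting an intermediate float as a cumulative value, then $T(q)+T(r)\ge T(p)$, with strict inequality whenever the binary addition $q+r$ incurs a carry (a short carry-chain identity gives $\sum_j j(q_j+r_j-p_j)2^{-j}=\sum_j c_j 2^{1-j}\ge 0$). Consequently the expected cost is maximized when the CDF uses \emph{every} element of $\floatEm\cap[0,1]$ as a cumulative value. For that extremal distribution each atom probability is the gap between two consecutive floats---a single power of two---so the Knuth--Yao cost equals the Shannon entropy exactly, and the entropy is then computed in closed form to be $m+2-2^{3-2^{E-1}}$. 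This refinement-to-the-maximal-distribution step is the missing idea in your proof.
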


\section{Extended-Accuracy Generation by Leveraging Numerical Survival Functions}
\label{sec:survival}

\Cref{alg:sampler-opt-impl} from \cref{sec:floating} requires a
finite-precision \WCDF{} $F$, which computes floating-point probabilities of
intervals $[-\infty, x]$ in the ``left'' tail of the distribution.
Recall, however, that floats have ``high-precision'' near 0 as compared to 1, i.e.,
there are roughly $2^{E-1}$ more floats in $[0,0.5)$ as compared to $[0.5,1)$.
As a result, $F$ more accurately represents the left tail
$[-\infty, x]$ (probabilities near 0), as compared to
the right tail $[x, \infty]$ (probabilities near 1).
To achieve high-accuracy floating-point probabilities in the right tail,
we can combine $F$ with a finite-precision \textit{survival function} $S$
(\cref{def:cda}).

For example, the Rayleigh distribution has the theoretical
range $[0, \infty)$ and CDF $F(t) = 1-\smash{e^{-t^2/2}}$.
Typical floating-point implementations of the CDF and SF
correspond to the following ranges:
\begin{alignat}{4}
&\texttt{standard\_rayleigh\_cdf}&\,&\texttt{= lambda t: -math.expm1(-t*t/2)} \;&\leadsto& [3.50\times 10^{-162}&&, 8.65]
\label{eq:rayleigh-cdf}
\\[-5pt]
&\texttt{standard\_rayleigh\_sf}&\,&\texttt{= lambda t: math.exp(-t*t/2)} \;&\leadsto& [1.05\times10^{-8} &&, 38.60]
\label{eq:rayleigh-sf}
\end{alignat}
\Cref{eq:rayleigh-cdf,eq:rayleigh-sf} show that
the combined range $[3.5\times 10^{-162}, 38.60]$ of these complimentary
specifications is far more accurate than using only the CDF or SF.
Another illustrative example is symmetric distributions.
Consider the \WCDF{} (\texttt{gsl\_cdf\_gaussian\_P}) and \WSF{}
(\texttt{gsl\_cdf\_gaussian\_Q}) of a Gaussian from the
GSL (\cref{lst:gsl-samplers}).
For $\texttt{sigma}=1$, the theoretical range is $(-\infty, \infty)$, but the floating-point
ranges are $[-37.52, 8.29]$ and $[-8.29, 37.52]$, respectively.

In the infinite-precision Real-RAM model, the CDF $F$ and SF $S$
of a random variable $X$ can be combined
by using the following property, which holds
for every real ``cutoff-point'' $t^* \in \real$:
\begin{align}
\Pr(X \le t) &= (1-\mathbf{1}[t \ge t^*])F(t) + \mathbf{1}[t \ge t^*](1-S(t))
&&(t \in \real).
\label{eq:kosher}
\end{align}
Combining $F$ and $S$ as in \cref{eq:kosher}
must be done with caution in the finite-precision setting,
because $(1-S(t)) \notin \floatEm$ for
many values of $t \in \bfmt$. 
We address this challenge by introducing a \ECDF{} (\cref{def:edf}),
which is a combined representation for
$(F(t), 1-S(t))$ that avoids an explicit inexact floating-point subtraction.
The key idea is to use $F$ and $S$ to represent the left
and right tails, respectively, of the distribution, with a
cutoff point $b^* \in \bfmt$ that is the exact median of $F$.

\begin{definition}
\label{def:edf}
A \textit{finite-precision dual distribution function}
(\ECDF{}) over a binary number format $\bfmt = (n, \gamma_\bfmt, \phi_\bfmt)$ is a
mapping $G: \bool^n \to \set{0,1} \times \big(\floatEm \cap [0,1/2]\big)$
such that
$G^*(\phi_\bfmt(1^n)) \,{=}\, 1$ and
$b \,{<_\bfmt}\, b' \;{\Longrightarrow}\; G^*(b) \,{\le}\, G^*(b')$,
where $G^* \,{:}\, \bool^n \,{\to}\, [0,1]$ is defined by
$G^*(b) \defas (1-d)f + d(1-f)$
for $b \in \bool^n$ and $(d,f) \defas G(b)$.
\end{definition}

\begin{remark}
\label{remark:ecdf-double-values}
A finite-precision \WCDF{} $F$ returning floating-point probabilities in $\floatEm$ can represent
a distribution with at most $(2^{E-1}-1)2^m$ outcomes, while a finite-precision
\ECDF{} $G$ can represent \textit{twice} as many outcomes.
The representable probabilities are always integer multiples of $\smash{2^{-(2^{E-1}+m-2)}}$.
\end{remark}


\begin{figure}[t]
\centering
\begin{adjustbox}{max width=\linewidth}
\begin{tikzpicture}

\def\xmid{0.75}
\def\ymid{0.53125}

\def\sx{4}
\def\sy{4}

\draw[thick,-latex] (-.10,0) -- node[pos=1,yshift=-.1cm,anchor=west]{$\bfmt= \mathbb{F}^{E}_{4}$} (16.5,0);
\draw[thick,-latex] (-.25,-.25)
  --
  node[pos=.25,rotate=90,anchor=south,inner ysep=5pt]{hi prec.}
  node[pos=0.7,rotate=90,anchor=south,inner ysep=5pt]{lo prec.}
  node[pos=1,anchor=east]{$\mathbb{F}^{E}_{4}$}
  (-.25,\sy*1.125);

\draw[domain=0:\sx*\xmid, smooth, variable=\x, black] plot ({\x}, {\sy*(1-exp(-\x/\sx))});
\draw[domain=\sx*\xmid:\sx*4, smooth, variable=\x, black] plot ({\x}, {\sy*(1-exp(-\x/\sx))});
\node[at={(\sx*4,\sy*1)},anchor=160,draw=none,color=blue,label={[anchor=east,xshift=0.45cm,yshift=0.25cm]above:\begin{tabular}{@{}c@{}}cumulative distribution function\\[-2.5pt](62 unique values)\end{tabular}}]{$F(x)$};

\draw[domain=0:\sx*\xmid, smooth, variable=\x, black] plot ({\x}, {\sy*(exp(-\x/\sx))});
\draw[domain=\sx*\xmid:\sx*4, smooth, variable=\x, black] plot ({\x}, {\sy*(exp(-\x/\sx))});
\node[at={(\sx*4,\sy*0)},anchor=south west,draw=none,color=red,label={[anchor=east,xshift=0.45cm,yshift=0.25cm,]above:\begin{tabular}{@{}c@{}}survival function\\[-2.5pt](53 unique values)\end{tabular}}]{$S(x)$};

\node[at={(2.25*\sx,0.5*\sy)},anchor=north,inner sep=0pt]{
  \begin{tabular}{c}
  dual distribution function (82 unique values)\\
  $G(x) = \mathbf{1}[x < t^*]\textcolor{blue}{F(x)} + \mathbf{1}[x \ge t^*](1-\textcolor{red}{S(x)})$
  \end{tabular}
};

\draw[thick,dotted] (\sx*\xmid,0) -- node[pos=0,anchor=north,inner xsep=0pt]{$\gamma_\bfmt(b^*)$} node[pos=1,anchor=south,inner sep=0pt]{cutoff $t^*$} (\sx*\xmid,\sy*1);
\draw[thick,dotted] (-0.25,\sy*\ymid) -- node[pos=1,right,label={[label distance=0pt,inner sep=0pt,yshift=.1cm]below:$\equiv\mathrm{succ}_{\mathbb{F}^E_4}(0.5)$}]{$\ymid$} (\sx*4,\sy*\ymid);

\foreach \y [
  count=\i,
  evaluate={
    \n=(mod(\i-1,16)==0 ? .1 : .05);
    }] in {0.0,0.0078125,0.015625,0.0234375,0.03125,0.0390625,0.046875,0.0546875,0.0625,0.0703125,0.078125,0.0859375,0.09375,0.1015625,0.109375,0.1171875,0.125,0.1328125,0.140625,0.1484375,0.15625,0.1640625,0.171875,0.1796875,0.1875,0.1953125,0.203125,0.2109375,0.21875,0.2265625,0.234375,0.2421875,0.25,0.265625,0.28125,0.296875,0.3125,0.328125,0.34375,0.359375,0.375,0.390625,0.40625,0.421875,0.4375,0.453125,0.46875,0.484375,0.5,0.53125,0.5625,0.59375,0.625,0.65625,0.6875,0.71875,0.75,0.78125,0.8125,0.84375,0.875,0.90625,0.9375,0.96875,1} {
  \draw ($(-\n,0) + (-0.25, \sy*\y)$) -- ($(\n, 0) + (-0.25, \sy*\y)$);
}
\foreach \y in {0, 0.5, 1} {\node[anchor=east, font=\footnotesize, at={(-0.25, \sy*\y)}]{\y};}

\foreach \x [
  count=\i,
  evaluate={
    \n=(mod(\i-1,16)==0 ? .1 : .05);
        }] in {0.0,0.0078125,0.015625,0.0234375,0.03125,0.0390625,0.046875,0.0546875,0.0625,0.0703125,0.078125,0.0859375,0.09375,0.1015625,0.109375,0.1171875,0.125,0.1328125,0.140625,0.1484375,0.15625,0.1640625,0.171875,0.1796875,0.1875,0.1953125,0.203125,0.2109375,0.21875,0.2265625,0.234375,0.2421875,0.25,0.265625,0.28125,0.296875,0.3125,0.328125,0.34375,0.359375,0.375,0.390625,0.40625,0.421875,0.4375,0.453125,0.46875,0.484375,0.5,0.53125,0.5625,0.59375,0.625,0.65625,0.6875,0.71875,0.75,0.78125,0.8125,0.84375,0.875,0.90625,0.9375,0.96875,1,1.0625,1.125,1.1875,1.25,1.3125,1.375,1.4375,1.5,1.5625,1.625,1.6875,1.75,1.8125,1.875,1.9375,2.0,2.125,2.25,2.375,2.5,2.625,2.75,2.875,3.0,3.125,3.25,3.375,3.5,3.625,3.75,3.875,4} {
  \draw (\sx*\x,-\n) -- (\sx*\x,\n);
}
\foreach \x in {0, 1, 2, 4} {\node[anchor=north, font=\footnotesize, at={(\sx*\x,-0.1)}]{\x};}

\foreach \x/\y [
  evaluate={
    \n=(\x<\xmid ? "blue" : "gray");
        }] in {0.0/0.0,0.0078125/0.0078125,0.015625/0.015625,0.0234375/0.0234375,0.03125/0.03125,0.0390625/0.0390625,0.046875/0.046875,0.0546875/0.0546875,0.0625/0.0625,0.0703125/0.0703125,0.078125/0.078125,0.0859375/0.0859375,0.09375/0.0859375,0.1015625/0.09375,0.109375/0.1015625,0.1171875/0.109375,0.125/0.1171875,0.1328125/0.125,0.140625/0.1328125,0.1484375/0.140625,0.15625/0.1484375,0.1640625/0.1484375,0.171875/0.15625,0.1796875/0.1640625,0.1875/0.171875,0.1953125/0.1796875,0.203125/0.1875,0.2109375/0.1875,0.21875/0.1953125,0.2265625/0.203125,0.234375/0.2109375,0.2421875/0.21875,0.25/0.21875,0.265625/0.234375,0.28125/0.2421875,0.296875/0.25,0.3125/0.265625,0.328125/0.28125,0.34375/0.296875,0.359375/0.296875,0.375/0.3125,0.390625/0.328125,0.40625/0.328125,0.421875/0.34375,0.4375/0.359375,0.453125/0.359375,0.46875/0.375,0.484375/0.390625,0.5/0.390625,0.53125/0.40625,0.5625/0.4375,0.59375/0.453125,0.625/0.46875,0.65625/0.484375,0.6875/0.5,0.71875/0.5,0.75/0.53125,0.78125/0.53125,0.8125/0.5625,0.84375/0.5625,0.875/0.59375,0.90625/0.59375,0.9375/0.59375,0.96875/0.625,1/0.625,1.0625/0.65625,1.125/0.6875,1.1875/0.6875,1.25/0.71875,1.3125/0.71875,1.375/0.75,1.4375/0.75,1.5/0.78125,1.5625/0.78125,1.625/0.8125,1.6875/0.8125,1.75/0.8125,1.8125/0.84375,1.875/0.84375,1.9375/0.84375,2.0/0.875,2.125/0.875,2.25/0.90625,2.375/0.90625,2.5/0.90625,2.625/0.9375,2.75/0.9375,2.875/0.9375,3.0/0.9375,3.125/0.96875,3.25/0.96875,3.375/0.96875,3.5/0.96875,3.625/0.96875,3.75/0.96875,3.875/0.96875,4/0.96875} {
  \node[circle,inner sep=1pt,fill=\n,at={(\sx*\x,\sy*\y)}]{};
}
\node[color=gray,at={(\sx*\xmid,\sy*0.8)}, right,font=\footnotesize]{\begin{tabular}{@{}c@{}}15 unique\\CDF values\end{tabular}};
\node[color=gray,at={(\sx*\xmid,\sy*0.8)}, left,font=\footnotesize]{\begin{tabular}{@{}c@{}}18 unique\\SF values\end{tabular}};

\node[color=blue,at={(\sx*\xmid,\sy*0.2)}, left,font=\footnotesize]{\begin{tabular}{@{}c@{}}47 unique\\CDF values\end{tabular}};
\node[color=red,at={(\sx*\xmid,\sy*0.2)}, right,font=\footnotesize]{\begin{tabular}{@{}c@{}}35 unique\\SF values\end{tabular}};

\foreach \x/\y [
  evaluate={
    \n=(\xmid<=\x ? "red" : "gray");
        }] in {0.0/1.0,0.0078125/1.0,0.015625/1.0,0.0234375/0.96875,0.03125/0.96875,0.0390625/0.96875,0.046875/0.96875,0.0546875/0.9375,0.0625/0.9375,0.0703125/0.9375,0.078125/0.9375,0.0859375/0.90625,0.09375/0.90625,0.1015625/0.90625,0.109375/0.90625,0.1171875/0.875,0.125/0.875,0.1328125/0.875,0.140625/0.875,0.1484375/0.875,0.15625/0.84375,0.1640625/0.84375,0.171875/0.84375,0.1796875/0.84375,0.1875/0.84375,0.1953125/0.8125,0.203125/0.8125,0.2109375/0.8125,0.21875/0.8125,0.2265625/0.8125,0.234375/0.78125,0.2421875/0.78125,0.25/0.78125,0.265625/0.78125,0.28125/0.75,0.296875/0.75,0.3125/0.71875,0.328125/0.71875,0.34375/0.71875,0.359375/0.6875,0.375/0.6875,0.390625/0.6875,0.40625/0.65625,0.421875/0.65625,0.4375/0.65625,0.453125/0.625,0.46875/0.625,0.484375/0.625,0.5/0.59375,0.53125/0.59375,0.5625/0.5625,0.59375/0.5625,0.625/0.53125,0.65625/0.53125,0.6875/0.5,0.71875/0.484375,0.75/0.46875,0.78125/0.453125,0.8125/0.4375,0.84375/0.4375,0.875/0.421875,0.90625/0.40625,0.9375/0.390625,0.96875/0.375,1/0.375,1.0625/0.34375,1.125/0.328125,1.1875/0.3125,1.25/0.28125,1.3125/0.265625,1.375/0.25,1.4375/0.234375,1.5/0.2265625,1.5625/0.2109375,1.625/0.1953125,1.6875/0.1875,1.75/0.171875,1.8125/0.1640625,1.875/0.15625,1.9375/0.140625,2.0/0.1328125,2.125/0.1171875,2.25/0.1015625,2.375/0.09375,2.5/0.0859375,2.625/0.0703125,2.75/0.0625,2.875/0.0546875,3.0/0.046875,3.125/0.046875,3.25/0.0390625,3.375/0.03125,3.5/0.03125,3.625/0.0234375,3.75/0.0234375,3.875/0.0234375,4/0.015625} {
  \node[circle,inner sep=1pt,fill=\n,at={(\sx*\x,\sy*\y)}]{};
}

\end{tikzpicture}
\end{adjustbox}
\captionsetup{skip=5pt}
\caption{A dual distribution function $G$ combines a finite-precision
cumulative distribution $F$ and survival function $S$ to represent
probabilities in the left tail (blue; below the median cutoff) and
right tail (red; above the median cutoff), respectively.
This combination ensures that all explicitly represented floats lie in the
high-precision range $[0,1/2]$, which in turn supports more unique values
for random variate generation.}
\label{fig:survival}
\end{figure}
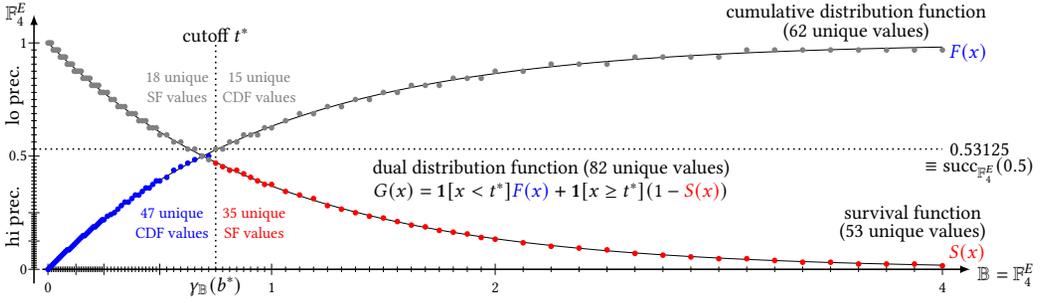

\begin{theorem}[name=,restate=EcdaDefault]
\label{theorem:ecda-default}
Let $F$ be a \WCDF{} and $S$ a \WSF{} over a binary number format $\bfmt$, such that
$S(b^*)<1/2$ for some cutoff
$b^* \defas \hyperref[alg:quantile]\Quantile(F, \mathrm{succ}_{\floatEm}(0.5)) \in \set{0,1}^n$.
A sound \ECDF{} $G$ over $\bfmt$ satisfying \cref{def:edf} is
\bgroup
\setlength{\abovedisplayskip}{0pt}
\setlength{\belowdisplayskip}{2pt}
\begin{align}
G(b) \defas (0,F(b)) \mbox{ if } b <_\bfmt b^*,
&&
G(b) \defas (1, S(b)) &\mbox{ if } b \ge_\bfmt b^*
&&
(b \in \bool^n).
\label{eq:erinite-1}
\qquad \qedsymbol
\end{align}
\egroup
\noqed
\end{theorem}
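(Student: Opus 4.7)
The plan is to verify the three defining properties of a finite-precision dual distribution function (\cref{def:edf}) in turn, exploiting the piecewise structure of $G$ around the cutoff $b^*$. A direct computation of $G^*$ under the two cases gives $G^*(b) = F(b)$ when $b <_\bfmt b^*$ and $G^*(b) = 1 - S(b)$ when $b \ge_\bfmt b^*$, which reduces each property to a routine check.

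To establish that $G$ maps into $\bool \times (\floatEm \cap [0,1/2])$, I would argue in two cases. For $b <_\bfmt b^*$, the construction $b^* \defas \Quantile(F, \mathrm{succ}_{\floatEm}(0.5))$ combined with the infimum form of the quantile (\cref{def:cdf}) gives $F(b) < \mathrm{succ}_{\floatEm}(0.5)$; since both quantities lie in $\floatEm$ and $\mathrm{succ}_{\floatEm}(0.5)$ is by definition the immediate successor of $0.5$ in the $\floatEm$ ordering, this forces $F(b) \le 0.5$. For $b \ge_\bfmt b^*$, the hypothesis $S(b^*) < 1/2$ combined with the nonincreasing property of an \WSF{} from \cref{def:cda} yields $S(b) \le S(b^*) < 1/2$. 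The normalization $G^*(\phi_\bfmt(1^n)) = 1$ then follows because $\phi_\bfmt(1^n)$ is maximal under $<_\bfmt$ by \cref{remark:binary-number-format-ordering}, so $\phi_\bfmt(1^n) \ge_\bfmt b^*$, giving $G(\phi_\bfmt(1^n)) = (1, S(\phi_\bfmt(1^n))) = (1, 0)$ by \cref{def:cda}, and hence $G^*(\phi_\bfmt(1^n)) = 1$.

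Third, I would verify monotonicity by a three-case split based on where $b <_\bfmt b'$ sit relative to $b^*$. If both lie strictly below $b^*$, then $G^*(b) = F(b) \le F(b') = G^*(b')$ by monotonicity of $F$. If both lie at or above $b^*$, then $G^*(b) = 1 - S(b) \le 1 - S(b') = G^*(b')$ by the reverse monotonicity of $S$. In the cross-case $b <_\bfmt b^* \le_\bfmt b'$, I would combine the range bounds already established: $G^*(b) = F(b) \le 1/2$ while $G^*(b') = 1 - S(b') > 1/2$, so the inequality is in fact strict.

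The main obstacle, though modest, is the floating-point argument in the range step: that $F(b) < \mathrm{succ}_{\floatEm}(0.5)$ inside $\floatEm$ actually implies $F(b) \le 1/2$ in $\real$. This relies on $\mathrm{succ}_{\floatEm}$ being the successor under the ordering from \cref{remark:binary-number-format-ordering}, so no floating-point value lies strictly between $0.5$ and $\mathrm{succ}_{\floatEm}(0.5)$, and on the fact that $0.5 \in \floatEm$ so it is itself representable. Once this is in hand, the remainder reduces to the case analyses above, and crucially no explicit subtraction $1 - S(b)$ is ever materialized in finite precision, preserving the exactness of the representation that motivated \cref{def:edf} in the first place.
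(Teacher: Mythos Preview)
Your proposal is correct and follows essentially the same approach as the paper's proof: both establish the range bounds $F(b)\le 1/2$ for $b<_\bfmt b^*$ and $S(b)<1/2$ for $b\ge_\bfmt b^*$ from the quantile definition and the \WSF{} monotonicity, then verify $G^*(\phi_\bfmt(1^n))=1$ and the monotonicity of $G^*$ by the identical three-case split on the position of $b,b'$ relative to $b^*$. Your explicit discussion of why $F(b)<\mathrm{succ}_{\floatEm}(0.5)$ in $\floatEm$ forces $F(b)\le 1/2$ in $\real$ matches the paper's terse justification that ``$F$ is into $\floatEm$.''
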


\Cref{fig:survival} shows a \ECDF{} for the exponential distribution
(\cref{lst:exponential}).
Colored dots show the selected
floating-point probabilities in $[0,0.5)$.
Solid lines show the underlying real functions.

\Cref{alg:sampler-opt-impl} must be modified to soundly generate from a
\ECDF{} $G$, which returns a pair $c = (d,f)$ denoting $f$ if $d=0$ or
$1-f$ if $d=1$, as described in
\cref{remark:ecdf-samplers-changes,alg:sampler-opt-ext-impl}
of \cref{appx:survival}.
\Cref{alg:quantile} gives a fast implementation of
\hyperref[alg:quantile]{\Quantile} for any \WCDF{}, which is used
to obtain the cutoff $b^*$ in \cref{theorem:ecda-default};
and \cref{alg:quantile-ext}
shows the quantile of a \ECDF.
As compared to heuristic methods such as \texttt{exp\_qf}
(\cref{lst:exponential})
or
\texttt{gsl\_cdf\_gaussian\_Pinv}
(\cref{lst:gsl-samplers}),
our quantile computations are \textit{exact} for the implemented
random variate generator (cf.~\cref{fig:overview}).

\section{Evaluation}
\label{sec:evaluation}


\begin{table}[t]
\newcommand{\agslName}{GSL}
\newcommand{\naiveName}{CBS}
\newcommand{\optName}{OPT}
\setlength{\intextsep}{0pt}
\setlength{\textfloatsep}{0pt}
\sisetup{
  round-mode = places,
  round-precision = 2,
  exponent-product = \times,
}

\newcommand{\sci}[1]{\num[scientific-notation=true]{#1}}
\newcommand{\dd}{$\smash{{}^\dagger}$}

\captionsetup{skip=0pt,belowskip=0pt,aboveskip=0pt}
\caption{
  Comparison of
  optimal generation (OPT, \cref{alg:sampler-opt-impl}) with two baselines:
  the GNU scientific library (GSL~\citep{galassi2009}) and
  conditional bit sampling (CBS~\citep{Sobolewski1972}, \cref{alg:sampler-naive-impl}).
  \WCDF{} implementations used for OPT and CBS are from the GSL
  (cf.~\cref{lst:gsl-samplers}), which both generate exact random
  variates from the CDF.
  In terms of \cref{fig:real-world}: the bits/variate column reports the
  input entropy consumption rate (lower=better), and the variates/sec
  column reports the output random variate generation rate (higher=better).
  \mbox{\dd=``discrete'' distribution.}
  }
\label{table:efficiency}
\begin{adjustbox}{max width=\linewidth}
\begin{tabular}{|llrr||llrr|}
\hline
\textbf{Distribution} & \textbf{Method} & \textbf{Bits/Variate} & {\textbf{Variates/Sec}} &
\textbf{Distribution} & \textbf{Method} & \textbf{Bits/Variate} & {\textbf{Variates/Sec}}
\\
\hline\hline
Beta(5, 5)           & \agslName  & \num{262.800640} & \sci{5.014039e+05} & Gumbel2(1, 5)           & \agslName  & \num{64.000000}  & \sci{1.371366e+06} \\
                     & \naiveName & \num{52.1016}    & \sci{2.800854e+04} &                         & \naiveName & \num{49.2613}    & \sci{4.583750e+04} \\
                     & \optName   & \num{24.9796}    & \sci{5.418850e+04} &                         & \optName   & \num{24.9910}    & \sci{1.717092e+05} \\\hline
Binomial(.2, 100)\dd & \agslName  & \num{224.793600} & \sci{4.978592e+05} & Hypergeom(5, 20, 7)\dd  & \agslName  & \num{447.992960} & \sci{3.049059e+05} \\
                     & \naiveName & \num{15.7614}    & \sci{3.146395e+04} &                         & \naiveName & \num{6.2527}     & \sci{1.088815e+05} \\
                     & \optName   & \num{5.1083}     & \sci{3.617906e+04} &                         & \optName   & \num{3.0068}     & \sci{1.424786e+05} \\\hline
Cauchy(7)            & \agslName  & \num{64.000000}  & \sci{1.362212e+06} & Laplace(2)              & \agslName  & \num{64.000000}  & \sci{1.464987e+06} \\
                     & \naiveName & \num{51.4511}    & \sci{4.835754e+04} &                         & \naiveName & \num{47.8329}    & \sci{5.036312e+04} \\
                     & \optName   & \num{25.0024}    & \sci{2.208285e+05} &                         & \optName   & \num{25.0027}    & \sci{2.871748e+05} \\\hline
ChiSquare(13)        & \agslName  & \num{64.000000}  & \sci{1.244555e+06} & Logistic(.5)            & \agslName  & \num{64.000000}  & \sci{1.387347e+06} \\
                     & \naiveName & \num{47.4285}    & \sci{2.645146e+04} &                         & \naiveName & \num{48.8009}    & \sci{4.689904e+04} \\
                     & \optName   & \num{24.9947}    & \sci{5.190168e+04} &                         & \optName   & \num{24.9661}    & \sci{2.040483e+05} \\\hline
Exponential(15)      & \agslName  & \num{64.000000}  & \sci{1.386194e+06} & Lognormal(1, 1)         & \agslName  & \num{163.015680} & \sci{7.110858e+05} \\
                     & \naiveName & \num{48.5612}    & \sci{4.608677e+04} &                         & \naiveName & \num{49.2715}    & \sci{4.100512e+04} \\
                     & \optName   & \num{24.9786}    & \sci{2.328018e+05} &                         & \optName   & \num{24.9828}    & \sci{1.870522e+05} \\\hline
ExpPow(1, .5)        & \agslName  & \num{197.025920} & \sci{5.974786e+05} & NegBinomial(.71, 18)\dd & \agslName  & \num{665.830400} & \sci{2.165346e+05} \\
                     & \naiveName & \num{47.3099}    & \sci{3.574199e+04} &                         & \naiveName & \num{12.5432}    & \sci{4.008337e+04} \\
                     & \optName   & \num{25.0075}    & \sci{8.674983e+04} &                         & \optName   & \num{4.6883}     & \sci{4.601234e+04} \\\hline
Fdist(5, 2)          & \agslName  & \num{268.950400} & \sci{4.704111e+05} & Pareto(3,2)             & \agslName  & \num{64.000000}  & \sci{1.409245e+06} \\
                     & \naiveName & \num{51.4517}    & \sci{2.630188e+04} &                         & \naiveName & \num{45.9211}    & \sci{5.347050e+04} \\
                     & \optName   & \num{25.0017}    & \sci{6.294296e+04} &                         & \optName   & \num{24.9907}    & \sci{2.299062e+05} \\\hline
Flat(-7, 3)          & \agslName  & \num{64.000000}  & \sci{1.447807e+06} & Pascal(1, 5)\dd         & \agslName  & \num{195.592960} & \sci{5.000000e+05} \\
                     & \naiveName & \num{43.5235}    & \sci{5.657709e+04} &                         & \naiveName & \num{0.0000}     & \sci{3.125283e+04} \\
                     & \optName   & \num{24.9832}    & \sci{4.834655e+05} &                         & \optName   & \num{0.0000}     & \sci{2.092400e+05} \\\hline
Gamma(.5, 1)         & \agslName  & \num{198.261760} & \sci{6.235191e+05} & Poisson(71)\dd          & \agslName  & \num{697.221120} & \sci{1.896705e+05} \\
                     & \naiveName & \num{57.0001}    & \sci{1.403867e+04} &                         & \naiveName & \num{18.3205}    & \sci{2.067889e+04} \\
                     & \optName   & \num{25.0094}    & \sci{1.799603e+04} &                         & \optName   & \num{6.1858}     & \sci{2.306055e+04} \\\hline
Gaussian(15)         & \agslName  & \num{162.732800} & \sci{7.553441e+05} & Rayleigh(11)            & \agslName  & \num{64.000000}  & \sci{1.442377e+06} \\
                     & \naiveName & \num{46.4141}    & \sci{4.953609e+04} &                         & \naiveName & \num{48.5165}    & \sci{5.076529e+04} \\
                     & \optName   & \num{25.0026}    & \sci{2.326880e+05} &                         & \optName   & \num{24.9860}    & \sci{2.171317e+05} \\\hline
Geometric(.4)\dd     & \agslName  & \num{64.000000}  & \sci{1.376273e+06} & Tdist(5)                & \agslName  & \num{279.768960} & \sci{4.386542e+05} \\
                     & \naiveName & \num{6.0647}     & \sci{2.033306e+05} &                         & \naiveName & \num{49.5592}    & \sci{2.646889e+04} \\
                     & \optName   & \num{3.7760}     & \sci{3.289257e+05} &                         & \optName   & \num{25.0230}    & \sci{4.902658e+04} \\\hline
Gumbel1(1,1)         & \agslName  & \num{64.000000}  & \sci{1.411632e+06} & Weibull(2, 3)           & \agslName  & \num{64.000000}  & \sci{1.386578e+06} \\
                     & \naiveName & \num{50.2938}    & \sci{4.796324e+04} &                         & \naiveName & \num{55.3537}    & \sci{4.107248e+04} \\
                     & \optName   & \num{24.9979}    & \sci{2.356268e+05} &                         & \optName   & \num{24.9653}    & \sci{1.475231e+05} \\\hline
                 \hline
\end{tabular}
\end{adjustbox}
\vspace{-.5cm}
\end{table}

We implemented a C library (\url{https://github.com/probsys/librvg}) with
all the algorithms described in this article.
\Cref{lst:exponential,lst:gsl-samplers}
shows examples of the programming interface,
using the macros \texttt{GENERATE\_FROM\_(CDF|SF|DDF)}.
Our evaluation~\citep{artifact25} investigates the following research
questions.
\begin{enumerate}[label=\textbf{(Q\arabic*)},wide=0pt,leftmargin=*]
\item\label{question:efficiency}
How does the entropy-optimal method (\cref{alg:sampler-opt-impl})
compare to exact conditional bit sampling~\citep[\S{II.B}]{Sobolewski1972}
(\cref{alg:sampler-naive-impl})
and the inexact GSL generators~\citep{galassi2009}, in
  terms of \begin{enumerate*}[label=(\roman*)]
    \item input bits per output variate; and
    \item output variates per wall-clock second?
  \end{enumerate*} (\cref{sec:evaluation-efficiency})

\item\label{question:range}
How do the ranges of random variate generators specified
  by a finite-precision \WCDF{}, \WSF{}, and \ECDF{} compare to one another, and to those of
  GSL generators?
  (\cref{sec:evaluation-range})

\item\label{question:ratios}
How large is the entropy and runtime overhead
  when using the extended-accuracy variants of
  conditional bit sampling and entropy-optimal generators
  (\cref{alg:sampler-naive-ext-impl,alg:sampler-opt-ext-impl})
  that use a \ECDF{}
  described in \cref{sec:survival}, instead of the original generators
  (\cref{alg:sampler-naive-impl,alg:sampler-opt-impl}) described
  in \cref{sec:floating} that use a \WCDF?
  (\cref{sec:evaluation-ratios})

\end{enumerate}

\subsection{\texorpdfstring%
  {\labelcref{question:efficiency} Input Entropy Rate and Output Generation Rate}%
  {(Q1) Input Entropy Rate and Output Generation Rate}%
  }
\label{sec:evaluation-efficiency}

\Cref{table:efficiency} shows measurements for 18 representative
``continuous'' distributions and 6 representative ``discrete''
distributions.
The entropy source used in this experiment is a GSL pseudorandom number
generator (PRNG) that calls \texttt{/dev/urandom} to obtain 8 random bytes stored
in a 64-bit word.

\paragraph{Input Entropy Rate}
In terms of input bits/variate (lower is better), the
conditional bit sampling (CBS) baseline is
1x--3.1x more expensive than \cref{alg:sampler-opt-impl} (OPT).
An interesting finding is that the optimal generator draws around 25
bits on average for the 18 ``continuous'' distributions, which is two bits
higher than the 23-bit mantissa in IEEE-754 single-precision format used to
represent the output of the CDF (cf.~\cref{corollary:cost-sampler-opt-impl}).
This finding suggests that the GSL CDF implementations are close to the maximum entropy distributions
identified by \cref{corollary:cost-sampler-opt-impl}.
The GSL generators are 2.6x--142x more expensive in terms
of bits/variate as compared to OPT.
For Pascal(1,5), a deterministic distribution, the GSL draws
195.59 bits/variate, whereas OPT uses zero.
These large differences in entropy cost highlight fundamental
inefficiencies of Real-RAM algorithms in the GSL.
Even though a single
infinitely precise uniform random variable in $[0,1]$ contains the same
amount of entropy as countably many such variates
(\cref{remark:random-variable-domain}), this cost equivalence
does not hold in finite-precision implementations, where each
``floating-point'' uniform requires many random bits (e.g., 32, 53, 64; \cref{remark:uniform-real-world}).
The GSL generators that require 64 bits/variate (e.g., Cauchy, Geometric,
Pareto, Weibull) always use exactly one floating-point uniform.
The more expensive GSL generators use rejection sampling or special
relationships between random variables (e.g., Beta is a ratio of
Gammas; NegBinomial uses a Gamma and Poisson)
further driving up the entropy cost.

\paragraph{Output Generation Rate}
In terms of output variates/sec (higher is better),
OPT is 1.1x--8.5x faster than CBS.
Both methods evaluate the \WCDF{} $F$ the same number of times per output
variate.
The runtime of CBS is driven by the high overhead of
computing conditional probabilities for the chain rule,
which requires expensive arbitrary-precision arithmetic.
A main cost of OPT is extracting bits from exact
differences of floats using
\cref{alg:exactsubtract1-main,alg:getbit-main}.
The GSL generators deliver the fastest output generation rate (2.14x--34.6x
higher, median 6.4x) as they do not compute $F$,
but offer no formal guarantees
(\cref{sec:overview-software}) and lower accuracy (\cref{sec:evaluation-range})
as compared to OPT.
While entropy cost gives a theoretically precise runtime measure through
the DDG tree formalism, wall-clock is dictated by many implementation
details (e.g., caching, parallelism, vectorization, PRNG cost, CDF
evaluation cost, programming language, etc.) that could be
further optimized in our prototype.


\begin{table}[t]
\newcommand{\sci}[1]{\num[round-mode = places,round-precision = 2,scientific-notation=true]{#1}}
\newcommand{\nosci}[1]{\num[round-mode = places,round-precision = 2,scientific-notation=fixed,fixed-exponent=0]{#1}}
\def\factor{3}
\newcommand{\interval}[3]{%
  \tikz[draw=#3,fill=#3]{%
    \node[name=a,rectangle, fill, inner sep=1.5pt, outer sep=0pt,at={(\factor*#1,0)}]{};%
    \node[name=b,rectangle, fill, inner sep=1.5pt, outer sep=0pt,at={(\factor*#2,0)}]{};%
    \draw[line width=1pt] (a.center) -- (b.center);%
    \draw[draw=none] (0,0) -- (\factor,0);}}

\newcommand{\agslName}{GSL}
\newcommand{\cdfName}{CDF}
\newcommand{\sfName}{SF}
\newcommand{\ecdfName}{DDF}

\newcommand{\agslColor}{black!90!white}
\newcommand{\cdfColor}{red!90!white}
\newcommand{\sfColor}{blue!90!white}
\newcommand{\ecdfColor}{ForestGreen}
\sisetup{round-mode=places, round-precision=0}

\captionsetup{skip=0pt}
\caption{Comparison of random variate generators from the GNU Scientific Library (GSL)
and exact random variate generators for a finite-precision cumulative
distribution function (CDF), survival function (SF), or a
combination of the two (DDF) on 13 probability distributions. The
random variate range shows the minimum and maximum values of the
output of each generator. Intervals visualized on a log scale.}
\label{table:support}
\scriptsize
\begin{tabular*}{\linewidth}{|l@{\extracolsep{\fill}}lrclr|}
\hline
\textbf{Distribution} & \textbf{Method} & \multicolumn{3}{c}{\textbf{Random Variate Range}} & \textbf{Analysis Time} \\ \hline\hline
Cauchy(1)             & \agslName       & \sci{-1.367131e+09}                               & \interval{0.398}{0.602288}{\agslColor} & \sci{1.367130e+09}   & \qty{40.558809}{\second} \\
\;$(-\infty,\infty)$  & \cdfName        & \sci{-4.543071e+44}                               & \interval{0.000}{0.578695}{\cdfColor}  & \sci{1.068071e+07}   & $<$\qty{50}{\micro\second}   \\
                      & \sfName         & \sci{-1.068071e+07}                               & \interval{0.421}{1.000000}{\sfColor}   & \sci{4.543071e+44}   & $<$\qty{50}{\micro\second}   \\
                      & \ecdfName       & \sci{-4.543071e+44}                               & \interval{0.000}{1.000000}{\ecdfColor} & \sci{4.543071e+44}   & $<$\qty{50}{\micro\second}   \\ \hline
Exponential(1)        & \agslName       & \sci{0.000000e+00}                                & \interval{0.000}{0.667343}{\agslColor} & \nosci{2.218071e+01} & \qty{36.474485}{\second} \\
\;$(0,\infty)$        & \cdfName        & \sci{7.006492e-46}                                & \interval{0.000}{0.614188}{\cdfColor}  & \nosci{1.732868e+01} & $<$\qty{50}{\micro\second}   \\
                      & \sfName         & \sci{2.980232e-08}                                & \interval{0.000}{1.000000}{\sfColor}   & \nosci{1.039721e+02} & $<$\qty{50}{\micro\second}   \\
                      & \ecdfName       & \sci{7.006492e-46}                                & \interval{0.000}{1.000000}{\ecdfColor} & \nosci{1.039721e+02} & $<$\qty{50}{\micro\second}   \\ \hline
Flat(.1, 3.14)        & \agslName       & \nosci{1.000000e-01}                              & \interval{0.000}{1.000000}{\agslColor} & \nosci{3.140000e+00} & \qty{19.215090}{\second} \\
\;$(.1, 3.14)$        & \cdfName        & \nosci{1.000000e-01}                              & \interval{0.000}{1.000000}{\cdfColor}  & \nosci{3.140000e+00} & $<$\qty{50}{\micro\second}   \\
                      & \sfName         & \nosci{1.000001e-01}                              & \interval{0.000}{1.000000}{\sfColor}   & \nosci{3.140000e+00} & $<$\qty{50}{\micro\second}   \\
                      & \ecdfName       & \nosci{1.000000e-01}                              & \interval{0.000}{1.000000}{\ecdfColor} & \nosci{3.140000e+00} & $<$\qty{50}{\micro\second}   \\ \hline
Gumbel1(1,1)          & \agslName       & \nosci{-3.099223e+00}                             & \interval{0.065}{0.750005}{\agslColor} & \nosci{2.218071e+01} & \qty{67.066984}{\second} \\
\;$(-\infty, \infty)$ & \cdfName        & \nosci{-4.644122e+00}                             & \interval{0.000}{0.710058}{\cdfColor}  & \nosci{1.732868e+01} & $<$\qty{50}{\micro\second}   \\
                      & \sfName         & \nosci{-2.852363e+00}                             & \interval{0.079}{1.000000}{\sfColor}   & \nosci{1.039721e+02} & $<$\qty{50}{\micro\second}   \\
                      & \ecdfName       & \nosci{-4.644122e+00}                             & \interval{0.000}{1.000000}{\ecdfColor} & \nosci{1.039721e+02} & $<$\qty{50}{\micro\second}   \\ \hline
Gumbel2(1, 1)         & \agslName       & \sci{4.508400e-02}                                & \interval{0.014}{0.246969}{\agslColor} & \sci{4.294967e+09}   & 1\qty{08.680311}{\second} \\
\;$(0,\infty)$        & \cdfName        & \sci{9.617967e-03}                                & \interval{0.000}{0.202298}{\cdfColor}  & \sci{3.355443e+07}   & $<$\qty{50}{\micro\second}   \\
                      & \sfName         & \sci{5.770780e-02}                                & \interval{0.016}{1.000000}{\sfColor}   & \sci{1.427248e+45}   & $<$\qty{50}{\micro\second}   \\
                      & \ecdfName       & \sci{9.617967e-03}                                & \interval{0.000}{1.000000}{\ecdfColor} & \sci{1.427248e+45}   & $<$\qty{50}{\micro\second}   \\ \hline
Laplace(1)            & \agslName       & \nosci{-2.148756e+01}                             & \interval{0.169}{0.830730}{\agslColor} & \nosci{2.148756e+01} & \qty{47.817449}{\second} \\
\;$(-\infty, \infty)$ & \cdfName        & \nosci{-1.032789e+02}                             & \interval{0.000}{0.803135}{\cdfColor}  & \nosci{1.663553e+01} & $<$\qty{50}{\micro\second}   \\
                      & \sfName         & \nosci{-1.663553e+01}                             & \interval{0.197}{1.000000}{\sfColor}   & \nosci{1.032789e+02} & $<$\qty{50}{\micro\second}   \\
                      & \ecdfName       & \nosci{-1.032789e+02}                             & \interval{0.000}{1.000000}{\ecdfColor} & \nosci{1.032789e+02} & $<$\qty{50}{\micro\second}   \\\hline
Logistic(1)           & \agslName       & \nosci{-2.218071e+01}                             & \interval{0.166}{0.833672}{\agslColor} & \nosci{2.218071e+01} & \qty{39.313025}{\second} \\
\;$(-\infty, \infty)$ & \cdfName        & \nosci{-1.039721e+02}                             & \interval{0.000}{0.807094}{\cdfColor}  & \nosci{1.732868e+01} & $<$\qty{50}{\micro\second}   \\
                      & \sfName         & \nosci{-1.732868e+01}                             & \interval{0.193}{1.000000}{\sfColor}   & \nosci{1.039721e+02} & $<$\qty{50}{\micro\second}   \\
                      & \ecdfName       & \nosci{-1.039721e+02}                             & \interval{0.000}{1.000000}{\ecdfColor} & \nosci{1.039721e+02} & $<$\qty{50}{\micro\second}   \\\hline
Pareto(3, 2)          & \agslName       & \sci{2.000000e+00}                                & \interval{0.000}{0.213333}{\agslColor} & \sci{3.250997e+03}   & \qty{61.274993}{\second} \\
\;$(2,\infty)$        & \cdfName        & \sci{2.000000e+00}                                & \interval{0.000}{0.166667}{\cdfColor}  & \sci{6.450796e+02}   & $<$\qty{50}{\micro\second}  \\
                      & \sfName         & \sci{2.000000e+00}                                & \interval{0.000}{1.000000}{\sfColor}   & \sci{2.251800e+15}   & $<$\qty{50}{\micro\second}  \\
                      & \ecdfName       & \sci{2.000000e+00}                                & \interval{0.000}{1.000000}{\ecdfColor} & \sci{2.251800e+15}   & $<$\qty{50}{\micro\second}  \\\hline
Rayleigh(1)           & \agslName       & \sci{2.200000e-05}                                & \interval{0.753}{0.985777}{\agslColor} & \nosci{6.660437e+00} & \qty{35.377007}{\second} \\
\;$(0,\infty)$        & \cdfName        & \sci{3.743392e-23}                                & \interval{0.000}{0.983504}{\cdfColor}  & \nosci{5.887050e+00} & $<$\qty{50}{\micro\second}   \\
                      & \sfName         & \sci{2.441406e-04}                                & \interval{0.798}{1.000000}{\sfColor}   & \nosci{1.442027e+01} & $<$\qty{50}{\micro\second}   \\
                      & \ecdfName       & \sci{3.743392e-23}                                & \interval{0.000}{1.000000}{\ecdfColor} & \nosci{1.442027e+01} & $<$\qty{50}{\micro\second}   \\\hline
Weibull(1, 1)         & \agslName       & \sci{0.000000e+00}                                & \interval{0.000}{0.667343}{\agslColor} & \nosci{2.218071e+01} & \qty{92.220060}{\second} \\
\;$(0,\infty)$        & \cdfName        & \sci{7.006492e-46}                                & \interval{0.000}{0.614188}{\cdfColor}  & \nosci{1.732868e+01} & $<$\qty{50}{\micro\second}   \\
                      & \sfName         & \sci{2.980232e-08}                                & \interval{0.812}{1.000000}{\sfColor}   & \nosci{1.039721e+02} & $<$\qty{50}{\micro\second}   \\
                      & \ecdfName       & \sci{7.006492e-46}                                & \interval{0.000}{1.000000}{\ecdfColor} & \nosci{1.039721e+02} & $<$\qty{50}{\micro\second}   \\\hline
Gamma(.5, 1)          & \agslName       & ---                                               & {\color{\agslColor}unknown}            & ---                  & $\infty$ \\
\;$(0,\infty)$        & \cdfName        & \sci{3.855593e-91}                                & \interval{0.000}{0.991146}{\cdfColor}  & \nosci{1.536017e+01} & $<$\qty{50}{\micro\second}   \\
                      & \sfName         & \sci{6.975737e-16}                                & \interval{0.814}{1.000000}{\sfColor}   & \nosci{1.010868e+02} & $<$\qty{50}{\micro\second}   \\
                      & \ecdfName       & \sci{3.855593e-91}                                & \interval{0.000}{1.000000}{\ecdfColor} & \nosci{1.010868e+02} & $<$\qty{50}{\micro\second}   \\\hline
Gaussian(0, 1)        & \agslName       & ---                                               & {\color{\agslColor}unknown}            & ---                  & $\infty$ \\
\;$(-\infty, \infty)$ & \cdfName        & \nosci{-1.417019e+01}                             & \interval{0.000}{0.818748}{\cdfColor}  & \nosci{5.419983e+00} & $<$\qty{50}{\micro\second}   \\
                      & \sfName         & \nosci{-5.419983e+00}                             & \interval{0.181}{1.000000}{\sfColor}   & \nosci{1.417019e+01} & $<$\qty{50}{\micro\second}   \\
                      & \ecdfName       & \nosci{-1.417019e+01}                             & \interval{0.000}{1.000000}{\ecdfColor} & \nosci{1.417019e+01} & $<$\qty{50}{\micro\second}   \\\hline
Tdist(1)              & \agslName       & ---                                               & {\color{\agslColor}unknown}            & ---                  & $\infty$ \\
\;$(-\infty, \infty)$ & \cdfName        & \sci{-4.543071e+44}                               & \interval{0.000}{0.578695}{\cdfColor}  & \sci{1.068071e+07}   & $<$\qty{50}{\micro\second} \\
                      & \sfName         & \sci{-1.068071e+07}                               & \interval{0.421}{1.000000}{\sfColor}   & \sci{4.543071e+44}   & $<$\qty{50}{\micro\second} \\
                      & \ecdfName       & \sci{-4.543071e+44}                               & \interval{0.000}{1.000000}{\ecdfColor} & \sci{4.543071e+44}   & $<$\qty{50}{\micro\second} \\\hline\hline
\end{tabular*}
\vspace{-.5cm}
\end{table}

\subsection{\texorpdfstring%
  {\labelcref{question:range} Output Range of Random Variate Generators}%
  {(Q2) Output Range of Random Variate Generators}%
  }
\label{sec:evaluation-range}

\Cref{table:support} shows a comparison of the min--max output
range for GSL generators and those
specified formally by a finite-precision \WCDF{}/SF (\cref{sec:floating}) and
\ECDF{} (\cref{sec:survival}), for 13 distributions.
The theoretical ranges of these distributions are shown in the first column.
We identify several takeaways:

\begin{itemize}[wide=0pt]
\item
The output range of a GSL generator is often
close to that of a finite-precision \WCDF{} or SF, but always inferior to
the range of the extended-accuracy \ECDF{} (which is up to $10^{35}$x wider).

\item For symmetric distributions (Cauchy, Laplace, Logistic,
Gaussian, Tdist), the  finite-precision \ECDF{} fixes the asymmetry in
\WCDF{} and SF, by ensuring identical ranges below and above the median.

\item For distributions over nonnegative numbers (Exponential, Gumbel2, Pareto, Rayleigh, Weibull, Gamma),
the \ECDF{} extends the output range by many orders of magnitude,
combining the \WCDF{} to represent values near 0 and \WSF{} to represent values away from zero.

\item The output ranges of our generators can be quickly obtained
(using, e.g., \cref{alg:quantile}) in microseconds.
In contrast, finding the range of a GSL generator requires dozens of
seconds in certain cases that can be enumerated (i.e., the algorithm draws
a single 32-bit floating-point uniform) and cannot be done in cases
that draw two or more floating-point uniforms (e.g., Gamma, Gaussian, and Tdist).
It is impractical to enumerate the CDF of a GSL generator in
all cases.
\end{itemize}

\subsection{\texorpdfstring%
  {\labelcref{question:ratios} Runtime Overhead of Extended-Accuracy Generators}%
  {(Q3) Runtime Overhead of Extended-Accuracy Generators}%
  }
\label{sec:evaluation-ratios}

\Cref{fig:ratios} shows the overhead of using the extended-accuracy
algorithms described in \cref{sec:survival} in terms of bits/variate
and variates/sec, for both conditional bit sampling (\cref{alg:sampler-naive-impl,alg:sampler-naive-ext-impl})
and optimal generation (\cref{alg:sampler-opt-impl,alg:sampler-opt-ext-impl}).
The bits/variate ratios are slightly above one in most cases because a
\ECDF{} can represent twice as many outcomes compared to a \WCDF{}
(\cref{remark:ecdf-double-values}), and in turn higher-entropy
distributions.
The variates/sec ratios for conditional bit sampling are 0.51x--0.85x
(average=0.63x).
This high overhead arises from the larger number of machine words needed to
compute ratios of probabilities using arbitrary-precision arithmetic.
The variates/sec ratios using the optimal generators are
0.40x--1.28x (average=1.00x).
The only substantial slowdown (0.40x) is on the degenerate Pascal(1,5)
distribution.
The \ECDF{} algorithms have a (statistically significant) higher output
generation rate than the \WCDF{} algorithms in 10/24 cases.
In summary, the extended-accuracy generators incur minimal overhead
compared to their lower-accuracy counterparts.


\begin{figure}[t]
\includegraphics[width=\linewidth]{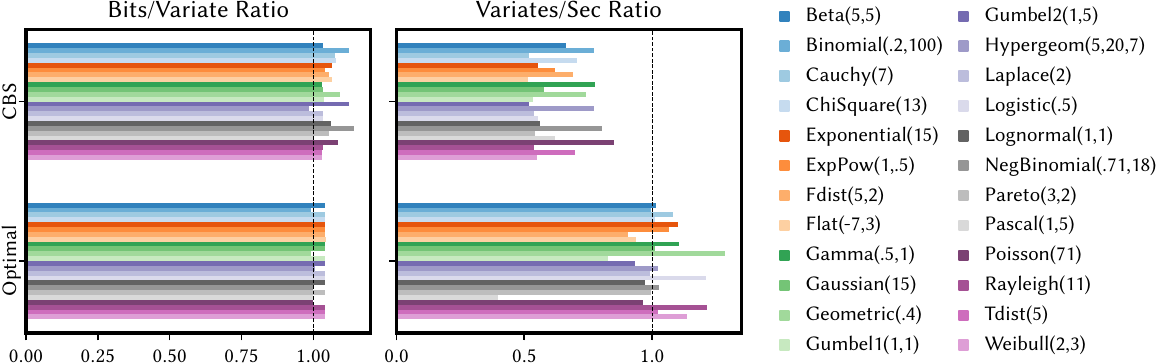}
\captionsetup{belowskip=-10pt}
\caption{%
Ratio of bits/variate (lower=better) and variates/sec (higher=better)
using the \ECDF{} specification (\cref{sec:survival}) versus the \WCDF{}
specification (\cref{sec:floating}) of the 24 target probability
distributions from \cref{table:efficiency}.
CBS compares
  \cref{alg:sampler-naive-ext-impl}/\cref{alg:sampler-naive-impl}.
Optimal compares
  \cref{alg:sampler-opt-ext-impl}/\cref{alg:sampler-opt-impl}.
}
\label{fig:ratios}
\end{figure}

\section{Related Work}
\label{sec:related}

Random variate generation has been traditionally grounded in the idealized
Real-RAM model of computation~\citep{devroye1986,occil2023a,occil2023b,occil2024}:
\cref{sec:overview-software} demonstrates several challenges with this
approach.
Our approach deviates from this tradition in two ways---\begin{enumerate*}[label=(\roman*)]
\item it is based on a realistic model of the finite-precision computer on
which the algorithms execute; and
\item random variate generators are automatically synthesized from numerical
programs specifying the CDF or SF
\end{enumerate*}---enabling \crefrange{contribution:binary}{contribution:evaluation}.

\paragraph{DDG Trees}

Our approach builds on the discrete distribution generating (DDG) tree
formalism of \citet{knuth1976}.
\Cref{contribution:binary} improves on their original method for lazy DDG
exploration~\citep[pp 384--385]{knuth1976} by giving a deterministic algorithm
that is space-time optimal (\cref{remark:refine-improve-knuth-yao}); whereas
\crefrange{contribution:floating}{contribution:evaluation} go beyond the
work of \citet{knuth1976}.
Many existing DDG algorithms for discrete distributions require enumerating
the target
probabilities~\citep{draper2025,saad2020popl,saad2020aistats,roy2013,Karmakar2018},
which is intractable for the class of finite-precision probability distributions
that we consider.

\paragraph{Finite Precision}

Several works have introduced finite-precision generators for
specific distributions (e.g.,
Laplace~\citep{Mironov2012,gazeau2016},
exponential~\citep{thomas2008},
uniform~\citep{Goualard2020,Goualard2022,Lumbroso2013,campbell2014},
Gaussian~\citep{walter2019,giles2023}).
This work introduces general methods that are not specific to any
particular distribution.
\Citet{derflinger2010} describe an approximate generation method given a
numerical implementation of a probability density function, although
its theoretical guarantees only hold under the Real-RAM model~\citep[Remark 9]{derflinger2010}.
\Citet{uyematsu2003} give an implementation of the (entropy-suboptimal)
\Citet{han1997} algorithm using integer arithmetic, which requires
very high precision.
Fore example, given $n$-bit floating-point probabilities in $\floatEm$,
the \citet{uyematsu2003} method requires
$2^{E}+2m-1 \gg n$ bits of precision for the integer arithmetic
to be exact (e.g., 2151 bits for 64-bit floats).
In contrast, our work builds on the (entropy-optimal) \citet{knuth1976}
method and requires integer arithmetic with exactly $n \defas 1 + E +m$
bits of precision, matching the precision level used
to specify the numerical CDF implementation (e.g., 64 bits for 64-bit floats).

\paragraph{Arbitrary Precision}

\Citet{devroye2015} present universal
generation algorithms that require arbitrary-precision arithmetic (e.g.,
MPFR~\citep{Fousse2007} or GMP~\citep{Granlund2023}).
Specialized arbitrary-precision generators for the discrete Gaussian
distribution have been widely
studied~\citep{canone2020,karney2016,Du2022,ducas2012}, given its
prominence in cryptography and differential privacy.
In contrast to this line of work, our method uses finite- instead of
arbitrary-precision, retaining high performance and predictability of
runtime and memory.
A promising direction is to develop finite-precision CDF, SF, or DDF
specifications that meet the accuracy requirements for these
applications, which could be implemented using numerical libraries for
approximating real functions with error
guarantees~\citep{ziv2001,briggs2024,lim2021,lim2022,daramyloirat2006,sibidanov2022}.

\paragraph{Probabilistic Programming}
Several probabilistic programming languages and solvers
use the CDF to form discrete approximations of
continuous probability distributions~\citep{belle2015,pedro2019,garg2024,saad2021sppl}.
The denotational semantics of these systems adopt the infinite-precision
Real-RAM model, which does not comport with their actual
operational semantics on a finite-precision computer.
The resulting systems offer no correctness or exactness guarantees for
the machine implementation.
\Citet{bagnall2023} develop formally verified generators for discrete
probabilistic programs with loops and conditioning using
arbitrary-precision rational arithmetic, and use it to implement exact
generators for the discrete Laplace and Gaussian distributions.
A promising idea along this direction is to build a probabilistic
programming language that instead uses the exact finite-precision random
variate generators described in \crefrange{sec:binary}{sec:survival},
as the basic probabilistic primitives with formal guarantees.

\section{Conclusion}
\label{sec:conclusion}

As the role of probability in computer science continues to
grow~\citep{mitzenmacher2017,barthe2020}, there is a growing need for
random variate generation methods with well-characterized behavior.
We hope this work lays a foundation for a new class of random variate
generators that are equipped with theoretical guarantees while delivering
improvements in automation, accuracy, and entropy consumption.

\section*{Data-Availability Statement}
A C library with all the algorithms described in this article
is available at \url{https://github.com/probsys/librvg}.
A reproduction package for the evaluation in \cref{sec:evaluation}
is available on Zenodo~\citep{artifact25}.

\begin{acks}
The authors thank the referees and Martin Rinard for helpful feedback.
Feras Saad and Wonyeol Lee are corresponding authors.
This material is based upon work supported by the
\grantsponsor{NSF}{National Science Foundation}{https://doi.org/10.13039/100000001}
under Grant No.~\grantnum{NSF}{2311983}.
Any opinions, findings, and conclusions or recommendations
in this material are those of the authors and do not necessarily
reflect the views of the NSF.
\end{acks}

\bibliography{paper}

\AtEndDocument{%
\clearpage
\appendix

\renewcommand{\thealgorithm}{\thesection\arabic{algorithm}}

\clearpage

\section{Conditional Bit Sampling: A Baseline for Exact Random Variate Generation}
\label{appx:naive-baseline}

This appendix describes conditional bit sampling~\citep[\S{II.B}]{Sobolewski1972}
a baseline method for generating a random string
from any binary-coded probability distribution (\cref{definition:bcpd}).
Whereas the original presentation of this method in \citet{Sobolewski1972}
used the Real-RAM model, the presentation of conditional bit sampling in
this appendix uses the DDG formalism and finite-precision computation,
which gives new insights on its behavior.
\Cref{appx:naive-baseline-binary} discusses the general case.
\Cref{appx:naive-baseline-impl} shows how to implement this baseline
given a finite-precision \WCDF.

\subsection{Generation from a Binary-Coded Probability Distribution}
\label{appx:naive-baseline-binary}

Conditional bit sampling a joint distribution of $(B_1, \ldots, B_n)$
uses the chain rule of probability, i.e., it generates
$B_1$, then $B_2 \mid B_1$, then $B_3 \mid B_1, B_2$, and so on.

\begin{proposition}[name=,restate=SamplerNaive]
\label{theorem:sampler-naive}
Let $p: \bool^* \to [0,1]$ be a binary-coded probability
distribution. For each $n \in \mathbb{N}$. The following process
generates a random string $B_1\ldots B_n \sim p_n$:
\begin{align}
B_1 &\sim \mathrm{Bernoulli}(p(1));
\qquad B_j \sim \mathrm{Bernoulli}\left[\frac{p(B_1\ldots{B_{j-1}}1)}{p(B_1\ldots{B_{j-1}})}\right]  && (j = 2, \dots, n).
\rlap{\;\;\;\qedhere}
\label{eq:preoccupative}
\end{align}
\end{proposition}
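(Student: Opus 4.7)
The plan is to prove the proposition by induction on $n$, using the consistency condition $p(b_1 \dots b_j) = p(b_1 \dots b_j 0) + p(b_1 \dots b_j 1)$ from \cref{definition:bcpd} together with the chain rule for the joint distribution of the generated bits.

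For the base case $n=1$, I would observe that $p(\varepsilon) = p(0) + p(1) = 1$, so $p(0) = 1 - p(1)$; sampling $B_1 \sim \mathrm{Bernoulli}(p(1))$ therefore yields $\Pr(B_1 = 1) = p(1)$ and $\Pr(B_1 = 0) = p(0)$, which matches $p_1$ by definition. For the inductive step, assume $B_1 \dots B_{n-1} \sim p_{n-1}$. Fix any string $b_1 \dots b_n \in \set{0,1}^n$ and apply the chain rule:
\begin{align*}
\Pr(B_1 = b_1, \dots, B_n = b_n)
&= \Pr(B_1 = b_1, \dots, B_{n-1} = b_{n-1}) \cdot \Pr(B_n = b_n \mid B_1 = b_1, \dots, B_{n-1} = b_{n-1}) \\
&= p(b_1 \dots b_{n-1}) \cdot \Pr(B_n = b_n \mid B_1 = b_1, \dots, B_{n-1} = b_{n-1}).
\end{align*}
For $b_n = 1$, the conditional factor is $p(b_1 \dots b_{n-1} 1) / p(b_1 \dots b_{n-1})$ by the definition of the sampling scheme, so the product equals $p(b_1 \dots b_{n-1} 1)$. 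For $b_n = 0$, the conditional factor is $1 - p(b_1 \dots b_{n-1} 1) / p(b_1 \dots b_{n-1})$, so the product equals $p(b_1 \dots b_{n-1}) - p(b_1 \dots b_{n-1} 1)$, which equals $p(b_1 \dots b_{n-1} 0)$ by the consistency condition of \cref{definition:bcpd}. In both cases the joint probability matches $p_n(b_1 \dots b_n)$.

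The only subtlety, and the main obstacle, is the degenerate case where $p(b_1 \dots b_{j-1}) = 0$: then the Bernoulli parameter in~\cref{eq:preoccupative} is $0/0$ and ill-defined. I would resolve this by noting that the consistency condition forces $p(b_1 \dots b_{j-1} 0) = p(b_1 \dots b_{j-1} 1) = 0$ whenever $p(b_1 \dots b_{j-1}) = 0$; in particular, by induction, $\Pr(B_1 = b_1, \dots, B_{j-1} = b_{j-1}) = 0$, so the conditional distribution of $B_j$ on this null event may be defined arbitrarily (say, $\mathrm{Bernoulli}(0)$) without affecting the joint law of $(B_1, \dots, B_n)$. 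With this convention the inductive identity above holds unconditionally, and $(B_1, \dots, B_n) \sim p_n$ follows, completing the induction.
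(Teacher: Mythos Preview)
Your proof is correct and follows essentially the same approach as the paper: induction on $n$ via the chain rule, using the consistency condition $p(b0)+p(b1)=p(b)$ to handle the $b_n=0$ branch. You are in fact slightly more careful than the paper, which simply restricts to $p(b_1\dots b_{n-1})>0$ without explicitly arguing (as you do) that the null-probability case is irrelevant to the joint law.
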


\begin{proof}
By induction.
The base case is obvious.
Assume the proposition holds for any integer $n-1$.
Put $(b_1,\dots,b_{n-1})$ so that $p(b_1,\dots,b_{n-1}) > 0$.
Then
\begin{align}
&\Pr\left(\cap_{i=1}^{n-1}\set{B_i=b_i}, B_n=b_n\right)
\\
&= \Pr\left(B_n=b_n \mid \cap_{i=1}^{n-1}\set{B_i=b_i}\right) \Pr\left(\cap_{i=1}^n\set{B_i=b_i}\right)
\\
&=
  \left[\frac{p(b_1\ldots{b_{n-1}}1)}{p(b_1\ldots{b_{n-1}})}\right]^{b_n}
  \left[1-\frac{p(b_1\ldots{b_{n-1}}1)}{p(b_1\ldots{b_{n-1}})}\right]^{1-b_n}
  \cdot
  p(b_1\ldots{b_{n-1}})
\\
&=
  \left[\frac{p(b_1\ldots{b_{n-1}}1)}{p(b_1\ldots{b_{n-1}})} \right]^{b_n}
  \left[\frac{p(b_1\ldots{b_{n-1}}) - p(b_1\ldots{b_{n-1}}1)}{p(b_1\ldots{b_{n-1}})} \right]^{1-b_n}
  \cdot
  p(b_1\ldots{b_{n-1}})
\\
&=
  \left[p(b_1\ldots{b_{n-1}}1)\right]^{b_n}
  \left[p(b_1\ldots{b_{n-1}}0)\right]^{1-b_n}
\\
&= p_n(b_1\ldots{b_n}).
\end{align}
\end{proof}

\paragraph{Optimal Bernoulli Generation}
In the Real-RAM model, a random variable $X \sim \mathrm{Bernoulli}(p)$
can be defined using the inverse-transform method:
$X(\omega) \defas \mathbf{1}[\omega \in (0, p]].$
To arrive at a random variate generator for $\mathrm{Bernoulli}(p)$
based on \cref{definition:random-variate-generator},
consider generating $K \sim \mathrm{Geometric}(1/2)$ and then setting
$X \gets p_K$, where $p=(0.p_1p_2\dots)_2 \in (0,1)$.
The proof of correctness is direct:
\begin{align}
\Pr(X = 1)
  &= \textstyle\sum_{k=1}^{\infty}\Pr(X = 1 {\mid} K=k) \Pr(K=k)
  = \sum_{k=1}^{\infty}\mathbf{1}[p_k=1] 1/2^k
  = \sum_{k=1}^{\infty}p_k/2^k
  = p, \label{eq:superfit-1} \\
\Pr(X = 0)
  &= \textstyle\sum_{k=1}^{\infty}\Pr(X = 0 {\mid} K=k) \Pr(K=k)
  = \sum_{k=1}^{\infty}(1-\mathbf{1}[p_k=1]) 1/2^k
  = 1 - p. \label{eq:superfit-2}
\end{align}
The expected entropy cost of this method is two bits for generating $K$.
\Citet[Appendix B]{Lumbroso2013} proves this method is
entropy-optimal, but their proof assumes implicitly that $p$ is not a
dyadic rational.
If $p=k/2^m$ is a dyadic rational for odd $k$, this method is suboptimal
because the bits $(p_{m+1}, p_{m+2}, \dots)$ are zero, and so $K$ need
not be generated beyond $m$.
In particular, it suffices to generate
$K \gets \min(m, K')$ where $K' \sim \mathrm{Geometric}(1/2)$,
and then set $X \gets p_{K}$ if $K < m$ and $X \gets \Flip$ otherwise.
The expected entropy cost is then
\begin{align}
\underbrace{\textstyle\sum_{i=1}^{m-1}i2^{-i} + (m-1)2^{-(m-1)}}_{K'}
  + \underbrace{\vphantom{\textstyle\sum_{i=1}^{m-1}} 2^{1-m}}_{\Flip}
= 2 - 2^{2-m} + 2^{1-m}
= 2 - 2^{1-m}.
\label{eq:underfed}
\end{align}

\paragraph{Composing Optimal Bernoulli Generators}
\Cref{alg:sampler-naive} presents a random variate generator for a
binary-coded probability distribution $p$ based on
\cref{theorem:sampler-naive,eq:superfit-1,eq:superfit-2}.
The $\Flip$ primitive returns the next unbiased random bit from
the i.i.d.~bit stream (\cref{fig:real-world}, bottom).
Each recursive step of \cref{alg:sampler-naive} is entropy-optimal
for the $\mathrm{Bernoulli}(p(b0)/p(b))$ distribution.
However, the following example shows that a \textit{sequence} of $n$
generations~\cref{eq:preoccupative} is not entropy-optimal for $p_n$.


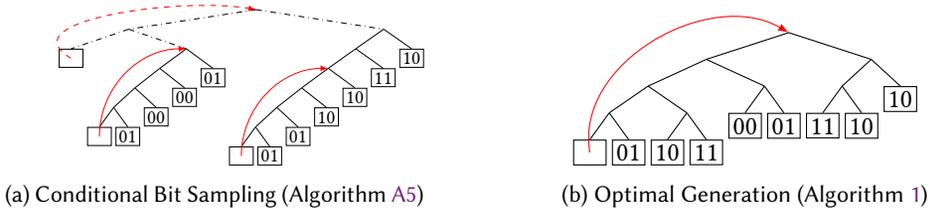
\begin{figure}[t]
\centering
\tikzset{level distance=10pt, sibling distance=2pt}
\tikzset{every tree node/.style={anchor=north}}
\tikzset{every leaf node/.style={draw,inner sep=1.5pt}}
\tikzset{branch/.style={shape=coordinate}}
\begin{subfigure}[b]{.4\linewidth}
\centering
\begin{adjustbox}{max width=\linewidth}
\begin{tikzpicture}
\centering
\Tree
[.\node[branch,name=L1]{};
  \edge[dash dot];
  [
    \edge[dash dot]; \node[name=R1]{\phantom{01}};
    \edge[dash dot];
    [.\node[branch,name=L21]{};
      [ [ [ \node[name=R21]{\phantom{01}}; 01 ] 00 ] 00 ] 01 ]
  ]
  \edge[dash dot];
  [.\node[branch,name=L22]{};
    [ [.\node[shape=coordinate,name=L23]{}; [ [ [ \node[name=R22]{\phantom{01}}; 01 ] 01 ] 10 ] 10 ] 11 ] 10 ]
]
\draw[-latex,red,dashed,out=150,in=170] (R1.center) to (L1);
\draw[-latex,red,bend left=45] (R21.center) to (L21.south);
\draw[-latex,red] (R22.center) to[bend left=45] (L23.west);

\end{tikzpicture}
\end{adjustbox}
\caption{Conditional Bit Sampling (\cref{alg:sampler-naive})}
\label{fig:DDG-sequential-composition-suboptimal}
\end{subfigure}
\qquad\qquad
\begin{subfigure}[b]{.4\linewidth}
\centering
\begin{adjustbox}{max width=.9\linewidth}
\begin{tikzpicture}
\Tree
  [.\node[name=root,branch]{};
    [ [ [ \node[name=back]{\phantom{01}}; 01 ] [ 10 11 ] ] [ 00 01 ] ]
  [ [ 11 10 ] 10 ]
]
\draw[red,-latex,out=100,in=150] (back.center) to (root.south);
\end{tikzpicture}
\end{adjustbox}
\caption{Optimal Generation (\cref{alg:sampler-opt})}
\label{fig:DDG-sequential-composition-optimal}
\end{subfigure}
\caption{DDG trees of random variate generators for the discrete distribution $p_2$ in \cref{eq:deperditely}.
In \subref{fig:DDG-sequential-composition-suboptimal}, dashed lines show the DDG tree for $B_1$; solid
lines show DDG trees for $B_2{\mid}B_1{=}0$ (left subtree) and $B_2{\mid}B_1{=}1$ (right subtree).
}
\label{fig:DDG-sequential-composition}
\end{figure}

\begin{example}
\label{example:bernoulli-compose-suboptimal}
Consider a binary-coded probability distribution $p$ such that
\begin{align}
p(0)   &= 1/3,
&&p(1)  = 2/3,
&&p(00) = 2/15,
&&p(01) = 3/15,
&&p(10) = 7/15,
&&p(11) = 3/15.
\label{eq:deperditely}
\end{align}
Following \cref{eq:preoccupative},
generating $B_1B_2 \sim p_2$ via $B_1 \sim \mathrm{Bernoulli}(2/3)$
and $B_2 \mid B_1 \sim \mathrm{Bernoulli}(3/(5+5B_1))$
consumes 4 bits on average (\cref{fig:DDG-sequential-composition-suboptimal}).
However, an entropy-optimal generator for $p_2$ constructed from the
binary expansions of $(p(00),p(01),p(10),p(11))$ 
consumes 3.2 bits on average (\cref{fig:DDG-sequential-composition-optimal}):
\begin{minipage}{.42\linewidth}
\begin{align*}
  \begin{Bmatrix} p(00) \\ p(01) \\ p(10) \\ p(11) \end{Bmatrix}
  = \begin{Bmatrix} 2/15 \\ 3/15 \\ 7/15 \\ 3/15 \end{Bmatrix}
  = 0.\left\lbrace\overline{\begin{matrix}
    0 & 0 & 1 & 0 \\
    0 & 0 & 1 & 1 \\
    0 & 1 & 1 & 1 \\
    0 & 0 & 1 & 1
  \end{matrix}}\right.
\end{align*}
\end{minipage}\hfill
\begin{minipage}{.58\linewidth}
\begin{align}
\begin{aligned}[b]
& N = \begin{aligned}[t]
  (0\cdot1\cdot{1/2^1}) & + (1\cdot2\cdot{1/2^2}) + (4\cdot3\cdot{1/2^3}) \\
                      & + (3\cdot4\cdot{1/2^4}) + (1 \cdot (4+N)\cdot{1/2^4})
  \end{aligned}\hspace{-1cm}\\
& \!\!\! \implies N = 3.2\mbox{ bits} \qquad\rlap{\qedsymbol}
\label{eq:rebringer}
\end{aligned}
\end{align}
\end{minipage}\noqed
\end{example}


\begin{figure}[!t]

\newcommand{\hhl}[2]{{\sethlcolor{#1}\hl{#2}}}
\newcommand{\hlp}[1]{\hhl{pink}{#1}}
\newcommand{\hlg}[1]{\hhl{Goldenrod}{#1}}
\newcommand{\hlb}[1]{\hhl{cyan!50!white}{#1}}
\newcommand{\gr}[1]{{\textcolor{gray}{#1}}}
\newcommand{\tm}[1]{\tikz[overlay, remember picture, baseline=(#1.south)] \node[name=#1,rectangle,draw] {};}

\setlength{\tabcolsep}{4pt}

\begin{minipage}[t]{\linewidth}
\centering
\captionsetup{belowskip=0pt,aboveskip=0pt}
\subcaption{Example binary-coded probability distribution $p$ unrolled to the first four bits.}
\label{fig:optimal-sampler-naive-dist}
\begin{adjustbox}{max width=\linewidth}
\begin{tikzpicture}
\node[]{$
\begin{NiceMatrix}[l]
 0000 \mapsto \frac{6}{137}
&0001 \mapsto \frac{12}{137}
&0010 \mapsto \frac{13}{137}
&0011 \mapsto \frac{9}{137}
&0100 \mapsto \frac{10}{137}
&0101 \mapsto \frac{12}{137}
&0110 \mapsto \frac{6}{137}
&0111 \mapsto \frac{1}{137}
\\[2pt]
1000 \mapsto \frac{1}{137}
&1001 \mapsto \frac{2}{137}
&1010 \mapsto \frac{13}{137}
&1011 \mapsto \frac{8}{137}
&1100 \mapsto \frac{14}{137}
&1101 \mapsto \frac{13}{137}
&1110 \mapsto \frac{7}{137}
&1111 \mapsto \frac{10}{137}
\CodeAfter
  \SubMatrix\lbrace{1-1}{2-8}\rbrace
\end{NiceMatrix}$};
\end{tikzpicture}
\end{adjustbox}
\end{minipage}

\setlength{\intextsep}{2pt}

\begin{minipage}[t]{.45\linewidth}
\begin{algorithm}[H]
\captionsetup{hypcap=false}
\caption{Conditional Bit Sampling}
\label{alg:sampler-naive}
\algrenewcommand\algorithmicindent{1.0em}%
\begin{algorithmic}[1]
\Function{\SampleNaive}{$p$, $b=\varepsilon$}
  \If{$p(b0) = p(b)$} \Comment{Leaf}
    \State \Return $\SampleNaive(p, b0)$
    \Comment{0}%
    \label{algline:BadSampler-Print0-NoIter}
  \EndIf
  \If{$p(b1) = p(b)$} \Comment{Leaf}
    \State \Return \SampleNaive($p$, $b1$)
    \Comment{1}%
    \label{algline:BadSampler-Print1-NoIter}
  \EndIf
  \For{$i=1,2,\dots$} \label{algline:BadSampler-Loop} \Comment{Refine Subtree}
    \State $x \gets \Flip()$
    \If{$x = 0 \wedge [p({b0})/p(b)]_{i} = 1$} \Comment{Leaf}
      \State \Return $\SampleNaive(p, b0)$
      \Comment{0}%
      \label{algline:BadSampler-Print0}
    \EndIf
    \If{$x = 1 \wedge [p({b1})/p(b)]_{i} = 1$} \Comment{Leaf}
      \State \Return $\SampleNaive(p, b1)$
      \Comment{1}%
      \label{algline:BadSampler-Print1}
    \EndIf
  \EndFor
\EndFunction
\end{algorithmic}
\end{algorithm}
\end{minipage}\hfill
\begin{subtable}[t]{.5\linewidth}
\centering
\captionsetup{skip=0pt}
\subcaption{Example trace of \cref{alg:sampler-naive} on $p$ from \cref{fig:optimal-sampler-naive-dist}.}
\label{fig:optimal-sampler-naive-trace}
\renewcommand{\tm}[1]{}
\begin{adjustbox}{max width=\linewidth}
\begin{tabular}{|l||p{1cm}@{=\,}p{.5cm}@{\,=\,\gr{0.}\,}lllllll|c@{}}
\cline{4-10}
\multicolumn{1}{c}{Recur.} & \multicolumn{2}{c|}{~} & \multicolumn{7}{c|}{\Flip\,$x$} & ~
\\
\multicolumn{1}{c}{Level}  & \multicolumn{2}{c|}{Probabilities}
                                        & 1      & 0         & \hlb{1} & \hlb{1}   & \hlb{0}   & 1         & \hlb{1} & Output $b$ \\ \hline\hline
\multirow{2}{*}{0} & $p(0)$                   & $\frac{69}{137}$ & \bf1      & \bf0 & \bf0       & \gr{0}     & \gr{0}     & \gr{0} & \gr{0}     & ~ \Tstrut \\
~                  & $p(1)$                   & $\frac{68}{137}$ & \bf0      & \bf1 & \hlg{\bf1} & \gr{1}     & \gr{1}     & \gr{1} & \gr{1}     & \hlp{1} \Tstrut\Bstrut \\ \cline{1-10}
\multirow{2}{*}{1} & $\frac{p(10)}{p(1)}$     & $\frac{6}{17}$   & \tm{3-s0} & ~    &            & \bf{0}     & \gr{1}     & \gr{0} & \gr{1}     & ~ \Tstrut \\
~                  & $\frac{p(11)}{p(1)}$     & $\frac{11}{17}$  & \tm{4-s0} & ~    &            & \hlg{\bf1} & \gr{0}     & \gr{1} & \gr{0}     & \hlp{1} \Tstrut\Bstrut \\ \cline{1-10}
\multirow{2}{*}{2} & $\frac{p(110)}{p(11)}$   & $\frac{27}{44}$  & \tm{5-s0} & ~    & ~          &            & \hlg{\bf1} & \gr{0} & \gr{0}     & \hlp{0} \Tstrut \\
~                  & $\frac{p(111)}{p(11)}$   & $\frac{17}{44}$  & \tm{6-s0} & ~    & ~          &            & \bf0       & \gr{1} & \gr{1}     & ~ \Tstrut\Bstrut \\ \cline{1-10}
\multirow{2}{*}{3} & $\frac{p(1100)}{p(110)}$ & $\frac{14}{27}$  & \tm{7-s0} & ~    & ~          & ~          &            & \bf1   & \bf0       & ~ \Tstrut \\
~                  & $\frac{p(1101)}{p(110)}$ & $\frac{13}{27}$  & \tm{8-s0} & ~    & ~          & ~          &            & \bf0   & \hlg{\bf1} & \hlp{1} \Tstrut\Bstrut \\\hline\hline
\end{tabular}
\end{adjustbox}
\end{subtable}
\addtocounter{figure}{-1}
\captionsetup{skip=4pt}
\caption{Conditional bit sampling algorithm for any binary-coded probability
distribution $p: \set{0,1}^* \to [0,1]$ using the chain rule. Refer to
the caption of \cref{fig:optimal-sampler} for details.}
\label{fig:optimal-sampler-naive}
\end{figure}

The next proposition establishes bounds on the entropy gap between the
conditional bit sampling and optimal generators in \cref{alg:sampler-naive,alg:sampler-opt}.
It shows that this gap could be zero, or very large.

\begin{proposition}[name=,restate=OptNaiveBound]
\label{proposition:opt-naive-bound}
For a binary-coded probability distribution $p$, let
$C^{\rm opt}_n(p)$ and
$C^{\rm cbs}_n(p)$
 denote the entropy costs of
\cref{alg:sampler-opt,alg:sampler-naive}, respectively, up until
generating an $n$-bit string.
There exist binary-coded distributions $p$ and $p'$
and an arbitrarily small constant $\varepsilon > 0$ such that
\begin{align}
\mathbb{E}[C^{\rm cbs}_n(p)] - \mathbb{E}[C^{\rm opt}_n(p)] = 0;
&&
\mathbb{E}[C^{\rm cbs}_n(p')] - \mathbb{E}[C^{\rm opt}_n(p')] = 2n - 2 - \varepsilon.
\qquad\rlap{\qedsymbol}
\end{align}\noqed
\end{proposition}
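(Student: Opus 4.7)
The plan is to prove both equalities constructively by exhibiting explicit binary-coded probability distributions and invoking the Knuth-Yao bound for OPT from \cref{theorem:knuth-yao}.

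For the zero-gap case, I take the uniform distribution $p(b) \defas 2^{-|b|}$ for all $b \in \bool^{\le n}$. Every conditional ratio $p(bx)/p(b)$ equals $1/2$, so iteration $i=1$ of \cref{alg:sampler-naive} finds $[p(b0)/p(b)]_1 = [p(b1)/p(b)]_1 = 1$ and exits after exactly one \Flip, giving $\mathbb{E}[C^{\rm cbs}_n(p)] = n$. The entropy-optimal DDG tree of $p_n$ (uniform on $\bool^n$) is the complete binary tree of depth $n$, so $\mathbb{E}[C^{\rm opt}_n(p)] = n$, and the gap is $0$.

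For the near-maximal-gap case, the key observation is that when $q \defas p'(b1)/p'(b)$ is non-dyadic, the concise binary expansions of $q$ and $1-q$ satisfy $q_i + (1-q)_i = 1$ at every position $i \ge 1$: since $q+(1-q) = 0.\overline{1} = 1$ and neither expansion terminates, no carries are ever introduced in the addition. Hence each iteration of CBS's inner loop exits with probability exactly $1/2$, making the Bernoulli cost exactly $2$ bits in expectation. If every conditional ratio at every node $b \in \bool^{<n}$ with $p'(b) > 0$ is non-dyadic, then $\mathbb{E}[C^{\rm cbs}_n(p')] = 2n$. Meanwhile, \cref{theorem:knuth-yao} gives $\mathbb{E}[C^{\rm opt}_n(p')] < H(p'_n) + 2$, so the gap exceeds $2n - 2 - H(p'_n)$.

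To realize both requirements simultaneously, I would perturb the degenerate distribution on $0^n$: set $p'_n(0^n) = 1 - \delta$ and spread $\delta$ across the other $2^n - 1$ outcomes so that every $a_b \defas p'_n(b) > 0$. Then $H(p'_n) \le H(\delta) + \delta \log(2^n - 1) \to 0$ as $\delta \to 0$, making the gap exceed $2n - 2 - \varepsilon$ once $\delta$ is small enough. The conditions ``some conditional ratio $p'(b0)/p'(b)$ or $p'(b1)/p'(b)$ is dyadic'' define finitely many rational affine hyperplanes in the $(2^n-1)$-dimensional simplex of full-support distributions, i.e., a Lebesgue-null set; a generic choice inside any small-$\delta$ neighborhood therefore achieves non-dyadicity at every node simultaneously. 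The main obstacle I anticipate is arranging the gap to equal $2n - 2 - \varepsilon$ exactly rather than $\ge 2n - 2 - \varepsilon$: neither CBS nor OPT cost is a continuous function of the parameters (the binary expansions of the $a_b$ jump at dyadics), but the Knuth-Yao slack of at most $2$ together with continuity of $H(p'_n)$ in the parameters leaves enough flexibility to fine-tune $\delta$ to any prescribed target in $(0, 2n - 2)$; the extension of $p'_n$ to a binary-coded distribution $p' : \bool^* \to [0,1]$ follows by the construction in \cref{eq:trefle-2}.
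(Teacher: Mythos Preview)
Your proposal is correct and follows the same high-level strategy as the paper: the uniform distribution for the zero-gap case, and a near-degenerate distribution together with the Knuth--Yao upper bound for the near-maximal-gap case. The constructions differ in a way worth noting. The paper fixes one outcome at probability $\gamma = 1 - (2^n-1)\varepsilon$ and gives the remaining $2^n - 1$ outcomes equal probability $\varepsilon$; in that construction the conditionals inside the equal-probability subtrees are exactly $1/2$, so CBS pays only $1$ bit per step there, and the paper computes the CBS cost via the recurrence $T_n(k) = 2 + (k-1)c_n(k) + T_n(k-1)(1-c_n(k))$, obtaining $2n - 2\varepsilon(n2^{n-1} + 1 - 2^n)$. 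Your generic-perturbation construction instead forces every conditional to be non-dyadic, and your observation that the concise expansions of $q$ and $1-q$ are bitwise complements (so each Bernoulli draw costs exactly $2$) yields $\mathbb{E}[C^{\rm cbs}_n(p')] = 2n$ directly, without a recurrence. This is arguably cleaner. One small slip: the bad set is a \emph{countable} (not finite) union of affine hyperplanes---one per node and per dyadic rational---but the null-set conclusion is unchanged. On the exact-equality-versus-inequality issue you flag, the paper is equally informal (it writes $\approx 2n$ and $\approx 2$); the proposition's phrasing ``an arbitrarily small constant $\varepsilon$'' is meant in the sense that the gap can be driven arbitrarily close to $2n-2$, which both arguments establish.
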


\begin{proof}
For the first equality, suppose $p$ is such that each
distribution $p_j$ is uniform distribution over $\bool^j$
($j=1,\dots,n$), so their entropies satisfy $H(p_j) = j$.
Since conditional bit sampling using $\hyperref[alg:sampler-bernoulli-impl]{\Bernoulli}$ makes a coin flip
with dyadic weight $1/2$ at each step, it consumes 1 bit per step on
average for a total of $n$ bits, matching the entropy-optimal generator.

For the second equality, consider a
distribution $p'$ such that $p'_n$ has $2^n-1$ outcomes of
probability $\varepsilon \ll 2^{-n}$ and one outcome has
probability $\gamma \defas 1- (2^n-1)\varepsilon$.
The cost $C^{\rm opt}_n$ of the optimal generator satisfies
\begin{align}
H(p'_n) \le \mathbb{E}[C^{\rm opt}_n(p'_n)] \le H(p'_n) + 2
\implies
\mathbb{E}[C^{\rm opt}_n(p'_n)] \le \varepsilon\log_2(1/\varepsilon) + \gamma\log_2(\gamma^{-1}) + 2 \approx 2.
\end{align}
For conditional bit sampling, the average cost
\begin{align}
\mathbb{E}[C^{\rm cbs}_n] = T_n(n) = 2n - 2\varepsilon\left(n2^{n-1} + 1 - 2^{n}\right) \approx 2n
\end{align}
is the value of the following recurrence at $T_n(n)$:
\begin{align}
\begin{aligned}
T_n(1) = 2,
\qquad
T_n(k) &= 2 + (k-1)c_n(k) + T_n(k-1)(1-c_n(k)) && (2 \le k \le n), \\
\mbox{where } c_n(k) &\defas \frac{2^{k-1}\varepsilon}{(2^k-1)\varepsilon + \gamma}.
\label{eq:Saxony}
\end{aligned}
\end{align}
To justify \cref{eq:Saxony},
in the base case $(k=1)$ the distribution $p_2$ over $\bool^1$ is non-uniform
which requires 2 bits.
For the inductive case, there are $2^k$ outcomes in total,
where $(2^k-1)$ have probability $\varepsilon$
and one outcome has probability $\gamma$.
Conditional bit sampling uses two bits to flip a coin with weight $c_n(k)$,
which chooses between the $2^{k-1}$
equal-probability outcomes (after which $k-1$ flips are needed) and
the remaining $2^{k-1}$ outcomes (after which $T_n(k-1)$ flips are needed).
\end{proof}

\subsection{Finite-Precision Implementation}
\label{appx:naive-baseline-impl}

The \hyperref[alg:sampler-bernoulli-impl]{\Bernoulli} function in \Cref{alg:sampler-bernoulli-impl}
defines an entropy-optimal random variate generator
for $\mathrm{Bernoulli}(i/k)$ bit with rational weights,
based on~\cref{eq:superfit-1,eq:superfit-2}.
The structure of this algorithm mirrors that of
\cref{alg:BinaryExpansion}, 
which extracts the \textit{concise} binary expansion $(0.p_1p_2\dots)_2$
of a rational probability $p=i/k \in (0,1)$ based on the identity
\begin{align}
i/k =
  \begin{cases}
  1/2 + 1/2 \cdot (2i-k)/k    &\mbox{if } k \le 2i \\
  0 + 1/2 \cdot (2i/k)        &\mbox{if } 2i < k.
  \end{cases}
  \label{eq:bernoulli-rational-identity}
\end{align}

The \hyperref[alg:sampler-naive-impl]{\SampleNaiveImpl}
function in \cref{alg:sampler-naive-impl}
implements \cref{alg:sampler-naive}
by successively calling \hyperref[alg:sampler-bernoulli-impl]{\Bernoulli}
on ratios of rational probabilities.
At each recursive step, the string $b \in \bool^{\le n}$ has been
determined so far.
Following \cref{eq:trefle-1}, the floats $\fL, \fR$
store the subtrahend and minuend in the probability
$p_F(b) = F (\phi_\bfmt(b1^{n-\abs{b}})) -_{\real} F (\phi_\bfmt((b0^{n-\abs{b}})^{-}))$;
and the outcomes $b0$ and $b1$
have probabilities $p_F(b0) = \fM - \fL$ and $p_F(b1) = \fR - \fM$,
respectively, where $\fM \defas F(\phi_\bfmt(b'))$
is the cumulative probability of the ``midpoint''
string $b' \in \set{0,1}^n$ that lies between $b0$ and $b1$.
\hyperref[algline:sampler-naive-impl-kn]{\ExactRatio} is any algorithm that returns
integers $(i, k)$ such that $i/k = (\fR - \fM) / (\fR - \fL)$.
While these integers cannot be computed directly using floating-point arithmetic
and are not guaranteed to fit in a single $1+E+m$ bit machine word,
the next results establish tight bounds on the finite buffer sizes
needed to store $i, k$, showing that the method does not
require arbitrary-precision arithmetic.

\begin{theorem}[name=,restate=ExactRatioWorst]
\label{prop:exact-ratio-worst}
Suppose $E \ge 1$ and $m \ge 3$. Given $\fL,\fR,\fM \in \floatEm \cap [0,1]$ with $\fL \neq \fR$,
let $i(\fL,\fR,\fM)$ and $k(\fL,\fR,\fM)$ be coprime integers such that
$i(\fL,\fR,\fM)/k(\fL,\fR,\fM) = (\fR-\fM)/(\fR-\fL)$.
Then
\begin{align}
\max_{0 \le \fL < \fM < \fR \le 1} \set*{ \ceil*{1 + \log_2 i(\fL,\fR,\fM)} + \ceil*{1 + \log_2 k(\fL,\fR,\fM)} }
  = 2\cdot (2^{E-1} + m -2) + 1.
\rlap{\;\qedhere}
\end{align}
\end{theorem}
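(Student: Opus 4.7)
The plan is to bound the bit-widths of the reduced numerator $i$ and denominator $k$ by combining a structural fact about finite-precision floats with a short case analysis on $k$, then exhibit a triple that attains the claimed maximum.

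First, I would observe that every value in $\floatEm \cap [0,1]$ is a nonnegative integer multiple of $2^{-M}$ where $M \defas 2^{E-1} + m - 2$. This follows directly from \cref{example:format-floating-point}: the smallest positive subnormal equals $2^{1-b_E-m} = 2^{-M}$, and every normal value in $[0,1]$ has equal or coarser spacing. Consequently $\fR - \fM = a/2^M$ and $\fR - \fL = b/2^M$ for integers $1 \le a < b \le 2^M$, and the coprime reduction is $(i, k) = (a/g, b/g)$ with $g \defas \gcd(a, b)$, yielding $1 \le i < k \le 2^M$ and $\gcd(i, k) = 1$.

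For the upper bound, I would case-split on whether $k = 2^M$. If $k = 2^M$, then $\gcd(i, 2^M) = 1$ forces $i$ to be odd, so $i \le 2^M - 1$ contributes at most $M$ bits while $k$ contributes $M + 1$ bits, summing to $2M + 1$. Otherwise $k \le 2^M - 1$ has at most $M$ bits, and since $i < k \le 2^M - 1$, the numerator $i$ also has at most $M$ bits, so the sum is at most $2M$. Hence $2M + 1$ is the upper bound in either case.

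For achievability, I would use the triple $(\fL, \fM, \fR) = (0, 2^{-M}, 1)$, which is admissible since $M \ge 2$ under the hypotheses $E \ge 1$ and $m \ge 3$, and all three values lie in $\floatEm$ (with $2^{-M}$ being the smallest positive subnormal). This gives $(\fR - \fM)/(\fR - \fL) = (2^M - 1)/2^M$, already in lowest terms because $2^M - 1$ is odd and $2^M$ is a power of $2$. Substituting $(i, k) = (2^M - 1, 2^M)$ into the bit-width sum saturates the upper bound at $2M + 1$.

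The main obstacle is aligning the ceiling-of-log expressions in the theorem statement with standard bit-width counts at powers-of-two boundaries, so that both cases contribute exactly as claimed and, in particular, the $k = 2^M$ case cannot be pushed to $2M + 2$ by ignoring the parity constraint on $i$. Once this bookkeeping is handled, the proof reduces to a clean number-theoretic dichotomy driven by whether the reduced denominator saturates, with coprimality against $2^M$ being the decisive structural constraint.
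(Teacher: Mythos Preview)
Your proposal is correct and follows essentially the same approach as the paper: both rely on the structural observation that every element of $\floatEm \cap [0,1]$ is an integer multiple of $2^{-M}$ with $M = 2^{E-1}+m-2$, and both use the same achieving triple $(\fL,\fM,\fR)=(0,\,2^{-M},\,1)$. Your explicit case split on whether $k=2^M$ (using coprimality to force $i$ odd) makes the upper-bound argument more transparent than the paper's single sentence that $(2^M-1,2^M)$ ``form the largest pair of coprime numbers in $\{0,\dots,2^M\}$,'' but the underlying idea is identical.
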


\begin{proof}
Let $\delta \defas 2^{E-1} -2 + m \ge 2$.
Every $x \in [0,1]\cap\floatEm$ is an integer multiple of
the smallest positive subnormal $2^{-\delta}$.
Then $2^\delta (\fR-\fM)$ and $2^\delta (\fR-\fL)$
are integers between $0$ and $2^\delta$.
Put $\fL=0$, $\fR = 1$, and $\fM=2^{-\delta}$.
Then $(\fR-\fM)/(\fR-\fL) = (2^\delta-1)/2^\delta = i(\fL,\fR,\fM)/k(\fL,\fR,\fM)$
form the largest pair of coprime numbers in $\set{0,\dots,2^\delta}$.
\end{proof}

\begin{corollary}
In \cref{alg:sampler-naive-impl}, \hyperref[algline:sampler-naive-impl-kn]{\ExactRatio} returns integers
$i, k$ each comprised of at most $2^{E-1}+m-1$ bits, i.e., at most
5 (resp.~17) machine words on IEEE-754 single (resp.~double) precision.
\end{corollary}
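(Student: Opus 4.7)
The plan is to read the bound off the granularity observation that already appears in the proof of \cref{prop:exact-ratio-worst}. Setting $\delta \defas 2^{E-1} + m - 2$, every element of $[0,1] \cap \floatEm$ is an integer multiple of the smallest positive subnormal $2^{-\delta}$, so both $2^\delta(\fR - \fM)$ and $2^\delta(\fR - \fL)$ are nonnegative integers lying in $\{0, 1, \ldots, 2^\delta\}$, and their ratio equals $(\fR - \fM)/(\fR - \fL)$ exactly.

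First I would note that the coprime pair $(i, k)$ returned by $\hyperref[algline:sampler-naive-impl-kn]{\ExactRatio}$ is obtained by reducing this integer-ratio representation to lowest terms, an operation that can only \emph{decrease} the numerator and denominator. Consequently $i \le 2^\delta(\fR - \fM) \le 2^\delta$ and $k \le 2^\delta(\fR - \fL) \le 2^\delta$, so each of $i, k$ is representable in $\delta + 1 = 2^{E-1} + m - 1$ unsigned binary digits. The machine-word count then follows by instantiating the IEEE-754 parameters: $(E, m) = (8, 23)$ yields a bound of $150$ bits, hence $\lceil 150 / 32 \rceil = 5$ words on 32-bit hardware; and $(E, m) = (11, 52)$ yields $1075$ bits, hence $\lceil 1075 / 64 \rceil = 17$ words on 64-bit hardware.

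I do not anticipate a substantive obstacle, since the corollary is essentially a per-variable repackaging of the worst-case analysis already carried out for \cref{prop:exact-ratio-worst}. The only care-point worth highlighting is that the theorem bounds the \emph{sum} $\lceil 1 + \log_2 i\rceil + \lceil 1 + \log_2 k\rceil$, whereas the corollary needs a bound on each coordinate individually; this is why I prefer to invoke the underlying subnormal-granularity argument directly rather than redistribute the sum bound, and it also explains the small slackness (the per-variable bound $\delta + 1$ can be simultaneously attained by at most one of $i, k$, since if $k = 2^\delta$ then $i \le 2^\delta - 1$).
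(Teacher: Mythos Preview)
Your proposal is correct and follows essentially the same approach as the paper: the corollary is stated without proof there, being immediate from the granularity argument in the proof of \cref{prop:exact-ratio-worst}, which is precisely the argument you reproduce. Your observation that the theorem bounds the \emph{sum} of bit-lengths while the corollary requires a per-variable bound is apt, and your direct invocation of the subnormal-granularity fact to obtain $i,k \le 2^\delta$ individually is the right way to extract it.
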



\begin{figure}[b]
\centering

\begin{minipage}[t]{.495\linewidth}
\begin{algorithm}[H]
\captionsetup{hypcap=false}
\caption{Conditional Bit Sampling}
\label{alg:sampler-naive-impl}
\begin{algorithmic}[1]
\Require{%
  CDF $F: \bool^n \to \floatEm \cap [0,1]$ \\
  over number format $\bfmt = (n, \gamma_\bfmt, \phi_\bfmt)$ \\
  \color{gray}{String $b \in \bool^{\le n}$}; \\
  \color{gray}{Floats $\fL, \fR \in \floatEm \cap [0,1]$}
  }
\Ensure{Exact random variate $X \sim \hat{F}$}
\Function{\SampleNaiveImpl}{$F$, $b{=}\varepsilon$, $\fL{=}0$, $\fR{=}1$}
  \If{$\abs{b} = n$} \Comment{Base Case}
    \State \Return $\gamma_\bfmt(\phi_\bfmt(b))$ \Comment{Number in $\realext_{\bfmt}$}
  \EndIf
  \State $b' \gets b01^{n - \abs{b} - 1}$; $\fM \gets F(\phi_\bfmt(b'))$ \label{algline:sampler-naive-impl-F}
  \If{$\fM = \fR$} \Comment{Leaf}
    \State \Return $\SampleNaiveImpl(F, b0, \fL, \fM)$
    \Comment{0}
  \EndIf
  \If{$\fM = \fL$} \Comment{Leaf}
    \State \Return $\SampleNaiveImpl(F, b1, \fM, \fR)$
    \Comment{1}
  \EndIf
  \LComment{$i/k \defas (\fR-\fM)/(\fR-\fL)$} \label{algline:sampler-naive-impl-kn}
  \State $(i, k) \gets \ExactRatio(\fL, \fM, \fR)$
  \State $z \gets \hyperref[alg:sampler-bernoulli-impl]{\Bernoulli}(i, k)$ \Comment{Refine Subtree}
  \State $(\fL', \fR') \gets (z = 0) \,\textbf{?}\, (\fL, \fM) \textbf{:} (\fM, \fR)$
  \State \Return $\SampleNaiveImpl(F, bz, \fL', \fR')$ 
    \Comment{$z$}
\EndFunction
\end{algorithmic}
\end{algorithm}
\end{minipage}

\bigskip

\begin{minipage}[t]{.495\linewidth}
\begin{algorithm}[H]
\captionsetup{hypcap=false}
\caption{Optimal Bernoulli Generation}
\label{alg:sampler-bernoulli-impl}
\begin{algorithmic}[1]
\Require{Integers $i,k$ with $0 < i < k$}
\Ensure{Exact flip $X \sim \mathrm{Bernoulli}(i/k)$}
\Function{Bernoulli}{$i, k$}
\While{\textbf{true}}
  \State $i \gets 2i$
  \If{$i = k$}
      \State \Return \Flip() \Comment{Dyadic}
      \State ~
  \ElsIf{$i > k$}
      \State $b \gets 1$
      \State $i \gets i - k$
  \Else
       \State $b \gets 0$
  \EndIf
\If{$\Flip()$}
  \State \Return $b$ \EndIf
\EndWhile
\EndFunction
\end{algorithmic}
\end{algorithm}
\end{minipage}\hfill
\begin{minipage}[t]{.495\linewidth}
\begin{algorithm}[H]
    \centering
    \captionsetup{hypcap=false}
    \caption{Computing Binary Expansion}
    \label{alg:BinaryExpansion}
    \begin{algorithmic}[1]
    \Require{Integers $i,k$ with $0 < i < k$}
    \Ensure{\mbox{Print concise binary expansion of $i/k$}}
    \Function{BinaryExpansion}{$i, k$}
    \While{\textbf{true}}
      \State $i \gets 2i$
      \If{$i = k$}
        \State \textbf{print}\, 1 \Comment{Dyadic}
        \State \textbf{print}\, 0 \, \textbf{forever}
      \ElsIf{$i > k$}
        \State $b \gets 1$
        \State $i \gets i - k$
      \Else
        \State $b \gets 0$
      \EndIf
    \State \textbf{print}\, $b$
    \EndWhile
    \EndFunction
    \State ~
    \end{algorithmic}
\end{algorithm}
\end{minipage}
\end{figure}

\clearpage

\section{Deferred Results in \S\ref{sec:overview}}
\label{appx:overview}

This appendix studies the properties of random ``uniform'' floating-point numbers
obtained by dividing integers.
\Citet{Goualard2020} provide a survey and case study of these algorithms
in practice.

\begin{proposition}
  \label{prop:std-uniform-density}
  Let $\floatEm$ be a floating-point format 
  and $\roundfl{} : \real \to \floatEm \cup \set{-\infty, +\infty}$
  be its rounding function in round-to-nearest-even mode.
  Then, for any integer $\ell \in [0, 2^{E-1}-2]$, the density of the set
  \(
    \set{\roundfl{}({i}/{2^\ell}) \mid i \in \set{0, 1, \ldots, 2^\ell - 1} } 
  \)
  within the set $(\floatEm \cap [0,1])$ of all floats in the unit interval is
  \begin{align}
    \label{eq:std-uniform-density-main}
    \frac{2^\ell}{2^m(2^{E-1} -1) + 1} \quad \text{if } \ell \leq m+1,
    &&
    \frac{2^m(\ell-m+1) + 1}{2^m(2^{E-1}-1) + 1} \quad \text{if } \ell > m+1.
  \end{align}
  The same result holds for the set $ \set{\roundfl{}(\roundfl{}(i)/\roundfl{}(2^\ell)) \mid i \in \set{0, 1, \ldots, 2^\ell - 1} }$.
\end{proposition}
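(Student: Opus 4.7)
The proof rests on counting (i) the size of $\floatEm \cap [0,1]$ and (ii) the cardinality of $S_\ell \defas \set{\roundfl{}(i/2^\ell) \mid i \in \set{0,\ldots,2^\ell-1}}$, then taking the ratio. For the denominator, $\floatEm \cap [0,1]$ decomposes as $\set{0}$, the $2^m - 1$ positive subnormals in $(0, 2^{1-b_E})$ with $b_E \defas 2^{E-1} - 1$, the $2^m$ normals in each of the $b_E - 1$ bands $[2^e, 2^{e+1})$ for $e \in \set{1-b_E,\ldots,-1}$, and the point $\set{1}$. Summing gives $1 + (2^m - 1) + (b_E - 1)\cdot 2^m + 1 = 2^m(2^{E-1}-1) + 1$, which matches the stated denominator.

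For the numerator I would parameterize each nonzero $i \in \set{1,\ldots,2^\ell-1}$ by the position $a \in \set{0,\ldots,\ell-1}$ of its leading one, so that $i/2^\ell = (1.b_{a-1}\ldots b_0)_2 \times 2^{a-\ell}$ with an $a$-bit mantissa. The hypothesis $\ell \le 2^{E-1}-2 = b_E - 1$ forces the exponent $a-\ell \in [-\ell,-1]$ into the normal range $[1-b_E,-1]$, so $i/2^\ell$ is exactly representable iff $a \le m$. When $\ell \le m+1$ this always holds, since $a \le \ell - 1 \le m$; the $2^\ell$ values $i/2^\ell$ are then distinct and exact, giving $|S_\ell| = 2^\ell$ and the first formula.

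When $\ell > m+1$ I would partition by exponent band. For each $a \in \set{0,\ldots,\ell-1}$, the inputs $i \in \set{2^a,\ldots,2^{a+1}-1}$ send $i/2^\ell$ into $[2^{a-\ell}, 2^{a-\ell+1})$. If $a \le m$ the $2^a$ values are exact and distinct; if $a > m$ then for each target high mantissa $M_1 \in \set{0,1}^m$ the explicit witness $i = 2^a + M_1 \cdot 2^{a-m}$ makes $i/2^\ell = 1.M_1 \times 2^{a-\ell}$ \emph{exactly} (zero tail), so all $2^m$ floats of the band are attained. Any input that rounds upward out of the band lands on $2^{a-\ell+1}$, which is either the first float of the next band (already counted in iteration $a+1$) or the value $1$. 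The extremal input $i = 2^\ell - 1$ yields $1 - 2^{-\ell}$, which under round-to-nearest-even rounds up to $1$ precisely when $\ell \ge m+2$ (the midpoint with $1 - 2^{-m-1}$ is $1 - 2^{-m-2}$, with ties breaking to $1$ whose mantissa is even). Summing the contributions of $0$, the bands, and $1$ gives
\begin{equation*}
|S_\ell| \;=\; 1 + \sum_{a=0}^{m} 2^a + \sum_{a=m+1}^{\ell-1} 2^m + 1 \;=\; 2^m(\ell - m + 1) + 1,
\end{equation*}
matching the second formula. For the double-rounded variant, $\roundfl{}(2^\ell) = 2^\ell$ exactly (since $\ell \le b_E$), and dividing a float by $2^\ell$ only subtracts $\ell$ from its exponent, which remains in the normal range under the same hypothesis $\ell \le b_E - 1$; hence $\roundfl{}(\roundfl{}(i)/\roundfl{}(2^\ell)) = \roundfl{}(i)/2^\ell = \roundfl{}(i/2^\ell)$ (the last equality because the mantissa rounding commutes with the exponent shift), so the two sets coincide.

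The chief obstacle is the surjectivity claim for bands with $a > m$: one must check that every $m$-bit mantissa in the band is actually attained and that boundary spillover from round-up does not double-count floats across bands or mishandle the value $1$. The zero-tail witnesses settle surjectivity directly, and observing that spillover from band $a$ produces exactly $2^{a-\ell+1}$ — the exact image of $i = 2^{a+1}$ already counted in iteration $a+1$ (or equal to $1$, counted separately) — rules out double-counting without case-analyzing every tie-breaking decision.
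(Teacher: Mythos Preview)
Your proof is correct and follows essentially the same approach as the paper's: the same binade count for the denominator, exact representability for $\ell \le m+1$, and for $\ell > m+1$ surjectivity onto each full binade via the zero-tail witnesses $i = 2^a + M_1 \cdot 2^{a-m}$. The only organizational difference is that the paper splits $S$ into $S_1 = \{i < 2^{m+1}\}$ and $S_2 = \{i \ge 2^{m+1}\}$ and proves $\roundfl{}(S_2) = [2^{m+1-\ell},1] \cap \floatEm$ in one stroke, whereas you partition band-by-band; and for the double-rounded variant the paper argues that the same exact witnesses work, while you prove the stronger pointwise identity $\roundfl{}(\roundfl{}(i)/2^\ell) = \roundfl{}(i/2^\ell)$ via commutativity of mantissa rounding with exponent shift---a slightly cleaner route to the same conclusion.
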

\begin{proof}
  Let $S \defas \set*{{i}/{2^\ell} \mid i \in \set*{0, 1, \ldots, 2^\ell - 1}}$, and
  $\roundfl{}(A) \defas \set{\roundfl{}(a) \mid a \in A}$ be the rounding of a set $A \subseteq \real$.
  Then, $\roundfl{}(S) \subseteq [0,1] \cap \floatEm$ because $S \subseteq [0,1]$ and $\set{0,1} \subseteq \floatEm$.
  Hence, for the first claim, it suffices to show that $\abs*{\roundfl{}(S)} / \abs*{[0,1] \cap \floatEm}$ equals to \cref{eq:std-uniform-density-main}.

  First, we compute $\abs*{[0,1] \cap \floatEm}$:
  \begin{align}
    \abs*{[0,1] \cap \floatEm}
    & = \abs*{[0,1) \cap \floatEm} + 1
    \\
    & = (\text{\# binades in } [0,1)) \cdot (\text{\# floats in each binade}) + 1
    \\
    & = \smash{(2^{E-1} - 1) \cdot 2^m + 1}.
  \end{align}
  Here, the number of binades in $[0,1)$ is $1 + (2^{E-1}-2) = 2^{E-1} - 1$
  for three reasons:
  $[0,\omega) \cap \floatEm$ forms precisely one binade,
  where $\omega \defas 2^{-2^{E-1}+2}$ is the smallest positive normal float in $\floatEm$;
  the smallest exponent in $[\omega,1)$ is $-2^{E-1}+2$; and
  the largest exponent in $[\omega,1)$ is $-1$.

  Next, we compute $|\roundfl{}(S)|$.
  Suppose that $\ell \leq m+1$.
  Then, for each $j \in \set{0, \ldots, \ell-1}$ and $i \in \set{2^{j}, \ldots, 2^{j+1}-1}$,
  we have $i \cdot 2^{-\ell} = (i \cdot 2^{-j}) \cdot 2^{-\ell + j} \in \floatEm$
  because $1 \leq i \cdot 2^{-j} < 2$, $i$ is an $(m+1)$-bit unsigned integer (by $0 \leq i < 2^\ell \leq 2^{m+1}$),
  and $-\ell + j$ is greater than or equal to the exponent of $\omega$ (by $-\ell+j \geq -\ell \geq -2^{E-1}+2$).
  This and $0 \in \floatEm$ imply $\roundfl{}(S) = S$ and the desired result:
  \begin{align}
    |\roundfl{}(S)| = |S| = 2^\ell.
  \end{align}

  Now, suppose that $\ell > m+1$.
  Consider the partition of $S = S_1 \cup S_2$ given by
  $S_1 \defas \set{ i \cdot 2^{-\ell} \mid i \in \set{0, \ldots, 2^{m+1}-1}}$ and
  $S_2 \defas \set{ i \cdot 2^{-\ell} \mid i \in \set{2^{m+1}, \ldots, 2^\ell-1}}$.
  Then, $\roundfl{}(S_1) = S_1$ by the above argument.
  For $\roundfl{}(S_2)$, we claim that $\roundfl{}(S_2) = [2^{-\ell+m+1}, 1] \cap \floatEm$.
  This subclaim immediately implies the desired result:
  \begin{align}
    \abs*{\roundfl{}(S)} = \abs*{\roundfl{}(S_1)} + \abs*{\roundfl{}(S_2)}
    = \abs*{S_1} + \abs*{[2^{-\ell+m+1}, 1] \cap \floatEm}
    = 2^{m+1} + (\ell-m-1) \cdot 2^m + 1 
  \end{align}
  where the first equality holds by $\roundfl{}(S_1) \cap \roundfl{}(S_2) = \emptyset$
  (since $\roundfl{}(S_1) = S_1 \subseteq [0, 2^{-\ell+m+1})$).
  Hence, it suffices to show the above subclaim. We prove this claim in three steps.
  \begin{enumerate}
  \item
    We have $\roundfl{}(S_2) \subseteq [2^{-\ell+m+1}, 1] \cap \floatEm$
    because $S_2 \subseteq [2^{-\ell+m+1}, 1]$ and $\set*{2^{-\ell+m+1}, 1} \subseteq \floatEm$,
    where $2^{-\ell+m+1} \in \floatEm$ is by $-\ell+m+1 \geq -\ell \geq -2^{E-1}+2$.
  \item
    We have $\roundfl{}(S_2) \supseteq \set{1}$
    because $\roundfl{}((2^\ell-1)/2^\ell) = \roundfl{}(1-2^{-\ell}) = 1$,
    where the last equality is by $1 \in \floatEm$, $2^{-\ell} \leq \frac{1}{2} 2^{-1-m}$,
    and the round-to-nearest-even mode of $\roundfl{}(\cdot)$.
  \item
    We show $\roundfl{}(S_2) \supseteq [2^{-\ell+m+1}, 1) \cap \floatEm$.
    To prove this claim, we write $[2^{-\ell+m+1}, 1) \cap \floatEm$ as
    \begin{align}
      \set*{ (1.b_1 \ldots b_m)_{2} \cdot 2^{-\ell+m+j}
        \mid b_1, \ldots, b_m \in \set{0,1}, j \in \set{1, \ldots, \ell-m-1}}.
    \end{align}
    Here, we have $(1.b_1 \ldots b_m)_{2} \cdot 2^{-\ell+m+j} = ((1 b_1 \ldots b_m)_{2} \cdot 2^j) \cdot 2^{-\ell}$,
    where $(1 b_1 \ldots b_m)_{2} \cdot 2^j \in [2^{m+1}, 2^\ell-1]$ by $m+j \geq m+1$ and $m+1+j \leq \ell$.
    Hence, $[2^{-\ell+m+1}, 1) \cap \floatEm \subseteq S_2$,
    implying that $[2^{-\ell+m+1}, 1) \cap \floatEm \subseteq \roundfl{}(S_2)$.
  \end{enumerate}
  The subclaim is thus established, completing the proof of the first claim.

  For the second claim, the proof is exactly the same because
  \begin{enumerate*}[label=(\roman*)]
    \item $\roundfl{}(2^\ell) = 2^\ell$ by $0 \leq \ell \leq 2^{E-1}-2$; and
    \item the above proof used only those $i \in \set{0, \ldots, 2^\ell-1}$
    such that $i = (1.b_1 \ldots b_m)_2 \cdot 2^{j}$ for some $b_1, \ldots, b_m \in \set{0,1}$ and $j \in \set{0, \ldots, \ell-1}$,
    which satisfy $\roundfl{}(i) = i$.
  \end{enumerate*}
\end{proof}

\begin{remark}
  \label{remark:uniform-real-world}
  \cref{prop:std-uniform-density} describes a standard method~\citep{Goualard2020}
  to generate a floating-point random variate from $\mathrm{Uniform}([0,1])$,
  and presents the proportion of floats in $[0,1]$ covered by this method.
  We list some of the actual implementations of this method and show relevant details.
  \begin{itemize}
  \item
    In the GNU Standard C++ Library (libstdc++), the function
    \href{https://en.cppreference.com/w/cpp/numeric/random/generate_canonical}
         {\texttt{std::generate\_canonical()}}
    generates a 64-bit float with $\ell=64$ and a 32-bit float with $\ell=32$ (when invoked with
    \href{https://en.cppreference.com/w/cpp/numeric/random/mersenne_twister_engine}
         {\texttt{std::mt19937}}),%
    \footnote{
    \url{https://github.com/gcc-mirror/gcc/blob/releases/gcc-14.2.0/libstdc++-v3/include/bits/random.tcc\#L3370}
    }
    covering $1.27$\% of 64-bit floats in $[0,1]$ and $7.87$\% of 32-bit floats in $[0,1]$.
    This function for 32-bit floats is similar to the first code on page 2 of \citet{Downey2007}.
  \item
    In the GNU Scientific Library (GSL), the function
    \href{https://www.gnu.org/software/gsl/doc/html/rng.html#c.gsl_rng_uniform}
         {\texttt{gsl\_rng\_uniform()}}
    generates a 64-bit float with $\ell=32$ (when invoked with
    \href{https://www.gnu.org/software/gsl/doc/html/rng.html#c.gsl_rng_mt19937}
         {\texttt{gsl\_rng\_mt19937}}),%
    \footnote{
    \url{https://github.com/ampl/gsl/blob/v2.7.0/rng/mt.c\#L127}
    }
    covering only $9.32 \times 10^{-8}$\% of 64-bit floats in $[0,1]$.
    This function corresponds to $\texttt{uniform()}$ in \cref{lst:exponential}.
    The GSL, however, does not provide a 32-bit version of this function.
  \item
    In Python and SciPy, the functions
    \href{https://docs.python.org/3/library/random.html#random.random}
         {\texttt{random.random()}} and
    \href{https://docs.scipy.org/doc/scipy/reference/generated/scipy.stats.uniform.html}
         {\texttt{scipy.stats.}\allowbreak\texttt{uniform.rvs()}}
    generate a 64-bit float with $\ell=53$,%
    \footnote{
    Python: \url{https://github.com/python/cpython/blob/v3.13.0/Modules/_randommodule.c\#L191}
    \\\phantom{${}^8$}%
    SciPy: \url{https://github.com/numpy/numpy/blob/v2.0.0/numpy/random/src/mt19937/mt19937.h\#L58}
    }%
    covering $0.20$\% of 64-bit floats in $[0,1]$.
    The \texttt{random} module and SciPy, however, do not provide a 32-bit version of this function.
  \item
    In NumPy and PyTorch, the functions
    \href{https://numpy.org/doc/2.0/reference/random/generated/numpy.random.Generator.random.html}
         {\texttt{numpy.Generator.random()}} and
    \href{https://pytorch.org/docs/2.3/generated/torch.rand.html}
         {\texttt{torch.rand()}}
    generate a 64-bit float with $\ell=53$ and a 32-bit float with $\ell=24$ (when invoked with
    \href{https://numpy.org/doc/2.0/reference/random/bit_generators/pcg64.html#numpy.random.PCG64}
         {\texttt{numpy.}\allowbreak\texttt{random.}\allowbreak\texttt{PCG64}} for NumPy),%
    \footnote{
    NumPy (64-bit): \url{https://github.com/numpy/numpy/blob/v2.0.0/numpy/random/_common.pxd\#L69}
    \\\phantom{${}^8$}%
    NumPy (32-bit): \url{https://github.com/numpy/numpy/blob/v2.0.0/numpy/random/src/distributions/distributions.c\#L20}
    \\\phantom{${}^8$}%
    PyTorch: \url{https://github.com/pytorch/pytorch/blob/v2.3.0/aten/src/ATen/core/TransformationHelper.h\#L85}
    }
    covering $0.20$\% of 64-bit floats in $[0,1]$ and $1.57$\% of 32-bit floats in $[0,1]$, respectively.
    \hfill \qedhere
  \end{itemize}
\end{remark}

\begin{table}[t]
\centering
\caption{Comparison of coverage of floating-point numbers in the unit
interval when generating ``uniform'' random variates using exact generation
from the \WCDF{} \texttt{cdf\_uniform\_round\_dn}
(\cpageref{lst:gsl-samplers} of main text) and the usual division
method~\citep{Goualard2020}.
Results are shown for the 8-bit binary number format $\float^5_2$,
i.e., 5 exponent bits and 2 mantissa bits, which contains 60 floats in the
interval $[0,1)$.
The `No.~of Observed Samples' column shows the empirical frequency with which
each float was observed in a sample of 10,000,000 random variates generated
using the exact and division method.
The division method covers only $2/32$ floats in the first 8 binades (left
table, 0 and 0.0039062500), whereas the exact method covers all floats with
the correct frequencies.
For 32-bit or 64-bit floating-point systems, the division method
covers an even smaller fraction of all the possible floats, whereas the exact
method retains 100\% coverage.
}
\label{table:float-coverage}
\begin{adjustbox}{max width=\linewidth}
\centering
\footnotesize
\begin{tabular}[t]{|rlrr|}
\multicolumn{1}{c}{~} & ~     & \multicolumn{2}{c}{\bfseries No.~of Observed Samples} \\ \cline{3-4}
\multicolumn{1}{c}{~} & \textbf{Float} & {Exact Method} & \multicolumn{1}{c}{Division Method} \Tstrut \\ \hline
0  & 0.0000000000000000 & 164     & 39385 \\
1  & 0.0000152587890625 & 146     & 0 \\
2  & 0.0000305175781250 & 173     & 0 \\
3  & 0.0000457763671875 & 155     & 0
\\
4  & 0.0000610351562500 & 145     & 0 \\
5  & 0.0000762939453125 & 129     & 0 \\
6  & 0.0000915527343750 & 166     & 0 \\
7  & 0.0001068115234375 & 136     & 0
\\
8  & 0.000122070312500 & 293     & 0 \\
9  & 0.000152587890625 & 308     & 0 \\
10 & 0.000183105468750 & 292     & 0 \\
11 & 0.000213623046875 & 321     & 0
\\
12 & 0.00024414062500 & 638     & 0 \\
13 & 0.00030517578125 & 643     & 0 \\
14 & 0.00036621093750 & 689     & 0 \\
15 & 0.00042724609375 & 618     & 0
\\
16 & 0.0004882812500 & 1197    & 0 \\
17 & 0.0006103515625 & 1170    & 0 \\
18 & 0.0007324218750 & 1175    & 0 \\
19 & 0.0008544921875 & 1239    & 0
\\
20 & 0.000976562500 & 2412    & 0 \\
21 & 0.001220703125 & 2345    & 0 \\
22 & 0.001464843750 & 2442    & 0 \\
23 & 0.001708984375 & 2439    & 0
\\
24 & 0.00195312500 & 4870    & 0 \\
25 & 0.00244140625 & 4778    & 0 \\
26 & 0.00292968750 & 4858    & 0 \\
27 & 0.00341796875 & 4814    & 0
\\
28 & 0.0039062500 & 9716    & 39080 \\
29 & 0.0048828125 & 9657    & 0 \\
30 & 0.0058593750 & 9801    & 0 \\
31 & 0.0068359375 & 9872    & 0
\\\hline\hline
\end{tabular}\quad
\begin{tabular}[t]{|rlrr|}
\multicolumn{1}{c}{~} & ~     & \multicolumn{2}{c}{\bfseries No.~of Observed Samples} \\ \cline{3-4}
\multicolumn{1}{c}{~} & \textbf{Float} & {Exact Method} & \multicolumn{1}{c}{Division Method} \Tstrut \\ \hline
32 & 0.007812500 & 19499   & 39273 \\
33 & 0.009765625 & 19440   & 0 \\
34 & 0.011718750 & 19771   & 39027 \\
35 & 0.013671875 & 19797   & 0
\\
36 & 0.01562500 & 38848   & 39205 \\
37 & 0.01953125 & 39399   & 39012 \\
38 & 0.02343750 & 39118   & 38920 \\
39 & 0.02734375 & 38898   & 39177
\\
40 & 0.0312500 & 78148   & 78228 \\
41 & 0.0390625 & 77943   & 77833 \\
42 & 0.0468750 & 77378   & 77494 \\
43 & 0.0546875 & 78279   & 77863
\\
44 & 0.062500 & 156466  & 156536 \\
45 & 0.078125 & 156020  & 156099 \\
46 & 0.093750 & 156554  & 155971 \\
47 & 0.109375 & 156471  & 157199
\\
48 & 0.12500 & 313092  & 312150 \\
49 & 0.15625 & 313487  & 312113 \\
50 & 0.18750 & 313109  & 311700 \\
51 & 0.21875 & 311921  & 312760
\\
52 & 0.2500 & 625607  & 625439 \\
53 & 0.3125 & 625895  & 624591 \\
54 & 0.3750 & 624381  & 626462 \\
55 & 0.4375 & 624009  & 624240
\\
56 & 0.500 & 1250741 & 1249658 \\
57 & 0.625 & 1249698 & 1251008 \\
58 & 0.750 & 1249268 & 1250410 \\
59 & 0.875 & 1248962 & 1249167 \\
\hline\hline
\end{tabular}
\end{adjustbox}
\end{table}

\clearpage

\section{Deferred Results in \S\ref{sec:binary}}
\label{appx:binary}

\LiftRVG*
\begin{proof}
Write $\dom(X) = \set{u_1, u_2, \ldots} \subset \set{0,1}^*$ and
define $I_i \defas [(0.u_i)_2, (0.u_i \bar{1})_2)$.
By the first property
in \cref{eq:isothujone} (prefix-free),
the intervals $I_1, I_2, \dots$ are pairwise disjoint.
By the second property in \cref{eq:isothujone} (exhaustive),
the union of these disjoint intervals forms a measure one subset of $[0,1]$, since
\begin{align}
\lambda(\cup_i I_i)
=\textstyle \sum_i \lambda(I_i)
&=\textstyle \sum_i \left[(0.u_i 1^\infty)_2 - (0.u_i 0^\infty)_2\right] \\
&=\textstyle \sum_i \left[((0.u_i 0^\infty)_2 + 2^{-\abs{u_i}}) - (0.u_i 0^\infty)_2\right]
=\textstyle \sum_i 2^{-\abs{u_i}} = 1.
\end{align}
Then for each $\omega \in [0,1]$,
let $X_\gamma(\omega) \defas \gamma(X(u_j))$ if there exists $j$ such that $\omega \in I_j$,
and arbitrarily otherwise, on the measure zero set $[0,1]\setminus \cup_i I_i$.
With this construction, for any $x \in \real$ we have
\begin{align}
\textstyle\Pr(X_\gamma = x)
&= \Pr\left(\bigcup_{i=1}^{\infty}\set{I_i \mid \gamma(X(u_i))=x}\right)
\\
&= \sum_{i=1}^{\infty} \lambda(I_i)\mathbf{1}[\gamma(X(u_i)) = x]
= \sum_{i=1}^{\infty} 2^{-\abs{u_i}}\mathbf{1}[\gamma(X(u_i)) = x].
\end{align}
\end{proof}

The remainder of this section proves the correctness (\cref{theorem:sampler-correct}) and
entropy-optimality (\cref{theorem:sampler-optimal}) of \cref{alg:sampler-opt},
which together imply \cref{theorem:sampler-opt}.

\begin{proposition}
  \label{proposition:bin-exp-zeros}
  For any $z,x,y \in [0,1]$ with $z = x + y$, let
  $z = (z_0.z_1z_2\ldots)_2$,
  $x = (x_0.x_1x_2\ldots)_2$, and
  $y = (y_0.y_1y_2\ldots)_2$
  be concise binary expansions.
  Suppose $\ell \geq 0$ is any index.
  If $x_\ell y_\ell z_\ell \in \set{000, 011, 101, 110}$ and $z_j = 0$ for all $j > \ell$,
  then $x_j y_j = 00$ for all $j > \ell$.
\end{proposition}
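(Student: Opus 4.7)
The plan is to translate the carry-structure claim into a short divisibility argument on the rational ``tails'' of $x$, $y$, $z$ past position $\ell$. Write $\tilde{x} \defas \sum_{j>\ell} x_j 2^{-j}$, and similarly $\tilde{y}$ and $\tilde{z}$, together with the complementary prefixes $x^{\le\ell} \defas \sum_{j\le\ell} x_j 2^{-j}$, $y^{\le\ell}$, $z^{\le\ell}$. The hypothesis $z_j = 0$ for all $j > \ell$ yields $\tilde{z} = 0$, so the identity $x+y=z$ rearranges to $\tilde{x} + \tilde{y} = z^{\le\ell} - x^{\le\ell} - y^{\le\ell}$.

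First I would establish a strict upper bound on the left side via conciseness: if $\tilde{x} = 2^{-\ell}$ then $x_j = 1$ for all $j > \ell$, contradicting conciseness, so $0 \le \tilde{x}, \tilde{y} < 2^{-\ell}$ and thus $0 \le \tilde{x} + \tilde{y} < 2^{1-\ell}$.

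Next I would show that the right side is an integer multiple of $2^{1-\ell}$ in each of the four allowed bit patterns. The condition $x_\ell y_\ell z_\ell \in \set{000, 011, 101, 110}$ is exactly $x_\ell + y_\ell - z_\ell \in \set{0, 2}$, i.e.\ no ``incoming carry'' is needed at position $\ell$. Splitting off the $\ell$th bit of each prefix gives $z^{\le\ell} - x^{\le\ell} - y^{\le\ell} = (z^{<\ell} - x^{<\ell} - y^{<\ell}) - (x_\ell + y_\ell - z_\ell)\cdot 2^{-\ell}$. The parenthesised difference is a $\mathbb{Z}$-linear combination of $\set{2^{-j} : 0 \le j < \ell}$, hence an integer multiple of $2^{1-\ell}$, and the correction term is either $0$ or $2^{1-\ell}$. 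Combined with the strict bound above, this forces $\tilde{x} + \tilde{y} = 0$; since both summands are nonnegative, $\tilde{x} = \tilde{y} = 0$, hence $x_j = y_j = 0$ for all $j > \ell$.

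The main obstacle is purely bookkeeping: verifying that conciseness sharply produces the strict inequality $\tilde{x} < 2^{-\ell}$, and that the four allowed patterns are exactly those for which the $\ell$th-bit correction is divisible by $2^{1-\ell}$. The excluded patterns $\set{001, 010, 100, 111}$ would contribute $\pm 2^{-\ell}$ to the right side, which is precisely the parity mismatch that would permit $\tilde{x} + \tilde{y} = 2^{-\ell}$, showing that the hypothesis is sharp and no further case analysis is needed.
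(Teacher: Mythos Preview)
Your proof is correct and is essentially the same argument as the paper's: both split $x,y,z$ into a prefix and a tail at position $\ell$, use conciseness to get a strict upper bound on the tails, and use a parity/divisibility constraint coming from the prefix (together with the hypothesis $x_\ell+y_\ell-z_\ell\in\{0,2\}$) to force the tails to vanish. The only cosmetic differences are that you split strictly after position $\ell$ and argue directly via divisibility by $2^{1-\ell}$, whereas the paper includes position $\ell$ in the tail and phrases the same constraint as $x'+y'\in\{z_\ell,z_\ell+2\}$ inside a proof by contradiction.
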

\begin{proof}
  Toward a contradiction, assume $x_k = 1$ or $y_k = 1$ for some $k > \ell$.
  Let $x' \defas (x_\ell. x_{\ell+1} \ldots)_2$,
  $y' \defas (y_\ell. y_{\ell+1} \ldots)_2$, and
  $z' \defas (z_\ell. z_{\ell+1} \ldots)_2 = z_\ell$.
  Then, $x_\ell + y_\ell = (x_\ell.0\ldots)_2 + (y_\ell.0\ldots)_2 < x' + y' < (x_\ell.1\ldots)_2 + (y_\ell.1\ldots)_2 = x_\ell + y_\ell + 2$,
  where the second $<$ is by the conciseness of $(x_\ell. x_{\ell+1} \ldots)_2$ and $(y_\ell. y_{\ell+1} \ldots)_2$.
  On the other hand, $x+y = z$ implies $x' + y' \in \set{z', z' + (10.0)_2} = \set{z_\ell, z_\ell + 2}$,
  where the $\in$ is by the conciseness of $(x_\ell. x_{\ell+1} \ldots)_2$ and $(y_\ell. y_{\ell+1} \ldots)_2$,
  and the $=$ is by assumption.
  Since $x_\ell + y_\ell \in \set{z_\ell, z_\ell+2}$ by assumption, we get a contradiction.
\end{proof}

\BitPatternsSimple*
\begin{proof}
For each $j \ge 0$, write $z_{j} = x_j + y_j + c_j \pmod{2}$,
where $c_j \in \set{0,1}$ is the ``carry'' bit at location $j$ of the binary
addition.
The three patterns in \cref{eq:bit-patterns-simple} correspond to
three different cases of $(x_\ell x_{\ell+1} \ldots)$ and $(y_\ell y_{\ell+1} \ldots)$.

\begin{enumerate}[wide,label={\textbf{\ref{eq:bit-patterns-simple}.\arabic*}},]

\item
Assume $x_\ell y_\ell \in \set{01, 10}$.
Since $x_\ell y_\ell z_\ell \in \set{011, 101}$ and $z_j = 0$ for all $j > \ell$,
\cref{proposition:bin-exp-zeros} implies that $x_j y_j = 00$ for all $j > \ell$.

\item
Assume $x_\ell y_\ell \in \set{00, 11}$ and $x_j y_j \neq 11$ for all $j > \ell$.
We argue that $x_j y_j \neq 00$ for all $j > \ell$.
For contradiction, assume $x_k y_k = 00$ for some $k > \ell$; let $k$ be the smallest such index.
Then, $c_{k-1} = 0$ holds. 
If $k = \ell+1$, then $x_\ell + y_\ell + c_\ell = 0 \allowbreak \neq 1 = z_\ell \pmod{2}$, a contradiction.
If $k > \ell+1$, then $x_{k-1} y_{k-1} = 11$ must hold by
$x_{k-1} + y_{k-1} + c_{k-1} \allowbreak = z_{k-1} = 0 \pmod{2}$, $c_{k-1}=0$, and the minimality of $k$.
This contradicts to $x_j y_j \neq 11$ for all $j > \ell$.

\item
Assume $x_\ell y_\ell \in \set{00, 11}$ and $x_k y_k = 11$ for some $k > \ell$.
Let $k$ be the smallest such index.
Since $x_k y_k z_k = 110$ and $z_j = 0$ for all $j > k$, \cref{proposition:bin-exp-zeros} implies that $x_j y_j = 00$ for all $j > k$.
By the minimality of $k$, it suffices show that $x_j y_j \neq 00$ for all $j \in (\ell, k)$.
Toward a contradiction, assume not.
Then, $x_{j'} y_{j'} z_{j'} = 000$ for some $j' \in (\ell, k)$.
Since $z_j = 0$ for all $j > j'$, \cref{proposition:bin-exp-zeros} implies that $x_j y_j = 00$ for all $j > j'$.
This contradicts to $x_k y_k = 11$ and $k > j'$.
\qedhere
\end{enumerate}
\end{proof}

\begin{remark}
In \cref{proposition:bit-patterns-simple}, if $z = 1$ and $x \in (0,1)$,
then $x_1y_1 \ne 00$ (otherwise $x + y < 1$).
\end{remark}

\BitPatterns*
\begin{proof}
For each $j \ge 0$, write $z_{j} = x_j + y_j + c_j \pmod{2}$,
where $c_j \in \set{0,1}$ is the ``carry'' bit at location $j$ of the binary
addition.
The three patterns in \cref{eq:bit-patterns} correspond to
three different cases of the final bits $x_{\ell'}y_{\ell'}$ and initial bits $x_{\ell}y_{\ell}$.

\begin{enumerate}[wide, label={\textbf{\ref{eq:bit-patterns}.\arabic*}},]

\item[\textbf{\ref{eq:bit-patterns}.2}]
Assume $x_{\ell'}y_{\ell'} = 11$.
We first prove that $c_{j} = 1$ and $x_{j}y_{j} \in \set{01,10}$ for each $j \in \set{\ell+1,\dots,\ell'-1}$, by induction.
For the base case ($j = \ell'-1$), $x_{\ell'}y_{\ell'} = 11$ implies $c_{\ell'-1} = 1$.
Since $x_{\ell'-1} + y_{\ell'-1} + c_{\ell'-1} = z_{\ell'-1} = 0 \pmod{2}$, we must have $x_{\ell'-1}y_{\ell'-1} \in \set{01,10}$.
For the inductive case ($\ell+1 \leq j < \ell'-1$), the induction hypothesis that $x_{j+1}y_{j+1} \in \set{01,10}$ and $c_{j+1} = 1$ implies $c_{j}=1$.
But $x_{j} + y_{j} + c_{j} = z_{j} = 0 \pmod{2}$ again implies $x_{j}y_{j} \in \set{01,10}$.

We next prove $x_\ell y_\ell \in \set{00, 11}$.
Since $x_\ell + y_\ell + c_\ell = z_\ell = 1 \pmod{2}$, it suffices to show $c_\ell=1$.
If $\ell' = \ell+1$ then $x_{\ell'} y_{\ell'} = 11$ implies this.
If $\ell' > \ell+1$ then $c_{\ell+1} = 1$ and $x_{\ell+1} y_{\ell+1} \in \set{01, 10}$ imply this.

\item[\textbf{\ref{eq:bit-patterns}.1}]
Assume $x_{\ell'}y_{\ell'} \neq 11$ and $x_{\ell}y_{\ell} \in \set{01,10}$.
From $x_{\ell'}y_{\ell'} \in \set{00, 01, 10}$, we have $c_{\ell'-1} = 0$.
This implies $\hat{z} = \hat{x} + \hat{y}$ for
$\hat{z} \defas (z_0. z_1 \ldots z_{\ell'-1} 0 \ldots)_2$,
$\hat{x} \defas (x_0. x_1 \ldots x_{\ell'-1} 0 \ldots)_2$, and
$\hat{y} \defas (y_0. y_1 \ldots y_{\ell'-1} 0 \ldots)_2$.
Since $z_j = 0$ for all $\ell < j < \ell'$, we can apply \cref{proposition:bit-patterns-simple} to $(\hat{z}, \hat{x}, \hat{y}, \ell)$.
In the theorem, only Pattern~\ref{eq:bit-patterns-simple}.1 matches because $x_\ell y_\ell z_\ell \in \set{011, 101}$.
This yields the desired pattern on $(x_{\ell+1} y_{\ell+1}, \ldots, x_{\ell'-1} y_{\ell'-1})$ described in Pattern~\ref{eq:bit-patterns}.1.

\item[\textbf{\ref{eq:bit-patterns}.3}]
Assume $x_{\ell'}y_{\ell'} \neq 11$ and $x_{\ell}y_{\ell} \in \set{00,11}$.
As in the previous case, we have $c_{\ell'-1} = 0$ and $\hat{z} = \hat{x} + \hat{y}$,
where $\hat{z}$, $\hat{x}$, and $\hat{y}$ are defined as before.
Since $z_j = 0$ for all $\ell < j < \ell'$, we can apply \cref{proposition:bit-patterns-simple} to $(\hat{z}, \hat{x}, \hat{y}, \ell)$.
In the theorem, only Pattern~\ref{eq:bit-patterns-simple}.3 matches because $x_\ell y_\ell z_\ell \in \set{001, 111}$
and $\hat{x}$ and $\hat{y}$ have trailing zeros in their binary expansions.
This yields the desired pattern on $(x_{\ell+1} y_{\ell+1}, \ldots, x_{\ell'-1} y_{\ell'-1})$ described in Pattern~\ref{eq:bit-patterns}.3.
\qedhere
\end{enumerate}
\end{proof}

\begin{proposition}
\label{proposition:loop-bound}
In $\hyperref[alg:sampler-opt]{\SampleOpt}(p,b,\ell)$, the random
number of loop iterations has a least upper bound
$j \ge 1$ if and only if $j$ is the smallest integer
such that $p^{b0}_{\ell+j}p^{b1}_{\ell+j} = 11$.
\end{proposition}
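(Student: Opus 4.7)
The argument will proceed by analyzing the while loop iteration by iteration: compute the exit probability at each step from the pair of bits being inspected, and then identify when this sequence forces termination with probability one.

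First I would record the per-iteration exit behavior. At iteration $i \ge 1$ of the loop, after $\ell$ has been incremented by $i$, the test inspects $a_0 \defas [p(b0)]_{\ell+i}$, $a_1 \defas [p(b1)]_{\ell+i}$, and a fresh flip $x$, exiting precisely when either $x = 0 \wedge a_0 = 1$ or $x = 1 \wedge a_1 = 1$. This yields per-iteration exit probability $q_i = 0$ if $a_0 a_1 = 00$, $q_i = 1/2$ if $a_0 a_1 \in \set{01, 10}$, and $q_i = 1$ if $a_0 a_1 = 11$. Since each call to $\Flip$ returns a fresh independent unbiased bit and the guards $a_0, a_1$ are deterministic functions of the binary expansions of $p(b0)$ and $p(b1)$, letting $N$ denote the random iteration count, the survival probability factors as $\Pr(N > j) = \prod_{i=1}^{j}(1 - q_i)$.

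For the forward direction, suppose $j$ is the smallest integer with $[p(b0)]_{\ell+j}[p(b1)]_{\ell+j} = 11$. Then $q_j = 1$, so $\Pr(N > j) = 0$, establishing $j$ as an almost-sure upper bound on $N$. For any $j' < j$, the minimality of $j$ ensures $q_i < 1$ for every $i \le j'$, giving $\Pr(N > j') > 0$; hence $j'$ is not an upper bound, making $j$ the least upper bound.

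The reverse direction follows by contraposition. If $j$ is the least upper bound, then $q_j = 1$ must hold, since otherwise $\Pr(N > j) = \prod_{i=1}^{j}(1-q_i) > 0$ would contradict the upper-bound property; this forces $[p(b0)]_{\ell+j}[p(b1)]_{\ell+j} = 11$. Moreover, no $j' < j$ can satisfy $[p(b0)]_{\ell+j'}[p(b1)]_{\ell+j'} = 11$, or else $j'$ would already be an upper bound and $j$ would not be least. I foresee no substantial obstacle beyond carefully checking the independence claim used to factor $\Pr(N > j)$, which is immediate from the fresh-bit semantics of $\Flip$.
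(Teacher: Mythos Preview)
Your proof is correct and follows essentially the same approach as the paper's, which also argues by case analysis on the bit pair $[p(b0)]_{\ell+i}[p(b1)]_{\ell+i}$ to determine that the loop exits with probability one at step $j$ exactly when the pair is $11$, and that $j$ is least exactly when no earlier step has this property. Your version is simply more explicit, spelling out the per-iteration exit probabilities and the survival-product formula, whereas the paper compresses this into two sentences; one minor presentational point is that in your reverse direction the claim $\prod_{i=1}^{j}(1-q_i)>0$ when $q_j<1$ implicitly relies on $q_i<1$ for $i<j$, which you only justify in the following sentence---swapping those two observations would make the logic airtight, but the argument is sound either way.
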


\begin{proof}
If the algorithm reaches step $j$, then it exits at this step with
probability one if and only if $p^{b0}_{\ell+j}p^{b1}_{\ell+j} = 11$
(by case analysis); so $j$ is an upper bound on the number of loop iterations.
It is the least upper bound if and only if the algorithm reaches step $j$ with
nonzero probability, which holds if and only if there is no $j' < j$
such that $p^{b0}_{\ell+j'}p^{b1}_{\ell+j'} = 11$ (otherwise it would exit at step $j'$).
\end{proof}

\begin{theorem}
\label{theorem:sampler-bits}
Let $p$ be a binary-coded probability distribution and fix an integer
$n \ge 1$.
Let $B_n$ denote the first $n$ bits (stored in variable $b$) generated by $\hyperref[alg:sampler-opt]{\SampleOpt}(p)$ and
$C_n$ the number of coin flips (stored in variable $\ell$) made up to and including
the point when the $n$-th bit is generated.
Then $\Pr(B_n=b, C_n=\ell) = 2^{-\ell}\mathbf{1}[[p(b)]_\ell=1]$.
\end{theorem}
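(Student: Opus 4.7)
The plan is to prove the statement by induction on $n$, using the number-theoretic results \cref{proposition:bit-patterns-simple,proposition:bit-patterns} and their structural corollaries \cref{lemma:ddg-structure-1,lemma:ddg-structure-2} to identify the trace of \cref{alg:sampler-opt} with a lazy exploration of an entropy-optimal DDG tree for $p_n$. Because $\Flip$ is the only source of randomness and all flips are independent fair bits, every execution trace that consumes exactly $\ell$ flips occurs with probability $2^{-\ell}$. By \cref{theorem:knuth-yao}, an entropy-optimal DDG tree for $p_n$ contains precisely $[p(b)]_\ell \in \set{0,1}$ leaves labeled $b$ at depth $\ell$. Therefore, it suffices to exhibit, at the end of the $n$-th recursion, a bijection between reachable terminal states $(b,\ell)$ and the labeled leaves of such an optimal tree; the claimed joint probability then follows.

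For the base case $n{=}1$, I would invoke \cref{proposition:bit-patterns-simple} with $z = p(\varepsilon) = 1$, $x = p(0)$, $y = p(1)$, and $\ell = 0$, yielding three exhaustive patterns for the binary expansions of $p(0)$ and $p(1)$. The first pattern corresponds to the early-exit block of \cref{alg:sampler-opt} that consumes zero flips and deterministically prints the unique nonzero outcome; the second and third patterns correspond to the while-loop, where at every subsequent level at most one of $[p(0)]_\ell, [p(1)]_\ell$ equals $1$, so exactly one of the two in-loop leaf conditions can fire. A direct counting of visited leaves against each pattern realizes the shapes of \cref{lemma:ddg-structure-1}, and the $n{=}1$ joint distribution falls out by inspection.

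For the inductive step, I would condition on the state $(B_{n-1}, C_{n-1}) = (b', \ell')$ produced by the first $n-1$ recursions (necessarily with $[p(b')]_{\ell'}=1$ by the induction hypothesis) and analyze the conditional distribution of the extra bit and flips added in the $n$-th recursion. The sharper result \cref{proposition:bit-patterns}, together with \cref{lemma:ddg-structure-2}, partitions the refinement of the leaf $b'$ at level $\ell'$ into three possible shapes that are mirrored one-to-one by the three control-flow branches of \cref{alg:sampler-opt}. Matching each refinement shape to its branch and reading off the number of extra loop iterations $k$ then yields a conditional probability of the form $\Pr(B_n{=}b'x, C_n{=}\ell'{+}k \mid B_{n-1}{=}b', C_{n-1}{=}\ell') = 2^{-k}\mathbf{1}[[p(b'x)]_{\ell'+k}=1]$, which telescopes against the inductive hypothesis to give the desired formula.

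The main obstacle will be the case $[p(b'0)]_{\ell'}[p(b'1)]_{\ell'} = 11$ of \cref{lemma:ddg-structure-2}, where both nominal children of $b'$ at level $\ell'$ actually correspond to leaves already produced by the refinement of an earlier ancestor labeled $b'$ at some level $\bar\ell < \ell'$. Here the bits of $p(b'0)$ and $p(b'1)$ at location $\ell'$ must be skipped by the algorithm and the accounting must ``carry'' to the next set bit in the expansions; verifying that the while-loop of \cref{alg:sampler-opt} does exactly this at level $\ell'$ (it does not terminate because the $x{=}0$ and $x{=}1$ leaf conditions both reference $[p(b'0)]_{\ell'}$ and $[p(b'1)]_{\ell'}$ after these bits have effectively been consumed by the previous refinement) and that the resulting traces still project onto leaves of the correct entropy-optimal tree for $p_n$ is the part of the case analysis where the full strength of \cref{proposition:bit-patterns} is essential and where \cref{proposition:bit-patterns-simple} alone would not suffice.
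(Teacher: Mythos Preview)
Your overall strategy---induction on $n$, invoking \cref{proposition:bit-patterns-simple} for the base case and \cref{proposition:bit-patterns} for the inductive step---is the same as the paper's. The Knuth--Yao framing is fine as motivation but is not doing real work: proving the theorem \emph{is} exhibiting the bijection with optimal DDG leaves, so you still have to carry out the direct case analysis the paper does.

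There is, however, a real gap in your telescoping step. The conditional formula you write,
\[
\Pr(B_n{=}b'x,\,C_n{=}\ell'{+}k \mid B_{n-1}{=}b',\,C_{n-1}{=}\ell') = 2^{-k}\,\mathbf{1}\bigl[[p(b'x)]_{\ell'+k}=1\bigr],
\]
is false as stated. From a fixed entry state $(b',\ell')$, the $n$-th recursion can reach only those levels $m\in[\ell',\ell'']$ where $\ell''$ is the \emph{next} index after $\ell'$ with $[p(b')]_{\ell''}=1$; any $1$-bit of $p(b'x)$ beyond $\ell''$ is reached from a \emph{different} entry state $(b',\ell'')$. If you sum your formula over all admissible $\ell'$ you overcount by a factor equal to the number of $1$-bits of $p(b')$ at or below $m$, and the telescoping does not close. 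The correct conditional has an additional reachability indicator, and you then need a separate partition argument: for each $m$ with $[p(b'x)]_m=1$, exactly one $\ell'$ with $[p(b')]_{\ell'}=1$ leads there. The paper proves precisely this coverage claim in the final paragraph of its inductive case (``Finally, we prove that every index $k$\dots''), using \cref{proposition:loop-bound} and the fact that no $1$-bits occur after a $11$ column in the patterns. Your proposal does not mention this step.

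A minor point on the $11$ case: the while-loop never inspects level $\ell'$ at all---it increments $\ell$ first. It is the two early-exit guards (\crefrange{algline:UnivSampler-If1-Start}{algline:UnivSampler-Print1-NoIter}) that examine $[p(b'0)]_{\ell'}$ and $[p(b'1)]_{\ell'}$, and they both fail when the pair is $11$ (each requires one bit $1$ and the other $0$), so the algorithm falls through to the loop starting at $\ell'+1$. Your conclusion that these bits are ``skipped'' is right, but the mechanism you describe is not.
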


\begin{proof}
The proof is by induction on the number of recursive calls $n$
to \hyperref[alg:sampler-opt]{\SampleOpt}.

\begin{enumerate}[wide=0pt,parsep=0pt,leftmargin=*]

\item[\textit{Base Case}.] Suppose $n=1$.
Consider the invocation $\hyperref[alg:sampler-opt]{\SampleOpt}(p, b=\varepsilon, \ell=0)$.

\begin{enumerate}[wide=\parindent, parsep=0pt, leftmargin=*]

\item[Case: $(p^0,p^1) = (1,0)$.]
In the concise binary expansion, we have
$(p^0_0,p^1_0) = (1,0)$.
Then 0 is generated with probability 1, the loop is never
entered, and exactly 0 flips are made.
An analogous result holds for the symmetric case $(p^0,p^1) = (0,1)$.

\item[Case: $p^0 \in (0,1)$.]
Fix $\ell \ge 1$ is such that $p^0_\ell = 1$.
Since $p^0+p^1=1$, $p^0_1p^1_1 \ne 00$.
\Cref{proposition:bit-patterns-simple} establishes that
$p^0_jp^1_j \in \set{01,10}$ for all $j=1,\dots,\ell-1$.
Let $x_j$ denote the outcome of $\Flip$ at iteration $j$.
At iteration $j$, the loop continues if and only if
$p^0_jp^1_j = 01 \wedge x_j = 0$ or $p^0_jp^1_j = 10 \wedge x_j = 1$.
Therefore, \cref{algline:UnivSampler-Print0} occurs after
exactly $i$ coin flips if and only if $x_j = p^0_j$
($j=1,\dots,\ell-1$) and $x_\ell = 1$; which by independence of
$\Flip$ occurs with probability $2^{-\ell}$.
Further, if $p^0_\ell = 0$, then \cref{algline:UnivSampler-Print0} at
iteration $\ell$ is never entered, so $0$ cannot possibly be generated with
exactly $\ell$ flips.
The case of $p^1 \in (0,1)$ is similar.
\end{enumerate}

\item[\textit{Inductive Case}.]
Assume the claim holds for $n > 1$.
Consider the recursive call $\hyperref[alg:sampler-opt]{\SampleOpt}(p, b, \ell)$,
where $b \in \set{0,1}^n$ is the string generated using exactly $\ell$ coin flips.
From the inductive hypothesis, $p^b_\ell = 1$ and the probability
of the current execution path is $2^{-\ell}$.
Let $\ell' = \min_{i > \ell}\set{p^b_i = 1}$ be the index (possibly
infinite) of the next 1 in
the expansion of $p^b$.
We analyze the event that the next generated bit at this stage
of the recursion is $0$,
which means the overall generated string is $b0$
(the analysis for $b1$ is entirely symmetric to $b0$).

\begin{enumerate}[wide=\parindent, parsep=0pt, leftmargin=*]

\item[Case: $\ell' < \infty$.]
Consider the three patterns from \cref{proposition:bit-patterns}
for the binary expansions $(p^{b0}_{\ell}, p^{b0}_{\ell+1},\dots)$
and $(p^{b1}_{\ell}, p^{b0}_{\ell+1},\dots)$.
Suppose Pattern \ref{eq:bit-patterns}.1 is matched.
If $p^{b0}_{\ell}p^{b1}_{\ell} = 10$,
then \cref{algline:UnivSampler-Print0-NoIter} ensures that $b0$ is
generated with 0 additional flips, so the path probability remains
$2^{-\ell}$ as desired.
Suppose Pattern \ref{eq:bit-patterns}.2 or
\ref{eq:bit-patterns}.2 are matched, so the loop is entered.
Let $k \in \set{1,\dots,\ell'-\ell}$ be such that $p^{b0}_{\ell+k}=1$.
By an analogous argument to the base case, the probability of exiting
at \cref{algline:UnivSampler-Print0} after $k$ loop iterations
(i.e., $k$ additional flips)
is precisely $2^{-k}$, which gives $\ell + k$ flips overall
with path probability $2^{-(\ell+k)}$.

\item[Case: $\ell' = \infty$.]
Pattern \ref{eq:bit-patterns-simple} shows the bit configuration.
If $p^{b0}_\ell p^{b1}_\ell \in \set{10,01}$ the loop is not entered,
as in Pattern \ref{eq:bit-patterns}.1.
Otherwise if $p^{b0}_\ell p^{b1}_\ell = 00$, loop is entered as in
Pattern \ref{eq:bit-patterns}.2 and \ref{eq:bit-patterns}.3.
\end{enumerate}

Finally, we prove that every index $k \in \mathbb{N}$ with $p^{b0}_k = 1$
has a positive probability of being encountered in an execution
path of a recursive call
$\hyperref[alg:sampler-opt]{\SampleOpt}(p,b,\ell)$, for some $\ell \ge 0$.
From the inductive hypothesis, every $\ell$ such that $z_\ell =1$
is encountered with probability $2^{-\ell} > 0$.
It suffices
to prove that all the 1 bits among $p^{b0}_{\ell}, p^{b0}_{\ell+1}, \ldots, p^{b0}_{\ell'}$
are selected with positive probability.

\begin{enumerate}[wide=\parindent, parsep=0pt, leftmargin=*]

\item[Case: the loop is entered.]
By \cref{proposition:loop-bound}
it suffices to prove there are no 1 bits after a loop index $j$
such that $p^{b0}_{j}p^{b1}_j = 11$.
If $\ell' < \infty$, apply \cref{proposition:bit-patterns}
(Pattern \ref{eq:bit-patterns}.3) to conclude.
If $\ell' = \infty$, apply \cref{proposition:bit-patterns-simple}
(Pattern \ref{eq:bit-patterns-simple}) to conclude.

\item[Case: the loop is not entered.]
If $\ell' < \infty$, apply
\cref{proposition:bit-patterns} (Pattern \ref{eq:bit-patterns}.1) to
conclude that all the skipped bits are 0.
If $\ell' = \infty$, we have $p^{b0}_\ell p^{b1}_\ell = 10$ (wlog).
If $p^{b0} = p^b = (1.000\ldots)_2$, then $\ell = 0$, and the
conclusion is immediate. Otherwise $p^{b0} < 1$.
Assume for a contradiction there is a minimal $j > \ell$ such that
$p^{b0}_j = 1$ and $p^{b0}_i = 0$ for $\ell < i < j$.
As $p^b_j = 0 = 1 + p^{b1}_j + c_j \pmod{2}$ it follows that $c_{j-1}=1$.
But $p^b_{j-1} = 0 = 0 + p^{b1}_{j-1} + c_{j-1} \pmod{2}$ so $c_{j-2} = 1$.
Apply repeatedly to conclude that $c_{\ell} = 1$.
But $p^b_\ell = 1 = 1 + 0 + 1 \pmod{2} = 0$, a contradiction.
\end{enumerate}

As $b$ was arbitrary, the statement holds for all $b' \in \set{0,1}^{n+1}$.
\qedhere
\end{enumerate}
\end{proof}

\begin{corollary}
\label{theorem:sampler-correct}
For each $n \ge 0$ and $b \in \set{0,1}^n$
the probability $\hyperref[alg:sampler-opt]{\SampleOpt}(p)$ generates a string matching
$b\bool^*$ is $p(b)$.
\end{corollary}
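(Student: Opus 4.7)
The plan is to obtain this corollary as an immediate marginalization of \cref{theorem:sampler-bits}. The event ``$\hyperref[alg:sampler-opt]{\SampleOpt}(p)$ generates a string matching $b\set{0,1}^*$'' is, by definition of $B_n$, precisely the event $\set{B_n = b}$. So the task reduces to showing $\Pr(B_n = b) = p(b)$ for every $n \ge 0$ and $b \in \set{0,1}^n$.

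First, I would decompose by the number of coin flips consumed up to the point the $n$-th bit is produced:
\begin{align*}
\Pr(B_n = b) \;=\; \sum_{\ell = 0}^{\infty} \Pr(B_n = b,\, C_n = \ell).
\end{align*}
This decomposition is valid provided \cref{alg:sampler-opt} produces at least $n$ bits almost surely. That fact follows from the discussion accompanying \cref{theorem:sampler-opt}: each recursive call either exits via one of the early leaf branches in \crefrange{algline:UnivSampler-If1-Start}{algline:UnivSampler-Print1-NoIter} with no flips, or enters the while-loop, in which Patterns~\ref{eq:bit-patterns-simple}.2--\ref{eq:bit-patterns-simple}.3 (or \ref{eq:bit-patterns}.2--\ref{eq:bit-patterns}.3) guarantee that exactly one of the two ``Leaf'' tests succeeds on each iteration, so the loop terminates with probability $1/2$ per step and hence almost surely.

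Next, substituting \cref{theorem:sampler-bits} into the sum gives
\begin{align*}
\Pr(B_n = b) \;=\; \sum_{\ell = 0}^{\infty} 2^{-\ell}\,\mathbf{1}\bigl[[p(b)]_\ell = 1\bigr]
\;=\; \sum_{\ell \,:\, [p(b)]_\ell = 1} 2^{-\ell}
\;=\; p(b),
\end{align*}
where the final equality is simply the concise binary expansion $p(b) = ([p(b)]_0.[p(b)]_1 [p(b)]_2 \ldots)_2$ reassembled from its digits. The edge case $n = 0$ is vacuous: $b = \varepsilon$, the matching event is certain, and $p(\varepsilon) = 1$ by \cref{definition:bcpd}.

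The only nontrivial ingredient is the almost-sure termination claim used to justify the decomposition; all the hard combinatorial work on refined DDG tree structure has already been absorbed into \cref{theorem:sampler-bits} via \cref{proposition:bit-patterns-simple,proposition:bit-patterns,lemma:ddg-structure-1,lemma:ddg-structure-2}. Consequently I do not expect any serious obstacle here—this corollary is essentially a one-line marginalization once the joint law of $(B_n, C_n)$ is in hand.
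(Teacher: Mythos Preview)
Your proposal is correct and follows essentially the same route as the paper: marginalize $\Pr(B_n=b)$ over $C_n$, substitute the joint law from \cref{theorem:sampler-bits}, and recognize the resulting sum as the binary expansion of $p(b)$. You add a bit more care than the paper's own proof (explicit a.s.\ termination justification and the $n=0$ edge case), but the argument is the same.
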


\begin{proof}
Define $B_n$ and $C_n$ as in the statement of \cref{theorem:sampler-bits}.
Then
\begin{align}
\Pr(B_n=b)
  = \sum_{\ell=0}^{\infty} \Pr(B_n=b, C_n=\ell)
  = \sum_{\ell=0}^{\infty} 2^{-\ell}\mathbf{1}[[p(b)]_\ell = 1]
  = \sum_{\ell=0}^{\infty} 2^{-\ell}[p(b)]_\ell
  = p(b).
\end{align}
\end{proof}

\begin{corollary}
\label{theorem:sampler-optimal}
For every binary-coded distribution $p$ and integer $n \ge 1$, \hyperref[alg:sampler-opt]{\SampleOpt}
defines an entropy-optimal generator for the discrete distribution
$P_n \defas \set{b \mapsto p(b) \mid b \in \bool^n}$, i.e., its average
number of random coin flips $C_n$ is minimal among all exact sampling
algorithms for $P_n$.
\end{corollary}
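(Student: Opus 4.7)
The plan is to combine the per-leaf probability identity of \cref{theorem:sampler-bits} with the \citeauthor{knuth1976} characterization of entropy-optimal DDG trees in \cref{theorem:knuth-yao}. First, I would identify the DDG tree $T_n$ induced by running \SampleOpt until $n$ output bits have been produced: each internal node of $T_n$ corresponds to a call to $\Flip$, its two children correspond to the outcomes $x=0$ and $x=1$, and each leaf is labeled by the $n$-bit string $b \in \bool^n$ returned at termination. To confirm that $T_n$ is a legitimate DDG tree per \cref{definition:random-variate-generator}, I would appeal to \cref{proposition:loop-bound} applied at each of the $n$ recursion levels to rule out infinite execution paths; prefix-freeness and exhaustiveness follow because \SampleOpt returns immediately upon reaching a leaf.

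Next, I would count the leaves of $T_n$. Every depth-$\ell$ leaf of any DDG tree is reached along a unique path of $\ell$ flips and hence contributes probability $2^{-\ell}$. By \cref{theorem:sampler-bits}, $\Pr(B_n = b,\, C_n = \ell) = 2^{-\ell}\mathbf{1}[[p(b)]_\ell = 1]$, so for each $b \in \bool^n$ and $\ell \ge 0$ the number of leaves of $T_n$ labeled $b$ at depth $\ell$ is precisely $[p(b)]_\ell$, namely the $\ell$-th bit of the concise binary expansion of $p(b) = P_n(b)$. This is term-by-term the characterization in \cref{theorem:knuth-yao}, which immediately yields that $\mathbb{E}[C_n]$ is minimal among all exact sampling algorithms for $P_n$.

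The substantive mathematical content has already been discharged in \cref{theorem:sampler-bits}; the only remaining check is that $T_n$ is genuinely a DDG tree (in particular, exhaustive). This follows from $\sum_{b\in\bool^n,\, \ell\ge 0} 2^{-\ell}\mathbf{1}[[p(b)]_\ell = 1] = \sum_{b\in\bool^n} p(b) = 1$, confirming that \SampleOpt halts with probability one after its $n$-th recursive call. The hardest part has really been establishing \cref{theorem:sampler-bits} itself via \cref{proposition:bit-patterns-simple,proposition:bit-patterns}; given that identity, \cref{theorem:sampler-optimal} is essentially a direct corollary of \cref{theorem:knuth-yao}.
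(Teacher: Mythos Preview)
Your proposal is correct and follows essentially the same route as the paper: both combine \cref{theorem:sampler-bits} with the \citeauthor{knuth1976} result, using the identity $\Pr(B_n=b,C_n=\ell)=2^{-\ell}[p(b)]_\ell$ to conclude optimality. The paper's proof is slightly more terse---it simply computes $\mathbb{E}[C_n]=\sum_{b}\sum_{\ell}\ell\cdot 2^{-\ell}[p(b)]_\ell$ and observes this equals the Knuth--Yao lower bound---whereas you phrase the same fact via the leaf-count characterization in \cref{theorem:knuth-yao}; your extra check of exhaustiveness is sound but not needed for the argument.
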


\begin{proof}
Define $B_n$ and $C_n$ as in the statement of \cref{theorem:sampler-bits}.
Then
\begin{align}
\mathbb{E}[C_n]
  = \sum_{\ell = 0}^{\infty} \ell \cdot \Pr(C_n = \ell)
  &= \sum_{\ell = 0}^{\infty} \ell \left(\sum_{b \in \bool^n}\Pr(C_n = \ell, B_n=b)\right)\\
  &= \sum_{\ell = 0}^{\infty} \ell \left(\sum_{b \in \bool^n}2^{-\ell}[p(b)]_\ell\right)\\
  &= \sum_{b\in\set{0,1}^n}\sum_{\ell=0}^\infty \ell \cdot 2^{-\ell}[p(b)]_\ell
\end{align}
which is precisely the Knuth-Yao lower bound.
It follows from \cite{knuth1976}
that $H(P_n) \le \mathbb{E}[C_n] \le H(P_n) + 2$, where $H$ is the binary
entropy function.
\end{proof}

\SamplerOpt*

\begin{proof}
This result is a restatement of \cref{theorem:sampler-correct,theorem:sampler-optimal}.
\end{proof}

\paragraph{Implementation of \cref{alg:sampler-opt} using Lazy Computation}

\Cref{alg:sampler-opt}, which generates a stream of random bits from a
binary-coded probability distribution, can be directly implemented in a
programming language that supports lazy computation.
\Cref{lst:sampler-opt-haskell} shows one such implementation in Haskell,
using a guarded recursive call
on \cref{lst:sampler-opt-haskell-recurse}
that occurs in the data constructor (\texttt{:}) for lists

\begin{listing}[t]
\captionsetup{skip=5pt}
\caption{Implementation of \cref{alg:sampler-opt}, which generates a stream
of random bits from a binary-coded probability distribution,
using lazy computation with a guarded recursive call
in the Haskell programming language.}
\label{lst:sampler-opt-haskell}
\begin{lstlisting}[style=CC,language=haskell,basicstyle=\ttfamily\footnotesize,frame=single]
type Bit = Int
type BinaryString = [Bit]
type BinaryCodedDist = [Bit] -> Float

-- Obtain a fair random bit from the entropy source.
randBit :: Bit

-- Extract bit from a float (@\color{codegray}\cref{alg:getbit-main,alg:exactsubtract1-main}@).
extractBit :: Float -> Int -> Bit

-- Generate the next random bit from the binary coded distribution.
-- Returns the generated bit and the updated number of calls to randBit.
generateNextBit :: (BinaryCodedDist) -> BinaryString -> Int -> (Bit, Int)
generateNextBit p b l = do
  let pb0 = p (0:b)
  let pb1 = p (1:b)
  let bit0 = extractBit pb0 l
  let bit1 = extractBit pb1 l
  case (bit0, bit1) of
    (1, 0)    -> (0, l)
    (0, 1)    -> (1, l)
    otherwise -> do
      loop l
      where loop j = do
              let x = randBit
              let j' = j + 1
              let bit0 = extractBit pb0 j'
              let bit1 = extractBit pb1 j'
              if      x == 0 && bit0 == 1 then (0, j')
              else if x == 1 && bit1 == 1 then (1, j')
              else    loop j'

-- Overall recursive function.
generate :: (BinaryCodedDist) -> BinaryString
generate p = generate_ [] 0
    where
        generate_ :: BinaryString -> Int -> BinaryString
        generate_ b l =
            let (x, l') = generateNextBit p b l
@\label{lst:sampler-opt-haskell-recurse}@            in x : (generate_ (x : b) (l+l'))
\end{lstlisting}
\end{listing}

\clearpage

\section{Deferred Results in \S\ref{sec:floating}}
\label{appx:floating}

\FloatingPointMonotonic*
\begin{proof}
By the definition of $<_{\floatEm}$, it suffices to show the following:
for all $b, b' \in \set{0,1}^{1+E+m}$ such that $b \neq b'$ and $\gamma_{\floatEm}(\phi_{\floatEm}(\set{y, y'})) \neq \set{0}, \set{\bot}$,
\begin{align}
  \label{eq:fp-monotone-claim}
  b <_{\rm dict} b' \iff \gamma_{\floatEm}(\phi_{\floatEm}(b)) <_{\realext} \gamma_{\floatEm}(\phi_{\floatEm}(b')).
\end{align}
Recall that $\phi_{\mathbb{M}_{E+m}}$ and $\phi_{\floatEm}$ are defined as follows:
for $b_0 \ldots b_{E+m} \in \set{0,1}^{1+E+m}$,
\begin{align}
  \phi_{\mathbb{M}_{E+m}}(b_0 \ldots b_{E+m})
  &=
  \begin{cases}
    1 \bar{b}_1 \ldots \bar{b}_{E+m}
    & \text{if } b_0 = 0
    \\
    0 b_1 \ldots b_{E+m}
    & \text{if } b_0 = 1,
  \end{cases}
  \\
  \phi_{\floatEm}(b_0 \ldots b_{E+m})
  &=
  \begin{cases}
    \phi_{\mathbb{M}_{E+m}}\big((b_0 \ldots b_{E+m})_2 + (2^m -1)\big)
    & \text{if } b_0 \ldots b_{E+m} \leq_{\rm dict} 11^E 0^m
    \\
    b_0 \ldots b_{E+m}
    & \text{if } b_0 \ldots b_{E+m} >_{\rm dict} 11^E 0^m.
  \end{cases}
\end{align}
Using this mapping, we sequentially compute
two bit strings $(b)_2 + (2^m -1), \phi_{\floatEm}(b) \in \set{0,1}^{1+E+m}$
and one extended real $\gamma_{\floatEm}(\phi_{\floatEm}(b)) \in \realext$
for each $b \in \set{0,1}^{1+E+m}$:
\newcommand{\myvdots}{$\smash{\raisebox{-.3ex}{$\vdots$}}$\vphantom{$1$}}
\begin{center}
\vspace{3pt}
\begin{tabular}[c]{@{}c@{\,}|@{\,}c@{\,}|@{\,}c@{}}
  \multicolumn{3}{c}{\smash{$b$}}
  \\[6pt]
  $b_0$ & $b_1 \ldots b_E$ & $b_{E+1} \ldots b_{E+m}$
  \\ \hline
  $0$ & $00 \ldots 00$ & $00 \ldots 000$
  \\[3pt]
  \myvdots & \myvdots & \myvdots
  \\[3pt]
  $0$ & $11 \ldots 11$ & $00 \ldots 000$
  \\ \hline
  $0$ & $11 \ldots 11$ & $00 \ldots 001$
  \\[3pt]
  \myvdots & \myvdots & \myvdots
  \\[3pt]
  $1$ & $11 \ldots 10$ & $00 \ldots 001$
  \\ \hline
  $1$ & $11 \ldots 10$ & $00 \ldots 010$
  \\[3pt]
  \myvdots & \myvdots & \myvdots
  \\[3pt]
  $1$ & $11 \ldots 11$ & $00 \ldots 000$
  \\ \hline \hline
  $1$ & $11 \ldots 11$ & $00 \ldots 001$
  \\[3pt]
  \myvdots & \myvdots & \myvdots
  \\[3pt]
  $1$ & $11 \ldots 11$ & $11 \ldots 111$
\end{tabular}
\quad
\begin{tabular}[c]{@{}c@{\,}|@{\,}c@{\,}|@{\,}c@{}}
  \multicolumn{3}{c}{\smash{$(b)_2 + (2^m-1)$}}
  \\[6pt]
  $b'_0$ & $b'_1 \ldots b'_E$ & $b'_{E+1} \ldots b'_{E+m}$
  \\ \hline
  $0$ & $00 \ldots 00$ & $11 \ldots 111$
  \\[3pt]
  \myvdots & \myvdots & \myvdots
  \\[3pt]
  $0$ & $11 \ldots 11$ & $11 \ldots 111$
  \\ \hline
  $1$ & $00 \ldots 00$ & $00 \ldots 000$
  \\[3pt]
  \myvdots & \myvdots & \myvdots
  \\[3pt]
  $1$ & $11 \ldots 11$ & $00 \ldots 000$
  \\ \hline
  $1$ & $11 \ldots 11$ & $00 \ldots 001$
  \\[3pt]
  \myvdots & \myvdots & \myvdots
  \\[3pt]
  $1$ & $11 \ldots 11$ & $11 \ldots 111$
  \\\hline \hline
  \multicolumn{3}{c}{\vphantom{$1$}}
  \\[3pt]
  \multicolumn{3}{c}{\vphantom{\myvdots}unimportant}
  \\[3pt]
  \multicolumn{3}{c}{\vphantom{$1$}}
\end{tabular}
\quad
\begin{tabular}[c]{@{}c@{\,}|@{\,}c@{\,}|@{\,}c@{}}
  \multicolumn{3}{c}{\smash{$\phi_{\floatEm}(b)$}}
  \\[6pt]
  $s$ & $e_E \ldots e_1$ & $f_1 \ldots f_m$
  \\ \hline
  $1$ & $11 \ldots 11$ & $00 \ldots 000$
  \\[3pt]
  \myvdots & \myvdots & \myvdots
  \\[3pt]
  $1$ & $00 \ldots 00$ & $00 \ldots 000$
  \\ \hline
  $0$ & $00 \ldots 00$ & $00 \ldots 000$
  \\[3pt]
  \myvdots & \myvdots & \myvdots
  \\[3pt]
  $0$ & $11 \ldots 11$ & $00 \ldots 000$
  \\ \hline
  $0$ & $11 \ldots 11$ & $00 \ldots 001$
  \\[3pt]
  \myvdots & \myvdots & \myvdots
  \\[3pt]
  $0$ & $11 \ldots 11$ & $11 \ldots 111$
  \\ \hline \hline
  $1$ & $11 \ldots 11$ & $00 \ldots 001$
  \\[3pt]
  \myvdots & \myvdots & \myvdots
  \\[3pt]
  $1$ & $11 \ldots 11$ & $11 \ldots 111$
\end{tabular}
\quad
\begin{tabular}[c]{@{}c@{}}
  \multicolumn{1}{c}{\smash{$\gamma_{\floatEm}(\phi_{\floatEm}(b))$}}
  \\[6pt]
  $r \in \realext$
  \\ \hline
  $-\infty$
  \\[3pt]
  \myvdots
  \\[3pt]
  $0$
  \\ \hline
  $0$
  \\[3pt]
  \myvdots
  \\[3pt]
  $+\infty$
  \\ \hline
  $\bot$
  \\[3pt]
  \myvdots
  \\[3pt]
  $\bot$
  \\ \hline \hline
  $\bot$
  \\[3pt]
  \myvdots
  \\[3pt]
  $\bot$
\end{tabular}
\vspace{6pt}
\end{center}
Using this calculation and the definition of $\gamma_{\floatEm}$ and $<_{\realext}$,
one can check that \cref{eq:fp-monotone-claim} indeed holds.
\end{proof}

\CdaPcbd*
\begin{proof}
We first argue that $p_F$ is a binary-coded distribution.
If $\abs{b} = 0$, then \cref{eq:trefle-1} gives
\begin{align}
p_F(\varepsilon) = F(1^n) -_\real F((0^n)^-) = 1 - 0 = 1.
\end{align}
If $1 \le \abs{b} \le n$, then \cref{eq:trefle-1} again gives
\begin{align}
p_F(b0) + p_F(b1)
  &=
    F(b01^m)
    - F((b00^m)^{-})
    + F(b11^m)
    - F(\underbrace{(b10^m)^{-}}_{=b01^m})
    \label{eq:decadally-1}
    \\
  &= F(b11^m) - F((b0 0^m)^{-})
  = F(b1^{m+1}) - F((b0^{m+1})^{-})
  = p_F(b),
  \label{eq:decadally-2}
\end{align}
where $m \defas n - \abs{b} - 1$.
Next consider $p_F(bb')$ where $\abs{b} = n$ and $b' \in \bool^+$.
If $b' = 0\ldots{0}$,
\cref{eq:trefle-2} gives
$p(bb'0) + p(bb'1) = p(b) + 0 = p(b) = p(bb')$.
Otherwise $b' \ne 0\ldots{0}$
so \cref{eq:trefle-2} gives $p(bb'0) + p(bb'1) = 0 + 0 = 0 = p(bb')$.

Finally, we prove that $P_F(b) = p_F(b)$ for all $b \in \bool^n$,
where $P_F$.
From \cref{eq:trefle-2}, if $b = 0^n$ then $p_F(b) = F(b) = P_F(b)$.
Otherwise
$p_F(b) = F(b) - F(\mathrm{pred}_\mathrm{dict}(b)) = P_F(b)$,
which follows from $\mathrm{pred}_\mathrm{dict} = \mathrm{pred}_{\mathbb{U}_n}$.
\end{proof}

\CdaPcbdUint*
\begin{proof}
As $\phi_{\mathbb{U}_n} = \mathrm{id}$, for the minimal element:
$\widetilde{F}(\phi_{\mathbb{U}_n}(0^n)) = \widetilde{F}(0^n) = F(\phi_S(0^n)) \ge 0$.
For the maximal element:
$\widetilde{F}(\phi_{\mathbb{U}_n}(1^n))= \widetilde{F}(1^n) = F(\phi_S(1^n)) = 1$.
Since $F$ is monotonically non-decreasing and $\phi$ monotonically
increasing, so is their composition $\widetilde{F} \equiv F \circ \phi_S$.
That is, if $x <_{\mathbb{U}_n} x'$, then \cref{eq:snobbing-1} implies that
$\phi_S(x) <_{\floatEm} \phi_S(x')$, so
$\widetilde{F}(x) = F(\phi_S(x)) \le_{\floatEm} F(\phi_S(x')) = \widetilde{F}(x')$.
\end{proof}

\ExactSubtractOneProp*

\begin{proof}
  \setcounter{MaxMatrixCols}{20}
  \setlength{\arraycolsep}{3pt}
  We prove the claims for $x < 1$ and $x = 1$ separately.

  First, assume $x<1$. Then, $(\hat{e}, f)$ and $(\hat{e}', f')$
  computed in \crefrange{algline:exactsubtract1-main-ef-st}{algline:exactsubtract1-main-ef-ed} of \Cref{alg:exactsubtract1-main}
  satisfy
  \begin{align}
    \label{eq:exactsubtract1-main-f}
    \hat{e}, \hat{e}' &\leq -1,
    &
    f &= (f_0 \ldots f_m)_2,
    &
    f' &= (f'_0 \ldots f'_m)_2,
    &
    x &= 2^{\hat{e}} \cdot (f / 2^m),
    &
    x' &= 2^{\hat{e}'} \cdot (f' / 2^m),
  \end{align}
  where $f_0 \defas \mathbf{1}[e > 0]$ and $f'_0 \defas \mathbf{1}[e' > 0]$.
  Here, $\leq$ is by $x, x' \in [0, 1)$,
  the first two $=$ are by the definition of $f_0$ and $f'_0$,
  and the last two $=$ are by the definition of $\floatEm$ and $x, x' \in [0,\infty) \cap \floatEm$.
  We note that $f_0 = 1$ if $x$ is a normal float, and $f_0 = 0$ if $x$ is a subnormal float;
  the same hold for $f'_0, x'$.

  Using the previous observation, we show \Cref{eq:exactsubtract1-main} for $x<1$ by case analysis on $\hat{e} - \hat{e}'$.
  \begin{enumerate}[label=Case~\arabic*.,wide,leftmargin=*]
  \item ($\hat{e}- \hat{e}' \leq m+1$).
    In this case, we have
    \begin{align}
      \nonumber
      \\*[10pt]
      \label{eq:exactsubtract1-main-case1}
      \!\!\!\!
      \begin{NiceMatrix}[l]
        ~
        & \Block[r]{}{x} & =\;\;\; & 0.
        & 0 & \ldots & 0
        & \Block[c,borders={left,right,top,bottom}]{1-6}{} f_0 & \ldots & \ldots & \ldots & \ldots & f_m 
        & 0 & \ldots & 0
        \\
        -
        & \Block[r]{}{x'} & =\;\;\; & 0.
        & 0 & \ldots & 0
        & 0 & \ldots & 0
        & \Block[c,borders={left,right,top,bottom}]{1-3}{} f'_0 & \ldots & f'_{m - (\hat{e}-\hat{e}')}
        & \Block[c,borders={left,right,top,bottom}]{1-3}{} f'_{m - (\hat{e}-\hat{e}') +1} & \ldots & f'_m
        \\
        \\[23pt] \cline{2-16}
        \\[-20pt]
        \\
        ~
        & \Block[r]{2-1}{x-x'} & \Block{2-1}{=\;\;\;} & 0.
        & 0 & \ldots & 0
        & \Block[c,draw=black]{1-6}{f-f'_\mathrm{hi}} & & & & &
        & \Block[c,draw=black]{1-3}{0} & &
        & \Block[l]{}{\;\;\text{if } f'_\mathrm{lo} = 0}
        \\
        & & & 0.
        & 0 & \ldots & 0
        & \Block[c,draw=black]{1-6}{f-f'_\mathrm{hi}-1} & & & & &
        & \Block[c,draw=black]{1-3}{2^{\hat{e}-\hat{e}'} - f'_\mathrm{lo}} & &
        & \Block[l]{}{\;\;\text{if } f'_\mathrm{lo} > 0}
        \CodeAfter
        \SubMatrix\{{6-4}{7-4}.
        \OverBrace [yshift=1pt]{1-5} {1-7} {\scriptstyle -\hat{e}-1 \text{ bits}}
        \OverBrace [yshift=1pt]{1-8} {1-13}{\substack{{\displaystyle f \in \mathbb{Z}} \\[1pt] m+1 \text{ bits}}}
        \UnderBrace[yshift=1pt]{2-5} {2-10}{\scriptstyle -\hat{e}'-1 \text{ bits}}
        \UnderBrace[yshift=1pt]{2-11}{2-13}{\substack{(m+1)-(\hat{e}-\hat{e}') \text{ bits} \\[1pt] {\displaystyle f'_\textrm{hi} \in \mathbb{Z}}}}
        \UnderBrace[yshift=1pt]{2-14}{2-16}{\substack{\hat{e}-\hat{e}' \text{ bits} \\[1pt] {\displaystyle f'_\textrm{lo} \in \mathbb{Z}}}}
        \UnderBrace[yshift=1pt]{7-5} {7-7} {\scriptstyle -\hat{e}-1 \text{ bits}}
        \UnderBrace[yshift=1pt]{7-8} {7-13}{\scriptstyle m+1 \text{ bits}}
        \UnderBrace[yshift=1pt]{7-14}{7-16}{\scriptstyle \hat{e}-\hat{e}' \text{ bits}}
      \end{NiceMatrix}
      \!\!\!\!
      \\[4pt] \nonumber
    \end{align}
    Here, the equalities on $x$, $x'$, $f$ are by \Cref{eq:exactsubtract1-main-f},
    and the equalities on $f'_\mathrm{hi}$, $f'_\mathrm{lo}$ are by
    \crefrange{algline:exactsubtract1-main-fpdecomp-st}{algline:exactsubtract1-main-fpdecomp-ed} of \cref{alg:exactsubtract1-main}
    and the following: $0 \leq \hat{e} - \hat{e}' \leq m + 1$ (by $x \geq x'$) and $\hat{e} - \hat{e}' \leq E + m$ (by $E \geq 1$).
    Further, the equality on $x-x'$ is by earlier equalities and the following:
    if $f'_\mathrm{lo} > 0$, then
    $f - f'_\mathrm{hi} \geq 1$ (since $f_0 = 1$ and $\hat{e} - \hat{e}' \geq 1$ must hold) 
    and $0 \leq 2^{\hat{e} - \hat{e}'} - f'_\mathrm{lo} < 2^{\hat{e} - \hat{e}'}$.

    Based on \cref{eq:exactsubtract1-main-case1}, we can check that $\beta$ computed in
    \crefrange{algline:exactsubtract1-main-final-st}{algline:exactsubtract1-main-final-ed} of \cref{alg:exactsubtract1-main}
    satisfies the following:
    $(n_1, n_\mathrm{hi}, n_2, n_\mathrm{lo}) = (-\hat{e}-1, m+1, 0, \hat{e}-\hat{e}')$
    are the numbers of bits shown in the last line of \cref{eq:exactsubtract1-main-case1};
    $(b_1, b_2) = (0, \mathbf{1}[f'_\mathrm{lo}>0])$ are the bits of $x-x'$ in the $n_1$ and $n_2$ parts;
    and $(g_\mathrm{hi}, g_\mathrm{lo})$ are the values of $x-x'$ in the $n_\mathrm{hi}$ and $n_\mathrm{lo}$ parts
    (i.e., the boxed values in the last line of \cref{eq:exactsubtract1-main-case1}).
    Hence, the last line of \cref{eq:exactsubtract1-main-case1} implies \cref{eq:exactsubtract1-main}, as desired.

  \item ($\hat{e}- \hat{e}' > m+1$).
    In this case, we have
    \begin{align}
      \nonumber
      \\[10pt]
      \label{eq:exactsubtract1-main-case2}
      \begin{NiceMatrix}[l]
        ~
        & \Block[r]{}{x} & =\;\;\; & 0.
        & 0 & \ldots & 0
        & \Block[c,borders={left,right,top,bottom}]{1-3}{} f_0 \;\; & \ldots\;\; & f_m \;\;
        & \;\; 0 \;\; & \; \ldots \; & \;\; 0 \;\;
        & 0 & \ldots & 0
        \\
        -
        & \Block[r]{}{x'} & =\;\;\; & 0.
        & 0 & \ldots & 0
        & 0 & \ldots & 0
        & \;\; 0 \;\; & \; \ldots \; & \;\; 0 \;\;
        & \Block[c,borders={left,right,top,bottom}]{1-3}{} f'_0 \;\; & \ldots\;\; & f'_m\;\;
        \\
        \\[23pt] \cline{2-16}
        \\[-20pt]
        \\
        ~
        & \Block[r]{2-1}{x-x'} & \Block{2-1}{=\;\;\;} & 0.
        & 0 & \ldots & 0
        & \Block[c,draw=black]{1-3}{f} & &
        & \;\; 0 \;\; & \;\ldots\; & \;\; 0 \;\;
        & \Block[c,draw=black]{1-3}{0} & &
        & \Block[l]{}{\;\;\text{if } f'_\mathrm{lo} = 0}
        \\
        ~
        & & & 0.
        & 0 & \ldots & 0
        & \Block[c,draw=black]{1-3}{f - 1} & &
        & \;\; 1 \;\; & \;\ldots\; & \;\; 1 \;\;
        & \Block[c,draw=black]{1-3}{2^{m+1} - f'_\mathrm{lo}} & &
        & \Block[l]{}{\;\;\text{if } f'_\mathrm{lo} > 0}
        \CodeAfter
        \SubMatrix\{{6-4}{7-4}.
        \OverBrace [yshift=1pt]{1-5} {1-7} {\scriptstyle -\hat{e}-1 \text{ bits}}
        \OverBrace [yshift=1pt]{1-8} {1-10}{\substack{{\displaystyle f \in \mathbb{Z}} \\[1pt] m+1 \text{ bits}}}
        \UnderBrace[yshift=1pt]{2-5} {2-13}{\scriptstyle -\hat{e}'-1 \text{ bits}}
        \UnderBrace[yshift=1pt]{2-14}{2-16}{\substack{m+1 \text{ bits} \\[1pt] {\displaystyle f'_\textrm{lo} \in \mathbb{Z}}}}
        \UnderBrace[yshift=1pt]{7-5} {7-7} {\scriptstyle -\hat{e}-1 \text{ bits}}
        \UnderBrace[yshift=1pt]{7-8} {7-10}{\scriptstyle m+1 \text{ bits}}
        \UnderBrace[yshift=1pt]{7-11}{7-13}{\scriptstyle (\hat{e}-\hat{e}')-(m+1) \text{ bits}}
        \UnderBrace[yshift=1pt]{7-14}{7-16}{\scriptstyle m+1 \text{ bits}}
      \end{NiceMatrix}
      \\[4pt] \nonumber
    \end{align}
    Here, the equalities on $x$, $x'$, $f$ is by the same argument in the previous case,
    the equality on $f'_\mathrm{lo}$ is by \cref{algline:exactsubtract1-main-fpdecomp-ed} of \Cref{alg:exactsubtract1-main}
    and $\hat{e} - \hat{e}' > m + 1$,
    and the equality on $x-x'$ is by earlier equalities.
    We note that $f'_\mathrm{hi}$ computed in \cref{algline:exactsubtract1-main-fpdecomp-st} of \Cref{alg:exactsubtract1-main}
    satisfies  $f'_\mathrm{hi} = 0$, because $\hat{e} - \hat{e}' \geq m + 1$ and $E+m \geq m+1$.

    Based on \cref{eq:exactsubtract1-main-case2}, we can check that $\beta$ computed in
    \crefrange{algline:exactsubtract1-main-final-st}{algline:exactsubtract1-main-final-ed} of \Cref{alg:exactsubtract1-main}
    satisfies the following:
    $(n_1, n_\mathrm{hi}, n_2, n_\mathrm{lo}) = (-\hat{e}-1, m+1, (\hat{e}-\hat{e}')-(m+1), m+1)$
    are the numbers of bits shown in the last line of \cref{eq:exactsubtract1-main-case2};
    $(b_1, b_2) = (0, \mathbf{1}[f'_\mathrm{lo}>0])$ are the bits of $x-x'$ in the $n_1$ and $n_2$ parts;
    and $(g_\mathrm{hi}, g_\mathrm{lo})$ are the values of $x-x'$ in the $n_\mathrm{hi}$ and $n_\mathrm{lo}$ parts
    (since $f'_\mathrm{hi} = 0$).
    Hence, the last line of \cref{eq:exactsubtract1-main-case2} implies \cref{eq:exactsubtract1-main}, as desired.
  \end{enumerate}

  We now show the remaining claims for $x < 1$:
  (i) $b'$ is the $\ell$-th digit of $x-x'$ in binary expansion, and
  (ii) all intermediate values appearing in
  \hyperref[alg:exactsubtract1-main]{\ExactSubtractOne} and \hyperref[alg:getbit-main]{\GetBit}
  are representable as $(m+1)$-bit (signed or unsigned) integers.
  The claim (i) follows immediately from \cref{eq:exactsubtract1-main} and the definition of $\hyperref[alg:getbit-main]{\GetBit}(\beta,\ell)$.
  The claim (ii) holds as follows: we have
  \begin{alignat}{3}
    \label{eq:exactsubtract1-main-range-e}
    e, e', \hat{e}, \hat{e}'
    &  \;\in\; [-2^{E-1}+2, 2^E-1]
    && \;\subseteq\; [-2^{E-1}, 2^E - 1],
    \\
    \label{eq:exactsubtract1-main-range-f}
    f, f', f'_\mathrm{hi}, f'_\mathrm{lo}, g_\mathrm{hi}, g_\mathrm{lo}
    &  \;\in\; [0, 2^{m+1}-1]
    && \;\subseteq\; [0, 2^{E+m} - 1],
    \\
    \label{eq:exactsubtract1-main-range-n}
    n_1 + n_\mathrm{hi} + n_2 + n_\mathrm{lo}
    &  \;\in\; [0, -\hat{e}'+m]
    && \;\subseteq\; [0, 2^{E+m} - 1],
  \end{alignat}
  which implies that all the above values are representable as $(1+E+m)$-bit signed integers. Here,
  \cref{eq:exactsubtract1-main-range-f} is by $E \geq 1$ and \cref{eq:exactsubtract1-main-range-n} is by
  $-\hat{e}' + m \leq (2^{E-1}-2) + m \leq 2^E + 2^m - 2 \leq 2^{E+m}-1$.

  Lastly, we consider the remaining case: $x = 1$.
  In this case, \cref{eq:exactsubtract1-main-case1,eq:exactsubtract1-main-case2} still hold except that $-\hat{e}-1 = -1$ is now less than $0$;
  and $g_\mathrm{hi} < 2^m$ holds since $x-x' < 1$ (by assumption).
  From these,
  \begin{align}
    x -_\real x'
    &= \Big(0.\,
        \underset{n_\mathrm{hi} \text{ bits}}{\boxed{\,\;\; g_\mathrm{hi} \vphantom{b_1 g_\mathrm{hi}} \,\;\;}}
        \underset{n_2 \text{ bits}}{\boxed{b_2 \ldots b_2       \vphantom{b_1 g_\mathrm{hi}}       }}
        \underset{n_\mathrm{lo} \text{ bits}}{\boxed{\,\;\; g_\mathrm{lo} \vphantom{b_1 g_\mathrm{hi}} \,\;\;}}
        \,
        \Big)_2
  \end{align}
  Since \crefrange{algline:exactsubtract1-main-final-st}{algline:exactsubtract1-main-final-ed} of \cref{alg:exactsubtract1-main}
  compute $(n_1, n_\mathrm{hi}) = (0, m)$, the output of \cref{alg:exactsubtract1-main} corresponds to the above equation.
  This implies that all the claims still hold for $x=1$. 
\end{proof}

\begin{proposition}[name=,restate=EntropyExtend]
\label{proposition:entropy-extend}
If $p \defas (p_1, \dots, p_{n-1}, p_n)$ and $p' \defas (p_1, \ldots, p_{n-1}, p'_n, p'_{n+1})$
are discrete probability distributions with $p_n = p'_n + p'_{n+1}$ and $p'_n, p'_{n+1} > 0$,
then $H(p') > H(p)$.
\end{proposition}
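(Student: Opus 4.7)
The plan is to show that $H(p') - H(p)$ equals a strictly positive quantity by isolating the contribution of the split and recognizing it as a rescaled binary entropy.

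First, I would write out the difference using the common terms $p_1, \ldots, p_{n-1}$, which cancel:
\begin{align*}
H(p') - H(p) = -p'_n \log_2 p'_n - p'_{n+1} \log_2 p'_{n+1} + p_n \log_2 p_n.
\end{align*}
Next, using $p_n = p'_n + p'_{n+1} > 0$, I would introduce $q \defas p'_n / p_n \in (0,1)$ so that $p'_{n+1}/p_n = 1-q \in (0,1)$, and substitute $p'_n = q p_n$, $p'_{n+1} = (1-q) p_n$ into the expression. A short calculation (expanding $\log_2(q p_n) = \log_2 q + \log_2 p_n$ and similarly for $1-q$) collapses the $\log_2 p_n$ terms and yields
\begin{align*}
H(p') - H(p) = p_n \cdot \bigl( -q \log_2 q - (1-q) \log_2 (1-q) \bigr) = p_n \cdot h(q),
\end{align*}
where $h$ is the binary entropy function.

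Finally, I would conclude by noting that $h(q) > 0$ for all $q \in (0,1)$ (since both $-q \log_2 q$ and $-(1-q)\log_2(1-q)$ are strictly positive on this range) and that $p_n = p'_n + p'_{n+1} > 0$ by hypothesis, so the product $p_n \cdot h(q)$ is strictly positive, giving $H(p') > H(p)$. There is no real obstacle here; the only subtlety is making sure the hypotheses $p'_n, p'_{n+1} > 0$ are used to guarantee $q \in (0,1)$ strictly (as opposed to at the endpoints where $h$ vanishes), which is exactly what rules out the degenerate case where the split is trivial.
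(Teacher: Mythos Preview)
Your proof is correct and takes essentially the same approach as the paper: both compute $H(p') - H(p) = -p'_n\log_2 p'_n - p'_{n+1}\log_2 p'_{n+1} + p_n\log_2 p_n$ and rearrange it into an expression that is manifestly positive. The paper groups terms as $p'_n[\log_2 p_n - \log_2 p'_n] + p'_{n+1}[\log_2 p_n - \log_2 p'_{n+1}]$, which is exactly your $p_n\, h(q)$ unpacked; your naming of the result as a rescaled binary entropy is a pleasant extra observation but not a different argument.
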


\begin{proof}
\begin{align}
&H(p') - H(p) \notag\\
&= -p'_{n}\log_2(p'_{n}) -p'_{n+1}\log_2(p'_{n+1}) + p_n\log_2(p_n) \\
&= -p'_{n}\log_2(p'_{n}) -p'_{n+1}\log_2(p'_{n+1}) + (p'_n+p'_{n+1})\log_2(p'_{n}+p'_{n+1}) \\
&= p'_n[\log_2(p'_{n}+p'_{n+1})-\log_2(p'_n)] + p'_{n+1}[\log_2(p'_{n}+p'_{n+1})-\log_2(p'_{n+1})]
> 0.
\end{align}
\end{proof}

We next establish \cref{corollary:cost-sampler-opt-impl}, whose proof
rests on \cref{prop:cda-max-entropy,prop:max-entropy-discrete}.

\begin{theorem}[name=,restate=CdaMaxEntropy]
\label{prop:cda-max-entropy}
Let $X \subset \realext$ and $Y \subset [0,1]$ be finite sets with
$|Y| \leq |X|+1$ and $\set{0,1} \subset Y$.
Define $\mathcal{F}(X, Y) \subset \realext \to [0,1]$ to be the set of CDFs
with atoms in $X$ and cumulative probabilities in $Y$.
Letting $H(F)$ denote the Shannon entropy of $F$, we have
\begin{align}
F \in {\rm argmax}_{F' \in  \mathcal{F}(X,Y)}\, H(F') \Longleftrightarrow F(X) \cup \set{0} = Y.
\rlap{\qquad\qquad\qedhere}
\end{align}
\end{theorem}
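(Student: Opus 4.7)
My plan is to reduce entropy-maximization over $\mathcal{F}(X,Y)$ to an optimization over subsets of $Y$, and then apply \cref{proposition:entropy-extend} as the one nontrivial ingredient.

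First I would observe that any $F \in \mathcal{F}(X,Y)$ is determined, for the purpose of computing $H(F)$, by the set $Z_F := F(X) \cup \{0\} \subseteq Y$. Since $F$ is nondecreasing with $F(\max X) = 1$, we always have $\{0,1\} \subseteq Z_F$. Writing $Z_F = \{0 = z_0 < z_1 < \cdots < z_{m-1} = 1\}$, the CDF takes each value $z_r$ in order as its argument runs through $X$, so the sequence of nonzero atoms of $F$ (listed by position) has masses $z_r - z_{r-1}$ for $r = 1,\ldots,m-1$. Consequently
\[
H(F) \;=\; -\sum_{r=1}^{m-1} (z_r - z_{r-1})\log_2(z_r - z_{r-1}),
\]
a quantity that depends on $F$ only through $Z_F$.

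Next I would verify feasibility: for every $Z \subseteq Y$ with $\{0,1\} \subseteq Z$ there exists $F \in \mathcal{F}(X,Y)$ with $Z_F = Z$. This is where the hypothesis $|Y| \le |X| + 1$ is used: placing $|Z|-1$ nonzero atoms requires $|Z|-1 \le |Y|-1 \le |X|$ distinct positions in $X$, and any such assignment yields an admissible CDF (the resulting cumulative values lie in $Z \subseteq Y$ by construction). So the entropy-maximization over $\mathcal{F}(X,Y)$ reduces to maximizing the functional $Z \mapsto -\sum_r (z_r - z_{r-1})\log_2(z_r - z_{r-1})$ over $\{Z \subseteq Y : \{0,1\} \subseteq Z\}$.

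Then I would show that the unique maximizer of this reduced problem is $Z = Y$. Suppose $Z \subsetneq Y$ and pick $y \in Y \setminus Z$; since $0, 1 \in Z$ there are consecutive $a, b \in Z$ with $a < y < b$. Passing from $Z$ to $Z \cup \{y\}$ replaces the single atom of mass $b - a$ by two atoms of mass $y - a > 0$ and $b - y > 0$, leaving all other atoms untouched. After reordering the atoms this is exactly the hypothesis of \cref{proposition:entropy-extend}, yielding $H(Z \cup \{y\}) > H(Z)$ and contradicting maximality.

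Combining the three steps gives both directions of the biconditional: $F$ maximizes entropy in $\mathcal{F}(X,Y)$ iff $Z_F = Y$, i.e., $F(X) \cup \{0\} = Y$. I expect the only slightly delicate point to be the feasibility step, which must simultaneously respect the positional constraint $|Z| - 1 \le |X|$ and the value constraint $Z \subseteq Y$; the assumption $|Y| \le |X| + 1$ is precisely what makes these compatible throughout the single-element insertions driving the strict-increase argument.
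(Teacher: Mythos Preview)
Your proposal is correct and follows essentially the same approach as the paper: both arguments reduce to showing that omitting any value of $Y$ from the range of $F$ allows a strict entropy increase via \cref{proposition:entropy-extend}, and both use the cardinality hypothesis $|Y|\le|X|+1$ only to ensure feasibility. Your reorganization—first abstracting to subsets $Z_F\subseteq Y$ and establishing feasibility for every such $Z$, then running the single-insertion monotonicity argument—is slightly cleaner than the paper's version, which works directly with the atom vector $p$ and must separately locate a zero entry $p_{j^*}$ to make room for the split atom; your formulation absorbs that bookkeeping into the feasibility step.
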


\begin{proof}
Let $X = \set{x_1 <_{\realext} \cdots <_{\realext} x_{|X|}}$ and
$Y = \set{0 = y_1 < \cdots < y_{|Y|} = 1}$.
The claim is trivial for $\abs{Y} = 2$.
Suppose $\abs{Y} \ge 3$.
Let $F \in \mathcal{F}(X,Y)$
and $p = (p_1, \ldots, p_{\abs{X}}) \in \real^{\abs{X}}$ be the discrete distribution corresponding to $F$,
i.e., $p_{i} \defas F(x_i)-F(x_{i-1})$ with the convention that $F(x_0) \defas 0$.

$(\Longrightarrow)$
Suppose $F(X) \cup \set{0} \neq Y$.
Then, there exists $y^* \in Y \setminus (F(X) \cup \set{0})$.
Since $y^* \neq 0$ and $y^* \neq 1$ (because $F(x_{\abs{X}}) = 1$),
there exists $1 < i^* < \abs{X}$ such that $F(x_{i^*}) < y^* < F(x_{i^*+1})$.
Let $\widetilde{p} \in \real^{\abs{X}+1}$ be a new discrete distribution defined by
\begin{align}
  \widetilde{p} = \big(p_1, \ldots, p_{i^*}, y^* - F(x_{i^*}), F(x_{i^*+1}) - y^*, p_{i^*+2}, \ldots, p_{\abs{X}}\big).
\end{align}
Then, $H(p) < H(\widetilde{p})$ by \cref{proposition:entropy-extend}.
Further, $\widetilde{p}$ satisfies two properties:
all the prefix sums of $\widetilde{p}$ are in $Y$, and $\widetilde{p}_{j^*} = 0$ for some $j^*$.
The first property holds because each of the prefix sum of $\widetilde{p}$ is either $y^*$ or $F(x_i)$ for some $i$.
The second property holds as follows: if $F(x_{j})=0$ for some $j$, then $p_j=0$ with $j \neq i^*+1$;
otherwise, $\abs{F(X)} = \abs{F(X) \setminus \set{0}} < \abs{Y \setminus \set{0}} = \abs{Y}-1 \leq \abs{X}$,
so $F(x_{j}) = F(x_{j-1})$ for some $j>1$, implying that $p_j=0$ with $j \neq i^*+1$.
By the two properties, there exists $\widetilde{F} \in \mathcal{F}(X,Y)$
corresponding to $\widetilde{p}$ (with $\widetilde{p}_{j^*} = 0$ excluded),
and $H(p) < H(\widetilde{p})$ implies the desired conclusion:
\begin{align}
  H(F) = H(p) < H(\widetilde{p}) = H(\widetilde{F}) \leq \max_{F' \in \mathcal{F}(X,Y)} H(F').
\end{align}

$(\Longleftarrow)$
This direction is immediate from the previous direction and the fact that
$F(X) \cup \set{0} = F'(X) \cup \set{0}$ implies $H(F) = H(F')$ for all $F,F' \in \mathcal{F}(X,Y)$.
\end{proof}

\begin{proposition}
\label{prop:max-entropy-discrete}
Let $\bfmt$ and $\floatEm$ be binary number formats with
$\abs{\floatEm \cap [0,1]} \leq \abs{\bfmt} + 1$. 
Let $\mathcal{F}$ be the set of CDFs
with atoms in $\bfmt$ and cumulative probabilities in $\floatEm$.
Then,
$\max_{F' \in \mathcal{F}} H(F') = m + 2 - 2^{-2^{E-1} + 3}$,
\end{proposition}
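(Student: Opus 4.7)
The plan is to apply \cref{prop:cda-max-entropy} with $X = \bfmt$ and $Y = \floatEm \cap [0,1]$, which is valid because $|Y| \le |X|+1$ by hypothesis and $\set{0,1} \subseteq Y$. That theorem then characterizes the maximizers as CDFs $F$ with $F(\bfmt) \cup \set{0} = \floatEm \cap [0,1]$; for any such $F$, the atom probabilities are exactly the consecutive gaps between the sorted values in $\floatEm \cap [0,1]$. Hence $\max_{F' \in \mathcal{F}} H(F')$ reduces to computing the Shannon entropy of this fixed gap distribution, which depends only on $\floatEm$ and not on $\bfmt$.

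Next I would tabulate the gap sizes using the binade structure of $\floatEm$. Let $b_E \defas 2^{E-1}-1$, $\delta \defas 2^{E-1}+m-2$, and $\epsilon \defas 2^{-\delta} = 2^{1-b_E-m}$. Each of the $2^m$ subnormal floats $0, \epsilon, \ldots, (2^m-1)\epsilon$ has a succeeding gap of size $\epsilon$, including the transition from $(2^m-1)\epsilon$ to the smallest normal $2^{1-b_E} = 2^m \epsilon$. For each normal exponent $e \in \set{1-b_E, \ldots, -1}$, the $2^m$ floats in the binade $[2^e, 2^{e+1})$ each have a succeeding gap of size $2^{e-m}$ (the $2^m-1$ internal gaps plus the transition to $2^{e+1}$).

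Summing $-g \log_2 g$ over all these gaps then yields
\begin{align*}
H \;=\; 2^m \cdot \epsilon \cdot \delta + \sum_{e=1-b_E}^{-1} 2^m \cdot 2^{e-m}(m-e)
  \;=\; \delta \cdot 2^{1-b_E} + \sum_{e=1-b_E}^{-1} 2^e(m-e).
\end{align*}
Applying the standard identities $\sum_{e=a}^{-1} 2^e = 1 - 2^a$ and $\sum_{e=a}^{-1} e \cdot 2^e = -2 - (a-2) \cdot 2^a$ with $a = 1-b_E$ evaluates the second sum to $m + 2 - (m + b_E + 1) \cdot 2^{1-b_E}$. Using $\delta = m + b_E - 1$, the coefficients of $2^{1-b_E}$ collapse to $-2$, giving $H = m + 2 - 2 \cdot 2^{1-b_E} = m + 2 - 2^{-2^{E-1}+3}$, as claimed.

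The main obstacle will be the bookkeeping at the subnormal/normal boundary: one must verify that the smallest normal $2^{1-b_E}$ equals $2^m \epsilon$, so that the subnormal region contributes exactly $2^m$ uniform gaps of size $\epsilon$ (not $2^m-1$), and similarly that each normal binade contributes exactly $2^m$ uniform gaps once the transition to the next binade is included. Degenerate small cases such as $E = 2$ fall out of the same formula, since the sum over normal binades is empty when $b_E \le 1$.
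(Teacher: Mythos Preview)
Your proposal is correct and follows essentially the same approach as the paper: apply \cref{prop:cda-max-entropy} to reduce to the gap distribution of $\floatEm \cap [0,1]$, partition the gaps by binade (one subnormal block of $2^m$ gaps of size $\epsilon$, plus one block of $2^m$ gaps per normal binade), and evaluate the resulting geometric-type sum. The only differences are cosmetic---the paper indexes binades by $e = 1, \ldots, k$ with $k = 2^{E-1}-2$ and gap size $2^{-e-m}$, whereas you index by the actual exponent $e \in \{1-b_E, \ldots, -1\}$ with gap size $2^{e-m}$---and your treatment of the subnormal/normal boundary is slightly more explicit.
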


\begin{proof}
Let $X = \bfmt$ and $Y = \floatEm \cap [0,1]$.
Since they satisfy all the conditions of \Cref{prop:cda-max-entropy},
this theorem implies $\max_{F' \in \mathcal{F}} H(F') = H(F)$,
where $F \in \mathcal{F}$ is a CDF that has atoms in $\bfmt$
with $F(\bfmt) \cup \set{0} = \floatEm \cap [0,1]$.
Hence, it suffices to show $H(F) = m + 2 - 2^{-2^{E-1} + 3}$.
Let $k \defas 2^{E-1}+2$ so that $-k$ is the smallest exponent in $\floatEm$.
The atoms of $F$ have probabilities given by
\begin{align}
\mbox{subnormal binade:} &&2^{-k}/2^m \label{eq:maxent-prob-subnormal} \\
\mbox{normal binades:} && (2^{-e+1}-2^{-e})/2^m = 2^{-e}/2^m && (e=k,k-1,\dots,1). \label{eq:maxent-prob-normal}
\end{align}
As there are $2^m$ equally likely outcomes in each of these binades, the entropy of $F$ is
\begin{align}
H(F)
  &= 2^m\left(2^{-k-m}\log_2(2^{k+m})\right)
    + \sum_{e=1}^{k}\left[2^m \left( 2^{-e-m}\log_2\left(2^{e+m}\right) \right) \right]
  \\
  &= 2^m\left(2^{-k-m}(k+m))\right)
    + \sum_{e=1}^{k}\left[2^m \left( 2^{-e-m}(e+m) \right) \right]
    \label{eq:ky-exact-cost}
  \\
  &= 2^{-k}({k+m})
    + \sum_{e=1}^{k}\left[2^{-e}\left({e+m}\right) \right]
  \\
  &= 2^{-k} (k+m) + \left( (m+2) - 2^{-k} (k+m+2) \right)
  \\
  &= m + 2 - 2^{-2^{E-1} + 3}.
\end{align}
\end{proof}


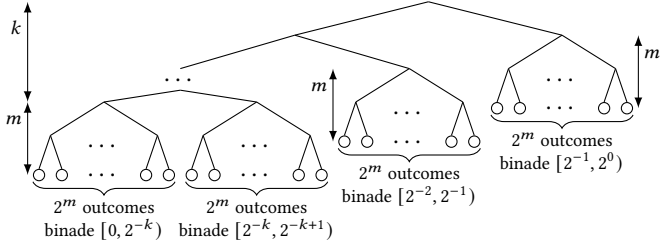
\begin{figure}[t]
\tikzset{level distance=12.5pt, sibling distance=4pt}
\tikzset{every tree node/.style={anchor=north}}
\tikzset{every leaf node/.style={draw,circle,inner sep=1.5pt}}
\tikzstyle{branch}=[shape=coordinate]
\begin{tikzpicture}
\Tree[
  .\node[coordinate, name=cord-0]{};
  [
    [.$\dots$
      [.\node[coordinate, name=cord-4]{};
        [ \node[name=x-4]{}; \node{}; ]
        \edge[draw=none];
        [.{$\dots$} \edge[draw=none]; \node[draw=none,name=dots-4]{}; ]
        [ \node[]{}; \node[name=y-4]{}; ]
      ]
      [.\node[coordinate, name=cord-3]{};
        [ \node[name=x-3]{}; \node{}; ]
        \edge[draw=none];
        [.{$\dots$} \edge[draw=none]; \node[draw=none,name=dots-3]{}; ]
        [ \node[]{}; \node[name=y-3]{}; ]
      ]
    ]
    [.\node[coordinate, name=cord-2]{};
      [ \node[name=x-2]{}; \node{}; ]
      \edge[draw=none];
      [.{$\dots$} \edge[draw=none]; \node[draw=none,name=dots-2]{}; ]
      [ \node[]{}; \node[name=y-2]{}; ]
    ]
  ]
  [.\node[coordinate, name=cord-1]{};
    [ \node[name=x-1]{}; \node{}; ]
    \edge[draw=none];
    [.{$\dots$} \edge[draw=none]; \node[draw=none,name=dots-1]{}; ]
    [ \node[]{}; \node[name=y-1]{}; ]
  ]
]

\foreach \i in {1,...,4}{\node[at={(dots-\i)}]{$\dots$};}

\draw[decorate, decoration={mirror,brace,amplitude=5pt,raise=2.5pt}]
  (x-1.west) -- (y-1.east)
  node[pos=0.5,below,yshift=-5pt,font=\scriptsize]{\begin{tabular}{c}$2^m$ outcomes\\ binade $[2^{-1},2^0)$\end{tabular}};

\draw[decorate, decoration={mirror,brace,amplitude=5pt,raise=2.5pt}]
  (x-2.west) -- (y-2.east)
  node[pos=0.5,below,yshift=-5pt,font=\scriptsize]{\begin{tabular}{c}$2^m$ outcomes\\ binade $[2^{-2},2^{-1})$\end{tabular}};

\draw[decorate, decoration={mirror,brace,amplitude=5pt,raise=2.5pt}]
  (x-3.west) -- (y-3.east)
  node[pos=0.5,below,yshift=-5pt,font=\scriptsize]{\begin{tabular}{c}$2^m$ outcomes\\ binade $[2^{-k},2^{-k+1})$\end{tabular}};

\draw[decorate, decoration={mirror,brace,amplitude=5pt,raise=2.5pt}]
  (x-4.west) -- (y-4.east)
  node[pos=0.5,below,yshift=-5pt,font=\scriptsize]{\begin{tabular}{c}$2^m$ outcomes\\ binade $[0,2^{-k})$\end{tabular}};

\draw[latex-latex]
  ([xshift=2pt]y-1.east)
  --
  ([xshift=2pt]$(cord-1 -| y-1.east)$)
  node[pos=0.75,right,font=\scriptsize,inner xsep=2pt]{$m$}
  ;

\draw[latex-latex]
  ([xshift=-2pt]x-2.west)
  --
  ([xshift=-2pt]$(cord-2 -| x-2.west)$)
  node[pos=0.75,left,font=\scriptsize,inner xsep=2pt]{$m$}
  ;

\draw[latex-latex]
  ([xshift=-2pt]x-4.west)
  --
  ([xshift=-2pt]$(cord-4 -| x-4.west)$)
  node[pos=0.75,left,font=\scriptsize,inner xsep=2pt]{$m$}
  ;

\draw[latex-latex]
  ([xshift=-2pt]$(cord-4 -| x-4.west)$)
  --
  ([xshift=-2pt]$(cord-0 -| x-4.west)$)
  node[pos=0.75,left,font=\scriptsize,inner xsep=2pt]{$k$}
  ;

\end{tikzpicture}
\caption{DDG tree for a maximum entropy distribution with cumulative
probabilities in $\floatEm$, which attains the maximum possible expected
entropy cost described in \cref{theorem:sampler-bits}.
Here, $k \defas 2^{E-1}+2$.}
\label{fig:maxent}
\end{figure}

\CostSamplerOptImpl*
\begin{proof}
Let $F \in \mathcal{F}$
be any maximum entropy CDF as in the proof of \cref{prop:max-entropy-discrete}.
Following \cref{eq:maxent-prob-subnormal,eq:maxent-prob-subnormal}, all the
probabilities of outcomes in $F$ are dyadic rationals of the form
$1/2^{e+m}$, where $e \in \set{1, \ldots, k}$ and $k \defas 2^{E-1}+2$.
By \cref{theorem:knuth-yao}, any entropy-optimal DDG tree for
$F$ has precisely one leaf node for each outcome.
Each outcome with probability $1/2^{e+m}$ has a leaf at depth $e+m$
of the tree, and there are $2^m$ such leaves at this depth
(\cref{fig:maxent}).
Therefore, \cref{eq:ky-exact-cost} is the expected entropy cost
of any optimal DDG tree for $F$: the first addend is the
cost for the $2^m$ outcomes with probabilities in the subnormal binade and
the second addend is the sum of costs for
the $2^m$ outcomes with probabilities in each of the $k$ normal binades.

Finally, following an analogous argument to the proof of \cref{prop:cda-max-entropy}
and the observations that \begin{enumerate}[label=(\roman*)]
  \item all the probabilities of distributions in $\mathcal{F}$ are dyadic rationals; and
  \item \cref{eq:ky-exact-cost} characterizes the expected entropy cost,
\end{enumerate}
we conclude that this distribution $F$ attains the largest possible
average entropy cost among all distributions in $\mathcal{F}$.
\end{proof}

\clearpage

\section{Deferred Results in \S\ref{sec:survival}}
\label{appx:survival}

\begin{remark}
\label{remark:ecdf-samplers-changes}
We describe the essential changes in \Cref{alg:sampler-naive-ext-impl,alg:sampler-opt-ext-impl},
as compared to Algorithms~\ref{alg:sampler-naive-impl} and \ref{alg:sampler-opt-impl}, respectively.
\begin{itemize}[wide=0pt,leftmargin=*]
\item \Cref{alg:sampler-naive-impl} $\mapsto$ \Cref{alg:sampler-naive-ext-impl}.
The arguments $c_0$, $c_1$ are now
pairs with initial values $c_0 = (0,0)$ and $c_1 = (1,0)$.
\hyperref[algline:sampler-naive-ext-impl-kn]{\ExactRatio} returns $(i, k)$ such that
$i/k = (G^*(b) - G^*(b'))/(G^*(b)-G^*(b''))$.

\item \Cref{alg:sampler-opt-impl} $\mapsto$ \cref{alg:sampler-opt-ext-impl}.
The arguments $c_0$ and $c_1$ are pairs
with initial values $c_0 = (0,0)$ and $c_1 = (1,0)$.
\Crefrange{alg:exactsubtract1-main}{alg:getbit-main}
are replaced with the more general
\cref{alg:exactsubtract1,alg:exactsubtract2,alg:exactsubtractd,alg:getbit}
which exactly subtract $(d',f') \defas G(b')$ from $(d,f) \defas G(b)$ for $b >_\bfmt b'$
by handling four cases:
  \begin{enumerate}[label=Case~\arabic*.,wide,leftmargin=*]
    \item $d = d' = 0$. Apply the existing $\hyperref[alg:exactsubtract1-main]{\ExactSubtractOne}(f,f')$ in \cref{alg:exactsubtract1-main}.
    \item $d = d' = 1$. Return $\hyperref[alg:exactsubtract1-main]{\ExactSubtractOne}(f',f)$, because
    $(1 - f) - (1 - f') = f' - f$.
    \item $d = 1$ and $d' = 0$. We must extract the bits in
    $(1 - f) - f' = 1 - (f+f')$.
    This computation is implemented as \cref{alg:exactsubtract2}
    (\hyperref[alg:exactsubtract2]{\ExactSubtractTwo}),
    whose structure closely mirrors
    \cref{alg:exactsubtract1-main}
    and whose correctness is the subject of~\cref{prop:exactsubtract2}.
    \item $d = 0$ and $d' = 1$.
      This case cannot occur by \labelcref{item:Alcoranist-2} of \cref{proposition:ecda-default-properties}.
      \qedhere
  \end{enumerate}
\end{itemize}
\end{remark}

\EcdaDefault*
\begin{proof}
Assume $S(b^*) < 1/2$.
Recall that $b^*$ is defined by
$b^* \defas \min_{<_\bfmt} \set{ b \in \set{0,1}^n \mid F(b) \geq \mathrm{succ}_{\floatEm}(1/2) }$.
This assumption and definition imply that for all $b \in \set{0,1}^n$,
\begin{align}
  \label{eq:ecda-default-f-half}
  b <_\bfmt b^* & \implies F(b) \leq \mathrm{pred}_{\floatEm}(\mathrm{succ}_{\floatEm}(1/2)) = 1/2,
  \\
  \label{eq:ecda-default-s-half}
  b \geq_\bfmt b^* & \implies S(b) \leq S(b^*) < 1/2,
\end{align}
where the first $\leq$ is by $F$ being into $\floatEm$
and the second $\leq$ is by $S$ being a \WSF{} over $\bfmt$.

We now show that $G$ is a \ECDF{} over $\bfmt$.
By \cref{def:edf}, we need to prove two claims:
$G^*(\phi_\bfmt(1^n)) = 1$; and
$b <_\bfmt b'$ implies $G^*(b) \leq G^*(b')$,
where $G^* : \set{0,1}^n \to [0,1]$ is defined by
\begin{align}
  G^*(b) &\defas (1-d) f + d(1-f)
  & (b \in \set{0,1}^n; (d,f) \defas G(b)).
\end{align}
The first claim holds as follows:
since $G(b) \in \set*{(0, F(b)), (1, S(b))}$ for every $b$,
we have
\begin{align}
  G^*(\phi_\bfmt(1^n)) \in \set*{F(\phi_\bfmt(1^n)), 1-S(\phi_\bfmt(1^n))} = \set{1},
\end{align}
where the $\in$ is by the definition of $G^*$
and the $=$ is by $F(\phi_\bfmt(1^n)) = 1$ and $S(\phi_\bfmt(1^n)) = 0$
(because $F$ and $S$ are finite-precision CDF and SF over $\bfmt$, respectively).
To show the second claim, consider any $b, b' \in \set{0,1}^n$ with $b <_\bfmt b'$.
We show $G^*(b) \leq G^*(b')$ by case analysis on $(b,b')$:
\begin{align}
  \label{eq:ecda-default-inc1}
  b <_\bfmt b' <_\bfmt b^* & \implies G^*(b) = F(b) \leq F(b') = G^*(b'),
  \\
  \label{eq:ecda-default-inc2}
  b <_\bfmt b^* \leq_\bfmt b' & \implies G^*(b) = F(b) \leq 1/2 < 1-S(b') = G^*(b'),
  \\
  \label{eq:ecda-default-inc3}
  b^* \leq_\bfmt b <_\bfmt b' & \implies G^*(b) = 1-S(b) \leq 1-S(b') = G^*(b'),
\end{align}
where the first $\leq$ is by $F$ being a \WCDF{} over $\bfmt$,
the second $\leq$ is by \cref{eq:ecda-default-f-half},
the $<$ is by \cref{eq:ecda-default-s-half},
and the last $\leq$ is by $S$ being a \WSF{} over $\bfmt$.
\end{proof}

\begin{proposition}
\label{proposition:ecda-default-properties}
In the setup of \cref{theorem:ecda-default},
the \ECDF{} $G$ satisfies the following properties:
\begin{enumerate}[label=(\roman*)]
\item\label{item:Alcoranist-3}
  $G$ defines a discrete random variable $X$ over $\realext_\bfmt$,
  for which
  \begin{align}
    \Pr(X \le t) &= (1-d)f + d(1-f)
    && (t \in \realext_\bfmt; (d,f) \defas G(\roundfl{\bfmt,\rtd}(t))).
    \label{eq:Alcoranist-3}
  \end{align}
\item\label{item:Alcoranist-1}
  $\mathrm{Im}(G)
  \subset \set{(0, f) \mid 0 \le f \le 1/2}
     \cup \set{(1, f) \mid 0 \le f <   1/2}.$
\item\label{item:Alcoranist-2}
  $\pi_1(G(b)) < \pi_1(G(b'))$ implies
  $G^*(b) \,{<}\, G^*(b')$ for all $b,b' \,{\in}\, \bool^n$,
  where $\pi_1(d,f) \defas d$.
  \qedhere
\end{enumerate}
\end{proposition}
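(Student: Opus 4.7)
The plan is to prove the three clauses in the order (ii), (iii), (i). Clauses (ii) and (iii) are structural bookkeeping that follow from the construction of $G$ in \cref{eq:erinite-1}, while (i) reduces to a standard lifting from a monotone function on bit strings to a discrete random variable on $\realext_\bfmt$.

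For (ii), I would do a direct case analysis on the two branches defining $G(b)$. In the branch $b <_\bfmt b^*$, we have $G(b) = (0, F(b))$, and the bound $F(b) \le 1/2$ is precisely the intermediate inequality \cref{eq:ecda-default-f-half} already recorded in the proof of \cref{theorem:ecda-default}, which uses the $\floatEm$-valuedness of $F$ together with the definition of $b^*$ via $\hyperref[alg:quantile]{\Quantile}$. In the branch $b \ge_\bfmt b^*$, we have $G(b) = (1, S(b))$, and the strict bound $S(b) < 1/2$ follows from the standing hypothesis $S(b^*) < 1/2$ plus monotonicity of $S$, as recorded in \cref{eq:ecda-default-s-half}. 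Clause (iii) is then immediate: the hypothesis $\pi_1(G(b)) < \pi_1(G(b'))$ forces $G(b) = (0,f)$ and $G(b') = (1,f')$, and clause (ii) supplies $f \le 1/2$ and $f' < 1/2$, whence $G^*(b) = f \le 1/2 < 1 - f' = G^*(b')$.

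For clause (i), the plan is to mimic the lifting used in \cref{remark:CDA-sample-intractable} and \cref{proposition:lift-cda}. By \cref{def:edf}, $G^*$ is nondecreasing on $(\bool^n, <_\bfmt)$ with $G^*(\phi_\bfmt(1^n)) = 1$, so it determines a discrete probability distribution over $\realext_\bfmt$ with atoms at $\gamma_\bfmt(b)$ of mass $G^*(b) - G^*(\mathrm{pred}_\bfmt(b))$ (with the usual convention at the minimum). The random variable $X$ realizing this distribution satisfies $\Pr(X \le t) = G^*(\roundfl{\bfmt,\rtd}(t))$ since $\roundfl{\bfmt,\rtd}$ returns the largest $\bfmt$-representable value $\le t$ under the ordering of \cref{remark:binary-number-format-ordering}, and substituting the definition $G^*(b) = (1-d)f + d(1-f)$ then yields \cref{eq:Alcoranist-3}. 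The only nontrivial care required is in handling the special values $\set{-\infty, +\infty, \bot}$ consistently with that ordering, but this is routine. The main ``obstacle'' in the whole proposition is really just being attentive to which inequalities in clause (ii) are strict versus weak, since that strictness pattern is exactly what powers clause (iii).
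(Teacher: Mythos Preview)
Your proposal is correct and follows essentially the same approach as the paper: clause (ii) is derived from \cref{eq:ecda-default-f-half,eq:ecda-default-s-half}, clause (i) is the lifting of $G^*$ through $\roundfl{\bfmt,\rtd}$ as in \cref{proposition:lift-cda}, and clause (iii) uses the strict/weak inequality pattern to get $G^*(b) \le 1/2 < G^*(b')$. The only cosmetic difference is that for (iii) the paper first records the ordering $b <_\bfmt b^* \le_\bfmt b'$ and then invokes \cref{eq:ecda-default-inc2} directly, whereas you go through the just-established clause (ii); the underlying inequality chain is identical.
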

\begin{proof}
For \labelcref{item:Alcoranist-3}, define $G^\dagger : \realext \to [0,1]$ by $G^\dagger(t) \defas G^*(\roundfl{\bfmt,\rtd}(t))$ as in \cref{proposition:lift-cda}.
Then, $G^\dagger$ is a CDF over $\realext$ due to the two claims we proved about $G^*$.
Further, by the definition of $G^\dagger$ and $G^*$,
the distribution defined by $G^\dagger$ satisfies \cref{eq:Alcoranist-3} and has the support only on $\realext_\bfmt$.
The property \labelcref{item:Alcoranist-1} is immediate from \cref{eq:ecda-default-f-half,eq:ecda-default-s-half}.
For \labelcref{item:Alcoranist-2}, let $b, b' \in \set{0,1}^n$ satisfy $\pi_1(G(b)) < \pi_1(G(b'))$.
Then, $b <_\bfmt b^* \leq_\bfmt b'$ must hold, which implies $G^*(b) < G^*(b')$ by \cref{eq:ecda-default-inc2}.
\end{proof}

\begin{theorem}
  \label{prop:exactsubtract2}
  Suppose that $x, x' \in \floatEm \cap [0,\frac{1}{2}]$ satisfy $0 < x + x' < 1$,  and consider any $\ell \geq 1$.
  Let $\beta = (n_1, n_2, n_\mathrm{hi}, n_\mathrm{lo}, b_1, b_2, g_\mathrm{hi}, g_\mathrm{lo})$
  be the output of $\hyperref[alg:exactsubtract2]{\ExactSubtractTwo}(x, x')$ (\Cref{alg:exactsubtract2}),
  and let $b'$ be the output of $\hyperref[alg:getbit]{\GetBit}(\beta, \ell)$ (\Cref{alg:getbit}).
  Then,
  \begin{align}
    \label{eq:exactsubtract2}
    1 - (x + x')
    &= \Big(0.
    \underbrace{\boxed{       b_1 \ldots b_1 \vphantom{b_1 g_\mathrm{hi}}       }}_{n_1}
    \underbrace{\boxed{\,\;\; g_\mathrm{hi}  \vphantom{b_1 g_\mathrm{hi}} \,\;\;}}_{n_\mathrm{hi}}
    \underbrace{\boxed{       b_2 \ldots b_2 \vphantom{b_1 g_\mathrm{hi}}       }}_{n_2}
    \underbrace{\boxed{\,\;\; g_\mathrm{lo}  \vphantom{b_1 g_\mathrm{hi}} \,\;\;}}_{n_\mathrm{lo}}
    \Big)_2
  \end{align}
  and $b'$ is the $\ell$-th digit of $1 - (x+x')$ in binary expansion.
  Also, all intermediate values appearing in both algorithms
  are representable as $(1+E+m)$-bit (signed or unsigned) integers.
\end{theorem}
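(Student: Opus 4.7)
The plan is to follow the same template as the proof of \cref{prop:exactsubtract1}, since the structural claims about the boxed representation \cref{eq:exactsubtract2}, the correctness of \hyperref[alg:getbit]{\GetBit}, and the bit-width bound are entirely analogous. The essential difference is that we now need to analyze $1 - (x+x')$ rather than $x -_\real x'$, so the carry structure of a binary \emph{addition} enters, followed by a bitwise complementation to realize the subtraction from $1$.

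First, I would extract exponent/mantissa representations for $x$ and $x'$ as in \crefrange{algline:exactsubtract1-main-ef-st}{algline:exactsubtract1-main-ef-ed} of \cref{alg:exactsubtract1-main}, obtaining integers $\hat{e}, \hat{e}', f, f'$ with $x = 2^{\hat{e}}(f/2^m)$ and $x' = 2^{\hat{e}'}(f'/2^m)$. The hypothesis $x,x' \in [0,1/2]$ gives $\hat{e}, \hat{e}' \le -1$, and WLOG I would take $\hat{e} \ge \hat{e}'$. Next I would perform a case analysis on $\hat{e} - \hat{e}'$ identical to the two cases in the proof of \cref{prop:exactsubtract1}: (i) $\hat{e} - \hat{e}' \le m+1$, where the shifted mantissa of $x'$ overlaps with that of $x$ and the sum has length at most $m+2$; and (ii) $\hat{e} - \hat{e}' > m+1$, where the bit ranges are disjoint and separated by zeros (or by a carry). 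In each case I would first write $x + x'$ explicitly in a boxed layout (analogous to \cref{eq:exactsubtract1-main-case1,eq:exactsubtract1-main-case2}) with four zones for the leading zeros, the high-mantissa block, the middle gap, and the low-mantissa block; then I would derive $1 - (x + x')$ from this layout by complementing every bit and propagating the $+1$, yielding the same four-zone structure but with the roles of $0/1$ constant zones swapped and the high/low integer blocks replaced by their complements modulo $2^{n_\mathrm{hi}}$ and $2^{n_\mathrm{lo}}$. Matching this layout against the \hyperref[alg:exactsubtract2]{\ExactSubtractTwo} outputs $(n_1,n_2,n_\mathrm{hi},n_\mathrm{lo},b_1,b_2,g_\mathrm{hi},g_\mathrm{lo})$ then establishes \cref{eq:exactsubtract2}. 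The claim on $b'$ follows immediately from \cref{eq:exactsubtract2} and the definition of \hyperref[alg:getbit]{\GetBit}.

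The hard part will be handling the carry interaction between the addition $x+x'$ and the subtraction from $1$. In \cref{prop:exactsubtract1} the borrow appears in at most one place (between the high block and the low block), which is captured by the single boolean $b_2$. Here, the carry of $x+x'$ and the trailing $+1$ of the two's-complement of $1-(x+x')$ can both appear, and they can interact whenever $f'_{\mathrm{lo}}$ vanishes or when $f + f'_\mathrm{hi}$ produces a carry into the leading-zero region. I expect the correct bookkeeping is that these two effects always combine into at most one boundary bit shift between the constant zones and the integer blocks, which is exactly why \hyperref[alg:exactsubtract2]{\ExactSubtractTwo} emits only two boolean flags $b_1, b_2$. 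I would verify this by a careful subcase analysis on whether $f'_\mathrm{lo} = 0$ and whether $f + f'_\mathrm{hi}$ saturates the $n_\mathrm{hi}$-bit range; the hypothesis $x + x' < 1$ is what rules out the degenerate case where the complementation would overflow past the leading bit.

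Finally, for the bit-width bound, the same inequalities used in the proof of \cref{prop:exactsubtract1} apply verbatim: from $x, x' \in [0,1/2] \cap \floatEm$ we still have $\hat{e}, \hat{e}' \in [-2^{E-1}+2,\, -1]$ and $f, f' \in [0, 2^{m+1}-1]$, so all of $n_1,n_2,n_\mathrm{hi},n_\mathrm{lo}$ are bounded by $2^{E-1}+m-1$ and all of $f, f', f'_\mathrm{hi}, f'_\mathrm{lo}, g_\mathrm{hi}, g_\mathrm{lo}$ are bounded by $2^{E+m}-1$, so each fits in a $(1+E+m)$-bit signed integer. Combining the case analysis, the carry argument, and this bit-width computation yields the full statement.
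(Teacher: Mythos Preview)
Your plan is essentially the paper's own approach: assume $x\ge x'$, extract $(\hat e,f)$ and $(\hat e',f')$, split on $\hat e-\hat e'\le m{+}1$ versus $>m{+}1$, lay out $x+x'$ in boxed form, then lay out $1-(x+x')$ and match against the tuple $\beta$. A few points where your sketch diverges from what actually happens are worth flagging.

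First, the carry analysis you flag as ``the hard part'' is simpler than you anticipate. The key design choice in \textsc{ExtractBitPreproc2} is $n_\mathrm{hi}=m{+}2$ and $n_1=-\hat e-2$ (one bit wider, one bit shorter than in \textsc{ExtractBitPreproc1}). Since $f,f'_\mathrm{hi}<2^{m+1}$, the sum $f+f'_\mathrm{hi}\le 2^{m+2}-2<2^{m+2}$ always fits in $n_\mathrm{hi}$ bits, so your ``whether $f+f'_\mathrm{hi}$ saturates'' subcase is vacuous. The only genuine subcase split is on $f'_\mathrm{lo}=0$ versus $f'_\mathrm{lo}>0$, exactly mirroring \cref{prop:exactsubtract1}. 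The two's-complement framing (complement-then-carry) is equivalent but more awkward than the paper's direct subtraction $1-(x+x')$, which simply fills the leading $n_1$ positions with $1$s and subtracts the high block from $2^{m+2}$.

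Second, you do not isolate the boundary case $x=\tfrac12$. For $x<\tfrac12$ one has $\hat e\le -2$ and the layout works with $n_1=-\hat e-2\ge 0$; but for $x=\tfrac12$ we get $\hat e=-1$ and naively $n_1=-1$. This is exactly why the algorithm carries the $\mathbf{1}[x=\tfrac12]$ corrections on $n_1$ and $n_\mathrm{hi}$, and it needs a separate paragraph (the paper treats it after the main case split, using $1-(x+x')<1$ to bound $g_\mathrm{hi}<2^{m+1}$).

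Third, your bit-width claim that everything fits in a $(1{+}E{+}m)$-bit \emph{signed} integer is slightly too strong: $g_\mathrm{hi}$ can reach $2^{m+2}-1$, which exceeds the signed range when $E=1$. The theorem statement says ``signed or unsigned'' precisely for this reason; the paper records that the exponent-related quantities fit as signed while the mantissa-related quantities fit as unsigned.
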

\begin{proof}
  \setcounter{MaxMatrixCols}{20}
  \setlength{\arraycolsep}{3pt}
  By \cref{algline:exactsubtract2-x-order} of \cref{alg:exactsubtract2},
  we can assume $x \geq x'$.
  We prove the claims for $x < \frac{1}{2}$ and for $x = \frac{1}{2}$ separately.
  The current proof is similar to the proof of \cref{prop:exactsubtract1}, so we focus mainly on the differences between the two proofs.

  First, assume $x < \frac{1}{2}$.
  Then, $(\hat{e}, f)$ and $(\hat{e}', f')$
  computed in \crefrange{algline:exactsubtract2-ef-st}{algline:exactsubtract2-ef-ed} of \Cref{alg:exactsubtract2}
  satisfy
  \begin{align}
    \label{eq:exactsubtract2-f}
    \hat{e}, \hat{e}' &\leq -2,
    &
    f &= (f_0 \ldots f_m)_2,
    &
    f' &= (f'_0 \ldots f'_m)_2,
    &
    x &= 2^{\hat{e}} \cdot (f / 2^m),
    &
    x' &= 2^{\hat{e}'} \cdot (f' / 2^m),
  \end{align}
  where $f_0 \defas \mathbf{1}[e > 0]$ and $f'_0 \defas \mathbf{1}[e' > 0]$.
  Here, the first inequality is by $x, x' \in [0, \frac{1}{2})$,
  and we have the remaining equalities as in the proof of \cref{prop:exactsubtract1}.

  Using the previous observation, we show \cref{eq:exactsubtract2} for $x < \frac{1}{2}$ by case analysis on $\hat{e} - \hat{e}'$.
  \begin{enumerate}[label=Case~\arabic*.,wide,leftmargin=*]
  \item ($\hat{e}- \hat{e}' \leq m+1$).
    In this case, we obtain
    \begin{align}
      \nonumber
      \\[10pt]
      \label{eq:exactsubtract2-case1}
      \!\!\!\!
      \begin{NiceMatrix}[l]
        ~
        & \Block[r]{}{x} & =\;\;\; & 0.
        & 0 & \ldots & 0 & 0
        & \Block[c,borders={left,right,top,bottom}]{1-6}{} f_0 & \ldots & \ldots & \ldots & \ldots & f_m 
        & 0 & \ldots & 0
        \\
        +
        & \Block[r]{}{x'} & =\;\;\; & 0.
        & 0 & \ldots & 0 & 0
        & 0 & \ldots & 0
        & \Block[c,borders={left,right,top,bottom}]{1-3}{} f'_0 & \ldots & f'_{m - (\hat{e}-\hat{e}')}
        & \Block[c,borders={left,right,top,bottom}]{1-3}{} f'_{m - (\hat{e}-\hat{e}') +1} & \ldots & f'_m
        \\
        \\[23pt] \cline{2-17}
        \\[-20pt]
        \\
        ~
        & x+x' & =\;\;\; & 0.
        & 0 & \ldots & 0
        & \Block[c,draw=black]{1-7}{f + f'_\mathrm{hi}} & & & & & &
        & \Block[c,draw=black]{1-3}{f'_\mathrm{lo}} & &
        \CodeAfter
        \OverBrace [yshift=1pt]{1-5} {1-8} {\scriptstyle -\hat{e}-1 \text{ bits}}
        \OverBrace [yshift=1pt]{1-9} {1-14}{\substack{{\displaystyle f \in \mathbb{Z}} \\[1pt] m+1 \text{ bits}}}
        \UnderBrace[yshift=1pt]{2-5} {2-11}{\scriptstyle -\hat{e}'-1 \text{ bits}}
        \UnderBrace[yshift=1pt]{2-12}{2-14}{\substack{(m+1)-(\hat{e}-\hat{e}') \text{ bits} \\[1pt] {\displaystyle f'_\textrm{hi} \in \mathbb{Z}}}}
        \UnderBrace[yshift=1pt]{2-15}{2-17}{\substack{\hat{e}-\hat{e}' \text{ bits} \\[1pt] {\displaystyle f'_\textrm{lo} \in \mathbb{Z}}}}
        \UnderBrace[yshift=1pt]{6-5} {6-7} {\scriptstyle -\hat{e}-2 \text{ bits}}
        \UnderBrace[yshift=1pt]{6-8} {6-14}{\scriptstyle m+2 \text{ bits}}
        \UnderBrace[yshift=1pt]{6-15}{6-17}{\scriptstyle \hat{e}-\hat{e}' \text{ bits}}
      \end{NiceMatrix}
      \!\!\!\!
      \\[4pt] \nonumber
    \end{align}
    Here, the equalities on $x$, $x'$, $f$, $f'_\mathrm{hi}$, $f'_\mathrm{lo}$ hold as in the proof of \cref{prop:exactsubtract1},
    and the equality on $x+x'$ is by $\hat{e} \leq -2$ and $f + f'_\mathrm{hi} \leq 2 \cdot (2^{m+1}-1) = 2^{m+2} -2 < 2^{m+2}$.
    From this, we obtain
    \begin{align}
      \nonumber
      \\[-17pt]
      \label{eq:exactsubtract2-case1b}
      \!\!\!\!
      \begin{NiceMatrix}[l]
        ~
        & \Block[r]{}{1} & =\;\;\; & 1.
        & 0 & \ldots & 0
        & \phantom{\qquad} 0 \phantom{} & \phantom{\quad} \ldots \phantom{\quad} & \phantom{} 0 \phantom{\qquad}
        & \phantom{\qquad} 0 \phantom{} & \phantom{\quad} \ldots \phantom{\quad} & \phantom{} 0 \phantom{\qquad}
        \\
        -
        & \Block[r]{}{x+x'} & =\;\;\; & 0.
        & 0 & \ldots & 0
        & \Block[c,draw=black]{1-3}{f + f'_\mathrm{hi}} & &
        & \Block[c,draw=black]{1-3}{f'_\mathrm{lo}} & &
        \\
        \\[-6pt] \cline{2-13}
        \\[-20pt]
        \\
        ~
        & \Block[r]{2-1}{1 - (x+x')} & \Block{2-1}{=\;\;\;} & 0.
        & 1 & \ldots & 1
        & \Block[c,draw=black]{1-3}{2^{m+2} - (f + f'_\mathrm{hi})} & &
        & \Block[c,draw=black]{1-3}{0} & &
        & \Block[l]{}{\;\;\text{if } f'_\mathrm{lo} = 0}
        \\
        ~
        & & & 0.
        & 1 & \ldots & 1
        & \Block[c,draw=black]{1-3}{2^{m+2} - (f + f'_\mathrm{hi}) - 1} & &
        & \Block[c,draw=black]{1-3}{2^{\hat{e}-\hat{e}'} - f'_\mathrm{lo}} & &
        & \Block[l]{}{\;\;\text{if } f'_\mathrm{lo} > 0}
        \CodeAfter
        \SubMatrix\{{6-4}{7-4}.
        \UnderBrace [yshift=1pt]{7-5} {7-7} {\scriptstyle -\hat{e}-2 \text{ bits}}
        \UnderBrace [yshift=1pt]{7-8} {7-10}{\scriptstyle m+2 \text{ bits}}
        \UnderBrace [yshift=1pt]{7-11}{7-13}{\scriptstyle \hat{e}-\hat{e}' \text{ bits}}
      \end{NiceMatrix}
      \\[-8pt] \nonumber
    \end{align}
    Here, the last equality is by $1 \leq f + f'_\mathrm{hi} \leq 2^{m+1}-2$;
    we have $f+f'_\mathrm{hi} \geq 1$ because $f + f'_\mathrm{hi} = 0$ implies
    $x=0=x'$ and this contradicts to the assumption $x+x' > 0$.

    Based on \cref{eq:exactsubtract2-case1b}, we can check that $\beta$ computed in
    \crefrange{algline:exactsubtract2-final-st}{algline:exactsubtract2-final-ed} of \cref{alg:exactsubtract2}
    satisfies the following:
    $(n_1, n_\mathrm{hi}, n_2, n_\mathrm{lo}) = (-\hat{e}-2, m+2, 0, \hat{e}-\hat{e}')$
    are the numbers of bits shown in the last line of \cref{eq:exactsubtract2-case1b};
    $(b_1, b_2) = (1, \mathbf{1}[f'_\mathrm{lo}>0])$ are the bits of $1-(x+x')$ in the $n_1$ and $n_2$ parts;
    and $(g_\mathrm{hi}, g_\mathrm{lo})$ are the values of $1-(x+x')$ in the $n_\mathrm{hi}$ and $n_\mathrm{lo}$ parts
    (i.e., the boxed values in the last line of \cref{eq:exactsubtract2-case1b}).
    Hence, the last line of \cref{eq:exactsubtract2-case1b} implies \cref{eq:exactsubtract2}, as desired.

  \item ($\hat{e}- \hat{e}' > m+1$).
    In this case, we obtain
    \begin{align}
      \nonumber
      \\[-17pt]
      \label{eq:exactsubtract2-case2}
      \begin{NiceMatrix}[l]
        ~
        & \Block[r]{}{1} & =\;\;\; & 1.
        & 0 & \ldots & 0 & 0
        & 0 \phantom{\quad} & \ldots \phantom{\quad} & 0 \phantom{\quad}
        & 0 \phantom{\quad} & \ldots \phantom{\quad} & 0 \phantom{\quad}
        & 0 \phantom{\quad} & \ldots \phantom{\quad} & 0 \phantom{\quad}
        \\
        -
        & \Block[r]{}{x + x'} & =\;\;\; & 0.
        & 0 & \ldots & 0 & 0
        & \Block[c,borders={left,right,top,bottom}]{1-3}{} f_0 & \ldots & f_m
        & 0 & \ldots & 0
        & \Block[c,borders={left,right,top,bottom}]{1-3}{} f'_0 & \ldots & f'_m
        \\
        \\[23pt] \cline{2-17}
        \\[-20pt]
        \\
        ~
        & \Block[r]{2-1}{1 - (x+x')} & \Block{2-1}{=\;\;\;} & 0.
        & 1 & \ldots & 1
        & \Block[c,draw=black]{1-4}{2^{m+2} - f} & & &
        & 0 & \ldots & 0
        & \Block[c,draw=black]{1-3}{0} & &
        & \Block[l]{}{\;\;\text{if } f'_\mathrm{lo} = 0}
        \\
        ~
        & & & 0.
        & 1 & \ldots & 1
        & \Block[c,draw=black]{1-4}{2^{m+2} - f - 1} & & &
        & 1 & \ldots & 1
        & \Block[c,draw=black]{1-3}{2^{m+1} - f'_\mathrm{lo}} & &
        & \Block[l]{}{\;\;\text{if } f'_\mathrm{lo} > 0}
        \CodeAfter
        \SubMatrix\{{6-4}{7-4}.
        \UnderBrace[yshift=1pt]{2-5} {2-8} {\scriptstyle -\hat{e}-1 \text{ bits}}
        \UnderBrace[yshift=1pt]{2-9} {2-11}{\substack{m+1 \text{ bits} \\[1pt] {\displaystyle f \in \mathbb{Z}}}}
        \UnderBrace[yshift=1pt]{2-12}{2-14}{\scriptstyle (\hat{e}-\hat{e}')-(m+1) \text{ bits}}
        \UnderBrace[yshift=1pt]{2-15}{2-17}{\substack{m+1 \text{ bits} \\[1pt] {\displaystyle f'_\mathrm{lo} \in \mathbb{Z}}}}
        \UnderBrace[yshift=1pt]{7-5} {7-7} {\scriptstyle -\hat{e}-2 \text{ bits}}
        \UnderBrace[yshift=1pt]{7-8} {7-11}{\scriptstyle m+2 \text{ bits}}
        \UnderBrace[yshift=1pt]{7-12}{7-14}{\scriptstyle (\hat{e}-\hat{e}')-(m+1) \text{ bits}}
        \UnderBrace[yshift=1pt]{7-15}{7-17}{\scriptstyle m+1 \text{ bits}}
      \end{NiceMatrix}
      \\[-8pt] \nonumber
    \end{align}
    Here, the equalities on $x+x'$, $f$, $f'_\mathrm{lo}$ hold as in the proof of \cref{prop:exactsubtract1}.
    Further, the equality on $1-(x+x')$ is by $1 \leq f \leq 2^{m+2}-1$,
    where we have $f \geq 1$ as in the previous case.

    Based on \cref{eq:exactsubtract2-case2}, we can check that $\beta$ computed in
    \crefrange{algline:exactsubtract2-final-st}{algline:exactsubtract2-final-ed} of \cref{alg:exactsubtract2}
    satisfies the following:
    $(n_1, n_\mathrm{hi}, n_2, n_\mathrm{lo}) = (-\hat{e}-2, m+2, (\hat{e}-\hat{e}')-(m+1), m+1)$
    are the numbers of bits shown in the last line of \cref{eq:exactsubtract2-case2};
    $(b_1, b_2) = (1, \mathbf{1}[f'_\mathrm{lo}>0])$ are the bits of $1-(x+x')$ in the $n_1$ and $n_2$ parts;
    and $(g_\mathrm{hi}, g_\mathrm{lo})$ are the values of $1-(x+x')$ in the $n_\mathrm{hi}$ and $n_\mathrm{lo}$ parts
    (since $f'_\mathrm{hi} = 0$).
    Hence, the last line of \cref{eq:exactsubtract2-case2} implies \cref{eq:exactsubtract2}, as desired.
  \end{enumerate}

  We now show the remaining claims for $x < \frac{1}{2}$:
  (i) $b'$ is the $\ell$-th digit of $1-(x+x')$ in binary expansion, and
  (ii) all intermediate values appearing in
  \hyperref[alg:exactsubtract2]{\ExactSubtractTwo} and \hyperref[alg:getbit]{\GetBit}
  are representable as $(m+2)$-bit (signed or unsigned) integers.
  The claim (i) holds as in the proof of \cref{prop:exactsubtract1}.
  To prove the claim (ii), we note that
  \crefrange{eq:exactsubtract1-main-range-e}{eq:exactsubtract1-main-range-n} hold as in the proof of \cref{prop:exactsubtract1},
  except that we have $f, f', f'_\mathrm{hi}, f'_\mathrm{lo}, g_\mathrm{hi}, g_\mathrm{lo} \;\in\; [0, 2^{m+2}-1] \subseteq [0, 2^{1+E+m}-1]$.
  This observation implies that $e, e', \hat{e}, \hat{e}'$ are representable as $(1+E+m)$-bit {\em signed} integers
  and all the other values ($f, f', \ldots$ and $n_1, n_2, \ldots$) are representable as $(1+E+m)$-bit {\em unsigned} integers.

  Lastly, we consider the remaining case: $x = \frac{1}{2}$.
  In this case, \cref{eq:exactsubtract2-case1b,eq:exactsubtract2-case2} still hold except that $-\hat{e}-2 = -1$ is now less than $0$;
  and $g_\mathrm{hi} < 2^{m+1}$ holds since $1-(x+x') < 1$ (by assumption).
  Thus,
  \begin{align}
    1-(x + x')
    &= 0.
    \underbrace{\boxed{\,\;\; g_\mathrm{hi} \vphantom{b_1 g_\mathrm{hi}} \,\;\;}}_{m+1 \text{ bits}}
    \underbrace{\boxed{b_2 \ldots b_2       \vphantom{b_1 g_\mathrm{hi}}       }}_{n_2 \text{ bits}}
    \underbrace{\boxed{\,\;\; g_\mathrm{lo} \vphantom{b_1 g_\mathrm{hi}} \,\;\;}}_{n_\mathrm{lo} \text{ bits}}.
  \end{align}
  Since \crefrange{algline:exactsubtract2-final-st}{algline:exactsubtract2-final-ed} of \cref{alg:exactsubtract2}
  compute $(n_1, n_\mathrm{hi}) = (0, m+1)$, the output of \cref{alg:exactsubtract2} corresponds to the above equation.
  This implies that all the claims still hold for $x=\frac{1}{2}$.
\end{proof}

\begin{algorithm}[t]
\captionsetup{hypcap=false}
\caption{Quantile of a Finite-Precision \WCDF{}}
\label{alg:quantile}
\begin{algorithmic}[1]
\Require{%
  \WCDF{} $F: \bool^n \to \floatEm \cap [0,1]$
  over number format $\bfmt = (n, \gamma_\bfmt, \phi_\bfmt)$\\
  Float $q \in \floatEm \cap [0,1]$
}
\Ensure{$\min_{<_\bfmt}\set{b \in \bool^n \mid q \le F(b)}$}
\Function{\Quantile}{$F, q$}
\State $(n, l, h) \gets (1 + E + m, 0, 2^n-1)$
\While{$l \le h$}
  \State $s \gets \floor{(l+h)/2}$
  \State $s' \gets \phi_\bfmt(\gamma^{-1}_{\mathbb{U}_n}(s))$
  \If{$q \le F(s')$}
    \; $h \gets s - 1$; $t \gets s'$
  \Else
    \; $l \gets s + 1$
  \EndIf
\EndWhile
\State \Return $t$
\EndFunction
\end{algorithmic}
\end{algorithm}


\begin{figure}[t]
\setlength{\intextsep}{0pt}
\setlength{\textfloatsep}{0pt}
\renewcommand{\hl}[1]{#1}
\begin{minipage}[t]{.495\linewidth}
\begin{algorithm}[H]
\captionsetup{hypcap=false}
\caption{Preprocessing for \hyperref[alg:getbit]{\GetBit}}
\label{alg:exactsubtract1}
\begin{algorithmic}[1]
\Require{\hl{$x, x' \in \floatEm \cap [0,1]$ with $0 < x -_\real x' < 1$}}
\Ensure{$(n_1, n_2, n_\mathrm{hi}, n_\mathrm{lo}, b_1, b_2, g_\mathrm{hi}, g_\mathrm{lo})$}
\Function{\ExactSubtractOne{}}{$x$, $x'$}
  \Statex $~$
  \State $(s\,e_E \ldots e_1\, f_1\dots f_m)_{\floatEm} \gets x$%
    \label{algline:exactsubtract1-ef-st}
  \State $(s\,e'_E \ldots e'_1\, f'_1\dots f'_m)_{\floatEm} \gets x'$
  \State $e \gets (e_E \ldots e_1)_2$
  \State $e' \gets (e'_E \ldots e'_1)_2$
  \State $\hat{e} \gets e - (2^{E-1}-1) + \mathbf{1}[e=0]$
  \State $\hat{e}' \gets e' - (2^{E-1}-1) + \mathbf{1}[e'=0]$
  \State $f \gets (1\,f_1 \ldots f_m)_2 - (\mathbf{1}[e = 0] \ll m)$
  \State $f' \gets (1\,f'_1 \ldots f'_m)_2 - (\mathbf{1}[e' = 0] \ll m)$%
    \label{algline:exactsubtract1-ef-ed}
  \State $f'_\mathrm{hi} \gets f' \gg \min\set{\hat{e} - \hat{e}', E+m}$%
    \label{algline:exactsubtract1-fpdecomp-st}
  \State $f'_\mathrm{lo} \gets f'\, \&\, ((1 \ll \min\set{\hat{e}-\hat{e}', m+1}) - 1)$%
    \label{algline:exactsubtract1-fpdecomp-ed}
  \State \hl{$n_1 \gets -\hat{e}-1 + \mathbf{1}[x=1]$}%
    \label{algline:exactsubtract1-final-st}
  \State $n_2 \gets \max\set{(\hat{e}-\hat{e}') - (m+1), 0}$
  \State \hl{$n_\mathrm{hi} \gets m+1 - \mathbf{1}[x=1]$}
  \State $n_\mathrm{lo} \gets \min\set{\hat{e}-\hat{e}', m+1}$
  \State \hl{$b_1 \gets 0$}
  \State \hl{$b_2 \gets \mathbf{1}[f'_\mathrm{lo} > 0]$}
  \State \hl{$g_\mathrm{hi} \gets f - f'_\mathrm{hi} - b_2$}
  \State $g_\mathrm{lo} \gets (b_2 \ll n_\mathrm{lo}) - f'_\mathrm{lo}$%
    \label{algline:exactsubtract1-final-ed}
  \State \Return $(n_1, n_2, n_\mathrm{hi}, n_\mathrm{lo}, b_1, b_2, g_\mathrm{hi}, g_\mathrm{lo})$
\EndFunction
\end{algorithmic}
\end{algorithm}
\end{minipage}\hfill
\begin{minipage}[t]{.495\linewidth}
\begin{algorithm}[H]
\captionsetup{hypcap=false}
\caption{Preprocessing for \hyperref[alg:getbit]{\GetBit}}
\label{alg:exactsubtract2}
\begin{algorithmic}[1]
\Require{\hl{$x, x' \in \floatEm \cap [0,\frac{1}{2}]$ with $0 < x +_\real x' < 1$}}
\Ensure{$(n_1, n_2, n_\mathrm{hi}, n_\mathrm{lo}, b_1, b_2, g_\mathrm{hi}, g_\mathrm{lo})$}
\Function{\ExactSubtractTwo{}}{$x$, $x'$}
  \State \hl{$(x,x') \gets (x \geq x') \;\textbf{?}\; (x,x') \;\textbf{:}\; (x',x)$}%
    \label{algline:exactsubtract2-x-order}
  \State $(s\,e_E \ldots e_1\, f_1\dots f_m)_{\floatEm} \gets x$%
    \label{algline:exactsubtract2-ef-st}
  \State $(s\,e'_E \ldots e'_1\, f'_1\dots f'_m)_{\floatEm} \gets x'$
  \State $e \gets (e_E \ldots e_1)_2$
  \State $e' \gets (e'_E \ldots e'_1)_2$
  \State $\hat{e} \gets e - (2^{E-1}-1) + \mathbf{1}[e=0]$
  \State $\hat{e}' \gets e' - (2^{E-1}-1) + \mathbf{1}[e'=0]$
  \State $f \gets (1\,f_1 \ldots f_m)_2 - (\mathbf{1}[e = 0] \ll m)$
  \State $f' \gets (1\,f'_1 \ldots f'_m)_2 - (\mathbf{1}[e' = 0] \ll m)$%
    \label{algline:exactsubtract2-ef-ed}
  \State $f'_\mathrm{hi} \gets f' \gg \min\set{\hat{e} - \hat{e}', E+m}$%
    \label{algline:exactsubtract2-fpdecomp-st}
  \State $f'_\mathrm{lo} \gets f'\, \&\, ((1 \ll \min\set{\hat{e}-\hat{e}', m+1}) - 1)$%
    \label{algline:exactsubtract2-fpdecomp-ed}
  \State \hl{$n_1 \gets -\hat{e}-2 + \mathbf{1}[x=\frac{1}{2}]$}%
    \label{algline:exactsubtract2-final-st}
  \State $n_2 \gets \max\set{(\hat{e}-\hat{e}') - (m+1), 0}$
  \State \hl{$n_\mathrm{hi} \gets m+2 - \mathbf{1}[x=\frac{1}{2}]$}
  \State $n_\mathrm{lo} \gets \min\set{\hat{e}-\hat{e}', m+1}$
  \State \hl{$b_1 \gets 1$}
  \State $b_2 \gets \mathbf{1}[f'_\mathrm{lo} > 0]$
  \State \hl{$g_\mathrm{hi} \gets (1 \ll {n_\mathrm{hi}}) - f - f'_\mathrm{hi} - b_2$}
  \State $g_\mathrm{lo} \gets (b_2 \ll n_\mathrm{lo}) - f'_\mathrm{lo}$%
    \label{algline:exactsubtract2-final-ed}
  \State \Return $(n_1, n_2, n_\mathrm{hi}, n_\mathrm{lo}, b_1, b_2, g_\mathrm{hi}, g_\mathrm{lo})$
\EndFunction
\end{algorithmic}
\end{algorithm}
\end{minipage}

\begin{minipage}[t]{.495\linewidth}
\begin{algorithm}[H]
\captionsetup{hypcap=false}
\caption{Preprocessing for \hyperref[alg:getbit]{\GetBit}}
\label{alg:exactsubtractd}
\begin{algorithmic}[1]
\Require{$(d, f), (d', f')$ with $d, d' \in \set{0,1}$\\
  and $f,f' \in \floatEm \cap [0,\frac{1}{2}]$}
\Ensure{$(n_1, n_2, n_\mathrm{hi}, n_\mathrm{lo}, b_1, b_2, g_\mathrm{hi}, g_\mathrm{lo})$}
\Function{\ExactSubtract}{$d, f, d', f'$}
  \If{$d = d' = 0$}
    \LComment{$f -_\real f'$}
    \State \Return $\hyperref[alg:exactsubtract1]{\ExactSubtractOne}(f,f')$ \EndIf
  \If{$d = d' = 1$}
    \LComment{$f' -_\real f$}
    \State \Return $\hyperref[alg:exactsubtract1]{\ExactSubtractOne}(f',f)$ \EndIf
  \If{$d = 1 \mbox{\bfseries\ and } d' = 0$}
    \LComment{$1 -_\real (f +_\real f')$}
    \State \Return $\hyperref[alg:exactsubtract2]{\ExactSubtractTwo}(f,f')$ \EndIf
  \State \textbf{error}
\EndFunction
\end{algorithmic}
\end{algorithm}
\end{minipage}
\hfill
\begin{minipage}[t]{.495\linewidth}
\begin{algorithm}[H]
\captionsetup{hypcap=false}
\caption{Extract Binary Digit}
\label{alg:getbit}
\begin{algorithmic}[1]
\Require{%
  $\beta {\defas} (n_1, n_2, n_\mathrm{hi}, n_\mathrm{lo}, b_1, b_2, g_\mathrm{hi}, g_\mathrm{lo}),\ell{\ge}1$;
  where
  $\begin{array}[t]{@{}l}
  n_1, n_2, n_\mathrm{hi}, n_\mathrm{lo} \ge 0,\;
  \text{\hl{$b_1, b_2 \in \set{0,1}$}},
  \\
  0 \leq g_\mathrm{hi} < 2^{n_\mathrm{hi}},\; 0 \leq g_\mathrm{lo} < 2^{n_\mathrm{lo}},\;
  \end{array}$
  \\
  are from
  \\
  $\hyperref[alg:exactsubtractd]{\ExactSubtract}((d,f),(d',f'))$
  }
\Ensure{$b' \in \bool$; such that\\
  if $b_1 = 0$, then $b'$ is bit $\ell$ of $(x -_\real x')$;\\
  if $b_1 = 1$, then $b'$ is bit $\ell$ of $1 -_\real (x +_\real x')$;\\
  where $\begin{aligned}[t]
    x &\defas(1-d)f+d(1-f)\\
    x' &\defas(1-d')f'+d'(1-f')
    \end{aligned}$.
}
\Function{\GetBit}{$\beta,\ell$}
  \If{$\ell \leq n_1$}
    \Return \hl{$b_1$}
  \EndIf
  \If{$\ell \leq n_1 + n_\mathrm{hi}$}
    \Return $g_{\mathrm{hi}, \,\ell - n_1}$
  \EndIf
  \If{$\ell \leq n_1 + n_\mathrm{hi} + n_2$}
    \Return \hl{$b_2$}
  \EndIf
  \If{$\ell \leq n_1 + n_\mathrm{hi} + n_2 + n_\mathrm{lo}$}
    \State \Return $g_{\mathrm{lo}, \,\ell - (n_1 + n_\mathrm{hi} + n_2)}$
  \EndIf
  \State \Return 0
\EndFunction
\end{algorithmic}
\end{algorithm}
\end{minipage}
\end{figure}

\clearpage

\begin{algorithm}[t]
\caption{Extended-Accuracy Conditional Bit Sampling}
\label{alg:sampler-naive-ext-impl}
\begin{algorithmic}[1]
\Require{\ECDF{} $G: \bool^n \to \set{0,1} \times (\floatEm \cap [0,\frac{1}{2}])$ \\
  over binary number format $\bfmt = (n, \gamma_\bfmt, \phi_\bfmt)$ \\
  \color{gray}{String $b \in \bool^{\le n}$;}
  \color{gray}{Pairs $(\dL, \fL), (\dR, \fR) \in \set{0,1} \times (\floatEm \cap [0,\frac{1}{2}])$}
  }
\Ensure{Exact random variate $X \sim G$}
\Function{\SampleNaiveImpl}{%
    $G$, $b = \varepsilon$,
    $\dL = 0$, $\fL = 0$,
    $\dR = 0$, $\fR = 0$}
  \If{$\abs{b} = n$}
    \Comment{Base Case}
    \State \Return $\phi_\bfmt(b)$
    \Comment{String in Format ${\bfmt}$}
  \EndIf
  \State $b' \gets b01^{n - \abs{b} - 1}$
  \State $(\dM, \fM) \gets G(\phi_\bfmt(b'))$
  \If{$(\dM, \fM) = (\dR, \fR)$}
    \Comment{Leaf}
    \State \Return $\SampleNaiveImpl(G, b0, \dL, \fL, \dM, \fM)$
    \Comment{0}
  \EndIf
  \If{$(\dM, \fM) = (\dL, \fL)$}
    \Comment{Leaf}
    \State \Return $\SampleNaiveImpl(G, b1, \dM, \fM, \dR, \fR)$
    \Comment{1}
  \EndIf
  \LComment{$\displaystyle \frac{i}{k} \defas \frac
      {\big((1-\dR)\fR+\dR(1-\fR)\big) - \big((1-\dM)\fM+\dM(1-\fM)\big)}
      {\big((1-\dR)\fR+\dR(1-\fR)\big) - \big((1-\dL)\fL+\dL(1-\fL)\big)}$}
      \label{algline:sampler-naive-ext-impl-kn}
  \State $(i, k) \gets \ExactRatio(\dL, \fL, \dM, \fM, \dR, \fR)$
  \State $z \gets \hyperref[alg:sampler-bernoulli-impl]{\Bernoulli}(i, k)$
  \Comment{Refine Subtree}
  \If{$z=0$}
    \State \Return $\SampleNaiveImpl(G, b0, \dL, \fL, \dM, \fM)$
      \Comment{0}
  \Else
    \State \Return $\SampleNaiveImpl(G, b1, \dM, \fM, \dR, \fR)$
      \Comment{1}
  \EndIf
\EndFunction
\end{algorithmic}
\end{algorithm}

\begin{algorithm}[t]
\caption{Extended-Accuracy Entropy-Optimal Random Variate Generation}
\label{alg:sampler-opt-ext-impl}
\begin{algorithmic}[1]
\Require{{\ECDF{} $G: \bool^n \to \set{0,1} \times (\floatEm \cap [0,\frac{1}{2}])$} \\
  over binary number format $\bfmt = (n, \gamma_\bfmt, \phi_\bfmt)$ \\
  \color{gray}{String $b \in \bool^{\le n}$; \#Flips $\ell \ge 0$;} 
  \color{gray}{Pairs $(\dL, \fL), (\dR, \fR) \in \set{0,1} \times (\floatEm \cap [0,\frac{1}{2}])$}}
\Ensure{Exact random variate $X \sim G$} 
\Function{\SampleOptImpl}{$G$, $b=\varepsilon$, $\ell=0$, $\dL=0$, $\fL=0$, $\dR=1$, $\fR=0$}
  \If{$\abs{b} = n$}
    \Comment{Base Case}
    \State \Return $\phi_\bfmt(b)$ 
    \Comment{String in Format ${\bfmt}$} 
  \EndIf
  \State $b' \gets b01^{n - \abs{b} - 1}$
  \State $(\dM, \fM) \gets G(\phi_\bfmt(b'))$
  \If{$(\dM, \fM) = (\dR, \fR)$}
    \Comment{Leaf}
    \State \Return $\SampleOptImpl(G, b0, \ell, \dL, \fL, \dM, \fM)$
      \Comment{0}
  \EndIf
  \If{$(\dM, \fM) = (\dL, \fL)$}
    \Comment{Leaf}
    \State \Return $\SampleOptImpl(G, b1, \ell, \dM, \fM, \dR, \fR)$
      \Comment{1}
  \EndIf
  \LComment{$r_0 \defas \big((1-\dM)\fM+\dM(1-\fM)\big) - \big((1-\dL)\fL+\dL(1-\fL)\big) \in \real$}
  \LComment{$r_1 \defas \big((1-\dR)\fR+\dR(1-\fR)\big) - \big((1-\dM)\fM+\dM(1-\fM)\big) \in \real$}
  \State $\beta_0 \gets \hyperref[alg:exactsubtractd]{\ExactSubtract}(\dM, \fM, \dL, \fL)$
  \State $\beta_1 \gets \hyperref[alg:exactsubtractd]{\ExactSubtract}(\dR, \fR, \dM, \fM)$
  \If{$\ell > 0$}
    \State $a_0 \gets \hyperref[alg:getbit]{\GetBit}(\beta_0, {\ell})$
    \Comment{$[r_0]_\ell$}
    \State $a_1 \gets \hyperref[alg:getbit]{\GetBit}(\beta_1, {\ell})$
    \Comment{$[r_1]_\ell$}
    \If{$a_0 = 1$ and $a_1 = 0$}
      \Comment{Leaf}
        \State \Return \SampleOptImpl($G$, $b0$, $\ell$, $\dL,$ $\fL$, $\dM$, $\fM$)
          \Comment{0}
    \EndIf
    \If{$a_0 = 0$ and $a_1 = 1$}
      \Comment{Leaf}
        \State \Return \SampleOptImpl($G$, $b1$, $\ell$, $\dM$, $\fM$, $\dR$, $\fR$)
          \Comment{1}
    \EndIf
  \EndIf
  \While{$\textbf{true}$} \Comment{Refine Subtree}
    \State $x \gets \Flip()$; $\ell \gets \ell + 1$
    \State $a_0 \gets \hyperref[alg:getbit]{\GetBit}(\beta_0, {\ell})$
    \Comment{$[r_0]_\ell$}
    \State $a_1 \gets \hyperref[alg:getbit]{\GetBit}(\beta_1, {\ell})$
    \Comment{$[r_1]_\ell$}
    \If{$x = 0$ and $a_0 = 1$}
      \Comment{Leaf}
      \State \Return \SampleOptImpl($G$, $b0$, $\ell$, $\dL$, $\fL$, $\dM$, $\fM$)
        \Comment{0}
    \EndIf
    \If{$x = 1$ and $a_1 = 1$}
      \Comment{Leaf}
      \State \Return \SampleOptImpl($G$, $b1$, $\ell$, $\dM$, $\fM$, $\dR$, $\fR$)
        \Comment{1}
    \EndIf
  \EndWhile
\EndFunction
\end{algorithmic}
\end{algorithm}

\begin{algorithm}[t]
\captionsetup{hypcap=false}
\caption{Quantile of an \ECDF{}}
\label{alg:quantile-ext}
\begin{algorithmic}[1]
\Require{\ECDF{} $G: \bool^n \to \set{0,1} \times (\floatEm \cap [0,\frac{1}{2}])$ \\
over binary number format $\bfmt = (n, \gamma_\bfmt, \phi_\bfmt)$ \\
Pair $(d, f) \in \bool \times (\floatEm \cap [0,\frac{1}{2}])$
}
\Ensure{%
  $\min_{<_\bfmt}\set{b \in \bool^n \mid (d,f) \preceq G(b)}$, \\
  where $(d,f) \preceq (d',f') \iff (1-d)f + d(1-f) \le_{\real} (1-d')f' + d'(1-f')$
}
\Function{\QuantileExt}{$G, d, f$}
\State $(n, l, h) \gets (1 + E + m, 0, 2^n-1)$
\While{$l \le h$}
  \State $s \gets \floor{(l+h)/2}$
  \State $s' \gets \phi_\bfmt(\gamma^{-1}_{\mathbb{U}_n}(s))$
  \State $(d', f') \gets G(s')$
  \If{$\hyperref[algline:CompareLte]{\textsc{CompareLte}}(d,f,d',f')$}
    \State $h \gets s - 1$; $t \gets s'$
  \Else
    \State $l \gets s + 1$
  \EndIf
\EndWhile
\State \Return $t$
\EndFunction
\Require{$d,d' \in \set{0,1}$, $f, f' \in \floatEm \cap [0,\frac{1}{2}]$}
\Ensure{$(d,f) \preceq (d',f')$}
\Function{CompareLte}{$d, d', f, f'$} \label{algline:CompareLte}
\If{$d<d'$} \Return \textbf{true} \EndIf
\If{$d = d' = 0 \,\mathbf{and}\, f  \le f'$} \Return \textbf{true} \EndIf
\If{$d = d' = 1 \,\mathbf{and}\, f' \le f$} \Return \textbf{true} \EndIf
\State \Return \textbf{false}
\EndFunction
\end{algorithmic}
\end{algorithm}

\clearpage
\enlargethispage{5\baselineskip}

\section{Survey of Numerical Errors in Python Random Variate Generation Libraries}
\label{appx:survey}
\begin{center}
\begin{adjustbox}{max width=.92\linewidth}
\begin{tabular}{|l|ll|}
\hline\hline
NumPy   & BUG: random: Problems with hypergeometric with ridiculously large arguments                                 & \url{https://github.com/numpy/numpy/issues/11443} \\
NumPy   & Possible bug in random.laplace                                                                              & \url{https://github.com/numpy/numpy/issues/13361} \\
NumPy   & Bias of random.integers() with int8 dtype                                                                   & \url{https://github.com/numpy/numpy/issues/14774} \\
NumPy   & Geometric, negative binomial and poisson fail for extreme arguments                                         & \url{https://github.com/numpy/numpy/issues/1494} \\
NumPy   & numpy.random.hypergeometric: error for some cases                                                           & \url{https://github.com/numpy/numpy/issues/1519} \\
NumPy   & numpy.random.logseries - incorrect convergence for k=1, k=2                                                 & \url{https://github.com/numpy/numpy/issues/1521} \\
NumPy   & Von Mises draws not between -pi and pi [patch]                                                              & \url{https://github.com/numpy/numpy/issues/1584} \\
NumPy   & Negative binomial sampling bug when p=0                                                                     & \url{https://github.com/numpy/numpy/issues/15913} \\
NumPy   & default\_rng.integers(2**32) always return 0                                                                & \url{https://github.com/numpy/numpy/issues/16066} \\
NumPy   & Beta random number generator can produce values outside its domain                                          & \url{https://github.com/numpy/numpy/issues/16230} \\
NumPy   & OverflowError for np.random.RandomState()                                                                   & \url{https://github.com/numpy/numpy/issues/16695} \\
NumPy   & binomial can return unitialized integers when size is passed with array values for a or p                   & \url{https://github.com/numpy/numpy/issues/16833} \\
NumPy   & np.random.geometric(10**-20) returns negative values                                                        & \url{https://github.com/numpy/numpy/issues/17007} \\
NumPy   & numpy.random.vonmises() fails for kappa > ~10\^8                                                            & \url{https://github.com/numpy/numpy/issues/17275} \\
NumPy   & Wasted bit in random float32 generation                                                                     & \url{https://github.com/numpy/numpy/issues/17478} \\
NumPy   & test\_pareto on 32-bit got even worse                                                                       & \url{https://github.com/numpy/numpy/issues/18387} \\
NumPy   & Silent overflow error in numpy.random.default\_rng.negative\_binomial                                       & \url{https://github.com/numpy/numpy/issues/18997} \\
NumPy   & Possible mistake in distribution.c::rk\_binomial\_btpe                                                      & \url{https://github.com/numpy/numpy/issues/2012} \\
NumPy   & mtrand.beta does not handle small parameters well                                                           & \url{https://github.com/numpy/numpy/issues/2056} \\
NumPy   & random.uniform gives inf when using finfo('float').min, finfo('float').max as intervall                     & \url{https://github.com/numpy/numpy/issues/2138} \\
NumPy   & BUG: numpy.random.Generator.dirichlet should accept zeros.                                                  & \url{https://github.com/numpy/numpy/issues/22547} \\
NumPy   & numpy.random.randint(-2147483648, 2147483647) raises ValueError: low >= high                                & \url{https://github.com/numpy/numpy/issues/2286} \\
NumPy   & BUG: random: beta (and therefore dirichlet) hangs when the parameters are very small                        & \url{https://github.com/numpy/numpy/issues/24203} \\
NumPy   & BUG: random: dirichlet(alpha) can return nans in some cases                                                 & \url{https://github.com/numpy/numpy/issues/24210} \\
NumPy   & BUG: random: beta can generate nan when the parameters are extremely small                                  & \url{https://github.com/numpy/numpy/issues/24266} \\
NumPy   & BUG: Inaccurate left tail of random.Generator.dirichlet at small alpha                                      & \url{https://github.com/numpy/numpy/issues/24475} \\
NumPy   & Cannot generate random variates from noncentral chi-square distribution with dof = 1                        & \url{https://github.com/numpy/numpy/issues/5766} \\
NumPy   & Bug in np.random.dirichlet for small alpha parameters                                                       & \url{https://github.com/numpy/numpy/issues/5851} \\
NumPy   & numpy.random.poisson(0) should return 0                                                                     & \url{https://github.com/numpy/numpy/issues/827} \\
NumPy   & Could random.hypergeometric() be made to match behavior of random.binomial() when sample or n = 0?          & \url{https://github.com/numpy/numpy/issues/9237} \\
NumPy   & BUG: np.random.zipf hangs the interpreter on pathological input                                             & \url{https://github.com/numpy/numpy/issues/9829} \\
PyTorch & torch.distributions.categorical.Categorical samples indices with zero probability                           & \url{https://github.com/pytorch/pytorch/issues/100884} \\
PyTorch & Torch randperm with device mps does not sample exactly uniformly from all possible permutations             & \url{https://github.com/pytorch/pytorch/issues/104315} \\
PyTorch & torch.distributions.Pareto.sample sometimes gives inf                                                       & \url{https://github.com/pytorch/pytorch/issues/107821} \\
PyTorch & torch.multinomial - Unexpected (incorrect) results when replacement=True in version 2.1.1+cpu               & \url{https://github.com/pytorch/pytorch/issues/114945} \\
PyTorch & Strange behavior of randint using device=cuda                                                               & \url{https://github.com/pytorch/pytorch/issues/125224} \\
PyTorch & Beta Distribution values wrong for a=b---> 0                                                                & \url{https://github.com/pytorch/pytorch/issues/15738} \\
PyTorch & Very poor Uniform() sampling near floating 0.0                                                              & \url{https://github.com/pytorch/pytorch/issues/16706} \\
PyTorch & Full-range random\_() generation broken for cuda.IntTensor, cuda.LongTensor and LongTensor.                 & \url{https://github.com/pytorch/pytorch/issues/16944} \\
PyTorch & RelaxedBernoulli produces samples on the boundary with NaN log\_prob                                        & \url{https://github.com/pytorch/pytorch/issues/18254} \\
PyTorch & torch.distributions.Binomial.sample() uses a massive amount of memory                                       & \url{https://github.com/pytorch/pytorch/issues/20343} \\
PyTorch & Weird sampling from multinomial\_alias\_draw                                                                & \url{https://github.com/pytorch/pytorch/issues/21257} \\
PyTorch & CUDA implementation of alias multinomial doesn’t work correctly                                             & \url{https://github.com/pytorch/pytorch/issues/21508} \\
PyTorch & Wrong distribution sampled by torch.multinomial on CUDA                                                     & \url{https://github.com/pytorch/pytorch/issues/22086} \\
PyTorch & torch.nn.functional.gumbel\_softmax yields NaNs                                                             & \url{https://github.com/pytorch/pytorch/issues/22442} \\
PyTorch & torch.distributions.Normal cuda sampling broken                                                             & \url{https://github.com/pytorch/pytorch/issues/22529} \\
PyTorch & CPU torch.exponential\_ function may generate 0 which can cause downstream NaN                              & \url{https://github.com/pytorch/pytorch/issues/22557} \\
PyTorch & got nan when gumbel\_softmax calculated in GPU                                                              & \url{https://github.com/pytorch/pytorch/issues/22586} \\
PyTorch & torch.bernoulli() randomly returns "1" for 0 inputs on CPU                                                  & \url{https://github.com/pytorch/pytorch/issues/26807} \\
PyTorch & Uniform random generator generates too many zeros compared to NumPy                                         & \url{https://github.com/pytorch/pytorch/issues/26973} \\
PyTorch & torch.multinominal ignores elements from cumulative distribution                                            & \url{https://github.com/pytorch/pytorch/issues/28390} \\
PyTorch & Tensor.random\_ should be able to generate all 64 bit numbers including min and max value                   & \url{https://github.com/pytorch/pytorch/issues/33299} \\
PyTorch & torch.multinomial behaves abnormally with CUDA tensor                                                       & \url{https://github.com/pytorch/pytorch/issues/37403} \\
PyTorch & Investigate using -cospi(u) / sinpi(u) instead of tan(pi * (u - 0.5)) in transformation::cauchy             & \url{https://github.com/pytorch/pytorch/issues/38611} \\
PyTorch & Investigate exponential distribution improvements                                                           & \url{https://github.com/pytorch/pytorch/issues/38612} \\
PyTorch & [bug] Binomial distribution has small chance of returning -1                                                & \url{https://github.com/pytorch/pytorch/issues/42153} \\
PyTorch & torch.multinomial with replacement=True produces inaccurate results for large number of categories          & \url{https://github.com/pytorch/pytorch/issues/43115} \\
PyTorch & torch.multinomial behave unexpectedly on float16 GPU input tensor                                           & \url{https://github.com/pytorch/pytorch/issues/46702} \\
PyTorch & torch.multinomial selects elements with zero weight                                                         & \url{https://github.com/pytorch/pytorch/issues/48841} \\
PyTorch & Multinomial without replacement produces samples that have zero probability                                 & \url{https://github.com/pytorch/pytorch/issues/50034} \\
PyTorch & Cauchy samples inf values on CUDA                                                                           & \url{https://github.com/pytorch/pytorch/issues/59144} \\
PyTorch & [Bug] cuda version of torch.randperm(n) generate all zero/negative/large positive values for large n        & \url{https://github.com/pytorch/pytorch/issues/59756} \\
PyTorch & a problem happened in torch.randperm                                                                        & \url{https://github.com/pytorch/pytorch/issues/63726} \\
PyTorch & Gamma distribution returns some wrong extreme values                                                        & \url{https://github.com/pytorch/pytorch/issues/71414} \\
PyTorch & Dirichlet with small concentration                                                                          & \url{https://github.com/pytorch/pytorch/issues/76030} \\
PyTorch & torch.randperm uses too much cpu, but not efficient.                                                        & \url{https://github.com/pytorch/pytorch/issues/77140} \\
PyTorch & torch.randint should accept high=2**63                                                                      & \url{https://github.com/pytorch/pytorch/issues/81446} \\
PyTorch & Beta distribution behaves incorrectly for small parameters                                                  & \url{https://github.com/pytorch/pytorch/issues/84625} \\
PyTorch & Poisson sampling on GPU fails for high rates                                                                & \url{https://github.com/pytorch/pytorch/issues/86782} \\
PyTorch & Hang: sampling VonMises distribution gets stuck in rejection sampling for small kappa                       & \url{https://github.com/pytorch/pytorch/issues/88443} \\
PyTorch & torch.normal(...) on MPS sometimes produces NaN's                                                           & \url{https://github.com/pytorch/pytorch/issues/89127} \\
PyTorch & torch.randn and torch.normal sometimes produce NaN on mps device                                            & \url{https://github.com/pytorch/pytorch/issues/89283} \\
PyTorch & torch.Categorical samples indexes with 0 probability when given logits as argument                          & \url{https://github.com/pytorch/pytorch/issues/91863} \\
PyTorch & [Inductor] philox randn doesn't follow standard normal distribution                                         & \url{https://github.com/pytorch/pytorch/issues/91944} \\
PyTorch & distributions.Beta returning incorrect results at 0 and 1                                                   & \url{https://github.com/pytorch/pytorch/issues/92260} \\
PyTorch & torch.distributions.kumaraswamy.Kumaraswamy generates samples outside its support (0,1)                     & \url{https://github.com/pytorch/pytorch/issues/95548} \\
PyTorch & torch.rand can sample the upper bound for lower precision floating point dtypes on CUDA                     & \url{https://github.com/pytorch/pytorch/issues/96947} \\
SciPy   & overflow in truncnorm.rvs                                                                                   & \url{https://github.com/scipy/scipy/issues/10092} \\
SciPy   & truncnorm.rvs Weird Behaviors                                                                               & \url{https://github.com/scipy/scipy/issues/11769} \\
SciPy   & truncnorm.rvs is painfully slow on version 1.5.0rc2                                                         & \url{https://github.com/scipy/scipy/issues/12370} \\
SciPy   & Levy Stable Random Variates Code has a typo                                                                 & \url{https://github.com/scipy/scipy/issues/12870} \\
SciPy   & truncnorm.rvs still crashes when sampling from extreme tails                                                & \url{https://github.com/scipy/scipy/issues/13966} \\
SciPy   & truncnorm rvs() produces junk for ranges in the tail                                                        & \url{https://github.com/scipy/scipy/issues/1489} \\
SciPy   & BUG: Levy stable                                                                                            & \url{https://github.com/scipy/scipy/issues/14994} \\
SciPy   & BUG: scipy.stats.multivariate\_hypergeom.rvs raises ValueError when at least the last two populations are 0 & \url{https://github.com/scipy/scipy/issues/16171} \\
SciPy   & BUG: truncnorm rvs sometimes returns nan (float32 issue?)                                                   & \url{https://github.com/scipy/scipy/issues/19554} \\
SciPy   & binomial (in numpy.random) and scipy.stats.binom are not defined for n=0                                    & \url{https://github.com/scipy/scipy/issues/2213} \\
SciPy   & stats.truncnorm.rvs() does not give symmetric results for negative \& positive regions                      & \url{https://github.com/scipy/scipy/issues/2477} \\
SciPy   & scipy.stats.rice.rvs(b) returns bad random numbers (zeros) for b greater than 10                            & \url{https://github.com/scipy/scipy/issues/3282} \\
SciPy   & scipy.stats.ncx2 fails for nc=0                                                                             & \url{https://github.com/scipy/scipy/issues/5441} \\ \hline\hline
\end{tabular}
\end{adjustbox}
\end{center}

}

\end{document}